\documentclass{Dissertate}

\usepackage{multirow}
\usepackage{float}
\usepackage{framed}
\usepackage{adjustbox}
\usepackage{longtable}


\usepackage{tikz}
\usepackage{tkz-fct}
\usepackage{tikz-cd}
\usetikzlibrary{matrix,arrows,decorations}


\makeatletter
\newcommand\thefontsize[2]{{#1 \noindent \f@size~pt: #2}}
\makeatother


\makeatletter
\def\nobreakhline{%
  \noalign{\ifnum0=`}\fi
    \penalty\@M
    \futurelet\@let@token\LT@@nobreakhline}
\def\LT@@nobreakhline{%
  \ifx\@let@token\hline
    \global\let\@gtempa\@gobble
    \gdef\LT@sep{\penalty\@M\vskip\doublerulesep}
  \else
    \global\let\@gtempa\@empty
    \gdef\LT@sep{\penalty\@M\vskip-\arrayrulewidth}
  \fi
  \ifnum0=`{\fi}%
  \multispan\LT@cols
     \unskip\leaders\hrule\@height\arrayrulewidth\hfill\cr
  \noalign{\LT@sep}%
  \multispan\LT@cols
     \unskip\leaders\hrule\@height\arrayrulewidth\hfill\cr
  \noalign{\penalty\@M}%
  \@gtempa}
\makeatother


\newtheoremstyle{definitionstyle}	
	{0.2cm}				
	{0.2cm}				
	{\it}						
	{}							
	{\it\bfseries}			
	{:}						
	{ }						
	{\thmname{#1}\thmnumber{~#2}\thmnote{~(#3)}}	

\newtheoremstyle{nameddefinitionstyle}	
	{\baselineskip\@plus.2\baselineskip\@minus.2\baselineskip}	
	{\baselineskip\@plus.2\baselineskip\@minus.2\baselineskip}	
	{}							
	{}							
	{\bfseries}			
	{:}						
	{ }						
	{\thmnote{#3}}	

\newtheoremstyle{framednameddefinitionstyle}	
	{0.2cm}				
	{0.2cm}				
	{\it}						
	{}							
	{\it\bfseries}			
	{:}						
	{ }						
	{\thmnote{#3}}	

\newtheoremstyle{theoremstyle}	
	{0.2cm}				
	{0.2cm}				
	{}							
	{}							
	{\bfseries}			
	{:}						
	{ }						
	{\thmname{#1}\thmnumber{~#2}\thmnote{~(#3)}}	

\newtheoremstyle{framedtheoremstyle}	
	{\baselineskip\@plus.2\baselineskip\@minus.2\baselineskip}	
	{\baselineskip\@plus.2\baselineskip\@minus.2\baselineskip}	
	{\sl}						
	{}							
	{\bfseries}			
	{:}						
	{ }						
	{\thmname{#1}\thmnumber{~#2}\thmnote{~(#3)}}	

\newtheoremstyle{proofstyle}
	{\baselineskip\@plus.2\baselineskip\@minus.2\baselineskip}	
	{\baselineskip\@plus.2\baselineskip\@minus.2\baselineskip}	
	{}							
	{}							
	{}							
	{:}						
	{ }						
	{\textsc{\thmname{#1}\thmnote{~#3}}}	


\theoremstyle{theoremstyle}

\newtheorem{prp}{Proposition}[chapter]
\newtheorem{thm}{Theorem}[chapter]
\newtheorem{cor}{Corollary}[chapter]

\newtheorem{rmk}{Remark}[chapter]

\theoremstyle{framedtheoremstyle}

\theoremstyle{nameddefinitionstyle}
\newtheorem{nameddef}{Definition}[chapter]

\theoremstyle{framednameddefinitionstyle}

\theoremstyle{proofstyle}

\theoremstyle{definitionstyle}


\newcommand{\fromto}{\rightarrow}

\newcommand{\xfromto}[1]{\xrightarrow{#1}}


\newcommand{\ZZZ}{\mathbb{Z}}

\newcommand{\RRR}{\mathbb{R}}
\newcommand{\CCC}{\mathbb{C}}



\renewcommand{\SS}{\mathbf{S}}


\newcommand{\A}{\mathcal{A}}

\newcommand{\K}{\mathcal{K}}
\newcommand{\T}{\mathcal{T}}


\DeclareMathOperator{\Map}{Map}
\DeclareMathOperator{\identity}{id}

\DeclareMathOperator{\image}{im}

\DeclareMathOperator{\pt}{pt}



\DeclareMathOperator{\Tor}{Tor}
\DeclareMathOperator{\Ext}{Ext}

\DeclareMathOperator{\kernel}{ker}


\newcommand{\coloneq}{\mathrel{\mathop:}=}
\newcommand{\eqcolon}{=\mathrel{\mathop:}}

\newcommand{\homotopic}{\simeq}

\newcommand{\isomorphic}{\cong}


\newcommand{\bars}[1]{\left| #1 \right|}
\newcommand{\paren}[1]{\left( #1 \right)}

\newcommand{\brackets}[1]{\left[ #1 \right]}
\newcommand{\braces}[1]{\left\{ #1 \right\}}


\newcommand{\bra}[1]{\left\langle #1\right|}
\newcommand{\ket}[1]{\left|#1\right\rangle}
\newcommand{\braket}[2]{\left\langle#1\middle|#2\right\rangle}


\def\ind\hspace{0.2in}


\newcommand{\SPT}{\operatorname{\mathcal{SPT}}}

\newcommand{\wSPT}{\operatorname{w\mathcal{SPT}}}


	
		\newcommand{\e}[1]{\begin{align}{#1}\end{align}}	
		

		\newcommand{\p}[2]{\frac{\partial #1}{\partial #2}}
		

		\newcommand{\q}[1]{Eq.\ (\ref{#1})}
		
		\newcommand{\s}[1]{Sec.\ \ref{#1}}
		\newcommand{\fig}[1]{Figure \ref{#1}}



		
		













\newcommand{\bx}{\boldsymbol{x}}

\newcommand{\bE}{\boldsymbol{E}}

\newcommand{\bB}{\boldsymbol{B}}






































\newcommand{\bpm}{\begin{pmatrix}}
\newcommand{\epm}{\end{pmatrix}}

\newcommand{\bal}{\begin{align}}


\def\mod{{\rm\ mod\ }}

\def\p{\partial}

\def\widebar{\accentset{{\cc@style\underline{\mskip10mu}}}} 
\def\wideubar{\underaccent{{\cc@style\underline{\mskip10mu}}}} 

\begin{document}



\title{Classification and Construction of Topological Phases of Quantum Matter}
\author{Zhaoxi Xiong}

\advisor{Ashvin Vishwanath}

\committeeInternalOne{Subir Sachdev}
\committeeInternalTwo{Arthur Jaffe}



\degree{Doctor of Philosophy}
\field{Physics}
\degreeyear{2019}
\degreeterm{Spring}
\degreemonth{May}
\department{Physics}



\frontmatter
\setstretch{\dnormalspacing}


\setcounter{chapter}{0}  
\setlength{\footnotesep}{16pt}   

\chapter{Introduction}
\label{chap:introduction}

%
%
%
%
%
%
%
%
%
%

This thesis is about topological phases of strongly interacting quantum many-body systems. In 1980, it was discovered, at low temperatures and relatively strong magnetic field, that the Hall conductance of a 2D material is quantized in units of $\frac{e^2}{h}$ \cite{Klitzing_IQHE}:
\begin{equation}
\sigma_{xy} = \nu \frac{e^2}{h},
\end{equation}
where $\nu$ can only take integer values, forming a series of plateaus (see Figure \ref{fig:IQHE}). The quantization was insensitive to impurities and was extremely precise. In 1982, it was discovered, at even lower temperatures and for extremely clean samples, that $\nu$ can take quantized rational values, such as $\frac13$, and collective excitations with fractional electric charges, such as $-\frac{e}{3}$, were observed \cite{FQHE_original}. In 1988, it was proposed that the integral quantization of Hall conductance can occur even without any magnetic field, if we give complex values to the amplitudes of some hopping processes of electrons \cite{TKNN, Haldane1988}.

\begin{figure}
\centering
\includegraphics[width=3in]{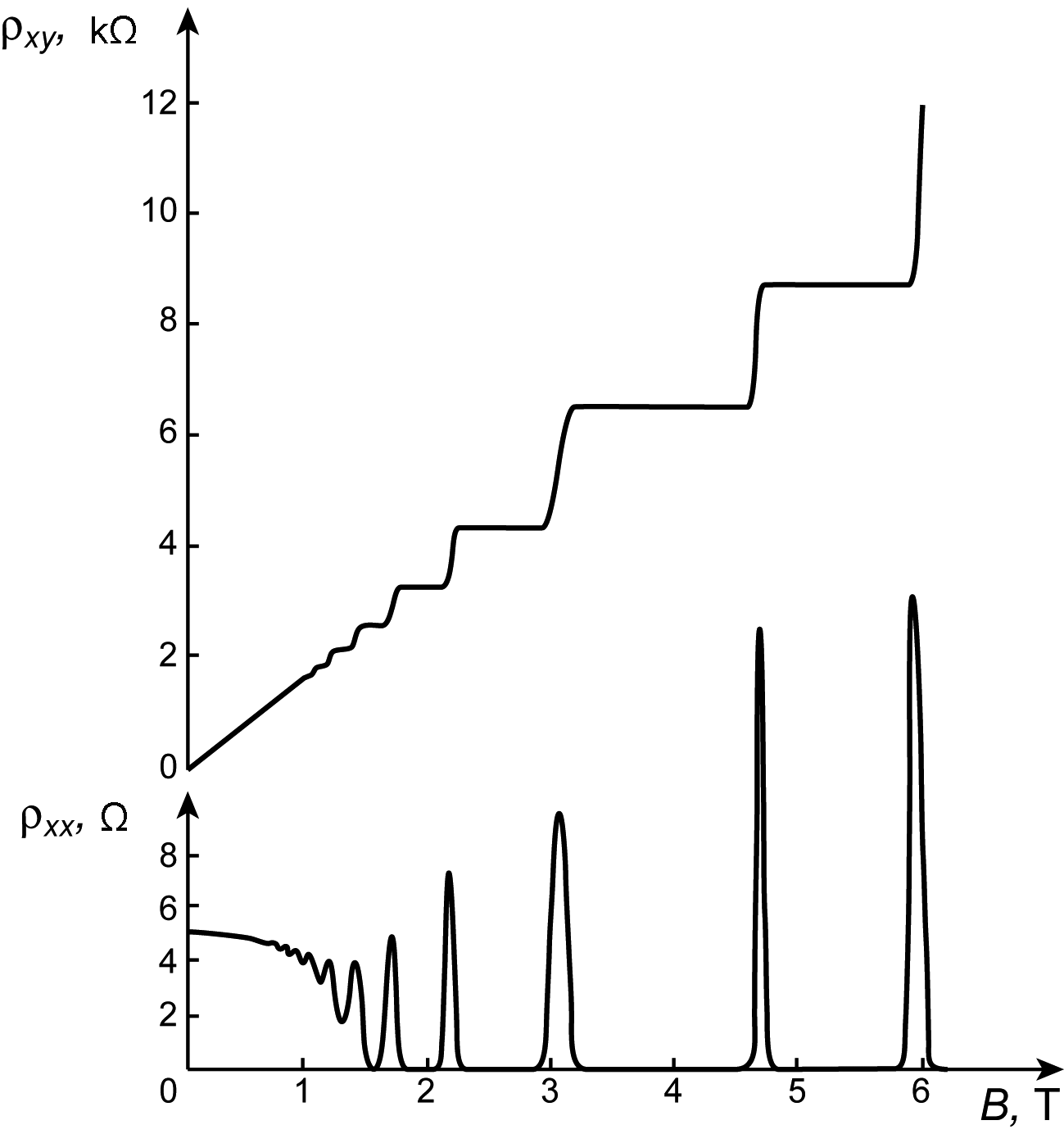}
\caption[The quantization of Hall resistance $\rho_{xy}$.]{The quantization of Hall resistance $\rho_{xy}$. Hall conductance is related to Hall resistance and longitudinal resistance by $\sigma_{xy} = - \frac{\rho_{xy}}{\rho_{xx}^2 + \rho_{xy}^2}$.}
\label{fig:IQHE}
\end{figure}

These phenomena are known as the integer quantum Hall effect, the fractional quantum Hall effect, and the anomalous quantum Hall effect, respectively. They stood in stark contrast to the physics at higher temperatures, where response functions like Hall conductance are continuous variables whose values depend on non-universal parameters of a system, such as impurity, geometry, or temperature. Theories were proposed to explain the mentioned phenomena \cite{laughlin1981, Laughlin_FQHE, TKNN, Haldane1988}, and it was discovered, in each case, that there was some ``topological invariant" for the ground state of a system that is mathematically pinned at discrete values. In particular, for the Bloch wavefunctions $e^{i \boldsymbol k \cdot \boldsymbol r} \ket{u_{\boldsymbol k}}$ of electrons in the anomalous quantum Hall effect, there is a topological invariant called the Chern number (the first Chern class) \cite{TKNN, Haldane1988},
\begin{equation}
C = \frac{i}{2 \pi} \int_0^{2\pi} dk_x \int_0^{2\pi} dk_y \paren{ \braket{\frac{\partial u_{\boldsymbol k}}{\partial k_x}}{\frac{\partial u_{\boldsymbol k}}{\partial k_y}} - \braket{\frac{\partial u_{\boldsymbol k}}{\partial k_y}}{\frac{\partial u_{\boldsymbol k}}{\partial k_x}} },
\end{equation}
that was tied to Hall conductance via
\begin{equation}
\sigma_{xy} = C \frac{e^2}{h}.
\end{equation}
For a simple two-level system, if we recognize the Brillouin zone as a torus and regard $\ket{u_{\boldsymbol k}} = a_{\boldsymbol k} \ket{\uparrow} + b_{\boldsymbol k} \ket{\downarrow}$ at each $\boldsymbol k$ as representing a point on the Bloch sphere, then the Chern number will count how many times $\ket{u_{\boldsymbol k}}$ wraps around the sphere as $\boldsymbol k$ moves around the torus (see Figure \ref{fig:chern}). Since a torus can only wrap around a sphere an integer number of times, the Chern number must be an integer and Hall conductance must be quantized.

\begin{figure}
\centering
\includegraphics[width=5in]{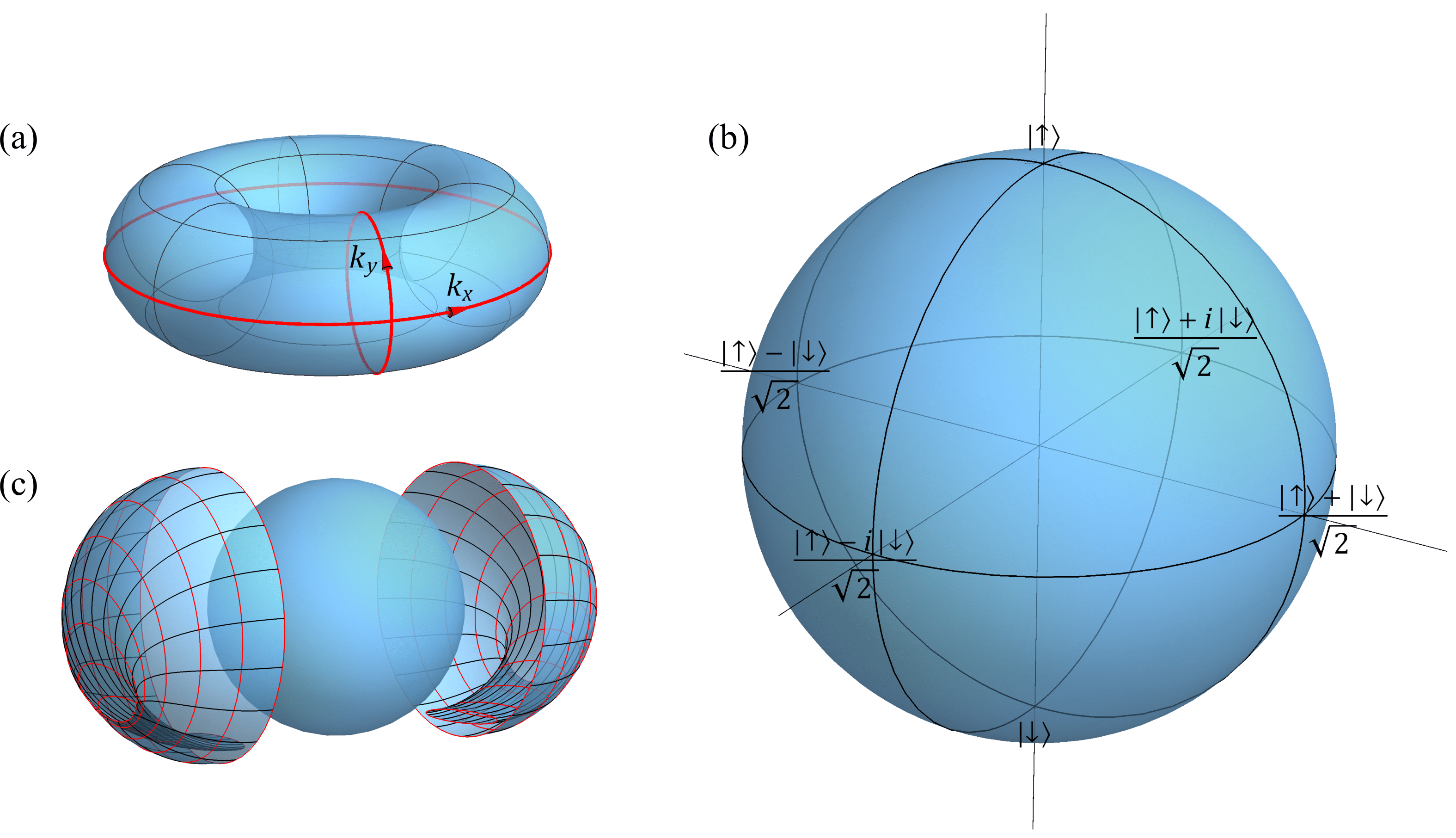}
\caption[Chern number as the degree of a map from the 2D Brillouin zone to the Bloch sphere.]{For a two-level system, the Chern number counts how many times (a) the Brillouin zone wraps around (b) the Bloch sphere. (c) A possible way to wrap a torus (dissected for easier viewing) around a sphere that corresponds to a Chern number of $C = 1$.}
\label{fig:chern}
\end{figure}

The connection between physics and topology in the quantum Hall effects was not a coincidence. Topology, in itself, is the study of the properties of space that are independent of continuous deformations, such as stretching, bending, or squeezing. Although a coffee cup has a very different geometry than a donut---one can hold liquid and the other cannot---they have the same topology since we can continuously deform one into the shape of the other while preserving the existence of a hole, which they both possess. In the case of Chern number above, the number of times the map $\ket{u_{\boldsymbol k}}$ wraps a torus around a sphere is independent of continuous deformations of the map $\ket{u_{\boldsymbol k}}$ itself. In the last several decades, countless other topological phenomena in condensed matter physics have been discovered. They are not restricted to itinerant electrons in periodic potentials, but also occur for other systems whose degrees of freedom are bosonic in nature, such as Mott insulators. The ground state of a spin-$1$ chain with a simple antiferromagnetic Heisenberg interaction,
\begin{equation}
\hat H = J \sum_{j = 1}^N \hat{\boldsymbol S}_j \cdot \hat{\boldsymbol S}_{j+1}, ~~~~ J > 0, \label{spin1AF}
\end{equation}
for example, was found to have emergent spin-$\frac12$ degrees of freedom at its boundaries \cite{AKLT, PhysRevLett.50.1153, Haldane_NLSM, Affleck_Haldane, Haldane_gap}. For Hamiltonian (\ref{spin1AF}), there is a gap between its ground state and the lowest excited state. For the more general Hamiltonian
\begin{equation}
\hat H = J \sum_{j = 1}^N \brackets{\hat{\boldsymbol S}_j \cdot \hat{\boldsymbol S}_{j+1} - \beta \paren{\hat{\boldsymbol S}_j \cdot \hat{\boldsymbol S}_{j+1}}^2 }, \label{spin1p}
\end{equation}
it was found, as $\beta$ is tuned away from $0$, that the spin-$\frac12$ degrees of freedom persist as long as the energy gap remains open. It would later be realized that this is true not only for deformation (\ref{spin1p}) but for any deformation that respects the $SO(3)$ spin-rotation symmetry, and has to do with the topological nature of symmetry fractionalization \cite{Wen_1d, Cirac}. It was shown that under renormalization, the ground state of a gapped spin chain with symmetry $G$ flows to a fixed-point state with boundary degrees of freedom that carry projective representations of $G$ (see Figure \ref{fig:AKLT}). Projective representations can be divided into equivalence classes: For $G = SO(3)$, half-integer spins belong to one equivalence class, and integer spins belong to the other. Because equivalence classes are rigid objects that do not change under continuous deformations---not even by reversing the renormalization flow---it naturally follows that one cannot eliminate the boundary spin-$\frac12$'s without breaking one of the assumptions (e.g.\,symmetry, or energy gap).

\begin{figure}
\centering
\includegraphics[width=2.5in]{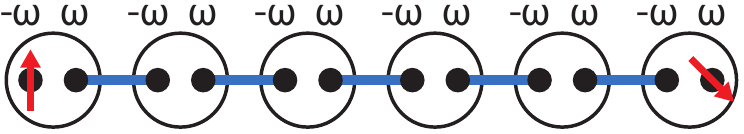}
\caption[The ground state of a gapped spin chain whose boundaries carry projective representations of the symmetry.]{Under renormalization, the ground state of a gapped spin chain with symmetry $G$ flows to a fixed-point state with boundary degrees of freedom that carry projective representations (red arrows) of $G$. The factor systems $-\omega$ and $\omega$ of the projective representations at the two edges must be conjugate to each other. In the bulk, fractionalized degrees of freedom that carry conjugate factor systems form inter-site maximally entangled states (blue line segments).}
\label{fig:AKLT}
\end{figure}

A 2D system with a nonzero Chern number is now known as a Chern insulator, and the phase of spin-$1$ chain represented by Hamiltonian (\ref{spin1AF}) is now known as the Haldane phase. Examples like these have inspired the now vibrant field of topological phases of quantum many-body systems. In the spirit of topology, the concept of a phase is now defined by whether two systems can be continuously deformed to each other. There are different variants of topological phases, depending on whether one allows for interactions and whether one imposes a symmetry. In this thesis, we will always allow for interactions. Thus, we are concerned with the topological phases of strongly interacting quantum many-body systems.

\section*{The theoretical challenge}

The subject of topological phases of strongly interacting quantum many-body systems is a difficult one. Several reasons can be named. First of all, the systems are quantum many-body systems. As is well-known, the dimensionality of a quantum many-body systems grows exponentially with system size. If an individual particle of the quantum system---a localized spin, an electron, etc.---has $m$ possible states, then to fully describe a many-body state, it will require $m^N$ complex numbers $\Psi_{i_1 i_2 \ldots i_N}$, where $N$ is the number of such individual particles the system contains:
\begin{equation}
\ket{\Psi} = \sum_{i_1, i_2, \ldots, i_N = 1}^m \Psi_{i_1 i_2 \ldots i_N} \ket{i_1 i_2 \ldots i_N}.
\end{equation}
This ``curse of dimensionality" makes a brute-force, exact diagonalization of Hamiltonians computationally expensive, which can usually only go up to an $N$ on the order of magnitude of $\sim 10^1$ \cite{zhang2010exact, weisse2008exact}, well below the Avogadro number $N_A = 6.022 \times 10^{23}$.

The second reason is that we allow for strong interactions. This means the Hamiltonian can contain not only bilinear terms in fermionic operators,
\begin{equation}
\hat c_\alpha^\dag \hat c_\beta, ~~ \hat c_\alpha^\dag \hat c_\beta^\dag, ~~ \hat c_\alpha \hat c_\beta, 
\end{equation}
but also quartic terms, sextic terms, octic terms, and so on. Without the higher-order terms, any Hamiltonian is exactly solvable, with a computational complexity polynomial in $N$ in the second-quantized basis. Such is the case for the band theory of non-interacting electrons in periodic potentials, where a single-particle picture suffices. Strong interactions render single-particle solutions invalid. For a strongly interacting system, some approximation is often needed. On the flip side, permitting interactions lets us study ``bosonic"---or spin---systems as a stand-alone problem. These are systems whose only relevant degrees of freedom are spins localized on different atomic sites. Individual spin operators are quadratic in fermionic operators,
\begin{equation}
\hat S^i = \frac12 \sum_{\alpha, \beta = \uparrow, \downarrow} \hat c_\alpha^\dag \sigma^i_{\alpha\beta} \hat c_\beta,
\end{equation}
for $i = x, y, z$, where $\sigma^i$ are the Pauli matrices, so any spin-spin interaction will necessarily involve higher-order terms in the fermionic operators.

The third reason topological phases of strongly interacting quantum many-body systems are difficult to study lies in the definition of topological phases. Conventionally, when one studies phase transitions, one focuses on qualitative changes to the behavior of a Hamiltonian as a small number of parameters are varied, such as the value of a coupling constant $g$, as in
\begin{equation}
\hat H = \hat H_0 + g \hat H_1.
\end{equation}
If a singularity occurs for some $g^*$, one says that there is a phase transition. In contrast, for topological phases, one considers all possible paths in the space of all Hamiltonians. Thus even if there is no ``direct" path between two Hamiltonians, as long as there is an ``indirect" path between them, one still considers them to be in the same phase. In this spirit---although these are thermal phases not topological phases---one would consider the liquid and gas phases of many substances to be the same phase, provided such substances have a supercritical fluid region allowing for an indirect path between the two phases (see Figure \ref{fig:CO2}). To determine two systems to be in distinct topological phases, one has to rule out all possible, arbitrarily complex paths between the two systems. This makes a direct approach to the problem impossible.

\begin{figure}
\centering
\includegraphics[width=3.5in]{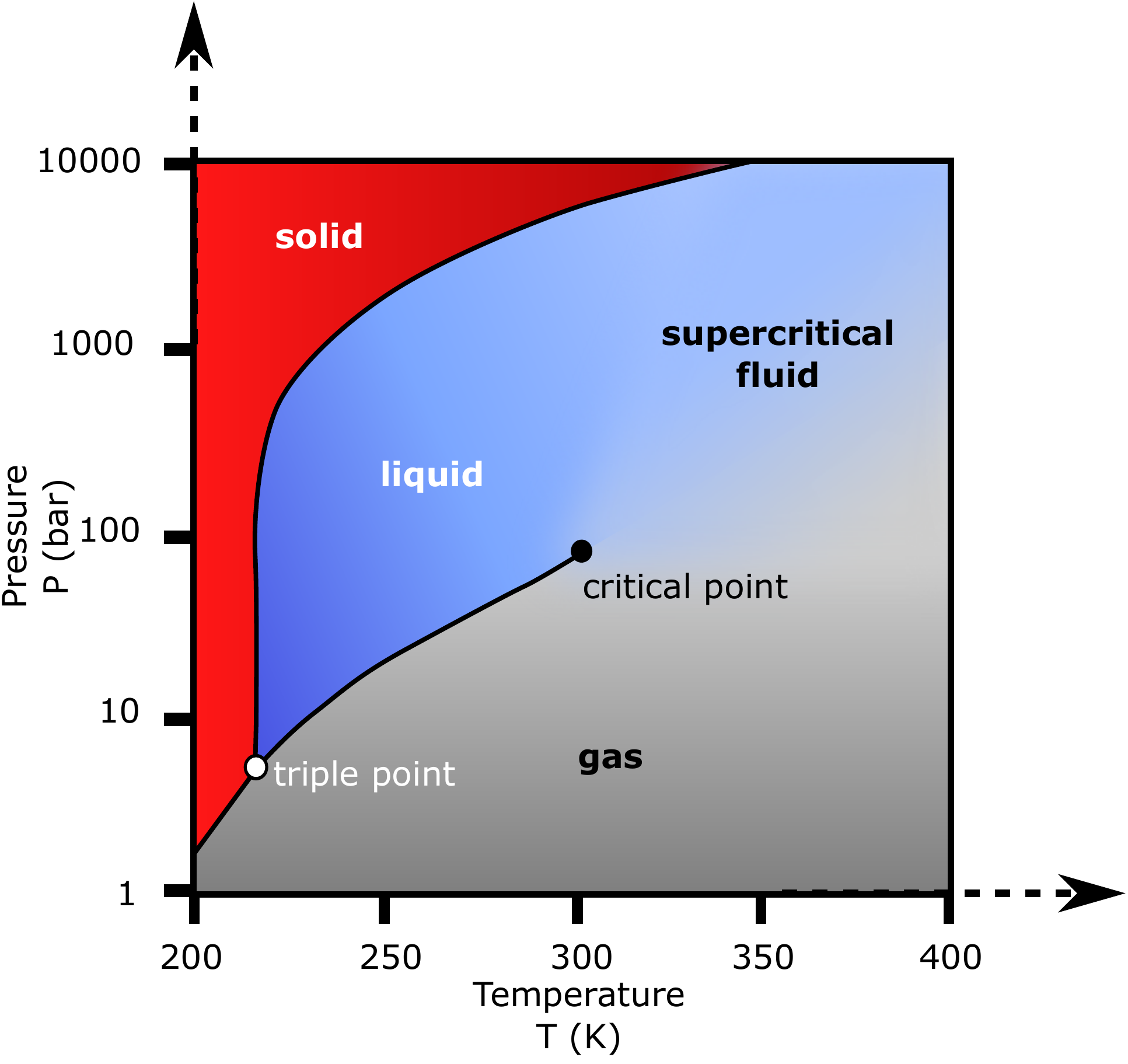}
\caption[The thermal phase diagram of carbon dioxide with respect to temperature and pressure.]{The thermal phase diagram of carbon dioxide with respect to temperature and pressure.}
\label{fig:CO2}
\end{figure}

The definition of topological phases imposes a condition on the Hamiltonians: an energy gap above a non-degenerate ground state (on a Euclidean space $\mathbb E^d$ or its compactification $\SS^d$), which is a blessing and a curse. It is a blessing because, as it turns out, this condition is effectively a recipe for a notion of ``zero-temperature phases" as it guarantees that the topological phase of a system depends on only its ground state. Note that, as well as accidental degeneracy, the condition prohibits any degeneracy due to spontaneous symmetry breaking, which is another simplification. The study of topological phases is essentially a study of the zeroth homotopy group $\pi_0$---the set of path-connected components---of the space of gapped systems, which is the coarsest information one can extract from a topological space. On the other hand, the gap condition is a curse because, when considering paths between two gapped systems, one now has to keep the gap open everywhere on the path. In general, it is nontrivial to determine whether a generic Hamiltonian is gapped or gapless. Since the separation of two topological phases requires in theory the consideration of all paths, the gap condition could sometimes create major hurdles.

\section*{Abstraction pays dividends}

\begin{figure}
\centering
\includegraphics[width=4.5in]{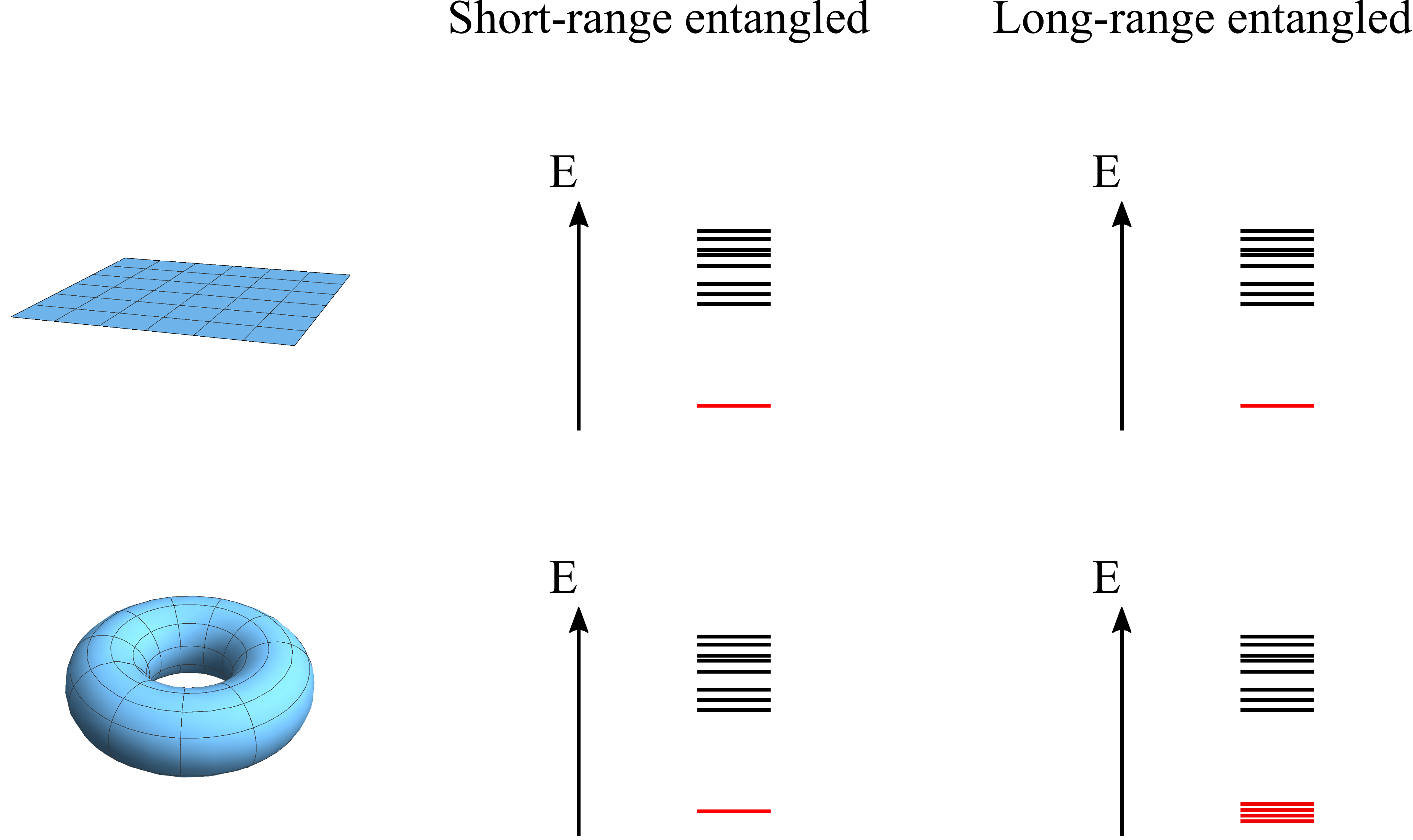}
\caption[Ground-state degeneracy of short-range entangled versus long-range entangled systems.]{Ground-state degeneracy of short-range entangled (left) versus long-range entangled systems (right) on spatial manifolds with trivial (top)---such as a Euclidean space $\mathbb E^d$ or its compactification $\SS^d$---versus nontrivial (bottom)---such as a $d$-torus $T^d$---topology.\label{fig:gsd}}
\end{figure}

In this thesis, we are specifically interested in the so-called \emph{symmetry protected topological phases}, or SPT phases \cite{Wen_Definition}, which are a subset of topological phases whose entanglement patterns are ``short-ranged." There are many manifestations of short-range entanglement. In contrast to ``long-range entangled"---or ``topologically ordered"---systems, which, when put on a spatial manifold with nontrivial topology, have topologically protected ground-state degeneracy, short-range entangled systems have a unique ground state on all spatial manifolds (see Figure \ref{fig:gsd}). While 2D long-range entangled systems have particle-like excitations above the ground state---or ``quasiparticles"---with anyonic statistics, 2D short-range entangled systems have no such anyonic excitations (see Figure \ref{fig:braiding}). This does not mean that SPT phases are trivial. Instead, SPT phases still exhibit anomalous boundaries, which either cannot be gapped out without breaking the symmetry, or can be symmetrically gapped out but still cannot exist as stand-alone lower-dimensional systems---SPT phases serve as the bulk that cancels boundary anomalies (see Figure \ref{fig:anomaly_inflow}). Typically, in discussing SPT phases, one assumes a symmetry $G$, fixes a dimension $d$, and considers $d$-dimensional SPT phases that have $G$ as a symmetry. The goal of the collection of works that make up this thesis is to classify and systematically construct bosonic and fermionic SPT phases for a general dimension $d$ and symmetry group $G$.

\begin{figure}
\centering
\includegraphics[width=3in]{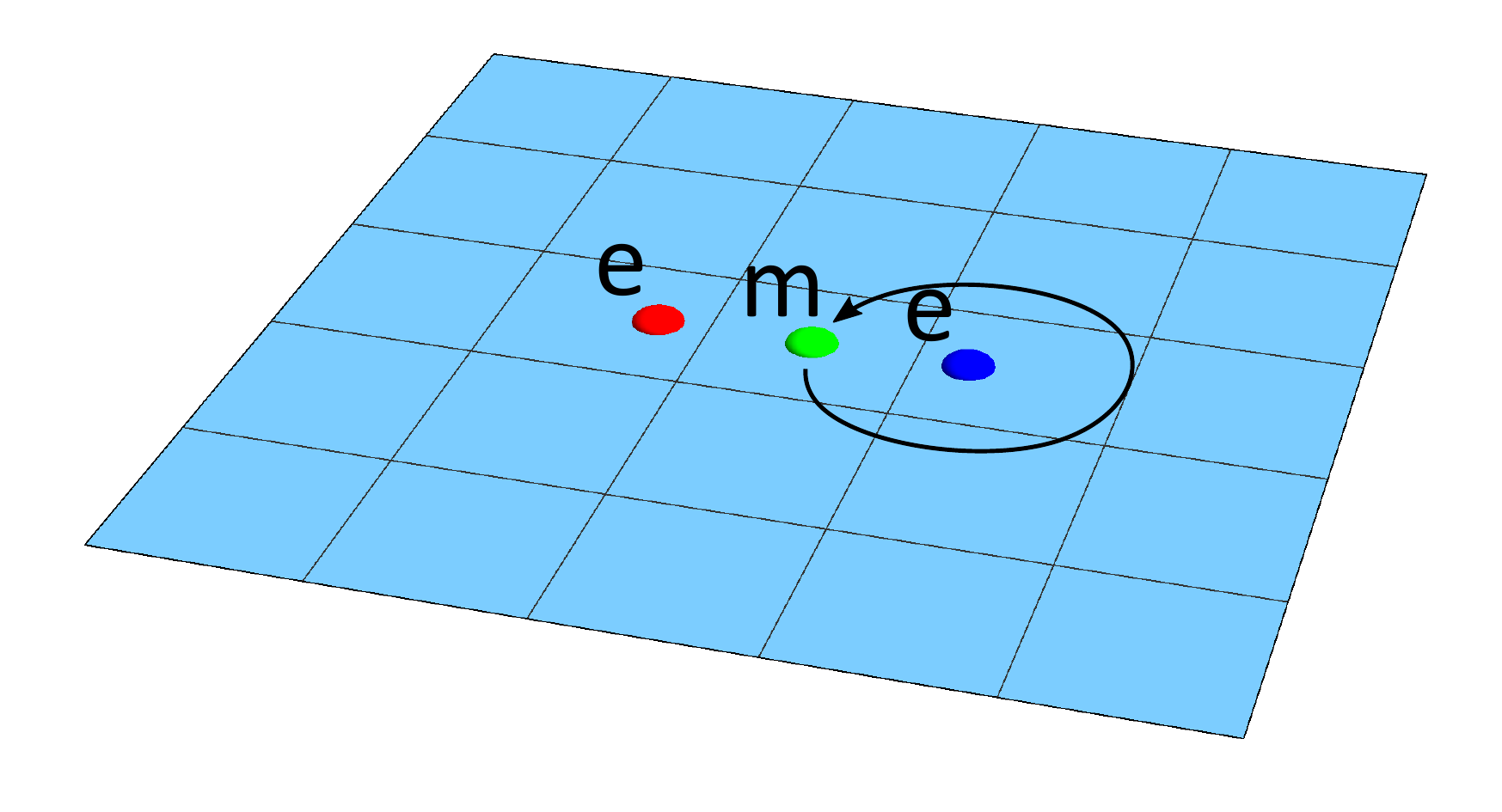}
\caption[Quasipartiples of a 2D long-range entangled system.]{In two dimensions, long-range entangled---or topologically ordered---systems have particle-like excitations above the ground state with anyonic statistics.\label{fig:braiding}}
\end{figure}

Before launching into a treatise on the classification and construction of SPT phases, let us take a step back and look at the studies of some of the other kinds of systems. First, suppose we do not allow interactions, so the only systems we can have are non-interacting fermionic systems. Non-interacting fermions in periodic potentials can be studied using band theory, if there is charge conservation. Suppose we further assume all bands are either completely filled or completely empty, so the systems we have are insulators. At each momentum $\boldsymbol k$, the filled states span a subspace of the vector space spanned by atomic orbitals. As $\boldsymbol k$ moves around the Brillouin zone, which is a $d$-torus $T^d$ for a $d$-dimensional system, the filled bands will trace out a vector bundle over $T^d$. This vector-bundle viewpoint has led to the $K$-theory classification of all topological insulators and superconductors in all dimensions for all ten symmetry classes that involve time-reversal, particle-hole, or chiral symmetries \cite{Kitaev_TI}, which revealed that the classification is periodic in the symmetry class and dimension $d \mod 8$ \cite{Hasan_Kane}.


\begin{figure}
\centering
\includegraphics[width=2.5in]{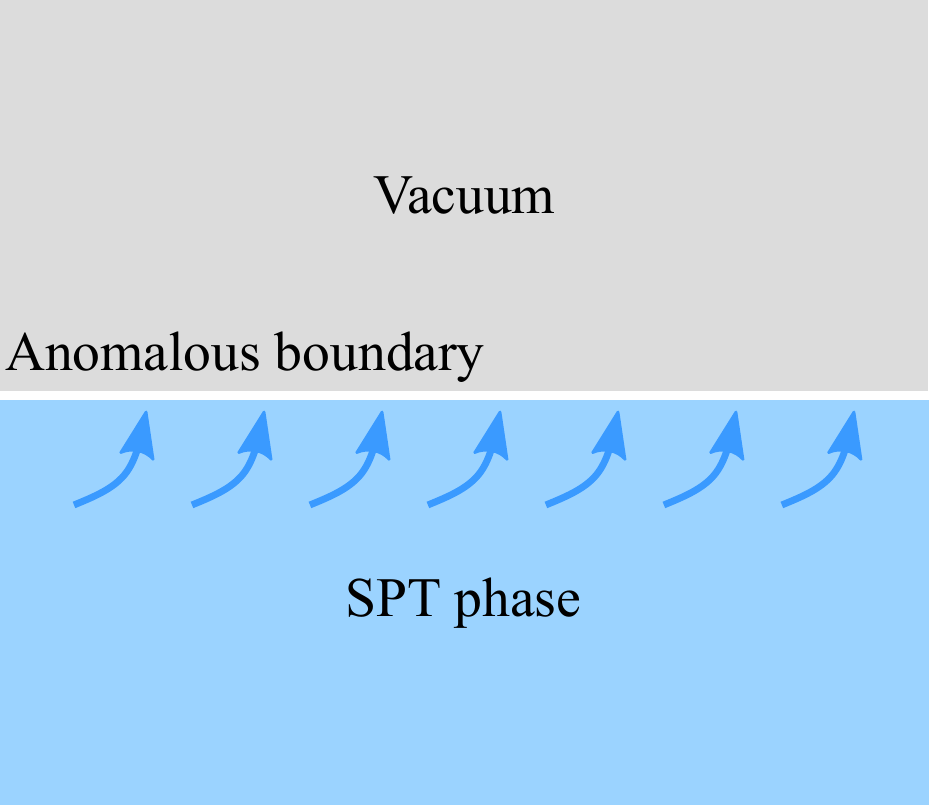}
\caption[Cancellation of boundary anomalies by SPT phases in the bulk.]{SPT phases serve as the bulk that cancel boundary anomalies.\label{fig:anomaly_inflow}}
\end{figure}

For a second example, let us reintroduce interactions and consider systems that are long-range entangled. Let us restrict to two dimensions. 2D long-range entangled systems have anyonic excitations above the ground state, which can be braided around each other, fused together into other anyons, or twisted on their own. Such a system can be abstracted into a \emph{modular tensor category} \cite{Kitaev_honeycomb}: the anyons become the \emph{objects} of the category, and physical operations like braiding, fusion, and twist become the \emph{morphisms} between the objects. This abstraction opened the way for the study of long-range entangled systems by the powerful machinery of tensor categories, topological quantum field theories, and modular functors \cite{bakalov2001lectures}. This has led to, among others, a classification of topological orders that have $\leq 4$ superselection sectors \cite{rowell2009classification}.

A classification using tensor categories is essentially a classification of topological phases modulo SPT phases. The classification of SPT phases themselves can be just as hard.
From the two examples above, we have learned that abstraction pays dividends. The question now is this: is there a general mathematical framework for the classification of SPT phases in the same way as there are tensor categories for topological orders and $K$-theory for topological insulators and superconductors?

\section*{A new framework for the classification and construction of SPT phases}

In this thesis, we will develop and apply a new, minimalist framework for the classification of SPT phases, in which we model the space of short-range entangled states by what are called \emph{$\Omega$-spectra} and the classification of SPT phases by what are called \emph{generalized cohomology theories}. This framework will give a unified view on the various proposed classifications of SPT phases \cite{Wen_Boson, Kapustin_Boson, Wen_Fermion, Kapustin_Fermion, Freed_SRE_iTQFT, Freed_ReflectionPositivity, wang2018towards, Huang_dimensional_reduction}, which are inequivalent from each other and are approximations to the unknown, true classification. This framework is based on the series of talks \cite{Kitaev_Stony_Brook_2011_SRE_1, Kitaev_Stony_Brook_2013_SRE, Kitaev_IPAM} by Kitaev. In our works \cite{Xiong, Xiong_Alexandradinata, Shiozaki2018, SongXiongHuang}, we developed the framework in full detail and demonstrated its power with a number of applications:
\begin{enumerate}
\item We classified and constructed 3D fermionic SPT phases in Wigner-Dyson classes A and AII with an additional glide symmetry. We showed that the so-called ``hourglass fermions" \cite{Ma_discoverhourglass, Hourglass, Cohomological} are robust to strong interactions. \label{application1}

\item We classified and constructed 3D bosonic SPT phases with space-group symmetries for all 230 space groups. Previously, only an incomplete classification had existed, which did not include phases built from the so-called $E_8$ state \cite{Kitaev_honeycomb, 2dChiralBosonicSPT, Kitaev_KITP}. \label{application2}

\item We gave a general rule for determining the classification of SPT phases with reflection symmetry given the classification of SPT phases without reflection symmetry. This is an application of the so-called Mayer-Vietoris long exact sequence. \label{application3}

\item We demonstrated that the structure of SPT phases with crystalline symmetries can in general be understood in terms of the so-called Atiyah-Hirzebruch spectral sequence. We suggested that there exist previously unconsidered higher-order gluing conditions and equivalence relations, which correspond to higher-order differentials in the Atiyah-Hirzebruch spectral sequence. \label{application4}
\end{enumerate}

The organization of this thesis is as follows. In Chapter \ref{chap:minimalist}, we will develop the aforementioned minimalist framework for the classification of SPT phases. In Secs.\,\ref{sec:glide} and \ref{sec:3D_beyond_group_cohomology} of Chapter \ref{chap:applications}, we will present applications \ref{application1} and \ref{application2} above, respectively. In Sec.\,\ref{sec:space_groups}, we will discuss the classifying spaces of space groups. In Secs.\,\ref{sec:Mayer-Vietoris} and \ref{sec:Atiyah-Hirzebruch} of Chapter \ref{chap:advanced}, we will present applications \ref{application3} and \ref{application4} above, respectively. In Chapter \ref{chap:conclusion}, we will conclude the thesis.


\chapter{A minimalist framework for the classification of SPT phases}
\label{chap:minimalist}

In this chapter, we develop the generalized cohomology framework for the classification of symmetry protected topological phases, or SPT phases. We will focus on bosonic SPT phases until Sec.\,\ref{subsec:fermionic_SPT}, at which point we will discuss the necessary changes for fermionic SPT phases. We will begin in Secs.\,\ref{sec:short-range entangled_SPT} by precisely defining the space of physical systems we are interested in, and then proceed to define the notion of topological phases. One can define topological phases as path-connected components of either the space of Hamiltonians or the space of ground states. In Sec.\,\ref{subsec:whe}, we will show, under reasonable assumptions, that the two spaces have the same set of components and are perhaps even weakly homotopy equivalent, so the definition of topological phases is unambiguous. We will identify a subset of the topological phases as the SPT phases. There are two distinct definitions of SPT phases: we will review the traditional one in Sec.\,\ref{subsec:traditional_definition} and give a modern definition, which we adopt in this thesis, in Sec.\,\ref{subsec:modern_definition}. In Sec.\,\ref{sec:short-range entangled_as_spectrum}, we make the point that the spaces $F_d$ of $d$-dimensional short-range entangled states form what is called an $\Omega$-spectrum. We will discuss the mathematical implications of this statement, which includes a relation between the homotopy groups of the spaces $F_d$ for different $d$. The physical reason why short-range entangled states form an $\Omega$-spectrum is given in Sec.\,\ref{sec:physical_meaning_of_spectrum}. We will see that the defining condition of an $\Omega$-spectrum can be interpreted by considering either the process of pumping a lower-dimensional short-range entangled state to the boundary in a cyclic adiabatic evolution, or a continuous spatial pattern of short-range entangled states. In Sec.\,\ref{sec:from_spectrum_to_classification}, we recall that every $\Omega$-spectrum defines what is called a generalized cohomology theory. We will then present the key statement of the generalized cohomology framework, known as the generalized cohomology hypothesis, which relates the classification of SPT phases to generalized cohomology theories. Finally, in Sec.\,\ref{sec:existing_proposals}, we will discuss how existing proposals for the classification of SPT phases fit into our general framework.

The discussions in this chapter are largely based on my work \cite{Xiong}. The generalization to antiunitary symmetries is based on App.\,A of my work \cite{Xiong_Alexandradinata} with A.\,Alexandradinata. The precise, general formulation where symmetries can be simultaneously spatial and internal, as at the beginning of Sec.\,\ref{sec:short-range entangled_SPT}, is new. In Sec.\,\ref{subsec:whe}, the argument that the space of Hamiltonians and the space of ground states have the same set of path-connected components and are perhaps even weakly homotopy equivalent is new, though I suspect this was the argument behind a similar claim in talk\,\cite{Kitaev_IPAM}, which was made without any elaboration. The precise, general formulation for fermionic systems, as in Sec.\,\ref{subsec:fermionic_SPT}, is based on App.\,A of work \cite{Xiong_Alexandradinata}. In any case, the formulation in Secs.\,\ref{sec:short-range entangled_SPT}, \ref{sec:from_spectrum_to_classification}, and \ref{subsec:fermionic_SPT} is done with more care than in previously published materials.

\section{Definition of short-range entangled states and SPT phases}
\label{sec:short-range entangled_SPT}

We will first define bosonic short-range entangled states and SPT phases, and consider the fermionic counterpart later. We fix a dimension $d$, a lattice\footnote{By a space group $\mathcal S$ we mean a crystallographic space group, i.e.\,a group of isometries of $\mathbb E^d$ that has a compact fundamental domain \cite{fundamental_domain}. By a lattice $\mathcal L$ we mean an $\mathcal S$-invariant subset of $\mathbb E^d$ that consists of one point from each cell of each dimension of an $\mathcal S$-CW structure on $\mathbb E^d$. For example, for $\mathbb E^1$ and the $\mathcal S$ generated by translation $x \mapsto x+1$ and reflection $x \mapsto -x$, if we choose the standard $\mathcal S$-CW structure, then $\mathcal L$ will be
\begin{equation}
\ZZZ \cup \braces{x + \frac12 \Big| x \in \ZZZ} \cup \braces{x + \delta | x \in \ZZZ} \cup \braces{x - \delta | x \in \ZZZ}
\end{equation}
for some $0 < \delta < \frac12$, which contains all reflection-symmetric points as well as representatives of non-reflection-symmetric 1-cells.} $\mathcal L$ with space group $\mathcal S$ defined on the Euclidean space $\mathbb E^d$, a symmetry group $G$, a homomorphism $\rho_{\rm spa}: G \fromto \mathcal S$ of $G$ into the space group, and a homomorphism $\phi_{\rm int}: G \fromto \braces{\pm 1}$. We consider all $d$-dimensional \emph{bosonic}, \emph{local}, \emph{gapped}, \emph{$G$-symmetric} Hamiltonians $\hat H$ defined on $\mathcal L$. The terminology is explained:
\begin{enumerate}
\item $\hat H$ being bosonic means that there is a finite-dimensional Hilbert space $\mathscr H$ associated to all sites\footnote{More generally, one can associate different finite-dimensional Hilbert spaces to different sites in $\mathcal L$ such that sites related by $\mathcal S$ have isomorphic Hilbert spaces. However, this does not make any difference in the classification of phases as long as we are allowed to embed Hilbert spaces into larger ones. It follows from our definition of a lattice that there are at most finitely many sites per fundamental domain. If these sites have Hilbert spaces $\mathscr H_i$ for $i = 1, 2, \ldots$, then we can embed them all into $\mathscr H \coloneq \bigoplus_i \mathscr H_i$.} in $\mathcal L$ and that $\hat H$ is a sum of tensor products of operators on different copies of $\mathscr H$.

\item $\hat H$ being local means that there is a $k$ such that the aforementioned tensor products have finite supports with linear size $< k$.

\item $\hat H$ being gapped means that $\hat H$ has a unique gapped ground state in the thermodynamic limit.

\item $\hat H$ being $G$-symmetric means that there is a representation $\rho_{\rm int}$ of the \emph{action groupoid} $G \ltimes_{\rho_{\rm spa}} \mathcal L$ by unitary or antiunitary operators on $\mathscr H$ such that $\hat H$ commutes with joint action of $\rho_{\rm spa}$ and $\rho_{\rm int}$. By definition, such a $\rho_{\rm int}$ is a map
\begin{equation}
\rho_{\rm int}: G \times \mathcal L \fromto \braces{\mbox{unitary or antiunitary operators on $\mathscr H$}}
\end{equation}
that satisfies
\begin{eqnarray}
\rho_{\rm int}(e, \bx) &=& \hat{\mathbb I}, \\
\rho_{\rm int}\paren{g_1, \rho_{\rm spa}(g_2) \bx} \rho_{\rm int}\paren{g_2, \bx} &=& \rho_{\rm int}\paren{g_1g_2, \bx},
\end{eqnarray}
for all $g_1, g_2 \in G$ and $\bx \in \mathcal L$, where $e \in G$ denotes the identity element. Specifically, we require $\rho_{\rm int}(g, \bx)$ to be antiunitary if and only if $\phi_{\rm int}(g) = -1$. The action $\hat H$ commutes with is explicitly
\begin{eqnarray}
\rho(g): \bigotimes_{\bx \in \mathcal L} \mathscr H_{\bx} &\fromto& \bigotimes_{\bx \in \mathcal L} \mathscr H_{\bx} \nonumber\\
\bigotimes_{\bx \in \mathcal L} \ket{\psi_{\bx}} &\mapsto& \bigotimes_{\bx \in \mathcal L} \rho_{\rm int}(g, \rho_{\rm spa}(g)^{-1} \bx) \ket{\psi_{\rho_{\rm spa}(g)^{-1} \bx}}.
\end{eqnarray}
\end{enumerate}
Physically, $\rho_{\rm spa}$ and $\rho_{\rm int}$ describe how $G$ permutes the lattice sites and how it acts internally, respectively. The $\bx$ in $\rho_{\rm int}(g, \bx)$ denotes the site to which $g$ is applied, since the internal representation can depend on it. If $\rho_{\rm spa}$ is chosen to be trivial, then we will have a symmetry action that is purely internal.
The homomorphism $\phi_{\rm int}$ keeps track of which elements of $G$ act unitarily on $\mathscr H$ and which antiunitarily. We may define another homomorphism $\phi: G \fromto \braces{\pm 1}$ such that
\begin{equation}
\phi(g) = \phi_{\rm int}(g) \phi_{\rm spa}(g), \label{phi_homomorphism}
\end{equation}
where $\phi_{\rm spa}(g) = -1$ if $\rho_{\rm spa}(g)$ is orientation-reversing and $1$ otherwise, which will be used later. Note that $\mathcal S$ is only the symmetry of the lattice on which $\hat H$ is defined. We did not define an action of $\mathcal S$ on $\mathscr H$, and $\hat H$ is not required to respect $\mathcal S$, but rather only $G$.

Within the space of $d$-dimensional bosonic, local, gapped, $G$-symmetric Hamiltonians on $\mathcal L$, we can consider paths, that is, one-parameter families of Hamiltonians
\begin{equation}
\hat H_t ~~~\mbox{for}~~~ 0 \leq t \leq 1,
\end{equation}
which have the same $\mathscr H$, $k$, and $\rho_{\rm int}$ for all values of $t$. Essentially, we would like to define the path-connected components of this space to be the $d$-dimensional bosonic $G$-equivariant topological phases. However, this would not be a very good idea because the notion of paths depends on $\mathscr H$, $k$, and $\rho_{\rm int}$. In general, we would like to be able to compare two systems with different $\mathscr H$, $k$, or $\rho_{\rm int}$ and state whether they are in the same phase. After all, $k$ is an upper bound that can be increased, and $\mathscr H$ depends in a real system on which degrees of freedom we choose to include. The proper way to do it is to consider a \emph{direct limit} of the spaces over $\mathscr H$, $k$, and $\rho_{\rm int}$. Practically, this means that to compare system $0$ which has $\mathscr H_0$, $k_0$, and $\rho_{\rm int, 0}$ and system $1$ which has $\mathscr H_1$, $k_1$, and $\rho_{\rm int, 1}$, we are allowed to relax the upper bound to some
\begin{equation}
k \geq k_0, k_1,
\end{equation}
and embed the Hilbert spaces into some
\begin{equation}
\mathscr H \supset \mathscr H_0, \mathscr H_1,
\end{equation}
such that
\begin{equation}
\left. \rho_{\rm int} \right| {\mathscr H_0} = \rho_{\rm int, 0}, ~~~ \left. \rho_{\rm int} \right| {\mathscr H_1} = \rho_{\rm int, 1},
\end{equation}
where $\rho_{\rm int}$ is a representation of $G \ltimes_{\rho_{\rm spa}} \mathcal L$ on $\mathscr H$. We modify $\hat H_0$ (resp.\,$\hat H_1$) to give states orthogonal to $\mathscr H_0$ (resp.\,$\mathscr H_1$) higher energy so that the ground state remains in $\mathscr H_0 \subset \mathscr H$ (resp.\,$\mathscr H_1 \subset \mathscr H$). We are then to find a path between $\hat H_0$ and $\hat H_1$ with respect to the common $k$, $\mathscr H$, and $\rho_{\rm int}$. We say systems $0$ and $1$ are in the same \emph{$d$-dimensional bosonic $G$-equivariant topological phase} if such $k$, $\mathscr H$, and $\rho_{\rm int}$ can be found so that there is a path between $\hat H_0$ and $\hat H_1$. This is a kind of ``stable equivalence" \cite{Wen_1d, Cirac}, analogous to how one is allowed to add trivial bands in the classification of topological insulators \cite{Kitaev_TI}.

\subsection{Weak homotopy equivalence between space of Hamiltonians and space of states}
\label{subsec:whe}

The space considered above is a space of Hamiltonians. It is easy to see the phase of a Hamiltonian depends only on its ground state. Suppose $\hat H_0$ and $\hat H_1$ shared a gapped ground state $\ket{\Psi}$ and both respected $\rho$ (after any necessary embedding). Then $\ket{\Psi}$ would also be the ground state of
\begin{equation}
\hat H_t = (1-t) \hat H_0 + t \hat H_1 \label{linear_interpolation}
\end{equation}
for all $0 \leq t \leq 1$, which respects $\rho$ and has a gap $\Delta_t \geq (1-t) \Delta_0 + t \Delta_1 > 0$, where $\Delta_0, \Delta_1 > 0$ are the gaps of $\hat H_0$ and $\hat H_1$, respectively. Therefore, it would be desirable to introduce a space of states, namely the space of ground states of $d$-dimensional bosonic, local, gapped, $G$-symmetric Hamiltonians on $\mathcal L$. This is a quotient space of the space of Hamiltonians, where Hamiltonians with the same ground state are identified.

We can speak of paths in the space of states like we did for the space of Hamiltonians and define the notation of phase accordingly. Obviously, if there is a path between two Hamiltonians, then there is also a path between their ground states. Conversely, it can be argued that if there is a path between two ground states, then there is also a path between the Hamiltonians. This depends on the assumption that the map from the space of Hamitonians to the space of ground states has the \emph{path lifting property}: given ground states $\ket{\Psi_t}$ for $0 \leq t \leq 1$ and a parent Hamiltonian $\hat H_0$ of $\ket{\Psi_0}$, there exists a path $\hat{\tilde H}_t$ for $0 \leq t \leq 1$ such that $\hat{\tilde H}_0 = \hat H_0$ and that $\hat{\tilde H}_t$ is a parent Hamiltonian of $\ket{\Psi_t}$ for all $0 \leq t \leq 1$. The endpoint $\hat{\tilde H}_1$ of the lift may not be $\hat H_1$, but since they share ground state $\ket{\Psi_1}$, there is a path between them by the previous paragraph; it follows that there is a path between $\hat H_0$ and $\hat H_1$ themselves. To my knowledge, there has not been a proof of the path lifting property in the literature,\footnote{Note that what Chen, Gu, and Wen argued in Ref.\,\cite{Wen_Definition} was something different, which is there is a path between parent Hamiltonians if and only if there is a ``local unitary evolution" that relates the ground states, that is,
\begin{equation}
\ket{\Psi_1} = \T \brackets{e^{- i \int_0^1 dt \hat{\tilde H}_t }} \ket{\Psi_0}
\end{equation}
for some $\hat{\tilde H}_t$ for $0 \leq t \leq 1$. Since the existence of a path between ground states does not a priori imply the existence of a local unitary evolution, the ``if" part of the claim in Ref.\,\cite{Wen_Definition} does not imply the ``if" part of our claim that there is a path between parent Hamiltonians if and only if there is a path between ground states.} except when some ansatz such as injective\footnote{The definition of ``injective" in Ref.\,\cite{Cirac} follows that of Ref.\,\cite{MPS_fundamental_theorem}. Note that sometimes, e.g.\,in Ref.\,\cite{MPS_geometry}, the word ``injective" is used to mean something weaker, which is called ``normal" in Ref.\,\cite{MPS_fundamental_theorem}.} matrix product states is assumed \cite{Cirac}. However, the path lifting property will hold as long as the map from the space of Hamiltonians to the space of ground states is a \emph{Serre fibration}, and either of these would seem like a reasonable assumption to make. In conclusion, under reasonable assumptions, the notion of \emph{$d$-dimensional bosonic $G$-equivariant topological phase} is unambiguous whether it is defined using ground states or Hamiltonians. We shall therefore focus on the space of states from now on.

In fact, if the map from the space of Hamiltonians to the space of ground states is a Serre fibration, then we can show that the two spaces are \emph{weakly homotopy equivalent}. Every Serre fibration $X \fromto Y \fromto Z$ admits a long exact sequence of homotopy groups
\begin{eqnarray}
\cdots \fromto
\pi_1\paren{X, x_0} \fromto \pi_1\paren{Y, y_0} \fromto \pi_1\paren{Z, z_0} \fromto
\pi_0\paren{X, x_0} \fromto \pi_0\paren{Y, y_0} \fromto \pi_0\paren{Z, z_0},
\end{eqnarray}
where $x_0$, $y_0$, and $z_0$ denote some choice of basepoints of $X$, $Y$, and $Z$, respectively, such that $y_0$ is a lift of $z_0$ and $x_0$ is the restriction of $y_0$. We take $Y$ to be the space of Hamiltonians and $Z$ to be the space of ground states. The fiber $X$ is the inverse image of a ground state---the basepoint $z_0$ of $Z$---with respect to the map from the space of Hamiltonians to the space of ground states. It is thus a space of Hamiltonians that share the same ground state. Given any $k \geq 0$ and any pointed map,
\begin{eqnarray}
\paren{\SS^k, s_0} &\fromto& \paren{X, x_0}, \\
s &\mapsto& \hat H_s,
\end{eqnarray}
from the $k$-sphere into the space of Hamiltonians, we have a nulhomotopy,
\begin{equation}
\hat{\tilde H}_{t, s} = (1 - t) \hat H_{s_0} + t \hat H_s,
\end{equation}
for $0 \leq t \leq 1$. If $\Delta_s > 0$ is the gap of $\hat H_s$ and $\ket{\Psi_0}$ is the common ground state of $\hat H_s$ for all $s \in \SS^k$, then $\hat{\tilde H}_{t, s}$ will also have $\ket{\Psi_0}$ as the ground state and have a gap
\begin{equation}
\Delta_{t, s} \geq (1 - t) \Delta_{s_0} + t \Delta_s > 0,
\end{equation}
for all $0 \leq t \leq 1$ and $s \in \SS^k$. This shows that $X$ is weakly contractible, that is,
\begin{equation}
\pi_k\paren{X, x_0} = 0, ~~\forall k.
\end{equation}
Returning to the long exact sequence, we thus see that the map from $Y$ to $Z$ induces isomorphisms
\begin{equation}
\pi_k\paren{Y, y_0} \isomorphic \pi_k\paren{Z, z_0}
\end{equation}
for all $k \geq 1$, and an injection $\pi_0\paren{Y, y_0} \fromto \pi_0\paren{Z, z_0}$. Since the map from the space of Hamiltonians to the space of ground states is by construction surjective, the injection $\pi_0\paren{Y, y_0} \fromto \pi_0\paren{Z, z_0}$ must be a bijection. Since this works for any choice of basepoints, it follows that the map is a weak homotopy equivalence from $Y$ to $Z$.

We are interested in a subset of the set of $d$-dimensional bosonic $G$-equivariant topological phases, called \emph{symmetry protected topological phases}, or SPT phases. If a state represents an SPT phase for some choice of $G$, then it is called a \emph{short-range entangled state}. There is often confusion surrounding the terms ``SPT phase" and ``short-range entangled state" because there are different versions to their definitions. In this thesis, we adopt the more modern definitions, which are based on the ``invertibility" of a phase \cite{Kitaev_Stony_Brook_2011_SRE_2, Kitaev_Stony_Brook_2013_SRE, Kapustin_Boson, Freed_SRE_iTQFT, Freed_ReflectionPositivity, McGreevy_sSourcery, Xiong}. Due to its historical significance, we shall nevertheless review the more traditional definitions, which are based on the ``trivializability" of a state \cite{Wen_Definition, Cirac}.

\subsection{Traditional definition of SPT phases}
\label{subsec:traditional_definition}

In the traditional definition \cite{Wen_Definition, Cirac}, without regard to symmetry, one defines a \emph{trivial product state} to be a state of the form
\begin{equation}
\bigotimes_{\bx \in \mathcal L} \ket{\psi_{\bx}} \in \bigotimes_{\bx \in \mathcal L} \mathscr H_{\bx},
\end{equation}
which has no entanglement between different sites. Any such state has a bosonic, local, gapped parent Hamiltonian. Furthermore, between any pair of such states, there is a path consisting of trivial product states. Then, one defines a bosonic \emph{short-range entangled state} to be the ground state of any bosonic, local, gapped Hamiltonian for which there is a path connecting it to any trivial product state.

Introducing symmetry, a $d$-dimensional bosonic $G$-equivariant topological phase is called an \emph{SPT phase} if any representative of this phase is a short-range entangled state, or in other words, if any representative of this phase can be connected to a trivial product state by breaking the symmetry. Thus, a generic SPT phase is nontrivial only due to the protection by the symmetry. This motivated the name ``symmetry protected topological phases."

\subsection{Modern definition of SPT phases}
\label{subsec:modern_definition}

In the modern definition \cite{Kitaev_Stony_Brook_2011_SRE_2, Kitaev_Stony_Brook_2013_SRE, Kapustin_Boson, Freed_SRE_iTQFT, Freed_ReflectionPositivity, McGreevy_sSourcery, Xiong}, one does not start with trivial product states and defines short-range entangled states and SPT phases in terms of them. Instead, one starts directly with the set of $d$-dimensional bosonic $G$-equivariant topological phases, and defines a subset of it to be the SPT phases. The resulting definition of SPT phases will be more relaxed, in that every SPT phase in the traditional sense is an SPT phase in the modern sense but that the converse may not hold. The modern definition depends on the ``invertibility" of a phase with respect to a ``stacking" operation, which we describe now.

Given any pair of $d$-dimensional bosonic, local, gapped, $G$-symmetric ground states defined on $\mathcal L$, the \emph{stacking} operation combines them into a new $d$-dimensional bosonic, local, gapped, $G$-symmetric ground state defined on $\mathcal L$. Suppose the two ground states $\ket{\Psi_0}$ and $\ket{\Psi_1}$ have on-site Hilbert spaces $\mathscr H_0$ and $\mathscr H_1$ and representations $\rho_{\rm int, 0}$ and $\rho_{\rm int, 1}$ respectively. Then the stacked state is by definition
\begin{equation}
\ket{\Psi} = \ket{\Psi_0} \otimes \ket{\Psi_1},
\end{equation}
which is defined on the same lattice $\mathcal L$ with on-site Hilbert space
\begin{equation}
\mathscr H = \mathscr H_0 \otimes \mathscr H_1,
\end{equation}
on which the symmetry acts according to
\begin{equation}
\rho_{\rm int} = \rho_{\rm int, 0} \otimes \rho_{\rm int, 1}.
\end{equation}
It is easy to see that $\ket{\Psi}$ is also a gapped ground state. If $\ket{\Psi_0}$ and $\ket{\Psi_1}$ had parent Hamiltonians $\hat H_0$ and $\hat H_1$, respectively, then $\hat H_0 \otimes \hat{\mathbb I} + \hat{\mathbb I} \otimes \hat H_1$ would be a parent Hamiltonian of $\ket{\Psi}$. One can picture the stacking operation as interlaying two systems without coupling them. The lattice sites from the two systems are combined to form the composite lattice sites of the stacked system. See Figure \ref{fig:stacking}. We will use `+' to denote the stacking operation, as in
\begin{equation}
a + b
\end{equation}
is the state obtained by stacking $a$ and $b$.

\begin{figure}
\centering
\includegraphics[width=0.55 \columnwidth]{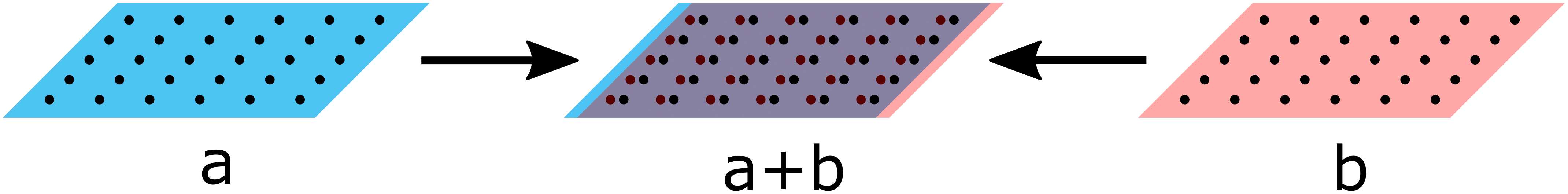}
\caption[The stacking operation.]{The stacking operation combines two $d$-dimensional bosonic, local, gapped, $G$-symmetric ground states defined on $\mathcal L$ into a new $d$-dimensional bosonic, local, gapped, $G$-symmetric ground states defined on $\mathcal L$.\label{fig:stacking}}
\end{figure}

Above we defined the stacking operation for states. It is easy to see that the stacking operation is well-defined at the level of phases, as a deformation of either $\ket{\Psi_0}$ or $\ket{\Psi_1}$ constitutes a legitimate deformation of $\ket{\Psi}$ as well. Stacking thus gives the set of $d$-dimensional bosonic $G$-equivariant topological phases the structure of a \emph{commutative monoid}. This means that we have
\begin{enumerate}
\item Associativity: for any phases $[a]$, $[b]$, and $[c]$, we have $([a] + [b]) + [c] = [a] + ([b] + [c])$.

\item Commutativity: for any phases $[a]$ and $[b]$, we have $[a] + [b] = [b] + [a]$.

\item Identity element: there is a phase $[0]$ such that for any phase $[a]$, we have $[0] + [a] = [a] + [0] = [a]$.
\end{enumerate}
Associativity is tautological. The identity element is represented by any trivial product state and the trivial action $\rho_{\rm int}$. As for commutativity, it follows from the fact that there is a path between $\ket{\Psi_0} \otimes \ket{\Psi_1}$ and $\ket{\Psi_1} \otimes \ket{\Psi_0}$ for any systems $\ket{\Psi_0}$ and $\ket{\Psi_1}$. More explicitly, we embed $\mathscr H_0 \otimes \mathscr H_1$ and $\mathscr H_1 \otimes \mathscr H_0$ into
\begin{equation}
\paren{\mathscr H_0 \otimes \mathscr H_1} \oplus \paren{\mathscr H_1 \otimes \mathscr H_0} \isomorphic \mathscr H_0 \otimes \mathscr H_1 \otimes \CCC^2. \label{commutativity_ancilla}
\end{equation}
We think of $\CCC^2$ as an ancillary qubit and consider the one-parameter family of Hamiltonians
\begin{equation}
\hat H_t = \hat H_0 \otimes \hat{\mathbb I} \otimes \hat{\mathbb I} + \hat{\mathbb I} \otimes \hat H_1 \otimes \hat{\mathbb I} - \hat{\mathbb I} \otimes \hat{\mathbb I} \otimes \hat P_t, \label{commutativity_path}
\end{equation}
where $\hat H_0$ and $\hat H_1$ are parent Hamiltonians of $\ket{\Psi_0}$ and $\ket{\Psi_1}$, respectively, and
\begin{equation}
\hat P_t = \sum_{\bx \in \mathcal L} \paren{\cos\frac{\pi t}{2} \ket{0}_{\bx} + \sin\frac{\pi t}{2} \ket{1}_{\bx}}\paren{\cos\frac{\pi t}{2} \bra{0}_{\bx} + \sin\frac{\pi t}{2} \bra{1}_{\bx}}.
\end{equation}
Obviously, this is a gapped, symmetric Hamiltonian for all $0 \leq t \leq 1$. At $t = 0$ and $t = 1$, the third term in $\hat H_t$ forces the ancillary qubit of every site be in the $\ket{0}$ and $\ket{1}$ states, respectively. Since these correspond to the first and second summands in the left-hand side of Eq.\,(\ref{commutativity_ancilla}), we see that Eq.\,(\ref{commutativity_path}) is a path between $\ket{\Psi_0} \otimes \ket{\Psi_1}$ and $\ket{\Psi_1} \otimes \ket{\Psi_0}$.

We shall now define a bosonic \emph{SPT phase} to be a $d$-dimensional bosonic $G$-equivariant topological phase that has an inverse with respect to the stacking operation. In other words, the set of bosonic SPT phases is the group of invertible elements of the commutative monoid above. Naturally, the $d$-dimensional bosonic SPT phases with symmetry $G$ form an abelian group, denoted
\begin{equation}
\SPT_b^d(G),
\end{equation}
or
\begin{equation}
\SPT_b^d(G, \phi)
\end{equation}
for completeness. We further define a bosonic \emph{short-range entangled state} to be a $d$-dimensional bosonic, local, gapped, $G$-symmetric ground state that represents an SPT phase for some choice of symmetry.

It has been argued that the modern definition of short-range entangled states is equivalent to the condition that the Hamiltonian has a unique ground state on all (sufficiently nice---oriented, framed, etc.) manifolds, not just the Euclidean space or its one-point compactification \cite{Kitaev_Stony_Brook_2011_SRE_2, Kitaev_Stony_Brook_2013_SRE, Kapustin_Boson, Freed_SRE_iTQFT, Freed_ReflectionPositivity, McGreevy_sSourcery}. In the special case of 2D, this means that the Hamiltonian has a unique ground state on the the torus, the connected sum of two tori, etc., as well as the plane or the sphere. This in turn is believed to be equivalent to the condition that the 2D system has no anyonic excitations. Furthermore, in any dimension, it has been argued that if a Hamiltonian has a unique ground state on all manifolds, then the orientation-reversed version of itself will represent the inverse of the phase, at least when $G$ is purely internal, that is, when $\rho_{\rm spa}$ is trivial \cite{McGreevy_sSourcery}. When $\rho_{\rm spa}$ is non-trivial, it is not well-understood how to generally construct the inverse \cite{Xiong_Alexandradinata}.

It is easy to see that a short-range entangled state in the traditional sense is also a short-range entangled state in the modern sense. Taking the remarks in the previous paragraph about orientation reversal for granted, we also deduce that an SPT phase in the traditional sense must be an SPT phase in the modern sense, at least when $G$ is purely internal.\footnote{In the absence of symmetry, if a state can be connected to a trivial product state, then it has an inverse up to homotopy. The tricky part is that the existence of an inverse does not imply the existence of a symmetric inverse. The orientation-reversal argument offers a canonical construction and solves this problem.} Table \ref{table:SPT_examples} lists some examples of states that are considered short-range entangled according to the modern definition but not the traditional one.

{\small
\begin{spacing}{\dnormalspacing}
\begin{longtable}[c]{lll}
\caption[Examples of short-range entangled states according to the modern definition.]{Examples of short-range entangled states according to the modern definition.\label{table:SPT_examples}} \\
\nobreakhline\hline
Fermionic or bosonic & Dimension & System \\
\nobreakhline
\endfirsthead
\caption[]{(Continued).} \\
\nobreakhline\hline
Fermionic or bosonic & Dimension & System \\
\nobreakhline
\endhead
\nobreakhline\hline
\endfoot
Fermionic & $0$ & An odd number of fermions \\
Fermionic & $1$ & The Majorana chain \cite{Majorana_chain} \\
Fermionic & $2$ & $\paren{p+ip}$-superconductors \cite{Volovik_p+ip, Read_p+ip, Ivanov_p+ip} \\
Bosonic & $2$ & The $E_8$-model \cite{Kitaev_honeycomb, 2dChiralBosonicSPT, Kitaev_KITP} \\
\end{longtable}
\end{spacing}
}

\section{Short-range entangled states as \texorpdfstring{$\Omega$}{Omega}-spectrum}
\label{sec:short-range entangled_as_spectrum}

We shall now denote the space of $d$-dimensional bosonic short-range entangled states by
\begin{equation}
F_d.
\end{equation}
It has been argued that $F_d$ for $d = 0, 1, 2, \ldots$ form an ``$\Omega$-spectrum" \cite{Kitaev_Stony_Brook_2011_SRE_1, Kitaev_Stony_Brook_2013_SRE, Kitaev_IPAM}. The physical arguments for why this is the case will be given in the next section. For now let us examine what it entails mathematically.

By definition, an \emph{$\Omega$-spectrum} \cite{Hatcher, Adams1, Adams2} is a family of pointed topological spaces $F_n$ indexed by $n \in \ZZZ$ such that
\begin{equation}
F_n \homotopic \Omega F_{n+1} \label{Omega_spectrum}
\end{equation}
for all $n \in \ZZZ$, that is, $F_n$ is pointed homotopy equivalent to the loop space of $F_{n+1}$. The \emph{loop space} $\Omega X$ of a pointed topological space $(X, x_0)$ is the space of loops in $X$ based at $x_0$. Namely, a point in $\Omega X$ is a loop in $X$ based at $x_0$. This implies that there is a bijection between the set of path-connected components of $\Omega X$ and the fundamental group of $X$, i.e.\,$\pi_0\paren{\Omega X} \isomorphic \pi_1(X)$. In general, there are isomorphisms between homotopy groups $\pi_k\paren{\Omega X} \isomorphic \pi_{k+1}(X)$ for all $k \geq 0$. Applied to $F_n \homotopic \Omega F_{n+1}$, this gives
\begin{equation}
\pi_k\paren{F_n} \isomorphic \pi_{k+1}\paren{F_{n+1}} \label{shift_in_homotopy}
\end{equation}
for all $k \geq 0$ and $n \in \ZZZ$. Notice an $\Omega$-spectrum is indexed by integers whereas the physical dimension $d$ is nonnegative. The claim at the beginning of the section says that the spaces of $d$-dimensional bosonic short-range entangled states form the nonnegative-degree portion of an $\Omega$-spectrum. Since any $F_{n+1}$ uniquely determines $F_n$ by Eq.\,(\ref{Omega_spectrum}), this can then be extended to negative degrees to form an $\Omega$-spectrum indexed by $\ZZZ$.

A consequence of Eq.\,(\ref{shift_in_homotopy}) is that all homotopy groups of all of the spaces $F_n$ for $n \in \ZZZ$ are determined by $\pi_k\paren{F_0}$ for $k \geq 0$ and $\pi_0\paren{F_d}$ for $d \geq 0$. $F_0$ is the space of $0$-dimensional bosonic short-range entangled states, so $\pi_k\paren{F_0}$ for $k \geq 0$ are easy to understand. $\pi_0\paren{F_d}$ is the set of path-connected components of the space of $d$-dimensional bosonic short-range entangled states, so it is the classification of $d$-dimensional SPT phases when the symmetry group $G$ is trivial. The current understanding of these homotopy groups is \cite{Kitaev_Stony_Brook_2011_SRE_1, Kitaev_Stony_Brook_2013_SRE}
\begin{enumerate}
\item $\pi_k\paren{F_0} = \ZZZ$ for $k = 2$ and $0$ for $k \neq 2$ (this follows from $F_0 = \CCC P^\infty$, the direct limit of rays in finite-dimensional Hilbert spaces);

\item $\pi_0\paren{F_d} = 0$ for $d = 0, 1, 3$ and $\ZZZ$ (generated by the $E_8$ state \cite{Kitaev_honeycomb, 2dChiralBosonicSPT, Kitaev_KITP}) for $d = 2$; $\pi_0\paren{F_d}$ for $d > 3$ is not well-understood. 
\end{enumerate}
This results in the table of homotopy groups in Table \ref{table:homotopy_groups}.

Homotopy groups contain important but not all information about a spectrum. In general, $F_n$ can be reconstructed from $\pi_k(F_n)$ and $k$-invariants via \emph{Postnikov towers} \cite{Hatcher}; see App.\,A of work \cite{Xiong} for an example. Alternatively, we will see in Sec.\,\ref{sec:from_spectrum_to_classification} that the classification of SPT phases is given by the twisted equivariant generalized cohomology theory $h^{\phi+\bullet}_G\paren{\pt}$ of $F_n$, and $h^{\phi+\bullet}_G\paren{\pt}$ can be computed directly via the \emph{Leray-Serre spectral sequence} if differentials in the spectral sequence can be determined; see work \cite{Shiozaki2018}.

{\small
\begin{spacing}{\dnormalspacing}
\begin{longtable}[c]{c|ccccccc}
\caption[Homotopy groups of spaces of bosonic short-range entangled states]{Homotopy groups of the spaces $F_{d}$ of $d$-dimensional bosonic short-range entangled states for $d \leq 3$.\label{table:homotopy_groups}} \\
\endfirsthead
\caption[]{(Continued).} \\
\endhead
\endfoot
$\pi_{>5}$ & $0$ & $0$ & $0$ & $0$ & $0$ & $0$ & $0$ \\
$\pi_{5}$ & $0$ & $0$ & $0$ & $0$ & $0$ & $0$ & $\ZZZ$ \\
$\pi_{4}$ & $0$ & $0$ & $0$ & $0$ & $0$ & $\ZZZ$ & $0$ \\
$\pi_{3}$ & $0$ & $0$ & $0$ & $0$ & $\ZZZ$ & $0$ & $0$ \\
$\pi_{2}$ & $0$ & $0$ & $0$ & $\ZZZ$ & $0$ & $0$ & $0$ \\
$\pi_{1}$ & $0$ & $0$ & $\ZZZ$ & $0$ & $0$ & $0$ & $\ZZZ$ \\
$\pi_{0}$ & $0$ & $\ZZZ$ & $0$ & $0$ & $0$ & $\ZZZ$ & $0$ \\
\nobreakhline 
~ & $F_{<-2}$ & $F_{-2}$ & $F_{-1}$ & ~$F_{0}$~ & ~$F_{1}$~ & ~$F_{2}$~ & ~$F_{3}$~ 
\end{longtable}
\end{spacing}
}

\section{Physical meaning of \texorpdfstring{$\Omega$}{Omega}-spectrum}
\label{sec:physical_meaning_of_spectrum}

We now explain the physical meaning of the $\Omega$-spectrum $(F_n)_{n\in \ZZZ}$ \cite{Kitaev_Stony_Brook_2011_SRE_1, Kitaev_Stony_Brook_2013_SRE, Kitaev_IPAM}. We have said that, for a given $d$, $F_d$ can be interpreted as the space of $d$-dimensional bosonic short-range entangled states. These spaces satisfy the pointed homotopy equivalence condition
\begin{equation}
F_d \homotopic \Omega F_{d+1}
\end{equation}
for all $d \geq 0$. Concretely, this means there exists a pair of maps
\begin{eqnarray}
&&f: F_d \fromto \Omega F_{d+1}, \\
&&g: \Omega F_{d+1} \fromto F_d,
\end{eqnarray}
such that both $f \circ g$ and $g \circ f$ are pointed homotopic to the identity. We interpret the basepoint of $F_d$ as a trivial $d$-dimensional bosonic short-range entangled state, i.e. a trivial product state. It remains to interpret the maps $f$ and $g$.

There are two interpretations for $f$ and $g$. One of them is based on pumping a $d$-dimensional state to the boundary of a $(d+1)$-dimensional state \cite{Kitaev_Stony_Brook_2011_SRE_1, Kitaev_Stony_Brook_2013_SRE}. The other is based on viewing a texture of $(d+1)$-dimensional states as a $d$-dimensional domain wall \cite{Kitaev_IPAM}. The two interpretations can be shown to be equivalent. In Sec.\,\ref{subsubsec:pumping}, we present the first interpretation. In Sec.\,\ref{subsubsec:domain_wall}, we present the second.

\subsection{Pumping interpretation\label{subsubsec:pumping}}

With $F_d$ being the space of $d$-dimensional short-range entangled states, $f$ will now be a map that sends a $d$-dimensional short-range entangled state $a$ to a one-parameter family of $(d+1)$-dimensional short-range entangled states (recall $\Omega F_{d+1}$ is the loop space of $F_{d+1}$). We denote this family of $(d+1)$-dimensional short-range entangled states by $f(a)_t$, for $0 \leq t \leq 1$, which satisfies $f(a)_0 = f(a)_1$. In the pumping interpretation \cite{Kitaev_Stony_Brook_2011_SRE_1, Kitaev_Stony_Brook_2013_SRE}, we set $f(a)_t$ to be the state shown in Figure \ref{fig:pumping}(a). It is obtained by putting copies of $a$ at $x = (2n+t)L$ and copies of the inverse $\bar a$ at $x = (2n-t)L$ for all $n\in \ZZZ$, where $x$ is the additional coordinate the $(d+1)$-dimensional system has compared to $d$-dimensional systems, and $L$ is a length scale much greater than the correlation length $\xi$. As $t$ increases, the separation between $a$'s and $\bar a$'s changes. The evolution of $f(a)_t$ with $t$ is illustrated in Figure \ref{fig:pumping}(b).

\begin{figure}
\centering
\includegraphics[width=5.5in]{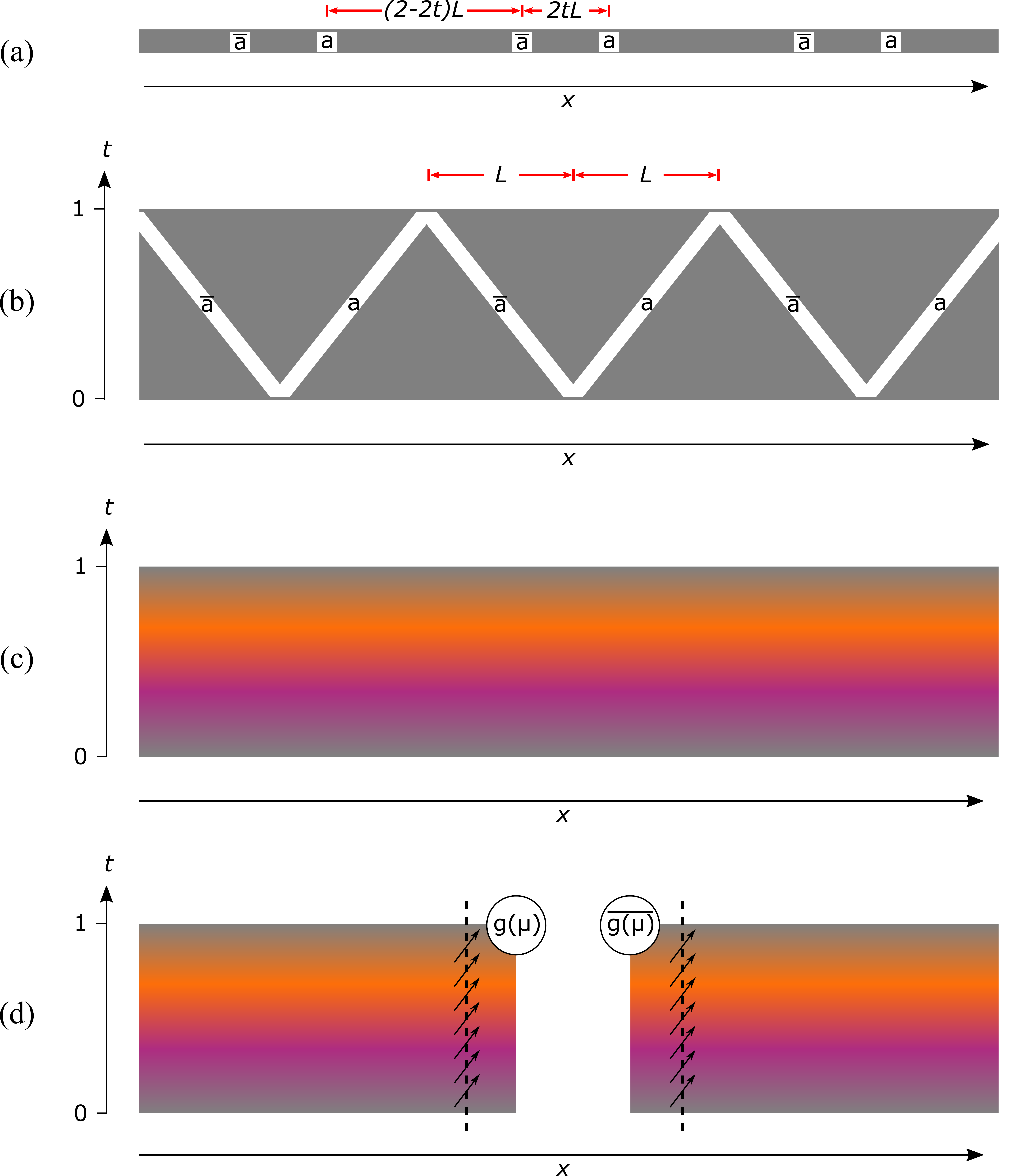}
\caption[The pumping interpretation of $\Omega$-spectrum.]{The pumping interpretation of $\Omega$-spectrum. (a) A $(d+1)$-dimensional short-range entangled state $f(a)_t$ constructed from a $d$-dimensional short-range entangled state $a$. (b) The evolution of $f(a)_t$ as $t$ varies from 0 to 1. (c) An arbitrary one-parameter family of $(d+1)$-dimensional short-range entangled states, $\mu(t)$ for $0 \leq t \leq 1$. (d) The pumping of a $d$-dimensional short-range entangled state to the boundary of a $(d+1)$-dimensional system that is cut open, in the adiabatic evolution defined by $\mu$.}
\label{fig:pumping}
\end{figure}

Conversely, given a one-parameter family of $(d+1)$-dimensional short-range entangled states, $\mu(t)$ with $0\leq t \leq 1$, the map $g$ will send it to a $d$-dimensional short-range entangled state, $g(\mu)$. To define this state, we take $t$ to be the time coordinate and regard $\mu(t)$ as defining an adiabatic evolution of a $(d+1)$-dimensional system. We will then set $g(\mu)$ to be the $d$-dimensional state that is pumped across the $(d+1)$-dimensional system in the adiabatic evolution. Put differently, if the $(d+1)$-dimensional system is cut open, the $g(\mu)$ will be the $d$-dimensional state that is pumped to the boundary of the $(d+1)$-dimensional system. This is illustrated in Figure \ref{fig:pumping}(c)(d).

In App.\,B of work \cite{Xiong}, it was shown that $f$ and $g$ are indeed pointed homotopy inverses of each other, that is, both $f\circ g$ and $g \circ f$ are pointed homotopic to the identity.

\subsection{Domain wall interpretation\label{subsubsec:domain_wall}}

In the domain wall interpretation \cite{Kitaev_IPAM}, $f$ is again a map that sends a $d$-dimensional short-range entangled state $a$ to a one-parameter family of $(d+1)$-dimensional short-range entangled states. This time, however, we will denote the family of $(d+1)$-dimensional short-range entangled states by $f(a)_x$ for $-\infty < x < \infty$, which satisfies $f(a)_{-\infty} = f(a)_{+\infty}$. We will set $f(a)_x$ to be the state shown in Figure \ref{fig:domain_wall}(a). This is equivalent to the state in Figure \ref{fig:pumping}(a) via the formal change of variable $t = \frac{1 - \tanh x}{2}$.

\begin{figure}
\centering
\includegraphics[width=6in]{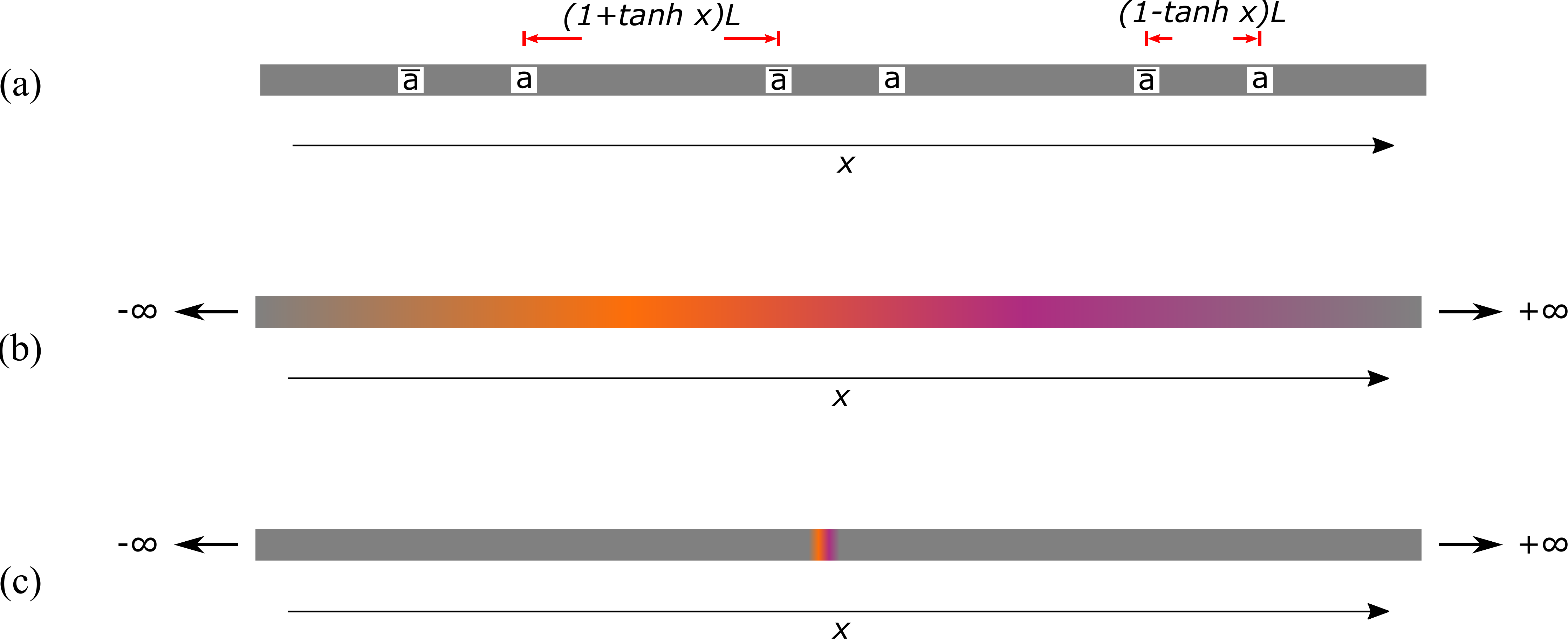}
\caption[The domain wall interpretation of $\Omega$-spectrum.]{The domain wall interpretation of $\Omega$-spectrum. (a) A $(d+1)$-dimensional short-range entangled state $f(a)_x$ constructed from a $d$-dimensional state $a$. (b) A texture of $(d+1)$-dimensional short-range entangled states, $\mu(x)$ for $-\infty < x < \infty$. (c) The $d$-dimensional domain wall obtained by squeezing the texture in (b).}
\label{fig:domain_wall}
\end{figure}

Conversely, given a one-parameter family of $(d+1)$-dimensional short-range entangled states, $\mu(x)$ for $-\infty < x < \infty$, the map $g$ will send it to a $d$-dimensional short-range entangled state, $g(\mu)$. This time, we will treat $x$ as a spatial coordinate and regard $\mu$ as defining a texture of $(d+1)$-dimensional short-range entangled states, which varies spatially with $x$ and is locally indistinguishable from the $(d+1)$-dimensional short-range entangled state $\mu(x)$ in the vicinity of $x$; see Figure \ref{fig:domain_wall}(b). To define $g(\mu)$, we will then squeeze the texture in the $x$-direction; see Figure \ref{fig:domain_wall}(c). This results in a $d$-dimensional domain wall within the $(d+1)$-dimensional system, and we set $g(\mu)$ to be the $d$-dimensional state that lives on the domain wall.

In App.\,B of work \cite{Xiong}, it was shown that $f$ and $g$ are indeed pointed homotopy inverses of each other, and that the definitions of $f$ and $g$ in the domain wall interpretation are equivalent to those in the pumping interpretation. Furthermore, under either interpretation, given any $\mu_1, \mu_2 \in \Omega F_{d+1}$, the concatenation of $\mu_1$ and $\mu_2$ as loops corresponds to the stacking of $\mu_1$ and $\mu_2$ as short-range entangled states \cite{Xiong}.

\section{From \texorpdfstring{$\Omega$}{Omega}-spectrum to classification of SPT phases}
\label{sec:from_spectrum_to_classification}

\subsection{Generalized cohomology theories}

Every $\Omega$-spectrum $\paren{F_n}_{n\in \ZZZ}$ defines a \emph{generalized cohomology theory} \cite{Adams1, Adams2}. A generalized cohomology theory is a sequence of contravariant functors $h^n$ indexed by $n\in \ZZZ$ that maps a CW pair $\paren{X, Y}$---i.e.\,a CW complex $X$ together with a subcomplex $Y$---to an abelian group, according to
\begin{equation}
h^n\paren{X, Y} \coloneq \brackets{\paren{X, Y}, \paren{F_n, \ast}}, \label{gc_of_a_pair}
\end{equation}
where
\begin{equation}
\brackets{\paren{X, Y}, \paren{F_n, \ast}} \coloneq \pi_0 \Map\paren{\paren{X, Y}, \paren{F_n, \ast}}
\end{equation}
is the set of homotopy classes of maps from $\paren{X, Y}$ to $\paren{F_n, \ast}$, where a map from $\paren{X, Y}$ to $\paren{F_n, \ast}$ is by definition a map $f: X \fromto F_n$ that sends all of $Y$ to the basepoint $\ast$ of $F_n$. Furthermore, if $G$ is a group and $\paren{X, Y}$ is a $G$-CW pair \cite{May}---i.e.\,a CW pair with a cellular $G$-action---then we can define the \emph{Borel equivariant generalized cohomology theory}, $h^n_G$ for $n \in \ZZZ$, according to
\begin{equation}
h^n_G\paren{X, Y} \coloneq h^n\paren{EG \times_G X, EG \times_G Y}, \label{equivariant_of_a_pair}
\end{equation}
where $EG$ is the contractible universal cover of the \emph{classifying space} $BG$ \cite{AdemMilgram},\footnote{The classifying space $BG$ of a discrete group $G$ is the unique topological space (up to homotopy equivalence) that satisfies $\pi_1(BG) \isomorphic G$ and $\pi_k(BG) = 0$ for all $k \neq 1$. The universal cover $EG$ of $BG$ is contractible. There is a natural $G$-action on $EG$ such that $EG/G = BG$. For non-discrete groups, provided $G$ is a sufficiently nice topological space, we have in general a $G$-principal bundle $G \fromto EG \fromto BG$ in which $EG$ is contractible. This also determines $BG$ up to homotopy equivalence, though in this case $\pi_k(BG) \isomorphic \pi_{k-1}(G)$ are not necessarily trivial for $k > 1$.} and $EG \times_G X$ denotes the quotient space $\paren{EG \times X} / G$. As a special case, if $X = \pt$ is a point and $Y = \emptyset$ is the empty set, then
\begin{equation}
h^n_G\paren{\pt, \emptyset} = h^n\paren{BG, \emptyset}. \label{equivariant_of_pt}
\end{equation}
Finally, it is customary to drop $Y$ in the notation when it is the empty set, as in
\begin{eqnarray}
h^n\paren{X} &\coloneq& h^n\paren{X, \emptyset}, \\
h^n_G\paren{X} &\coloneq& h^n_G\paren{X, \emptyset}.
\end{eqnarray}
Thus Eq.\,(\ref{equivariant_of_pt}) can be rewritten
\begin{equation}
h^n_G\paren{\pt} = h^n\paren{BG}. \label{equivariant_of_pt_drop_Y}
\end{equation}

Next, if $\phi: G \fromto \braces{\pm 1}$ is a homomorphism, then we can define the \emph{twisted equivariant generalized cohomology theory}, $h^{\phi + n}_G$ for $n\in \ZZZ$, according to
\begin{equation}
h^{\phi + n}_G\paren{X, Y} \coloneq h^{n + 1}_G \paren{X \times \tilde I, Y \times \tilde I \cup X \times \partial \tilde I}, \label{twisted_equivariant_of_a_pair}
\end{equation}
where $\tilde I$ denotes the unit interval $[0, 1]$ with the $G$-action $g.x = \frac12 + \phi(g)\paren{x - \frac12}$ for $x \in [0, 1]$. As a special case, if $X = \pt$ and $Y = \emptyset$, then
\begin{eqnarray}
h^{\phi + n}_G\paren{\pt} &=& h^{n + 1}_G \paren{\tilde I, \partial \tilde I} \nonumber\\
&=& h^{n + 1} \paren{ EG \times_G \tilde I, EG \times_G \partial \tilde I } \nonumber\\
&=& \brackets{ \paren{ EG \times_G \tilde I, EG \times_G \partial \tilde I }, \paren{F_{n + 1}, \pt} } \nonumber\\
&=& \brackets{ EG, \tilde\Omega F_{n + 1} }_G. \label{h_phi_n_G_pt}
\end{eqnarray}
In the last line, $\tilde \Omega F_{n + 1}$ denotes $\Omega F_{n + 1}$ with the $G$-action in which elements $g$ with $\phi(g) = -1$ reverse the direction of loops, and $\brackets{ EG, \tilde\Omega F_{n + 1} }_G$ is the set of homotopy classes of maps from $EG$ to $\tilde\Omega F_{n + 1}$ that preserve $G$-action. As another special case,
\begin{equation}
h^{\phi + n}_G\paren{X, Y} = h^n_G\paren{X, Y} ~~ \mbox{if $\phi$ is trivial.}
\end{equation}

Earlier we said that $h^n$ maps CW pairs to abelian groups. In fact, $h^n_G$ and $h^{\phi + n}_G$ also map their respective categories into the category of abelian groups. The abelian group structure of Eqs.\,(\ref{gc_of_a_pair})(\ref{equivariant_of_a_pair})(\ref{twisted_equivariant_of_a_pair}) comes from the fact that the $F_n$ or $F_{n+1}$ therein can be written $\Omega F_{n+1}$ or $\Omega F_{n+2}$, respectively, and for the loop spaces $\Omega F_{n+1}$ and $\Omega F_{n+2}$ we can induce the composition law of an abelian group by concatenating loops.

We referred to $h^n$ as contravariant functors because it is contravariant in $\paren{X, Y}$. More relevant to us is the fact that $h^n_G$ and $h^{\phi + n}_G$ are contravariant in $G$, which means the following. Given any pairs $\paren{G_1, \phi_1}$ and $\paren{G_2, \phi_2}$ and any homomorphism
\begin{equation}
f: G_1 \fromto G_2
\end{equation}
such that
\begin{equation}
\phi_2 \circ f = \phi_1,
\end{equation}
we have an induced homomorphism of abelian groups
\begin{equation}
f^*: h^{\phi_2 + n}_{G_2}\paren{X, Y} \fromto h^{\phi_1 + n}_{G_1}\paren{X, Y}. \label{f_ast_1}
\end{equation}
These induced homomorphisms satisfy
\begin{eqnarray}
\paren{f_2 \circ f_1}^* &=& f_1^* \circ f_2^*, \label{coherence_1} \\
\paren{\identity}^* &=& \identity. \label{coherence_2}
\end{eqnarray}

\subsection{Generalized cohomology hypothesis}
\label{subsec:GCH}

In Sec.\,\ref{subsec:modern_definition} we mentioned that the set of $d$-dimensional bosonic SPT phases with symmetry $G$ has the structure of an abelian group. We shall refer to this property the \emph{additivity} of SPT phases. SPT phases also have another property which we shall call \emph{functoriality}, which means the following. Given any group homomorphism $f: G_1 \fromto G_2$ satisfying $\phi_2 \circ f = \phi_1$ and a representation $\rho_2$ of $G_2$ with on-site Hilbert space $\mathscr H$, the composition $\rho_2 \circ f$ defines a representation of $G_1$ with the same on-site Hilbert space $\mathscr H$. This means $f$ can be used to convert $G_2$-symmetric systems to $G_1$-symmetric systems. Indeed, we can retain the same Hilbert space and Hamiltonian and simply replace $G_2$-actions by $G_1$-actions following this recipe. In the event that $f: G_1 \fromto G_2$ is an inclusion, this conversion process is a symmetry-forgetting process. In general, $f$ does not have to be either injective or surjective, and the conversion process is a symmetry forgetting followed by a symmetry relabeling. Either way, we get an induced homomorphism,
\begin{equation}
f^*: \SPT_b^d(G_2, \phi_2) \fromto \SPT_b^d(G_1, \phi_1). \label{f_ast_2}
\end{equation}
It is easy to see that these induced homomorphisms also satisfy the coherence relations (\ref{coherence_1})(\ref{coherence_2}).

Now, we are finally ready to relate the classification of bosonic SPT phases to generalized cohomology theories. This is the \emph{generalized cohomology hypothesis} stated in works \cite{Xiong, Xiong_Alexandradinata}.
\begin{nameddef}[Bosonic generalized cohomology hypothesis]
For the setup discussed at the beginning of Sec.\,\ref{sec:short-range entangled_SPT} and the definition of bosonic SPT phases in Sec.\,\ref{subsec:modern_definition}, there exists an $\Omega$-spectrum $\paren{F_n}_{n \in \ZZZ}$ such that the following is true. Given any $G$, $\phi$, denote by $h^{\phi + \bullet}_G$ the twisted equivariant generalized cohomology theory defined by $\paren{F_n}_{n \in \ZZZ}$. Then for all nonnegative integers $d$, groups $G$, and homomorphisms $\phi \fromto \braces{\pm 1}$, we have isomorphisms
\begin{equation}
\SPT_b^d \paren{G, \phi} \isomorphic h^{\phi + d}_G\paren{\pt}
\end{equation}
that are \emph{natural} in $G$ and $\phi$. Here, naturality means that the isomorphisms respect both additivity (i.e.\,the abelian group structure of both sides) and functoriality [i.e.\,the induced homomorphisms\,(\ref{f_ast_1})(\ref{f_ast_2})].
\end{nameddef}

Note that we used the homomorphism $\phi$ of Eq.\,(\ref{phi_homomorphism}), which depends on both whether $g$ is ``intrinsically" antiunitary and whether $g$ reverses the orientation of space, rather than the homomorphism $\phi_{\rm int}$, which depends only on whether $g$ is intrinsically antiunitary. In particular, this means that a reversal of spatial orientation has a similar effect to a reversal of temporal orientation. Both give a nontrivial twist to the classification of SPT phases.

In the interest of space, we will not get into the detailed arguments for the generalized cohomology hypothesis. For internal symmetries, the generalized cohomology hypothesis can be justified using the idea of decorated domain walls \cite{decorated_domain_walls}. This idea had already been used in many specific proposals for the classification of SPT phases, such as Refs.\,\cite{Wen_Fermion, wang2018towards}. For arbitrary generalized cohomology theories, a general construction was first sketched in Ref.\,\cite{Kitaev_IPAM}. It uses continuous patterns of short-range entangled states that fluctuate with a lattice gauge field of $G$. In work \cite{Xiong}, this construction was made cellular by considering discrete patterns of cellular decoration by invertible topological orders, which similarly fluctuate with a lattice gauge field of $G$. In addition, Ref.\,\cite{Gaiotto_Johnson-Freyd} considered the gauged version of SPT phases, and gave a general construction that involves decorating cells by defects, which, in contrast, do not fluctuate quantum mechamically because of the gauging.

The development for spatial symmetries was more recent. In our work \cite{Shiozaki2018}, we gave a general construction of SPT phases with spatial symmetries that works for arbitrary generalized cohomology theories, by decomposing the generalized cohomology theory using the Atiyah-Hirzebruch spectral sequence. Previously, there had been heuristic arguments \cite{ThorngrenElse, Keyserlingk_Floquet, Else_Floquet, Potter_Floquet} for why the classification of SPT phases does not depend on whether the symmetry is internal or spatial, tensor-network computations \cite{Jiang_sgSPT} and dimensional-reduction constructions \cite{Huang_dimensional_reduction} that showed explicitly that the group cohomology proposal of Ref.\,\cite{Wen_Boson} works for spatial as well as internal symmetries, and many specific examples \cite{Cirac, Wen_sgSPT_1d, SPt, You_sgSPT, Yoshida_sgSPT, Hsieh_sgSPT, Cho_sgSPT, Hermele_torsor} that confirm in special cases that the classification of SPT phases is the same for spatial and internal symmetries. A summary of the state of the matter around the time of publication of work \cite{Xiong} was given in Sec.\,6.7 of work \cite{Xiong}.

\section{Fermionic SPT phases}
\label{subsec:fermionic_SPT}

Let us now discuss the changes needed to make Secs.\,\ref{sec:short-range entangled_SPT}-\ref{sec:from_spectrum_to_classification} work for fermionic SPT phases. This was discussed in work \cite{Xiong_Alexandradinata} and uses the idea of Ref.\,\cite{Gaiotto_Johnson-Freyd}.
\begin{enumerate}
\item At the beginning Sec.\,\ref{sec:short-range entangled_SPT}, in addition to the other data, we need to fix some intrinsically fermionic symmetry group $G_f$ such that it contains the fermion-parity symmetry $\hat P_f$. We also fix a representation $\rho_f$ of $G_f$ on single fermions and thereby fix its representation on any many-body system, which we also denote by $\rho_f$. In many applications, one may want to choose $G_f = \ZZZ_2 = \braces{1, \hat P_f}$, or $G_f = \ZZZ_4 = \braces{1, \hat T, \hat T^2 = \hat P_f, \hat T^3}$ if there is a time-reversal symmetry $\hat T$ that squares to $\hat P_f$. $G_f$ may also include non-local symmetries such as particle-hole symmetry.

\item We then assume that the full symmetry group is of the form $G_f \times G$. This form is not as constraining as it seems as any full symmetry group $\tilde G$ can be written $G_f \times G$ if we choose $G_f$ to be the entire $\tilde G$ and $G$ to be the trivial group. However, since $G_f$ is a fixed parameter and $G$ is a variable parameter in the fermionic version of the generalized cohomology hypothesis, this does mean that there are certain symmetries that the hypothesis will not be able to peek inside.

\item The Hamiltonian being fermionic means that there is a finite collection $\A$ of fermionic modes,
\begin{equation}
\hat c_\alpha,~ \hat c_\alpha^\dag, ~~~\mbox{for } \alpha \in \A,
\end{equation}
associated to all sites in $\mathcal L$ and that $\hat H$ is a sum of products of $\hat c_\alpha$ or $\hat c_\alpha^\dag$ from possibly different sites. The on-site Hilbert space is
\begin{equation}
\mathscr H = \bigotimes_{\alpha \in \A} \CCC^2.
\end{equation}

\item The fermionic Hamiltonian being local means that there is a $k$ such that the aforementioned products have finite supports with linear size $< k$.

\item The Hamiltonian being $G$-symmetric means the same as in the bosonic case. In particular, $\rho_{\rm int}$ will still be a representation on $\mathscr H$, and $\phi_{\rm int}$, $\phi_{\rm spa}$, and $\phi$ will still be homomorphisms from $G$---and not $G_f \times G$---into $\braces{\pm 1}$. However, we now additionally require that the Hamiltonian respects $\rho_f$:
\begin{equation}
\hat H \rho_f(g_f) = \rho_f(g_f) \hat H,
\end{equation}
for all $g_f \in G_f$.

\item In Sec.\,\ref{subsec:modern_definition}, we will denote the abelian group of $d$-dimensional fermionic SPT phases by
\begin{equation}
\SPT_f^d \paren{G_f \times G, \phi}.
\end{equation}

\item In Sec.\,\ref{sec:short-range entangled_as_spectrum}, there will be one $\Omega$-spectrum $\paren{F_n^{G_f, \rho_f}}_{n\in \ZZZ}$ for each choice of $\paren{G_f, \rho_f}$.

\item In Sec.\,\ref{sec:from_spectrum_to_classification}, the fermionic version of the generalized cohomology hypothesis will be as follows.

\begin{nameddef}[Fermionic generalized cohomology hypothesis]
For any $\paren{G_f, \rho_f}$, there exists an $\Omega$-spectrum $\paren{F_n^{G_f, \rho_f}}_{n\in \ZZZ}$ such that, for all nonnegative integers $d$, groups $G$, and homomorphisms $\phi \fromto \braces{\pm 1}$, we have isomorphisms
\begin{equation}
\SPT_f^d \paren{G_f \times G, \phi} \isomorphic h^{\phi + d}_G\paren{\pt}
\end{equation}
that are natural in $G$ and $\phi$, where $h^{\phi + \bullet}_G$ denotes the $\phi$-twisted $G$-equivariant generalized cohomology theory defined by $\paren{F_n^{G_f, \rho_f}}_{n\in \ZZZ}$.
\end{nameddef}
\end{enumerate}

For $G_f = \braces{1, \hat P_f} \eqcolon \ZZZ_2^f$, the current understanding of the homotopy groups of $F_n^{G_f, \rho_f}$ is as in Table \ref{table:homotopy_groups_fermion}. Here, $\pi_2\paren{F_0} = \ZZZ$ for the same reason as in the bosonic case, and $\pi_0\paren{F_0} = \ZZZ_2$ is the fermion parity. $\pi_0\paren{F_1} = \ZZZ_2$ and $\pi_0\paren{F_2} = \ZZZ$ are generated by the Majorana chain \cite{Majorana_chain} and $p+ip$ superconductors \cite{Volovik_p+ip, Read_p+ip, Ivanov_p+ip}, respectively. $\pi_0\paren{F_d}$ for $d > 2$ is not well-understood.

{\small
\begin{spacing}{\dnormalspacing}
\begin{longtable}[c]{c|ccccccc}
\caption[Homotopy groups of spaces of fermionic short-range entangled states with $G_f = \ZZZ_2^f$.]{Homotopy groups of the spaces $F_{d}$ of $d$-dimensional fermionic short-range entangled states with $G_f = \ZZZ_2^f$ for $d \leq 3$.\label{table:homotopy_groups_fermion}} \\
\endfirsthead
\caption[]{(Continued).} \\
\endhead
\endfoot
$\pi_{>5}$ & $0$ & $0$ & $0$ & $0$ & $0$ & $0$ & $0$ \\
$\pi_{5}$ & $0$ & $0$ & $0$ & $0$ & $0$ & $0$ & $\ZZZ$ \\
$\pi_{4}$ & $0$ & $0$ & $0$ & $0$ & $0$ & $\ZZZ$ & $0$ \\
$\pi_{3}$ & $0$ & $0$ & $0$ & $0$ & $\ZZZ$ & $0$ & $\ZZZ_2$ \\
$\pi_{2}$ & $0$ & $0$ & $0$ & $\ZZZ$ & $0$ & $\ZZZ_2$ & $\ZZZ_2$ \\
$\pi_{1}$ & $0$ & $0$ & $\ZZZ$ & $0$ & $\ZZZ_2$ & $\ZZZ_2$ & $\ZZZ$ \\
$\pi_{0}$ & $0$ & $\ZZZ$ & $0$ & $\ZZZ_2$ & $\ZZZ_2$ & $\ZZZ$ & $?$ \\
\nobreakhline 
~ & $F_{<-2}$ & $F_{-2}$ & $F_{-1}$ & ~$F_{0}$~ & ~$F_{1}$~ & ~$F_{2}$~ & ~$F_{3}$~ 
\end{longtable}
\end{spacing}
}

Similar tables can be compiled for other choices of $G_f$ using homotopy groups appropriate to the choice of $G_f$. For example, if, apart from $\hat P_f$ or $U(1)$ charge conservation, $G_f$ is generated by no more than time reversal, charge conjugation, or their product, then the appropriate homotopy groups can be derived from the the ten-fold-way classification \cite{Hasan_Kane} of non-interacting fermionic systems, taking into proper account the following:
\begin{enumerate}
\item the reduction of non-interacting classification by interactions (e.g.\,Refs.\,\cite{Fidkowski_Kitaev_1, Fidkowski_Kitaev_2});

\item the existence of intrinsically interacting phases (e.g.\,Ref.\,\cite{WangChong_3DSPTAII}).
\end{enumerate}

\section{Existing classification proposals as examples of generalized cohomology theories}
\label{sec:existing_proposals}

In the literature there exist many proposals for the classification of SPT phases. Some focused on specific dimensions and symmetry groups, while others proposed classification for general dimensions and symmetries. We note that the various general proposals for the classification of SPT phases are examples of generalized cohomology theories \cite{Kitaev_Stony_Brook_2011_SRE_1, Kitaev_Stony_Brook_2013_SRE, Kitaev_IPAM}. These include the Borel group cohomology proposal \cite{Wen_Boson}, the cobordism proposal \cite{Kapustin_Boson}, Kitaev's proposal \cite{Kitaev_Stony_Brook_2011_SRE_1, Kitaev_Stony_Brook_2013_SRE}, and the Freed-Hopkins proposal \cite{Freed_SRE_iTQFT, Freed_ReflectionPositivity} in the bosonic case; and the group supercohomology proposal \cite{Wen_Fermion}, the spin and pin cobordism proposal \cite{Kapustin_Fermion}, Kitaev's proposal \cite{Kitaev_Stony_Brook_2013_SRE, Kitaev_IPAM}, and the Freed-Hopkins proposal \cite{Freed_SRE_iTQFT, Freed_ReflectionPositivity} in the fermionic case. Each of these proposals corresponds to a distinct generalized cohomology theory and a distinct $\Omega$-spectrum. In fact, these $\Omega$-spectra can be explicitly written down, which we briefly summarize in Table \ref{table:existing_proposals}.

{\footnotesize
\begin{spacing}{\dnormalspacing}
\begin{longtable}[c]{p{0.3\columnwidth}p{0.65\columnwidth}}
\caption[Existing proposals for the classification of SPT phases, and the $\Omega$-spectra that represent them.]{Existing proposals for the classification of SPT phases, and the $\Omega$-spectra that represent them. Here, $K(A,n)$ denotes the $n$-th \emph{Eilenberg-Mac Lane space} of $A$, $HA$ denotes the \emph{Eilenberg-Mac Lane spectrum} of $A$, $U$ denotes the infinite unitary group $U(\infty) = \bigcup_{i=1}^\infty U(i)$, $\CCC P^\infty = \bigcup_{i=1}^\infty \CCC P^i$ denotes the infinite projective space, and $\pi_i$ and $k_i$ denote the $i$-th homotopy group and the $i$-th \emph{$k$-invariant}, respectively. The $\CCC P^\infty$ in $F_0$, $\ZZZ_2$ in fermionic $F_0$, $\ZZZ_2$ in fermionic $F_1$, and $\ZZZ$ in bosonic $F_2$ have to do with Berry's phase, fermion parity, the Majorana chain, and the $E_8$-model, respectively (cf.\,Table \ref{table:SPT_examples}) \cite{Kitaev_Stony_Brook_2011_SRE_1, Kitaev_Stony_Brook_2013_SRE, Kitaev_IPAM}. More details of these proposals can be found in App.\,A of work \cite{Xiong}.\label{table:existing_proposals}} \\
\hline
\hline
Classification proposal & Spectrum \\
\hline
\endfirsthead
\caption[]{(Continued).} \\
\hline
\hline
Classification proposal & Spectrum \\
\hline
\endhead
\hline
\hline
\endfoot
Borel group cohomology as in Ref.\,\cite{Wen_Boson} & Shifted $H\ZZZ$:
$$F_d = \begin{cases} K\paren{\ZZZ, d+2}, & d\geq -2, \\\pt, & d<-2. \end{cases}$$ \par
In particular, $F_0 \homotopic \CCC P^\infty$. \\
Cobordism as in Ref.\,\cite{Kapustin_Boson} & Related to the Thom spectra $MSO$ and $MO$. See App.\,A of work \cite{Xiong}. \\
Kitaev's bosonic proposal \cite{Kitaev_Stony_Brook_2011_SRE_1, Kitaev_Stony_Brook_2013_SRE} & Constructed from physical knowledge. $F_d$ is uniquely determined in low dimensions:
$$F_d = \begin{cases} \CCC P^\infty, & d=0, \\ K(\ZZZ,3), & d = 1, \\K(\ZZZ, 4)\times \ZZZ, & d=2, \\K(\ZZZ, 5) \times \SS^1, & d=3. \end{cases}$$
See App.\,A of work \cite{Xiong}. \\
Group supercohomology as in Ref.\,\cite{Wen_Fermion} & ``Twisted product" of $H\ZZZ_2$ and shifted $H\ZZZ$. Works for $G_f = \ZZZ_2^f$ (cf.\,Sec.\,\ref{subsec:fermionic_SPT}). $F_d$ can be constructed as a \emph{Postnikov tower} with these data:
$$\pi_i(F_d) \isomorphic \begin{cases} \ZZZ_2, & i=d, \\\ZZZ, & i = d+2, \\0, & \text{otherwise}, \end{cases}$$
$$k_{d+1} = \beta \circ Sq^2,$$
where $Sq^2$ is the \emph{Steenrod square} and $\beta$ is the \emph{Bockstein homomrphism} associated with
$$0 \fromto \ZZZ \xfromto{2} \ZZZ \fromto \ZZZ_2 \fromto 0.$$
In particular, $F_0 \homotopic \CCC P^\infty\times \ZZZ_2$ and $F_1 \homotopic K(\ZZZ, 3) \times K(\ZZZ_2, 1)$.\\
Spin and pin cobordism as in Ref.\,\cite{Kapustin_Fermion} & Related to the Thom spectra $MSpin$, $MPin_+$, and $MPin_-$. \\
Kitaev's fermionic proposal \cite{Kitaev_Stony_Brook_2013_SRE, Kitaev_IPAM} & Constructed from physical knowledge.
$$F_0 = \CCC P^\infty \times \ZZZ_2$$
is uniquely determined, and $F_{d>0}$ are partially determined. See App.\,A of work \cite{Xiong}. This works for $G_f = \ZZZ_2^f$ (cf.\,Sec.\,\ref{subsec:fermionic_SPT}).
\end{longtable}
\end{spacing}
}

We highlight some of the differences in the predictions of different proposals. In the bosonic case with time-reversal symmetry, the Borel group cohomology proposal \cite{Wen_Boson} predicts a $\ZZZ_2$ classification in 3 spatial dimensions. In contrast, the cobordism proposal \cite{Kapustin_Boson} predicts a $\ZZZ_2^2$ classification, with the extra $\ZZZ_2$ conjectured to correspond to the ``3D $E_8$ phase" studied in Refs.\,\cite{3dBTScVishwanathSenthil, 3dBTScWangSenthil, 3dBTScBurnell}. Furthermore, there are certain phases in 6 dimensions that are nontrivial according to the Borel group cohomology proposal but are trivial according to the cobordism proposal. In the fermionic case, with a time-reversal symmetry that squares to the identity (as opposed to fermion parity), the group supercohomology proposal \cite{Wen_Fermion} predicts a $\ZZZ_4$ classification of SPT phases in 1 dimension. In contrast, the spin cobordism \cite{Kapustin_Fermion} proposal predicts a $\ZZZ_8$ classification. We thus see that these proposals are inequivalent.

In general, these $\Omega$-spectra can be thought of as good approximations to the true, unknown $\Omega$-spectrum that gives the complete classification of SPT phases. More precisely, existing bosonic proposals are approximations to the true bosonic $\Omega$-spectrum, whereas fermionic proposals are approximations to the fermionic counterpart.


\chapter{Applications}
\label{chap:applications}

In this chapter, we will present two applications of the minimalist framework that we developed in Chapter \ref{chap:minimalist}:
\begin{enumerate}
\item In Sec.\,\ref{sec:glide}, we will classify and construct 3D fermionic SPT phases in Wigner-Dyson classes A and AII with an additional glide symmetry, and show that the so-called ``hourglass fermions" \cite{Ma_discoverhourglass, Hourglass, Cohomological} are robust to strong interactions.

\item In Sec.\,\ref{sec:3D_beyond_group_cohomology}, we will classify and construct 3D bosonic SPT phases with space-group symmetries for all 230 space groups, which include phases beyond the previously proposed group-cohomology classification \cite{Huang_dimensional_reduction, ThorngrenElse}, which ignored all phases built from the so-called $E_8$ state \cite{Kitaev_honeycomb, 2dChiralBosonicSPT, Kitaev_KITP}. 
\end{enumerate}

Sec.\,\ref{sec:glide} was published in my work \cite{Xiong_Alexandradinata} with A.\,Alexandradinata. Sec.\,\ref{sec:3D_beyond_group_cohomology} was published in my work \cite{SongXiongHuang} with Hao Song and Shengjie Huang. A list of classifications of 3D bosonic SPT phases with space-group symmetries for all 230 space groups, which was also published in work \cite{SongXiongHuang}, appears in App.\,\ref{app:list_of_classifications} to this thesis.

\section{Classification and construction of 3D fermionic SPT phases with glide symmetry}
\label{sec:glide}


The recent intercourse between band theory, crystalline symmetries, and topology has been highly fruitful in both theoretical and experimental laboratories. The recent experimental discovery \cite{Ma_discoverhourglass} of hourglass-fermion surface states in the material class KHgSb \cite{Hourglass,Cohomological} heralds a new class of topological insulators (TIs) protected by glide symmetry \cite{ChaoxingNonsymm,unpinned,Shiozaki2015,Nonsymm_Shiozaki,Poyao_mobiuskondo,Ezawa_hourglass, shiozaki_review,singlediraccone} 
-- a reflection composed with a translation by half the lattice period. Despite the manifold successes of band theory, electrons are fundamentally interacting. To what extent are topological phases predicted from band theory robust to interactions?

In this work, we demonstrate that the question of (a) robustness to interactions is intimately linked to two other seemingly unrelated questions: (b) how glide-symmetric topological phases can be constructed by layering lower-dimensional topological phases, and (c) in what ways can the classification of topological phases be altered by the inclusion of glide symmetry.

In fact, question (c) is very close in spirit to the types of questions asked in a symmetry-based classification of solids: how many ways are there to combine discrete translational symmetry with rotations and/or reflections to form a space group -- the full symmetries of a crystalline solid? This has been recognized as a group extension problem, and its solution through group cohomology has led to the classification of 230 space groups of 3D solids \cite{hiller1986crystallography}. Here, we are proposing that the same mathematical structure ties together (a-c). More precisely, we are proposing a short exact sequence of abelian groups, which classify gapped, \emph{interacting} phases of matter, also known as symmetry-protected topological (SPT) phases \cite{SPT_origin, Wen_Definition}, that carries the information of (a-c).

\begin{figure}[t]
\centering
\includegraphics[width=3.5in]{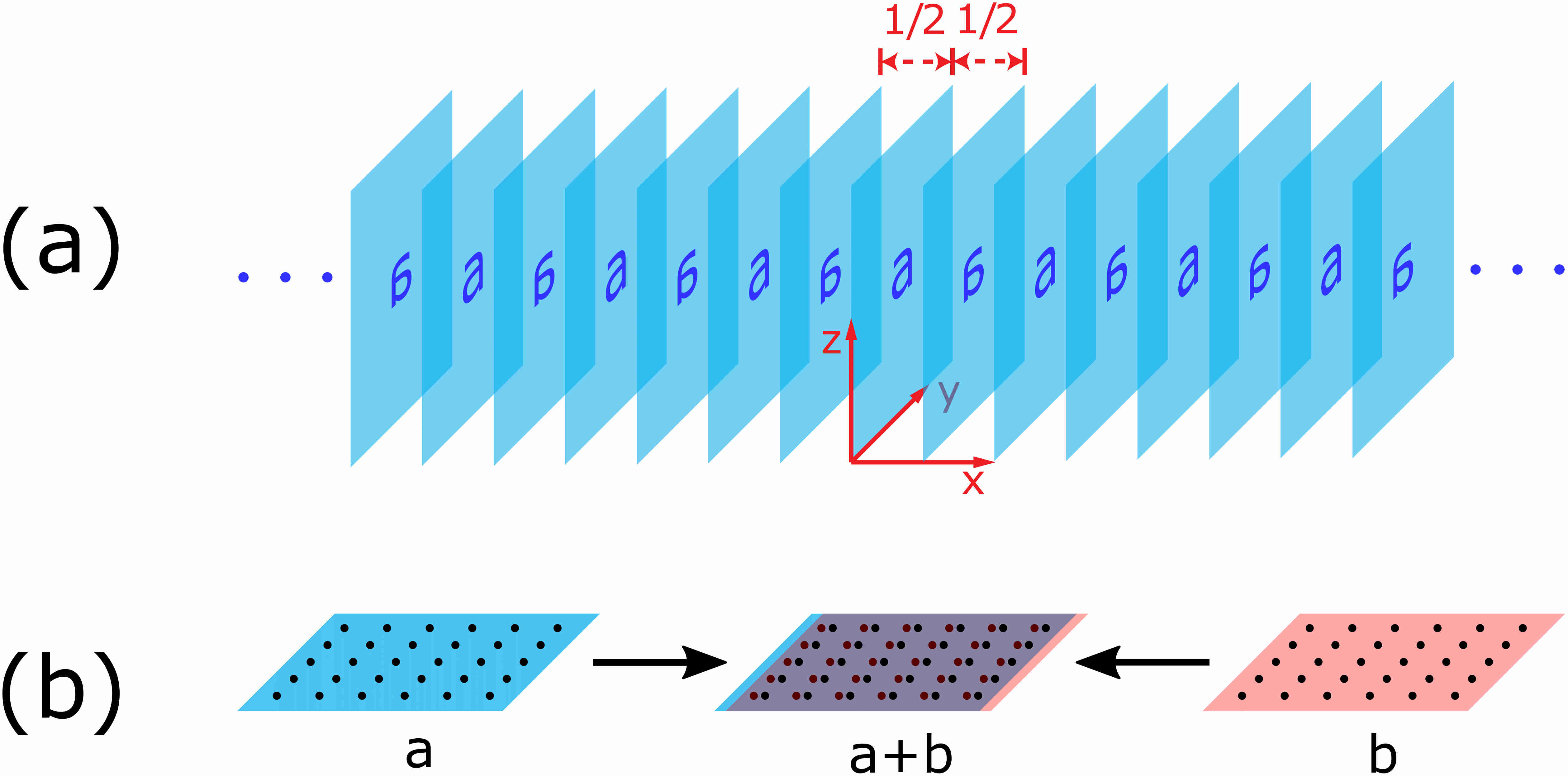}
\caption[The alternating-layer construction of SPT phases with glide symmetry.]{(a) The ``alternating-layer construction" repeats a given $G$-symmetric system $a$ and its mirror image in an alternating fashion to produce a one higher-dimensional system that respects glide symmetry in addition to $G$. (b) The ``stacking" operation combines \emph{two} systems $a$ and $b$ respecting a given symmetry into a new system of the \emph{same} dimension respecting the \emph{same} symmetry. Illustration is given for particular dimensions but the constructions are general.\label{fig:alternating_layer_construction}}
\end{figure}

In the symmetry class of hourglass fermions, i.e., spin-orbit-coupled solids with charge-conservation [$U(1)$], time-reversal ($\T$, with $\T^2 = -1$ on single fermions), and glide symmetries,\footnote{More precisely, the hourglass-fermion phase belongs to a nonsymmorphic space group which includes at least one glide symmetry.} there is a pair of consecutive maps between abelian groups classifying TIs in two and three dimensions,
\begin{equation}
\ZZZ_2 \xfromto{\times 2} \ZZZ_4 \xfromto{\rm mod~ 2} \ZZZ_2,\label{SES_AII}
\end{equation}
that can be viewed as a non-interacting analog of our short exact sequence.  The nontrivial element of the first $\ZZZ_2$, which distinguishes the two phases of 2D TIs that respect $\T$ and $U(1)$,  can be realized by a 2D quantum spin Hall (QSH) system \cite{kane2005B, kane2005A}.
By placing \emph{decoupled} copies of a QSH system on all planes of constant $x\in \ZZZ$, and its mirror image on all $x\in \ZZZ + 1/2$ planes, one constructs a 3D system that respects the glide symmetry $(x,y,z)\mapsto(x+1/2,-y,z)$. This ``alternating-layer construction" (see Figure\,\ref{fig:alternating_layer_construction}) takes one from the first $\ZZZ_2$ to $\ZZZ_4$ -- in particular the QSH phase to the hourglass fermion phase \cite{Nonsymm_Shiozaki,Ezawa_hourglass} -- where $\ZZZ_4$ distinguishes the four phases of 3D TIs that respect glide in addition to $\T$ and $U(1)$ \cite{Nonsymm_Shiozaki}.\footnote{Since there is no known generalization of the $\ZZZ_4$ invariant to disordered systems, we assume discrete translational symmetry in all three directions at first and quotient out phases that can be obtained by layering lower-dimensional phases in the $y$- and $z$-directions in the end \cite{Nonsymm_Shiozaki}. In the same spirit, both $\ZZZ_2$'s in Eq.\,(\ref{SES_AII}) are obtained after quotienting out layered phases; that is, they are the strong \cite{Kitaev_TI} classifications.\label{footnote:strong_phases}} By dropping the glide symmetry constraint, one can in turn view a 3D TI respecting glide, $\T$, and $U(1)$ as an element of the second $\ZZZ_2$, which is the strong classification of 3D TIs respecting $\T$ and $U(1)$ but not necessarily glide \cite{moore2007,fu2007b,Rahul_3DTI}. In this ``symmetry-forgetting" process, certain distinct classes in the $\ZZZ_4 = \braces{0,1,2,3}$ classification are identified: classes 0 and 2 (resp.\,1 and 3) can be connected to each other if glide symmetry is not enforced. This gives the second, ``glide-forgetting" map between $\ZZZ_4$ and $\ZZZ_2$ in Eq.\,(\ref{SES_AII}). We stress that glide forgetting only conceptually expands the space of allowed Hamiltonians by letting go of the glide constraint, and does not involve an immediate, actual perturbation to a particular system that is under consideration.

Our short exact sequence of abelian groups classifying SPT phases works in essentially the same manner as Eq.\,(\ref{SES_AII}) but with the non-interacting classification replaced by the classification of bosonic or fermionic SPT phases. 
In its full generality, the sequence applies to all symmetries $G$, including those that are represented \cite{Jiang_sgSPT, ThorngrenElse} antiunitarily, and all dimensions $d$, where the analog of glide is
\begin{equation}
(x_1, x_2, x_3, \ldots, x_d) \mapsto (x_1 + 1/2, -x_2, x_3, \ldots, x_d).\label{glide_definition}
\end{equation}
Writing $\ZZZ$ for the symmetry generated by Eq.\,(\ref{glide_definition}), the existence of the short exact sequence implies that all $d$-dimensional $\ZZZ\times G$-protected SPT phases must have order 1, 2, or 4 and that their classification must be a direct sum of $\ZZZ_4$'s {and/or} $\ZZZ_2$'s, where the order of an SPT phase is defined with respect to a ``stacking" operation (imagine interlaying two systems \emph{without} coupling them; see Figure\,\ref{fig:alternating_layer_construction}) `$+$' that makes the set of $d$-dimensional $G$-protected SPT phases into an abelian group.\footnote{In the literature on non-interacting fermionic phases, the stacking operation is also known as the {``Whitney sum,"} which refers to the direct sum of vector spaces corresponding to two sets of fermion-filled bands over the same Brillouin torus. Note that a direct sum of single-particle Hilbert spaces corresponds to a tensor product of Fock spaces or many-body Hilbert spaces.} The short exact sequence also implies that, in general, not all $d$-dimensional $G$-protected SPT phases have glide-symmetric representatives and that a necessary and sufficient condition for such representatives to exist is for the given $G$-protected SPT phase to square to the trivial phase, where the square of an SPT phase $[a]$ is by definition $2[a] \coloneq [a]+[a]$. Note that this implication is non-obvious because certain $G$-protected SPT phases are known to be incompatible with certain symmetries outside the group $G$: e.g., a Chern insulator that conserves charge is not compatible with time reversal. From the perspective gained through our short exact sequence, it is then not surprising, in the symmetry class of the hourglass-fermion phase, that there exist four non-interacting 3D phases in the presence of glide and that the nontrivial 3D $\ZZZ_2$ TI, which squares to the trivial phase, can be made glide-symmetric \cite{Cohomological}.

In fact, by combining our general result with the proposed complete classifications of 2D \cite{2dChiralBosonicSPT_erratum} and 3D \cite{WangChong_3DSPTAII} fermionic SPT phases in the Wigner-Dyson class AII, we can show that the complete classification of 3D fermionic SPT phases with an additional glide symmetry must be $\ZZZ_4 \oplus \ZZZ_2 \oplus \ZZZ_2$, such that the first summand can be identified with the $\ZZZ_4$ in Eq.\,(\ref{SES_AII}).
We will do so in two steps. First, we will argue that the hourglass-fermion phase is robust to interactions using a corollary of the general result and known arguments \cite{qi_spincharge,essin2009} for the robustness of QSH systems and 3D TIs without glide. Assuringly, the same conclusion was recently drawn in Ref.\,\cite{Lu_sgSPT} through the construction of an anomalous surface topological order. Then, we will show that the exactness of the sequence
\begin{equation}
0 \fromto \ZZZ_2 \fromto \mbox{?} \fromto \ZZZ_2 \oplus \ZZZ_2 \oplus \ZZZ_2 \fromto 0,
\end{equation}
where $0$ denotes the trivial group (also written $\ZZZ_1$),
is simply constraining enough that $\ZZZ_4\oplus\ZZZ_2\oplus\ZZZ_2$ is the \emph{unique} solution that is compatible with the robustness of the hourglass-fermion phase.

We will derive our general result within a bare-bones, minimalist framework that one of us developed \cite{Xiong} based on Kitaev's argument that the classification of SPT phases should carry the structure of a generalized cohomology theory \cite{Kitaev_Stony_Brook_2011_SRE_1, Kitaev_Stony_Brook_2013_SRE, Kitaev_IPAM}.
The framework assumes minimally that SPT phases form abelian groups satisfying certain axioms, and applies to all existing non-dimension-specific proposals for the classification of SPT phases \cite{Wen_Boson, Wen_Fermion, Kapustin_Boson, Freed_SRE_iTQFT,  Freed_ReflectionPositivity, Kitaev_Stony_Brook_2011_SRE_1, Kitaev_Stony_Brook_2013_SRE, Kitaev_IPAM}.
The axioms provide for the switching from one symmetry group to another and from one dimension to another, which makes the derivation of our short exact sequence possible.
The results mentioned above are far from an exhaustive list of implications of the short exact sequence, which we will elaborate upon in this work.

The rest of Sec.\,\ref{sec:glide} is organized as follows. In Sec.\,\ref{subsec:hourglass_fermions}, we will argue that the hourglass-fermion phase and its square roots are robust to interactions. In Sec.\,\ref{subsec:general_relations}, we will deal with SPT phases with glide more systematically. We will give a more precise definition of SPT phases, derive our general result, and explore its implications. In Sec.\,\ref{subsec:physical_picture}, we will break down the general result into individual statements and offer the physical intuition behind some of them. In Sec.\,\ref{subsec:applications}, we will apply the general result to 3D fermionic SPT phases in Wigner-Dyson classes A and AII, where the complete classification with glide will be derived. In Sec.\,\ref{subsec:computations}, we will discuss generalizations. Specifically, in Sec.\,\ref{sec:puretranslation}, we will discuss the difference between glide and pure translation. In Sec.\,\ref{subsec:bSPT_with_glide}, we will apply the general result to bosonic SPT phases. In Sec.\,\ref{subsec:temporalglide}, we will discuss generalized, spatiotemporal glide symmetries.

\subsection{Robustness of hourglass fermions to interactions\label{subsec:hourglass_fermions}}

Let us apply generalized cohomology to 3D spin-orbit-coupled, time reversal-invariant TIs and their interacting analogues. Due to spin-orbit coupling, time reversal necessarily carries half integer-spin representation, i.e.\,it squares to $-1$ on single fermions. With the addition of glide symmetry, the two well-known classes of 3D TIs, which are distinguished by a $\ZZZ_2$ index $\nu_0\in \braces{0,1}$ \cite{fu2007b,moore2007,Rahul_3DTI,Inversion_Fu}, subdivide into four classes distinguished by a $\ZZZ_4$ invariant $\chi \in \braces{0,1,2,3}$ \cite{Nonsymm_Shiozaki, AA_Z4}. Of the four classes, $\chi=2$ corresponds to the hourglass fermion phase. We will call the other two nontrivial phases the ``square roots" of the hourglass-fermion phase, since the $\ZZZ_4$ invariant adds under stacking and $1+1 \equiv 3 + 3 \equiv 2 \mod 4$. In this section we will argue
\begin{enumerate}[(i)]
\item that the square roots of the hourglass-fermion phase are robust to interactions, and \label{list_1_2}

\item that the hourglass-fermion phase is robust to interactions. \label{list_1_3}
\end{enumerate}
Note that (\ref{list_1_3}) implies (\ref{list_1_2}), for if the $\chi= 1$ or 3 phase was unstable to interactions, then so would two decoupled copies of itself, which represent the $\chi = 2$ phase. Nevertheless, we will dedicate a separate subsection to (\ref{list_1_2}), which can be justified by an independent magnetoelectric response argument, as a consistency check.

\subsubsection{Robustness of the square roots of the hourglass-fermion phase\label{subsubsec:robustness_square_roots}}

In this subsection, we argue that the square roots ($\chi = 1, 3$) of the hourglass-fermion phase are robust to interactions. More precisely, we argue that they cannot be connected to the trivial phase ($\chi=0$) by turning on interactions that preserve the many-body gap and the glide, $U(1)$, and $\T$ symmetries.

We begin by noting that the $\ZZZ_4$ and $\ZZZ_2$ classifications with and without glide symmetry are related as
\begin{equation}
\chi \equiv \nu_0 \mod 2,\label{glideforget}
\end{equation}
which is supported by the following heuristic argument.\footnote{This argument was first presented in Ref.\ \cite{Nonsymm_Shiozaki}} Recall that $\nu_0$ counts the parity of the number of surface Dirac fermions \cite{fu2007b,moore2007,Rahul_3DTI,Inversion_Fu} The glide symmetry assigns to each Dirac fermion a chirality according to its glide representation [compare Figures \ref{fig:z4surfacestates}(b) and (c)]. Unless symmetry is broken, we cannot \cite{Hourglass} fully gap out surface states that carry two positively-chiral Dirac fermions [\fig{fig:z4surfacestates}(d)]. However, two positively-chiral Dirac fermions can be deformed into two negatively-chiral fermions [\fig{fig:z4surfacestates}(d)$\rightarrow$(e)$\rightarrow$(f)$\rightarrow$(g)]. We thus expect four topologically distinct classes, which we distinguish by a $\ZZZ_4$ invariant $\chi$ that counts the number of chiral Dirac fermions mod 4. Since both $\chi$ and $\nu_0$ count the number of surface Dirac fermions, Eq.\,(\ref{glideforget}) follows. This argument was made with representatives of $\chi=\pm 1,\pm 2$ whose surface states are Dirac fermions situated at the Brillouin zone center (point $2$ in Figure\,\ref{fig:z4surfacestates}); more generally, the surface states form a nontrivial connected graph over the bent line $0123$ \cite{Nonsymm_Shiozaki,Hourglass}. This motivates a more general proof of \q{glideforget}, which we present in App.\,B of work \cite{Xiong_Alexandradinata}.

\begin{figure}[tb]
\centering
\includegraphics[width=10 cm]{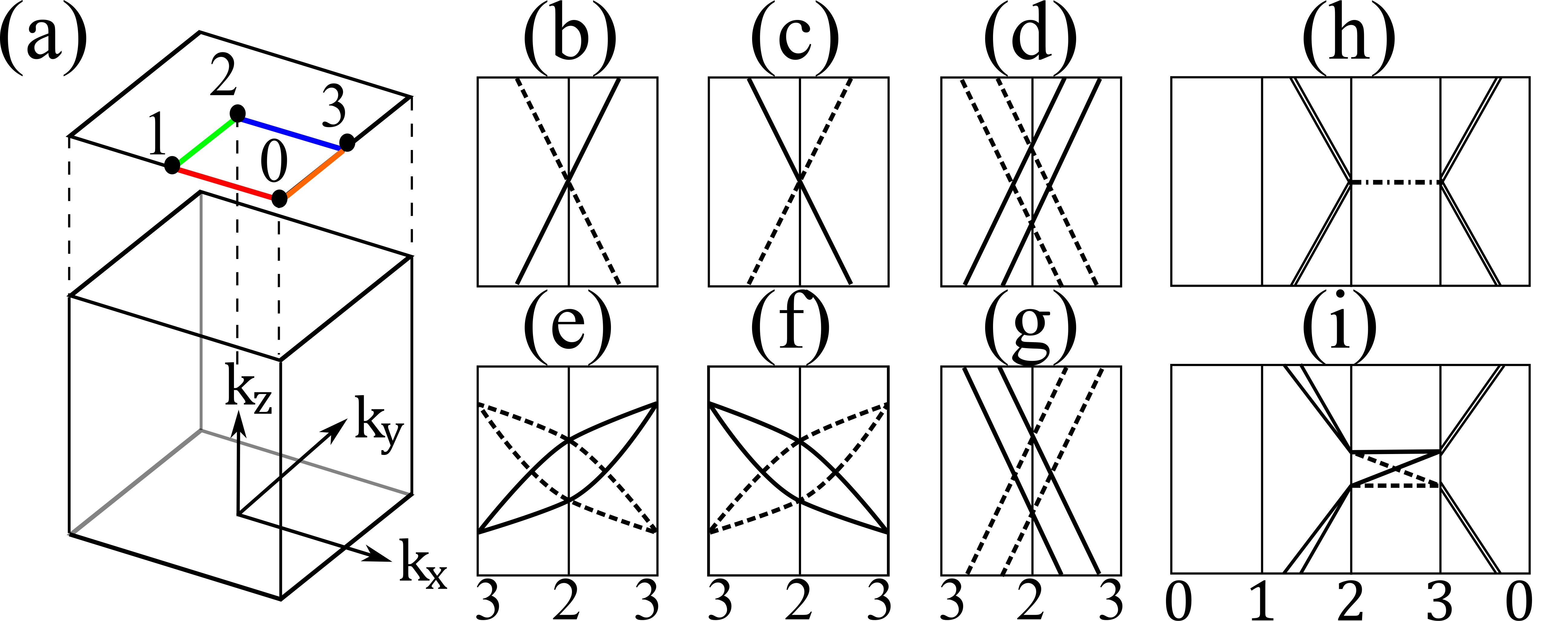}
\caption[Possible surface states of a glide-symmetric crystal in Wigner-Dyson class AII.]{ (a) Bottom: Brillouin 3-torus for a glide-symmetric crystal; top:  Brillouin 2-torus corresponding to the glide-symmetric surface. (b-g) Possible surface states on the glide-invariant line $323$; a surface band in the even (odd) representation of glide  is indicated by a solid (dashed) line. (h-i) Surface states on the high-symmetry line $01230$. Bands along $30$ are Kramers-degenerate owing to the composition of time reversal and glide \cite{Hourglass}. In (h), the additional degeneracy (two-fold along 12, four-fold along 23) originates from the alternating-layering construction; these degeneracies may be split by generic perturbations, as illustrated in (i).\label{fig:z4surfacestates}}
\end{figure}

The square roots of the hourglass-fermion phase have $\ZZZ_4$ invariant $\chi=1,3$, so they must correspond to the $\nu_0 = 1$ phase by Eq.\,(\ref{glideforget}). As can be argued from the quantization of magnetoelectric response \cite{Qi_Hughes_Zhang,Wilczek_axion}, which persists in the many-body case \cite{essin2009}, the $\nu_0=1$ phase is robust to interactions in the sense that it cannot be connected to the trivial phase by turning on interactions that preserve the many-body gap and the $U(1)$ and $\T$ symmetries. But we know that interactions that preserve glide in addition to $U(1)$ and $\T$ are a subset of those that preserve $U(1)$ and $\T$. If a system cannot be made trivial by turning on interactions that preserve $U(1)$ and $\T$, then it surely cannot be made trivial by turning on interactions that simultaneously preserve glide, $U(1)$, and $\T$, and our argument is complete.

While the minimalist framework did not enter the argument above, it will enter the argument for the robustness of the hourglass-fermion phase, which is a stronger claim and the subject of the next subsection.

\subsubsection{Robustness of the hourglass-fermion phase\label{subsubsec:robustness_hourglass_fermion_phase}}

In this subsection, we argue that the hourglass-fermion phase ($\chi=2$) is robust to interactions. More precisely, we argue that it cannot be connected to the trivial phase ($\chi =0$) by turning on interactions that preserve the many-body gap and the glide, $U(1)$, and $\T$ symmetries.

Let us represent the hourglass-fermion phase by a system obtained through the alternating-layer construction. More specifically, let us put copies of a QSH system on all planes of constant $x\in \ZZZ$ and its image under $y\mapsto -y$ on all $x\in \ZZZ+ 1/2$ planes, without turning on inter-plane coupling. To see this represents the hourglass-fermion phase, we recall that a QSH system and its mirror image have identically dispersing 1-dimensional Dirac fermions on the edge. When layered together as described, we obtain two degenerate surface Dirac fermions that do not disperse as functions of $k_x$. This is illustrated in \fig{fig:z4surfacestates}(h), where along the glide-symmetric line 23 we have a four-fold degeneracy originating from two degenerate Dirac points. Glide-symmetric interlayer coupling can only perturb the surface band structure into a connected graph like in \fig{fig:z4surfacestates}(i) \cite{Hourglass}, owing to a combination of the Kramers degeneracy and the monodromy \cite{connectivityMichelZak} of the representation of glide. The resultant connected graph over 0123 has the same topology as the hourglass-fermion phase. Since the bulk gap is maintained throughout the perturbation, the unperturbed system must be in the hourglass-fermion phase. A tight-binding model that demonstrates the construction has been devised by Ezawa \cite{Ezawa_hourglass}.

Next, let us recognize that the robustness of the hourglass-fermion phase to interactions is equivalent to its nontriviality as a 3D SPT phase protected by glide, $U(1)$, and $\T$. A corollary to our general result to be presented in Sec.\,\ref{subsec:general_relations} says that given any symmetry $G$, dimension $d$, and nontrivial $(d-1)$-dimensional $G$-protected SPT phase $[a]$, \emph{the $d$-dimensional $\ZZZ \times G$-protected SPT phase obtained from $[a]$ through the alternating-layer construction is trivial if and only if $[a]$ has a square root}. Since the hourglass-fermion phase can be obtained from the QSH phase through the alternating-layer construction, its robustness to interactions now boils down to the absence of a square root of the QSH phase.

To support the last claim, we employ a many-body generalization \cite{qi_spincharge,lee2008,aa2011} of the 2D $\ZZZ_2$ topological invariant $\Delta\in \braces{0,1}$ in Wigner-Dyson class AII. Following the approach of Ref.\ \cite{qi_spincharge}, which is closely related to a preceding pumping formulation \cite{fu2006} that generalizes the well-known Laughlin argument \cite{laughlin1981}, we define $\Delta$ to be the parity of the charge that is pumped toward a flux tube as half a quantum of \emph{spin flux} is threaded. This charge is quantized to be integers even in the many-body case. Moreover, it adds under stacking: if two systems $a$ and $b$ are stacked, then the charge pumped in the stacked system $a+b$ must be a sum of the individual systems. Now it is obvious that not only is a phase with odd $\Delta$ nontrivial, but it also cannot have any square root.

To recapitulate, we have argued for the robustness to interactions of all three nontrivial band insulators in the non-interacting $\ZZZ_4$ classification, by employing only a corollary to our general result. The full power of the general result will be manifest in \s{subsubsec:sanity_check} when we derive from it the complete classification of 3D SPT phases protected by glide, $U(1)$, and $\T$, which contains $\ZZZ_4$ as a subgroup.

\subsection{General relation between classifications with and without glide\label{subsec:general_relations}}

From now on we will be dealing with an \emph{arbitrary} symmetry $G$, and SPT phases protected by either $G$ or $G$ combined with a glide symmetry $\ZZZ$. Since our general results apply to both fermionic and bosonic SPT phases, we will often omit such adjectives as ``fermionic" and ``bosonic," with the understanding that $G$ denotes a full symmetry group, which contains fermion parity, in the fermionic case. The Wigner-Dyson class AII, to which hourglass fermions belong, corresponds to the fermionic case and a $G$ that is generated by charge conservation and a time reversal that squares to fermion parity---it is the unique non-split $U(1)$-extension of $\ZZZ_2$ for the non-trivial action of $\ZZZ_2$ on $U(1)$. We will present the main result of this section, a short exact sequence relating the classification of $G$- and $\ZZZ\times G$-protected SPT phases, in Sec.\,\ref{subsubsec:SES_classifications}. We will then explore its implications and derive some useful corollaries in Sec.\,\ref{subsubsec:corollaries}.
Before delving into the results, let us first clarify our terminology.

As demonstrated in Ref.\,\cite{McGreevy_sSourcery} by a worm hole array argument, the inverse of an SPT phase protected by an \emph{on-site} symmetry is given by its orientation-reversed version. That is, if $a$ is a system that represents an SPT phase $[a]$, then the orientation-reversed system $\bar a$ will represent the inverse SPT phase:
\begin{equation}
-[a] = [\bar a].
\end{equation}
The situation with non-on-site symmetries is more involved: with glide, the orientation-reversed version of either square root of the hourglass fermion phase is itself rather than the inverse, for instance.

Another useful notion is that of ``weakness with respect to glide."
Writing $\ZZZ$ for a glide symmetry, we say a $\ZZZ\times G$-protected SPT phase is weak with respect to glide if it becomes trivial under the glide-forgetting map:
\begin{equation}
\beta': \SPT^d\paren{\ZZZ \times G} \fromto \SPT^d\paren{G}. \label{beta}
\end{equation}
We shall denote the set of $d$-dimensional $\ZZZ\times G$-protected SPT phases that are weak with respect to glide by $\wSPT^d(\ZZZ \times G)$, or $\wSPT^d(\ZZZ\times G,\phi)$ for completeness. It is precisely the kernel of $\beta'$
\begin{equation}
\wSPT^d\paren{\ZZZ \times G} \coloneq \kernel \beta',
\end{equation}
which is a subgroup of the abelian group of $d$-dimensional $\ZZZ\times G$-protected SPT phases. Again, subscripts can be introduced to distinguish between bosonic and fermionic phases, as in $\wSPT^d_b$ or $\wSPT^d_f$, if necessary.

\subsubsection{Short exact sequence of classifications\label{subsubsec:SES_classifications}}

We now present the main result of the section, a short exact sequence that relates the classification of $d$-dimensional $\ZZZ \times G$-protected SPT phases $\SPT^d(\ZZZ \times G)$, the classification of $d$-dimensional $G$-protected SPT phases $\SPT^d(G)$, and the classification of $(d-1)$-dimensional $G$-protected SPT phases $\SPT^{d-1}(G)$, where $G$ is arbitrary and $\ZZZ$ is generated by a glide reflection.

Given any abelian group $A$, we write
\begin{equation}
2A \coloneq \braces{ 2a | a\in A }, \label{2A}
\end{equation}
for the subgroup of $A$ of those elements that have square roots, and we write $A/2A$ for the quotient of $A$ by $2A$. For example, $\ZZZ_n/ 2\ZZZ_n = \ZZZ_{\gcd(n,2)}$, where $\gcd$ stands for greatest common divisor; this even applies to $n= \infty$ if one defines $\gcd(\infty, 2) = 2$.

\begin{prp}[Short exact sequence of classifications]
Assume the generalized cohomology hypothesis. Let $d$ and $G$ be arbitrary and $\ZZZ$ be generated by a glide reflection. There is
a short exact sequence
\begin{equation}
0 \fromto \SPT^{d-1}(G) / 2\SPT^{d-1}(G) \xfromto{\alpha} \SPT^d\paren{\ZZZ\times G} \xfromto{\beta}  \{ [c] \in \SPT^d\paren{G} \big| 2[c] = 0 \} \fromto 0,\label{SES}
\end{equation}
where $\beta$ is the glide-forgetting map [same as $\beta'$ in Eq.\,(\ref{beta}) but with restricted codomain].
\label{prp:SES_classifications}
\end{prp}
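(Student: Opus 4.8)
The plan is to route everything through the generalized cohomology hypothesis and thereby reduce the statement to a computation of twisted equivariant generalized cohomology. Writing $h^{\phi+\bullet}_G$ for the theory attached to the SPT $\Omega$-spectrum $\paren{F_n}$, the hypothesis gives $\SPT^d\paren{\ZZZ\times G}\isomorphic h^{\phi+d}_{\ZZZ\times G}\paren{\pt}$ and $\SPT^d(G)\isomorphic h^{\phi+d}_G\paren{\pt}$. The two structural inputs I would exploit are: (i) as an \emph{abstract} group the glide symmetry is infinite cyclic, $\ZZZ=\langle g\rangle$, so $B\ZZZ\homotopic\SS^1$; and (ii) the glide reverses spatial orientation, so $\phi(g)=-1$. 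Since $B\paren{\ZZZ\times G}\homotopic\SS^1\times BG$ with the twist nontrivial around the $\SS^1$-factor, the natural tool is the Wang (mapping-torus) long exact sequence of the fibration $BG\fromto E\fromto \SS^1$, whose monodromy $\tau$ is the action of the glide generator on the coefficients $h^{\phi+\bullet}_G\paren{\pt}$.

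Carrying out the degree bookkeeping---the extra $\SS^1$ shifts cohomological degree by one, matching the fact that the connecting map raises spatial dimension by one---the Wang sequence should read
\begin{equation}
\cdots \fromto \SPT^{d-1}(G) \xfromto{1-\tau} \SPT^{d-1}(G) \xfromto{\alpha} \SPT^d\paren{\ZZZ\times G} \xfromto{\beta} \SPT^d(G) \xfromto{1-\tau} \SPT^d(G) \fromto \cdots,
\end{equation}
where $\beta$ is fiber restriction (forgetting the glide) and $\alpha$ is the connecting homomorphism. The decisive step is to identify $\tau$. Because $\phi(g)=-1$, in the description $h^{\phi+n}_G\paren{\pt}=\brackets{EG,\tilde\Omega F_{n+1}}_G$ the generator $g$ acts on $\tilde\Omega F_{n+1}$ by reversing the direction of loops; but loop reversal is exactly inversion in the concatenation group law, which is the stacking law on phases. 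Hence $\tau=-\identity$ and $1-\tau$ is multiplication by $2$. This is precisely where glide departs from pure translation, for which $g$ preserves orientation, $\tau=+\identity$, and the sequence degenerates into the familiar layered-phase direct sum.

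With $\tau=-\identity$, the short exact sequence drops out of the long one: exactness at the left $\SPT^{d-1}(G)$ gives $\kernel\alpha=2\SPT^{d-1}(G)$ (notation of Eq.\,(\ref{2A})), so $\alpha$ descends to an injection $\SPT^{d-1}(G)/2\SPT^{d-1}(G)\oneone\SPT^d\paren{\ZZZ\times G}$; exactness at the right $\SPT^d(G)$ gives $\image\beta=\braces{[c]\in\SPT^d(G)\,|\,2[c]=0}$; and exactness in the middle, $\image\alpha=\kernel\beta=\wSPT^d\paren{\ZZZ\times G}$, completes
\begin{equation}
0 \fromto \SPT^{d-1}(G)/2\SPT^{d-1}(G) \xfromto{\alpha} \SPT^d\paren{\ZZZ\times G} \xfromto{\beta} \braces{[c]\in\SPT^d(G)\,|\,2[c]=0} \fromto 0.
\end{equation}
I would then verify that $\beta$ coincides with the glide-forgetting map $\beta'$ of Eq.\,(\ref{beta}) (it is restriction along $G\oneone\ZZZ\times G$) and that $\alpha$ is the alternating-layer construction; the latter is a reassuring consistency check, since $\kernel\alpha=2\SPT^{d-1}(G)$ is exactly the assertion that the alternating-layer phase of $[a]$ is trivial iff $[a]$ has a square root.

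The main obstacle, deserving the most care, is the identification $\tau=-\identity$. One must argue that the glide generator acts on the $G$-equivariant coefficient spectrum purely through orientation reversal: because $\ZZZ\times G$ is a \emph{direct} product, $g$ neither permutes nor otherwise acts on the $G$-data, so its only residual effect on $h^{\phi+\bullet}_G\paren{\pt}$ is the loop reversal forced by $\phi(g)=-1$, and this loop reversal is genuinely the group inverse. A secondary technical point is the degree/twist bookkeeping in assembling the Wang sequence with the $\phi$-twist correctly restricted to the $\SS^1$ factor; I would handle this concretely at the level of the $\paren{\ZZZ\times G}$-CW structure on $E\paren{\ZZZ\times G}=\RRR\times EG$, where $\ZZZ$ translates $\RRR$ and reverses loops, exhibiting $h^{\phi+d}_{\ZZZ\times G}\paren{\pt}$ as the sections over $\SS^1=\RRR/\ZZZ$ of a bundle with monodromy $\tau$.
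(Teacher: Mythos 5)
Your proof is correct and follows essentially the same route as the paper's: invoke the generalized cohomology hypothesis to identify $\SPT^d\paren{\ZZZ\times G}$ with the twisted equivariant theory over $B\ZZZ\homotopic\SS^1$, observe that the glide generator acts on the coefficient groups $h^{\phi+\bullet}_G\paren{\pt}$ purely by loop reversal (hence $\tau=-\identity$, since $\phi(\text{glide})=-1$ and the product structure $\ZZZ\times G$ precludes any further action), and extract the short exact sequence from the resulting two-term exact couple over the circle, with $\kernel$ and $\cokernel$ of multiplication by $2$ appearing as the outer terms. The only cosmetic difference is packaging: you phrase the circle computation as a Wang sequence of the fibration $BG\fromto E\fromto\SS^1$, whereas the paper works cellularly with the $\ZZZ$-CW structure on $E\ZZZ=\RRR$; for a one-dimensional base these are the same calculation.
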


\begin{figure}[t]
\centering
\includegraphics[height=1.5in]{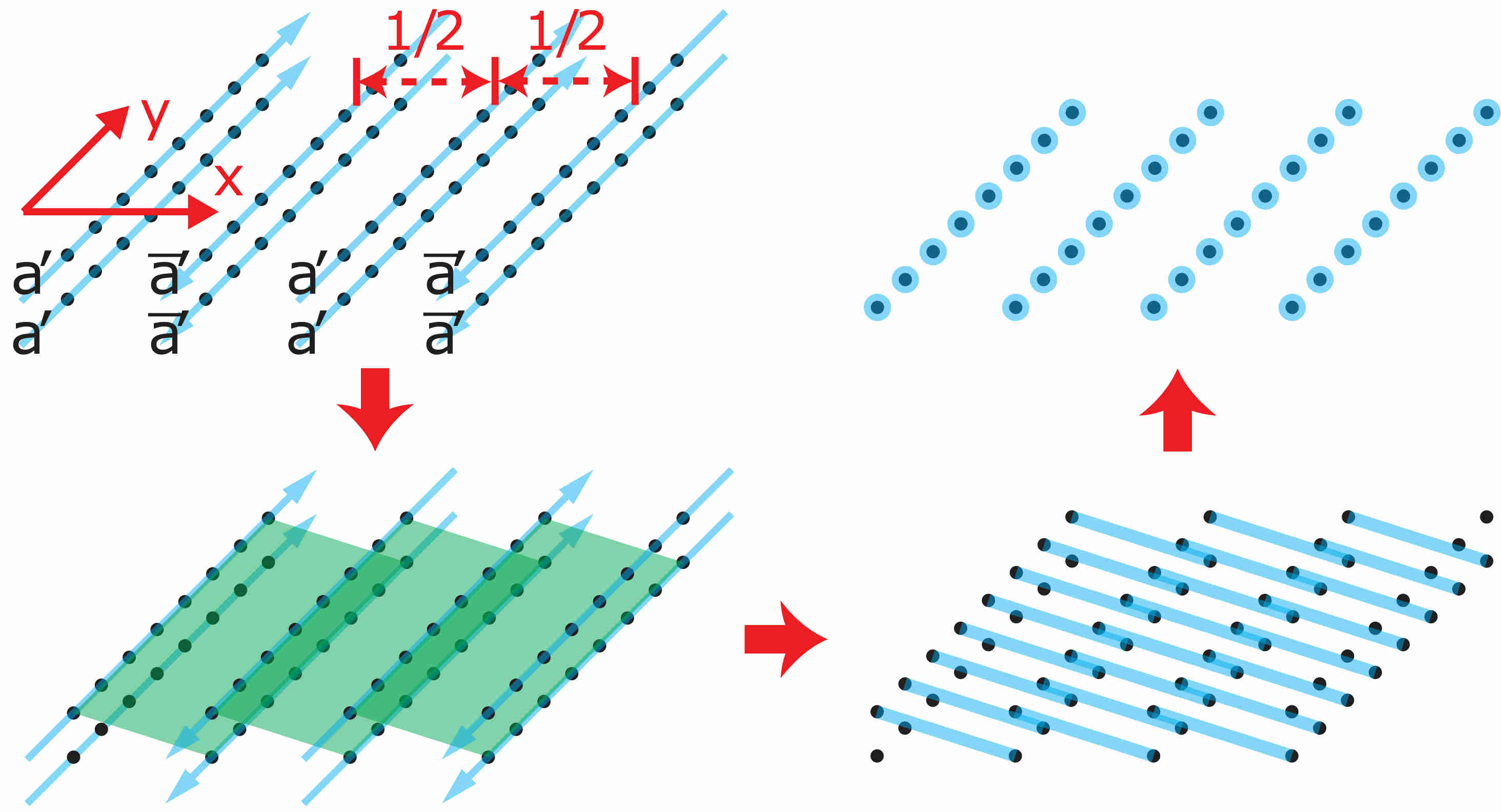}
\caption[Physical justification for the claim $\brackets{a} = \brackets{a'}+\brackets{a'} \Rightarrow \alpha'\paren{\brackets{a}} = 0$.]{Physical justification for the claim $[a] = [a']+[a'] \Rightarrow \alpha'([a]) = 0$, depicted for $d=2$, $(x,y)\mapsto (x+1/2, -y)$. Applying the alternating-layer construction to $a'+a'$ gives a 2D system (upper-left panel). By coupling an $a'$ (or $\overline{a'}$) in each $x\in \ZZZ$ (resp.\,$x\in \ZZZ + 1/2$) layer to an $\overline{a'}$ (resp.\,$a$) in the $x+1/2$ layer (lower-left panel), one can deform the ground state to a tensor product of individual states supported on diagonal pairs of sites (lower-right panel) (cf.\,Footnote\,\ref{footnote:diagonal_pairs}). A redefinition of sites then turns this into a tensor product of individual states supported on single sites, i.e.\,a trivial product state (upper-right panel) (cf.\,Footnote\,\ref{footnote:enlarged_Hilbert_space}). All deformations can be chosen to preserve $\ZZZ \times G$ and the gap.\label{fig:property_alpha'}}
\end{figure}

\begin{proof}
See App.\,C of work \cite{Xiong_Alexandradinata}.
\end{proof}

While the interpretation of $\beta$ in Proposition \ref{prp:SES_classifications} is clear from the proof in App.\,C of work \cite{Xiong_Alexandradinata},
the latter does not address the question as to what $\alpha$ means. Motivated by the discussions in Sec.\,\ref{subsubsec:alternating_layer_construction_triviality_glide_forgetting}, we shall posit that $\alpha$ is given by the alternating-layer construction introduced earlier (see Figure\,\ref{fig:alternating_layer_construction}).
More precisely, the alternating-layer construction defines a map
\begin{equation}
\alpha': \SPT^{d-1}(G) \fromto \SPT^d\paren{\ZZZ \times G}, \label{alpha'}
\end{equation}
with domain the abelian group of $(d-1)$-dimensional $G$-protected SPT phases.
By the physical argument in Figure\,\ref{fig:property_alpha'}, we must have $\alpha'([a]) = 0$
whenever there exists an $[a']\in \SPT^{d-1}(G)$ such that $[a] = [a'] + [a']$.\footnote{In general it is not necessary, or even possible, to go through this intermediate step. In the discussion of SPT phases, one always considers stable equivalence. That is, one allows for change of Hilbert spaces through, for instance, the introduction of ancillary lattices or a blocking of lattice sites \cite{Wen_1d, Cirac}. The deformation of a composite system into a system whose ground state is a tensor product may already involve such changes, in which case the intermediate picture of a tensor product of states supported on pairs of sites will no longer be accurate.\label{footnote:diagonal_pairs}}\textsuperscript{,}\footnote{More precisely, the transformation of states supported on pairs of sites into states supported on single sites involves first an enlargement of the Hilbert space and then a transfer of states in one sector of the enlarged Hilbert space to another \cite{Wen_1d, Cirac}.\label{footnote:enlarged_Hilbert_space}} This means $\alpha'$ can effectively be defined on the quotient $\SPT^{d-1}(G)/ 2\SPT^{d-1}(G)$, and we shall identify the induced map
\begin{equation}
\alpha: \SPT^{d-1}(G) / 2 \SPT^{d-1}(G) \fromto \SPT^d\paren{\ZZZ \times G} \label{alpha}
\end{equation}
as the $\alpha$ that appears in Proposition \ref{prp:SES_classifications}.
This interpretation of $\alpha$ is further supported by the various examples in Sec.\,\ref{subsec:applications} below and Ref.\,\cite{Lu_sgSPT}.%

We will see in Sec.\,\ref{subsec:physical_picture} that twice the glide reflection being orientation-preserving is closely related to the factors of 2 appearing in Proposition \ref{prp:SES_classifications}. The reader may have realized that $\SPT^{d-1}(G) / 2\SPT^{d-1}(G)$ and $\{ [c] \in \SPT^d\paren{G} | 2[c] = 0 \}$ can be expressed using the extension and torsion functors, as $\Ext^1( \ZZZ_2, \SPT^{d-1}(G) )$ and $\Tor_1 (\ZZZ_2, \SPT^d\paren{G})$, respectively, both of which are contravariant as they should be \cite{Xiong}.

\subsubsection{Implications of short exact sequence \label{subsubsec:corollaries}}

Let us explore the implications of Proposition \ref{prp:SES_classifications} and derive some useful corollaries from it.

First, the exactness of sequence (\ref{SES}) implies that $\image \alpha = \kernel \beta$ (this can be equivalently stated as $\image \alpha' = \kernel \beta'$ since, by definition, $\image \alpha = \image \alpha'$ and $\kernel \beta = \kernel \beta'$), which reproduces the result in Ref.\,\cite{Lu_sgSPT} that
\begin{cor}
A $\ZZZ\times G$-protected SPT phase is weak with respect to glide if and only if it can be obtained through the alternating-layer construction.
\end{cor}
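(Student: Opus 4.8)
The plan is to recognize this corollary as an immediate consequence of the exactness of the short exact sequence in Proposition \ref{prp:SES_classifications}, combined with the two bookkeeping identifications already flagged in the text. First I would restate the two phrases appearing in the corollary as statements about subgroups of $\SPT^d\paren{\ZZZ \times G}$. A phase is \emph{weak with respect to glide} precisely when it lies in $\wSPT^d\paren{\ZZZ \times G} = \kernel \beta'$, by the very definition of the glide-forgetting map $\beta'$ in Eq.\,(\ref{beta}). A phase \emph{can be obtained through the alternating-layer construction} precisely when it lies in the image of the map $\alpha'$ of Eq.\,(\ref{alpha'}). Thus the corollary is exactly the assertion $\image \alpha' = \kernel \beta'$.

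Next I would invoke exactness of sequence (\ref{SES}) at its middle term, which gives $\image \alpha = \kernel \beta$, and reduce the corollary to matching the primed and unprimed maps. For the images: $\alpha$ is the map induced by $\alpha'$ on the quotient, i.e.\ $\alpha' = \alpha \circ q$ where $q: \SPT^{d-1}(G) \onto \SPT^{d-1}(G)/2\SPT^{d-1}(G)$ is the quotient projection of Eq.\,(\ref{alpha}). Since $q$ is surjective, $\image \alpha' = \image \alpha$. For the kernels: $\beta$ is by construction the same map as $\beta'$ but with codomain restricted from $\SPT^d(G)$ to the subgroup $\braces{ [c] \in \SPT^d(G) \mid 2[c] = 0 }$; restricting a codomain never changes which elements are sent to the identity, so $\kernel \beta = \kernel \beta'$. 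That this codomain restriction is legitimate---namely that $\image \beta'$ really lands inside that subgroup---is itself part of the content of Proposition \ref{prp:SES_classifications} and is established in its proof.

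Combining these, $\image \alpha' = \image \alpha = \kernel \beta = \kernel \beta'$, which is the desired equality. The only genuine work within the corollary is verifying the two elementary identifications above; the substantive input---the well-definedness of $\alpha$ on the quotient (justified physically in Figure \ref{fig:property_alpha'}) and the exactness of (\ref{SES})---is supplied entirely by the Proposition. Consequently the main obstacle is not in this corollary at all but is deferred to the proof of Proposition \ref{prp:SES_classifications} in App.\,C of work \cite{Xiong_Alexandradinata}. The one point requiring care here is purely definitional: one must read ``obtained through the alternating-layer construction'' as membership in $\image \alpha'$ rather than in $\image \alpha$, so that no information is lost in passing to the quotient $\SPT^{d-1}(G)/2\SPT^{d-1}(G)$, with surjectivity of $q$ guaranteeing the two images agree.
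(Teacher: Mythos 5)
Your proposal is correct and is essentially the paper's own argument: the paper likewise derives the corollary from exactness of sequence (\ref{SES}) at the middle term, noting that $\image \alpha = \image \alpha'$ (surjectivity of the quotient map) and $\kernel \beta = \kernel \beta'$ (codomain restriction does not affect kernels), with ``weak with respect to glide'' meaning membership in $\kernel \beta'$ and ``obtained through the alternating-layer construction'' meaning membership in $\image \alpha'$. The only caveat, which you implicitly acknowledge, is that identifying $\alpha'$ with the alternating-layer construction is a physical posit of the paper rather than a theorem, and the corollary inherits that interpretation exactly as the paper's version does.
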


\noindent Furthermore, since every element of an abelian group of the form $A/2A$ is either trivial or of order 2, any $\ZZZ \times G$-protected SPT phase that is weak with respect to glide -- and hence obtainable from an element of $\SPT^{d-1}(G) / 2 \SPT^{d-1}(G)$ -- must also be either trivial or of order 2. In fact, a necessary and sufficient condition for such a phase to be trivial (resp. has order 2) is that the $(d-1)$-dimensional $G$-protected SPT phase it comes from has a square root (resp. has no square root).\footnote{While two $(d-1)$-dimensional $G$-protected SPT phases may lead to the same $d$-dimensional $\ZZZ \times G$-protected SPT phase, this if-and-only-if condition is unambiguous because any two such phases must both have square roots or both have no square roots.} This follows from the exactness of sequence (\ref{SES}), which implies $\alpha$ is injective. 

The above necessary and sufficient condition allows us to classify $d$-dimensional $\ZZZ \times G$-protected SPT phase that are weak with respect to glide by classifying $(d-1)$-dimensional $G$-protected SPT phases instead:
\begin{cor}
\label{cor:classification_SPT_weak_wrt_glide}
There is an isomorphism
\begin{equation}
\wSPT^d\paren{\ZZZ \times G} \isomorphic \SPT^{d-1}(G) / 2\SPT^{d-1}(G). \label{GSPT}
\end{equation}
\end{cor}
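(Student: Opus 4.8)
The plan is to read off the isomorphism directly from the short exact sequence in Proposition \ref{prp:SES_classifications}, which already contains all the real content; this corollary is a purely formal consequence of exactness. First I would recall that, by definition, $\wSPT^d\paren{\ZZZ \times G} = \kernel \beta'$, where $\beta'$ is the glide-forgetting map of Eq.\,(\ref{beta}). The key observation is that the map $\beta$ appearing in Proposition \ref{prp:SES_classifications} is the same underlying map as $\beta'$, differing only in that its codomain has been restricted to $\braces{ [c] \in \SPT^d\paren{G} \big| 2[c] = 0 }$. Restricting the codomain of a homomorphism does not change its kernel, so
\begin{equation}
\wSPT^d\paren{\ZZZ \times G} = \kernel \beta' = \kernel \beta.
\end{equation}

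Next I would invoke exactness of the sequence (\ref{SES}) at its middle term $\SPT^d\paren{\ZZZ \times G}$, which gives $\image \alpha = \kernel \beta$. Combined with the previous step, this identifies $\wSPT^d\paren{\ZZZ \times G} = \image \alpha$. Finally, exactness at the left-hand term---i.e.\,the fact that the sequence begins with $0 \fromto \SPT^{d-1}(G)/2\SPT^{d-1}(G)$---says precisely that $\alpha$ is injective, a point already noted in the discussion following Corollary just above. An injective homomorphism is an isomorphism onto its image, so $\alpha$ restricts to an isomorphism
\begin{equation}
\SPT^{d-1}(G) / 2\SPT^{d-1}(G) \isomorphic \image \alpha = \wSPT^d\paren{\ZZZ \times G},
\end{equation}
which is the desired statement (\ref{GSPT}).

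There is essentially no obstacle to overcome here, since the difficulty resides entirely in establishing the short exact sequence (\ref{SES}) in Proposition \ref{prp:SES_classifications}; once that is granted, the corollary is immediate. The only point requiring a word of care is the identification $\kernel \beta = \kernel \beta'$, which relies on the parenthetical remark in the statement of Proposition \ref{prp:SES_classifications} that $\beta$ is literally the glide-forgetting map $\beta'$ with its codomain cut down to the $2$-torsion subgroup $\braces{ [c] \in \SPT^d\paren{G} \big| 2[c] = 0 }$. Implicitly this also re-verifies the consistency claim that $\image \beta'$ lands in the $2$-torsion subgroup, guaranteeing the restriction of codomain is legitimate; this is exactly what the surjectivity of $\beta$ in the short exact sequence asserts. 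I would state the proof in three or four lines, flagging that it is a formal reading-off of the isomorphism $\alpha$ provided by exactness.
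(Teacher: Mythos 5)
Your proposal is correct and follows essentially the same route as the paper: the paper likewise deduces the corollary formally from the short exact sequence (\ref{SES}), noting that $\kernel \beta = \kernel \beta'$ by definition, that exactness gives $\image \alpha = \kernel \beta$ (identifying weak phases with the image of $\alpha$), and that exactness at the first term makes $\alpha$ injective, hence an isomorphism onto $\wSPT^d\paren{\ZZZ \times G}$. No gaps; your extra remark about the legitimacy of restricting the codomain of $\beta'$ is a fair reading of what Proposition \ref{prp:SES_classifications} already packages.
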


\noindent This isomorphism was conjectured in Ref.\,\cite{Lu_sgSPT} for on-site $G$, based on studies of a number of fermionic and bosonic examples in the $d=3$ case. Unfortunately, the anomalous surface topological order argument used therein does not generalize to all dimensions. The minimalist framework allows us to confirm their conjecture in the general case -- in all dimensions and for all symmetries $G$, which do not even have to act in an on-site fashion.

We are concerned with all $\ZZZ\times G$-protected SPT phases, not just those that are weak with respect to glide. We know that by forgetting glide each $d$-dimensional $\ZZZ\times G$-protected SPT phase can be viewed as a $d$-dimensional $G$-protected SPT phase, but can all $d$-dimensional $G$-protected SPT phases be obtained this way? In other words, are all $d$-dimensional $G$-protected SPT phases compatible with glide? Our result indicates that the answer is in general no. This is because the exactness of sequence (\ref{SES}) implies $\beta$ is surjective, and inspecting the third term (from the left, excluding the initial 0) of sequence (\ref{SES}) one sees that
\begin{cor}[Compatibility with glide]
A necessary and sufficient condition for a $d$-dimensional $G$-protected SPT phase to be compatible with glide is that it squares to the trivial phase.
\end{cor}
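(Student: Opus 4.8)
The plan is to read the corollary directly off the short exact sequence (\ref{SES}) of Proposition \ref{prp:SES_classifications}, which I may assume. First I would pin down the meaning of ``compatible with glide'': a $d$-dimensional $G$-protected SPT phase $[c] \in \SPT^d\paren{G}$ is compatible with glide precisely when it lies in the image of the glide-forgetting map $\beta': \SPT^d\paren{\ZZZ \times G} \fromto \SPT^d\paren{G}$ of Eq.\,(\ref{beta}), i.e.\ when some $\ZZZ \times G$-protected phase restricts to $[c]$ upon dropping the glide constraint. The entire corollary then amounts to computing $\image \beta'$.

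Next I would observe that the map $\beta$ appearing in (\ref{SES}) is nothing but $\beta'$ with its codomain cut down to $\{ [c] \in \SPT^d\paren{G} \mid 2[c] = 0 \}$, so that $\image \beta = \image \beta'$ as subsets of $\SPT^d\paren{G}$. This identification already carries the content of the \emph{necessity} direction: the mere fact that $\beta$ admits this restricted codomain asserts that every phase obtained by forgetting glide squares to the trivial phase, whence $[c]$ compatible with glide $\Rightarrow 2[c] = 0$. For the converse (\emph{sufficiency}), I would invoke exactness of (\ref{SES}) at the term $\{ [c] \mid 2[c] = 0 \}$, which says $\beta$ is surjective onto it. Hence every $[c]$ with $2[c] = 0$ is hit by $\beta$, so it lies in $\image \beta = \image \beta'$ and is therefore compatible with glide. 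Combining the two directions yields $\image \beta' = \{ [c] \in \SPT^d\paren{G} \mid 2[c] = 0 \}$, which is exactly the claim that $[c]$ is compatible with glide if and only if $2[c] = 0$.

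In truth there is no serious obstacle at this stage: all the real work is already absorbed into Proposition \ref{prp:SES_classifications}, and the corollary is a one-line reading of the surjectivity of $\beta$ together with its choice of codomain. The only point demanding care is conceptual rather than technical --- making sure ``compatible with glide'' is correctly identified with membership in $\image \beta'$, and that $\image \beta = \image \beta'$, so that the restricted codomain of $\beta$ in (\ref{SES}) is legitimately interpreted as the set of all glide-forgettable phases. The genuinely nontrivial input --- that forgetting glide forces a phase to square to zero, and conversely that every square-trivial phase admits a glide-symmetric representative --- is the statement of the short exact sequence itself, whose proof (deferred to App.\,C of work \cite{Xiong_Alexandradinata}) is where the orientation-preserving nature of twice the glide reflection does the real work, as previewed in Sec.\,\ref{subsec:physical_picture}.
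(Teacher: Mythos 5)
Your proposal is correct and takes essentially the same approach as the paper: the paper likewise derives this corollary by noting that exactness of sequence (\ref{SES}) makes $\beta$ surjective onto the third term $\{ [c] \in \SPT^d\paren{G} \mid 2[c] = 0 \}$, while the fact that $\beta$ is just $\beta'$ with restricted codomain gives necessity. Your only addition is spelling out the identification $\image \beta = \image \beta'$ and the meaning of ``compatible with glide,'' which the paper leaves implicit.
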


In Corollary \ref{cor:classification_SPT_weak_wrt_glide} we saw that the classification of $d$-dimensional $\ZZZ \times G$-protected SPT phases \emph{that are weak with respect to glide} can be obtained from the classification of $(d-1)$-dimensional $G$-protected SPT phases. We now demonstrate that the classification of \emph{all} $d$-dimensional $\ZZZ \times G$-protected SPT phases is severely constrained, if not completely determined, by the classification of $(d-1)$- and $d$-dimensional $G$-protected SPT phases. Indeed, we recognize that the task of determining the second term of a short exact sequence of abelian groups from the first and third terms is nothing but an abelian group extension problem. It is well-known that abelian group extensions $0 \fromto A \fromto B \fromto C \fromto 0$ of $C$ by $A$ are classified, with respect to a suitable notion of equivalence, by the subgroup $H^2_{\rm sym}\paren{C;A}$ of $H^2\paren{C; A}$ of symmetric group cohomology classes.
To illustrate how $A$ and $C$ constrain $B$, let us take $A= C = \ZZZ_2$. In this case, $H^2_{\rm sym}\paren{\ZZZ_2;\ZZZ_2} \isomorphic \ZZZ_2$. The trivial and nontrivial elements of $H^2_{\rm sym}\paren{\ZZZ_2;\ZZZ_2}$ correspond to $B = \ZZZ_2 \oplus \ZZZ_2$ and $\ZZZ_4$, respectively, which are the only solutions to the abelian group extension problem.

Inspecting (\ref{SES}), we note that its first and third terms are such that their nontrivial elements all have order 2. Consequently, all nontrivial elements of the second term must have order 2 or 4. More precisely, we have
\begin{cor}[Quad-chotomy of phases]
\label{cor:quad-chotomy}
Each $\ZZZ\times G$-protected SPT phase is exactly one of the following:
\begin{enumerate}[(i)]
\item the unique trivial phase;

\item a nontrivial phase of order 2 that is weak with respect to glide;

\item a nontrivial phase of order 2 that is not weak with respect to glide;

\item a nontrivial phase of order 4 that is not weak with respect to glide per se but whose square is of type (ii).
\end{enumerate}
\end{cor}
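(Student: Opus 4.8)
The plan is to read the quad-chotomy off directly from the short exact sequence (\ref{SES}) of Proposition \ref{prp:SES_classifications} by a short diagram chase, exploiting that both outer terms are $2$-torsion. Write $A \coloneq \SPT^{d-1}(G)/2\SPT^{d-1}(G)$ for the first term, $B \coloneq \SPT^d\paren{\ZZZ \times G}$ for the middle term, and $C \coloneq \braces{[c] \in \SPT^d\paren{G} : 2[c]=0}$ for the third term. The first observation I would make is that every element of $A$ has order dividing $2$: for any $[a] \in A$ one has $2[a] = 0$ since $A$ is by construction a quotient of the form $A'/2A'$. Likewise every element of $C$ has order dividing $2$ by definition of the $2$-torsion subgroup. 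By exactness $\alpha$ is injective, so $\image\alpha \isomorphic A$ is a $2$-torsion subgroup of $B$, and, again by exactness, $\image\alpha = \kernel\beta$ is precisely the subgroup of phases weak with respect to glide.

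Next I would bound the order of an arbitrary $b \in B$. Since $C$ is $2$-torsion, $\beta(2b) = 2\beta(b) = 0$, so $2b \in \kernel\beta = \image\alpha$; and since $\image\alpha$ is $2$-torsion, $4b = 2(2b) = 0$. Hence every phase has order $1$, $2$, or $4$, establishing the coarse claim that $\ZZZ \times G$-protected SPT phases can only have order $1$, $2$, or $4$.

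The classification into the four types is then a case split on $b$. If $b = 0$ we are in case (i). Otherwise $b \neq 0$, and I would first distinguish whether $b$ is weak: if $b \in \kernel\beta = \image\alpha$, then $b$ lies in a $2$-torsion subgroup, so $2b = 0$ and $b$ has order exactly $2$, giving case (ii). If $b \notin \kernel\beta$, then $\beta(b) \neq 0$, so $b$ is not weak, and here I would further split on $2b$. If $2b = 0$, then $b$ has order $2$ and is not weak, giving case (iii). If $2b \neq 0$, then $2b$ is a nontrivial element of $\image\alpha = \kernel\beta$, hence a nontrivial weak phase, which is of order $2$ and therefore of type (ii); since $4b = 0$ but $2b \neq 0$, the order of $b$ is exactly $4$, giving case (iv), in which the square $2b$ is of type (ii). These four alternatives are mutually exclusive (they are distinguished by the pair consisting of the order of $b$ and whether $\beta(b) = 0$) and exhaustive by construction.

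The main point to get right — rather than a genuine obstacle — is the bookkeeping ensuring exclusivity and exhaustiveness, and in particular verifying that in case (iv) the square $2b$ is genuinely weak \emph{and} nontrivial rather than trivial. This hinges entirely on the two factors of $2$ in (\ref{SES}): the quotient by $2\SPT^{d-1}(G)$ in the first term and the restriction to $2$-torsion in the third term. Both features trace back, as noted in Sec.\,\ref{subsec:physical_picture}, to the fact that twice a glide reflection is an orientation-preserving (pure) translation, so no new input beyond Proposition \ref{prp:SES_classifications} is needed.
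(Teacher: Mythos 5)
Your proof is correct: the diagram chase on the short exact sequence (\ref{SES}), exploiting that both outer terms are $2$-torsion, that $\alpha$ is injective, and that $\kernel\beta = \image\alpha$ is precisely the subgroup of weak phases, is exactly the argument the paper uses (the text preceding the corollary already notes that the $2$-torsion of the first and third terms forces all elements of the middle term to have order $1$, $2$, or $4$, and the deferred proof refines this into the four cases just as you do). Your case analysis, including the verification that in case (iv) the square $2b$ is a nontrivial weak phase of order $2$, matches the paper's reasoning.
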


\begin{proof}
See App.\,D of work \cite{Xiong_Alexandradinata}.
\end{proof}

In particular, this means that a nontrivial $\ZZZ\times G$-protected SPT phase that is weak with respect to glide can sometimes have square roots, and that such square roots, if exist, are never weak with respect to glide. On the other hand, a nontrivial $\ZZZ\times G$-protected SPT phase that is \emph{not} weak with respect to glide can \emph{never} have square roots. Without Proposition \ref{prp:SES_classifications}, these results would not have been obvious.

From the perspective of classification, it would be nice to have a statement about the explicit form of $\SPT^d(\ZZZ \times G)$. In App.\,D of work \cite{Xiong_Alexandradinata}, we prove that Corollary \ref{cor:quad-chotomy}, together with the fact that SPT phases form an abelian group, implies that $\SPT^d(\ZZZ \times G)$ can be written as a direct sum of $\ZZZ_4$'s and $\ZZZ_2$'s:

\begin{cor}[Direct-sum decomposition]
\label{cor:direct_sum_decomposition}
There is a direct sum decomposition,
\begin{equation}
\SPT^d(\ZZZ \times G) \isomorphic \paren{\bigoplus_i \ZZZ_4} \oplus \paren{\bigoplus_j \ZZZ_2 }.\label{directsumdecomp}
\end{equation}
$\ZZZ \times G$-protected SPT phases that correspond to 1 or 3 of any $\ZZZ_4=\braces{0,1,2,3}$ summand are never weak with respect to glide, whereas those that correspond to $2\in \ZZZ_4$ are always weak with respect to glide. The nontrivial element of a $\ZZZ_2$ summand, on the other hand, may or may not be weak with respect to glide.
\end{cor}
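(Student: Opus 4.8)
The plan is to read off the abstract group structure of $B \coloneq \SPT^d(\ZZZ\times G)$ from two facts already established, namely the quad-chotomy of Corollary \ref{cor:quad-chotomy} and the classical structure theory of abelian groups of bounded exponent, and then to identify the weak elements summand by summand.

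First I would record that $B$ has bounded exponent. By Corollary \ref{cor:quad-chotomy} every phase has order $1$, $2$, or $4$, so $4[x]=0$ for all $[x]\in B$. (This is in fact already visible in the sequence (\ref{SES}): both the image of $\beta$ and $\kernel\beta \isomorphic \SPT^{d-1}(G)/2\SPT^{d-1}(G)$ have exponent $2$, so $2[x]\in\kernel\beta$ and hence $4[x]=0$ for every $[x]\in B$.) Second, I would invoke the Pr\"ufer--Baer theorem: an abelian group of bounded exponent is a direct sum of cyclic groups. As each cyclic summand has order dividing $4$, it is isomorphic to $\ZZZ_2$ or $\ZZZ_4$ after discarding trivial summands, which is precisely the decomposition (\ref{directsumdecomp}). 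Because $B$ is not assumed finitely generated, I would not cite the classification of finitely generated abelian groups; instead, for self-containment, I would give the exponent-$4$ specialization of Pr\"ufer's argument: choose an $\mathbb F_2$-basis $\{v_i\}$ of $2B$, lift each $v_i$ to an $e_i\in B$ with $2e_i=v_i$ (so $e_i$ has order $4$ and generates a $\ZZZ_4$ summand $\langle e_i\rangle$), verify that $C\coloneq\bigoplus_i\langle e_i\rangle$ is internal with $2C=2B$ and $C\cap B[2]=2B$ where $B[2]=\{b\in B:2b=0\}$, then pick an $\mathbb F_2$-complement $U$ of $2B$ inside $B[2]$ and check $B=C\oplus U$; the elements of $U$ supply the $\ZZZ_2$ summands.

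Third, I would determine the weak elements relative to any such decomposition, using that the weak phases are exactly $\wSPT^d(\ZZZ\times G)=\kernel\beta$, an exponent-$2$ group by Corollary \ref{cor:classification_SPT_weak_wrt_glide}. For a generator $e$ of a $\ZZZ_4$ summand the classes $1$ and $3$ are $e$ and $-e$, both of order $4$; by the quad-chotomy these are of type (iv) and hence not weak (directly: $e\in\kernel\beta$ would force $e$ to have order $\le 2$). The class $2$ is $2e$, the square of the order-$4$ element $e$, which by the quad-chotomy is of type (ii) and hence weak (directly: $2e\in 2B\subseteq\kernel\beta$). Finally the nontrivial element of a $\ZZZ_2$ summand has order $2$, so it is of type (ii) or (iii) and may or may not lie in $\kernel\beta$; the corollary claims no more than this. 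Assembling these observations gives every assertion of the statement.

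I expect the only genuinely non-elementary ingredient to be the structure theorem for bounded-exponent abelian groups, which is where the direct-sum statement really comes from; the subtlety is precisely that $B$ need not be finitely generated, so the sketch above (the exponent-$4$ case of Pr\"ufer's theorem) is doing the real work. Everything following the decomposition is bookkeeping: one simply compares the order and the square of each distinguished element against the four types catalogued in Corollary \ref{cor:quad-chotomy}.
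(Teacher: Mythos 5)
Your proposal is correct and follows essentially the same route as the paper: the paper likewise derives the decomposition from Corollary \ref{cor:quad-chotomy} (which bounds every element's order by $4$) together with the abelian group structure of $\SPT^d(\ZZZ\times G)$, i.e.\,the bounded-exponent structure theorem, and then reads off the weak/non-weak assignments from the quad-chotomy types exactly as you do. Your explicit exponent-$4$ Pr\"ufer argument and the direct verifications ($2B\subseteq\kernel\beta$, order-$4$ elements cannot lie in the exponent-$2$ group $\kernel\beta$) are sound and match the intended proof.
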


\subsection{Physical intuition behind general relation\label{subsec:physical_picture}}

Proposition \ref{prp:SES_classifications} was derived mathematically from the generalized cohomology hypothesis. Here we offer a complementary physical perspective. We will break down Proposition \ref{prp:SES_classifications} into 6 individual statements and explain them \emph{physically} where possible:
\begin{enumerate}[(i)]

\item $\image \alpha' \subset \kernel \beta'$;

\item $\image \alpha' \supset \kernel \beta'$;

\item $\alpha$ is well-defined;

\item $\alpha$ is injective;

\item $\image \beta' \subset \{ [c] \in \SPT^d\paren{G} \big| 2[c] = 0 \}$;

\item $\image \beta' \supset \{ [c] \in \SPT^d\paren{G} \big| 2[c] = 0 \}$.
\end{enumerate}
We will discuss (i) and (ii) in Sec.\,\ref{subsubsec:alternating_layer_construction_triviality_glide_forgetting}, (iii) and (iv) in Sec.\,\ref{subsubsec:square_root_triviality}, and (v) and (vi) in Sec.\,\ref{subsubsec:compatibility_glide_2_torsion}. Of the six statements, (i)(ii)(iii)(v)(vi) admit obvious physical explanation, whereas (iv) can be justified by physical examples. Although the proof of Proposition \ref{prp:SES_classifications} was rigorous and the generalized cohomology hypothesis can largely be justified on independent grounds, it is but reassuring that Proposition \ref{prp:SES_classifications} is consistent with one's physical intuition.

\subsubsection{Alternating-layer construction and triviality under glide forgetting\label{subsubsec:alternating_layer_construction_triviality_glide_forgetting}}

\begin{figure}[t]
\centering
\includegraphics[width=3.8in]{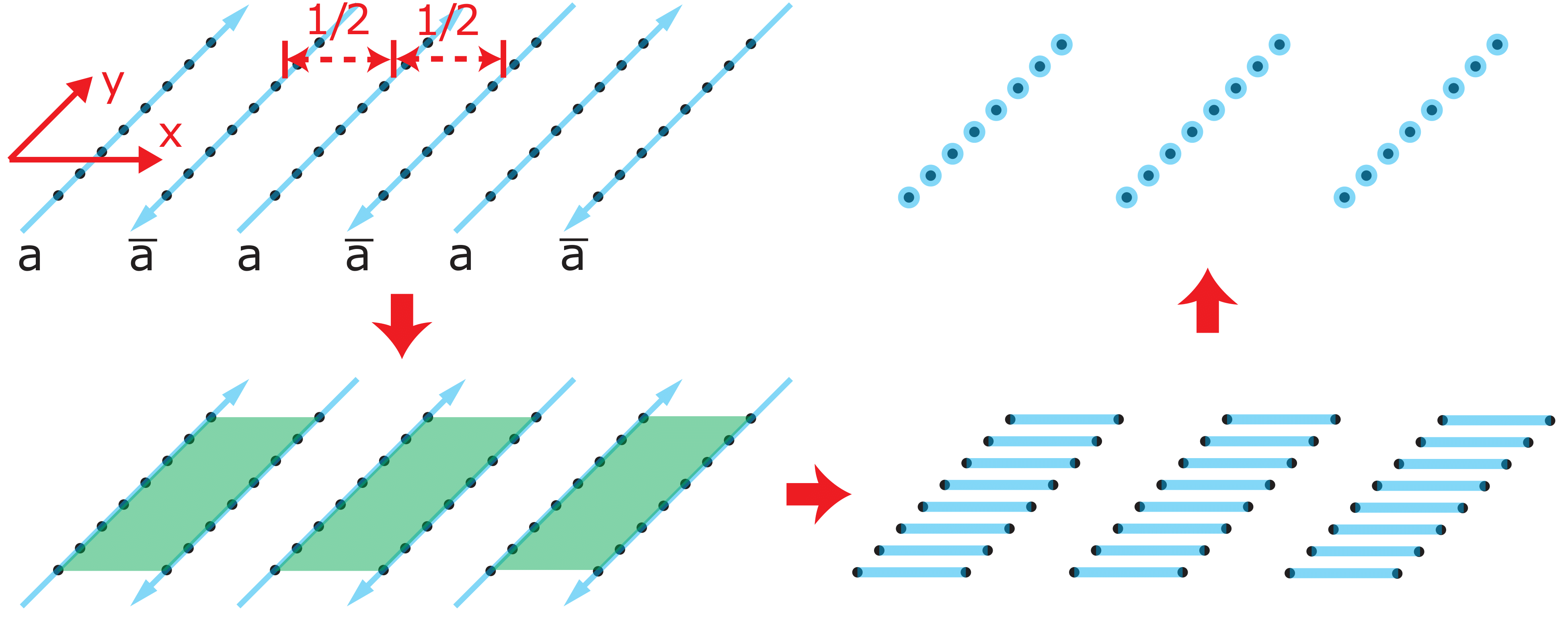}
\caption[Physical justification for the claim $\image \alpha' \subset \kernel \beta'$.]{Physical justification for the claim $\image \alpha' \subset \kernel \beta'$, depicted for $d=2$, $(x,y)\mapsto (x+1/2, -y)$. Applying the alternating-layer construction to $a$ gives a 2D system (upper-left panel). By coupling each $x\in \ZZZ$ layer to the $x+1/2$ layer (lower-left panel), one can deform the 2D system so as to have a ground state that is the tensor product of individual states supported on pairs of sites (lower-right panel) (cf. Footnote\,\ref{footnote:diagonal_pairs}). A blocking procedure then turns the latter into a tensor product of individual states supported on single sites (upper-right panel). All deformations can be chosen to preserve $G$ and the gap.\label{fig:image_alpha_subset_kernel_beta}}
\end{figure}

Given our interpretation of $\alpha'$ as the alternating-layer construction, $\image \alpha' {\subset} \kernel \beta'$ amounts to saying that \emph{the alternating-layer construction always produces $\ZZZ\times G$-protected SPT phases that are weak with respect to glide}. In other words, given a system $a$ representing a $(d-1)$-dimensional $G$-protected SPT phase, the $d$-dimensional system obtained from $a$ through the alternating-layer construction can always be trivialized when the glide symmetry constraint is relaxed. Indeed, given such a $d$-dimensional system, one can simply pair up neighboring layers and deform the pairs into trivial systems, as illustrated in Figure\,\ref{fig:image_alpha_subset_kernel_beta}.

The converse, $\kernel \beta' \subset \image \alpha'$, says that \emph{all $\ZZZ\times G$-protected SPT phases that are weak with respect to glide can be obtained through the alternating-layer construction}. That is, if a system $b$ representing a $\paren{\ZZZ \times G}$-protected SPT phase can be trivialized when the glide symmetry constraint is relaxed, then it can be deformed to a system obtained from the alternating-layer construction while preserving the glide symmetry. Indeed, an argument involving applying a symmetric, finite-depth quantum circuit to subregions of a glide-symmetric system has been devised in Ref.\,\cite{Lu_sgSPT} to justify this claim, assuming the lattice period is large compared to the correlation length.

We can view the above as physically motivating our identification of $\alpha'$ as the alternating-layer-construction map in the first place. We will soon be delighted to find out that this interpretation is consistent with the other statements as well.

\subsubsection{Square root in $(d-1)$ dimensions and triviality in $d$ dimensions \label{subsubsec:square_root_triviality}}

The well-definedness of $\alpha$ says, \emph{given a $(d-1)$-dimensional $G$-protected SPT phase $[a]$, that the $d$-dimensional $\ZZZ\times G$-protected SPT phase obtained from it through the alternating-layer construction is trivial whenever $[a]$ has a square root}. As mentioned in Sec.\,\ref{subsubsec:SES_classifications}, we can justify this claim using the physical argument in Figure\,\ref{fig:property_alpha'}. Ref.\,\cite{Lu_sgSPT} has also given an equivalent argument.

On the other hand, the injectivity of $\alpha$ says that \emph{the $d$-dimensional $\ZZZ\times G$-protected SPT phase obtained from $[a]$ through the alternating-layer construction is nontrivial whenever $[a]$ has no square root}.
Physically, this has been shown to be the case for a number of bosonic and fermionic systems for $d=3$ using the $K$-matrix construction \cite{Lu_sgSPT}. In general, one can attempt a construction of bulk invariants for the $d$-dimensional system in question, but a universal strategy that works for all $d$ seems lacking.

\subsubsection{Compatibility with glide and $\ZZZ_2$ torsion \label{subsubsec:compatibility_glide_2_torsion}}

Finally, let us argue that
\begin{equation}
\image \beta' \subset \{ [c] \in \SPT^d\paren{G} \big| 2[c] = 0 \},
\end{equation}
which amounts to saying that \emph{if a $G$-protected SPT phase has a glide-symmetric representative} (i.e.\,is compatible with glide), \emph{then it must square to the trivial phase} (i.e.\,belong to the $\ZZZ_2$ torsion subgroup). In other words, given a system $b$ representing a $d$-dimensional $\ZZZ \times G$-protected SPT phase, the stacked system $b+b$ can always be trivialized when the glide symmetry constraint is relaxed.

Let us begin by considering the stacked system $b + \bar b$, where $\bar b$ is the mirror image of $b$ under $y \mapsto -y$ [Figure\,\ref{fig:image_beta}(a)]. When glide is relaxed,
$\bar b$ serves as the inverse of $b$. This means one can deform $b + \bar b$ to a trivial product state [Figure\,\ref{fig:image_beta}(b)]. Now we translate $\bar b$ by $1/2$ in the $x$-direction. Then the stacked system $b + (\mbox{translated }\bar b)$ [Figure\,\ref{fig:image_beta}(c)] can also be deformed to a tensor product state, or more precisely, a tensor product of individual states that are supported on diagonal pairs of sites [Figure\,\ref{fig:image_beta}(d)] (cf.\,Footnote\,\ref{footnote:diagonal_pairs}). A redefinition of sites then turns the latter into a tensor product of individual states supported on single sites, that is, into a trivial product state [Figure\,\ref{fig:image_beta}(f)] (cf.\,Footnote\,\ref{footnote:enlarged_Hilbert_space}). To see that $b+b$ can be deformed to a trivial product state, we simply need to note that, being glide-symmetric, $b$ is the same as the translated $\bar b$.

\begin{figure}[t]
\centering
\includegraphics[width=3.8in]{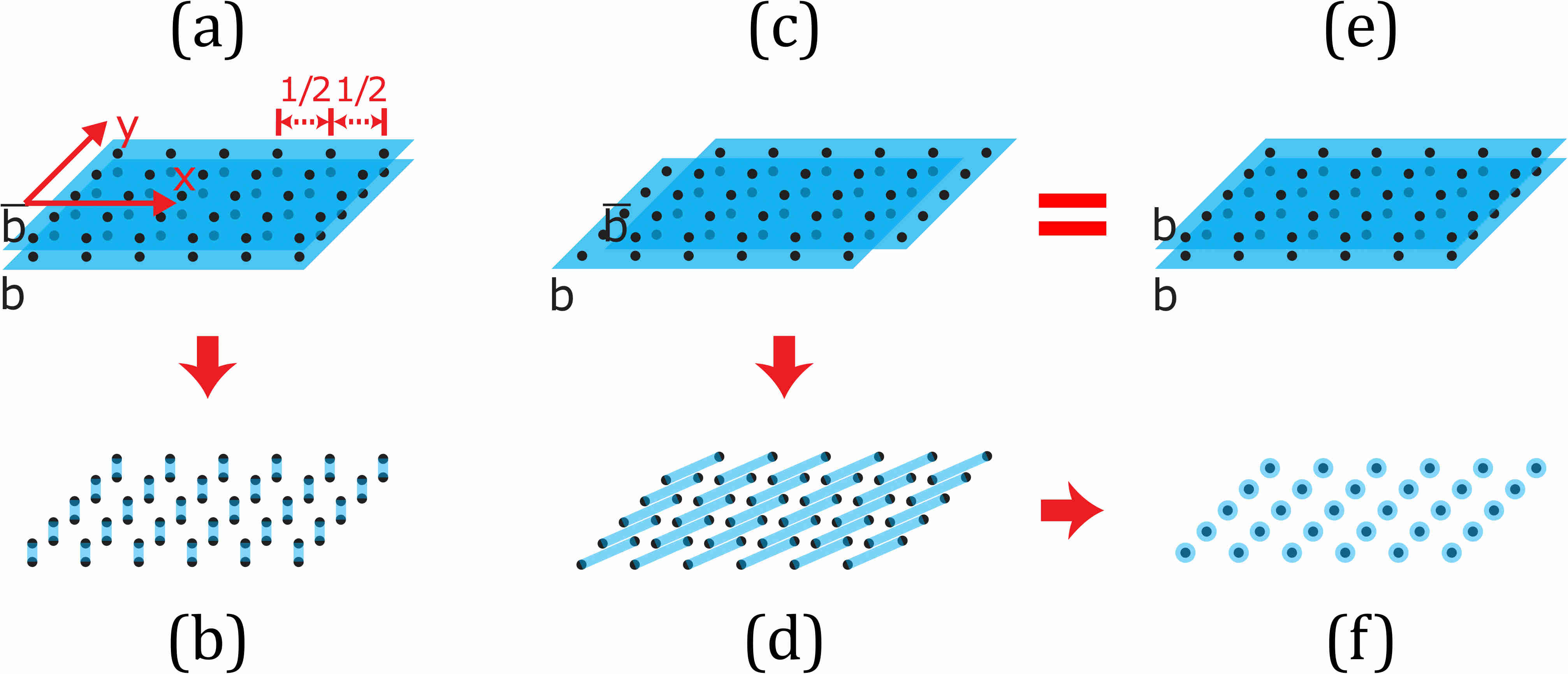}
\caption[Physical justification for the claim $\image \beta' \subset \{ \brackets{c} \in \SPT^d\paren{G} \big| 2\brackets{c} = 0 \}$.]{Physical justification for the claim $\image \beta' \subset \{ [c] \in \SPT^d\paren{G} \big| 2[c] = 0 \}$, depicted for $d=2$, $(x,y)\mapsto (x+1/2, -y)$.\label{fig:image_beta}}
\end{figure}

\begin{figure}[t]
\centering
\includegraphics[width=2.6in]{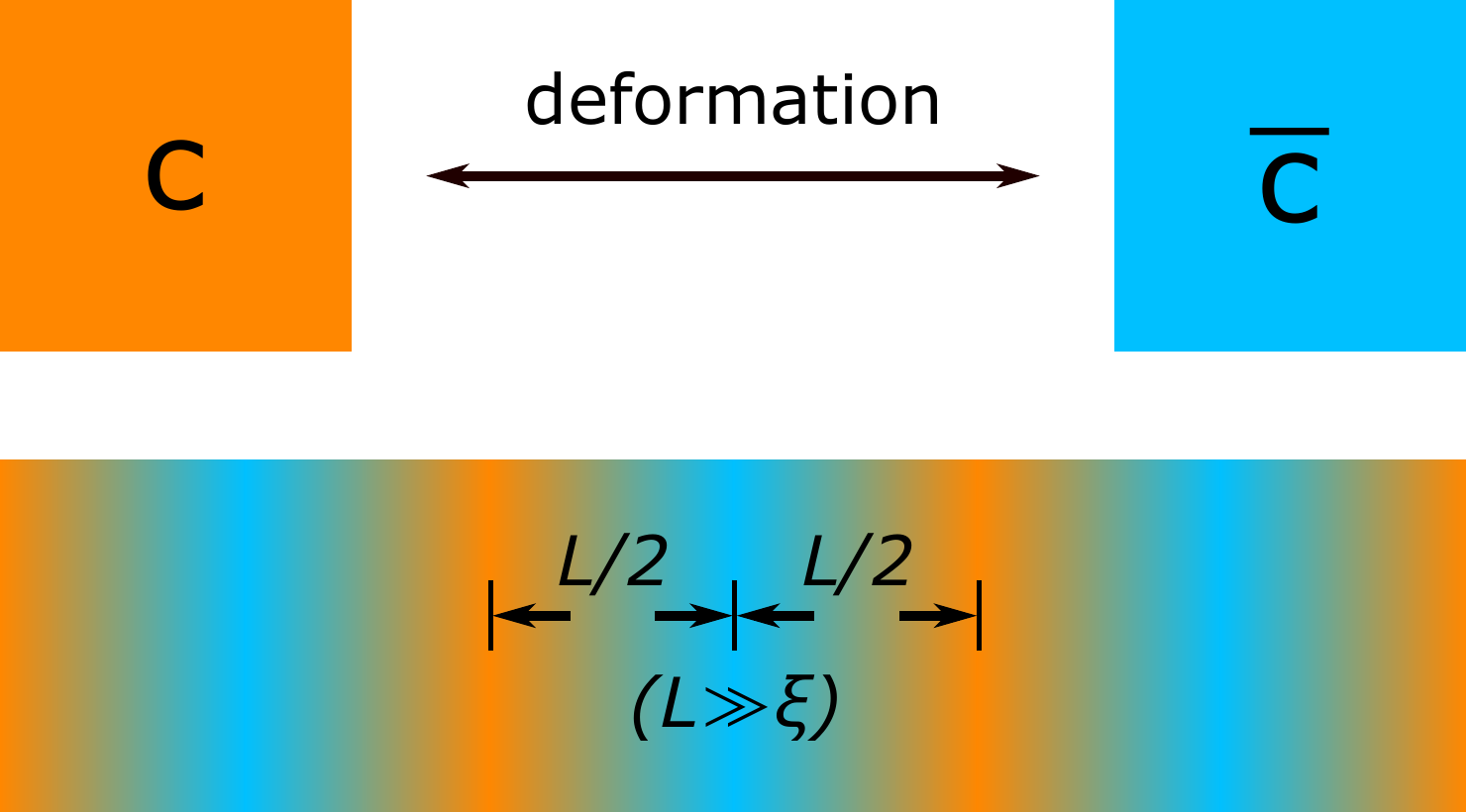}
\caption[Physical justification for the claim $\{ \brackets{c} \in \SPT^d\paren{G} \big| 2\brackets{c} = 0 \} \subset \image \beta'$.]{Physical justification for the claim $\{ [c] \in \SPT^d\paren{G} \big| 2[c] = 0 \} \subset \image \beta'$, depicted for $d=2$.\label{fig:lastresult}}
\end{figure}

The converse,
\begin{equation}
\{ [c] \in \SPT^d\paren{G} \big| 2[c] = 0 \} \subset \image \beta', \label{lastresult}
\end{equation}
says that \emph{if a $d$-dimensional $G$-protected SPT phase squares to the trivial phase, then it must have a glide-symmetric representative.} As a quick argument for this, we appeal to the empirical beliefs that (a) an SPT phase that squares to the trivial phase has a reflection-symmetric representative, and that (b) any SPT has a translation-invariant representative \cite{Xiong}, which are consistent with known examples. A case in point for (a) is the reflection-symmetric topological insulator Bi$_2$Se$_3$ ($\nu_0=1$ in class AII); (b) is exemplified by all experimentally realized band topological insulators. Now, suppose (a) and (b) can be compatibly realized in the same system, with the reflection axis ($x_2 \mapsto -x_2$) orthogonal to at least one translation direction ($x_1 \mapsto x_1 + 1/2$). Composing the two transformations, we see that the system is also invariant under the glide symmetry $(x_1,x_2,\ldots)\mapsto (x_1+1/2,-x_2,\ldots)$.

An alternative argument which does not depend on the belief (a) above is this. Suppose a system $c$ represents a $G$-protected SPT phase $[c]$ that squares to the trivial phase. The condition $2[c]=0$ is equivalent to the condition $[c] = -[c]$, or $[c] = \brackets{\bar c}$, where $\bar c$ is the orientation-reversed (say $x_2 \mapsto -x_2$) version of $c$. The last expression means that $c$ can be deformed to $\bar c$ without closing the gap or breaking the symmetry (see the upper panel of Figure\,\ref{fig:lastresult}). Let $\hat H(\lambda)$ be a family of translation-invariant Hamiltonians parameterized by $\lambda\in [0,1]$ that represents this deformation. Since $\lambda$ is a compact parameter ($[0,1]$ being closed and bounded), we expect the correlation length of $\hat H(\lambda)$ to be uniformly bounded by some finite $\xi$ \cite{hastings2006spectral}. Being translation-invariant, each $\hat H(\lambda)$ is a sum of the form
\begin{equation}
\hat H(\lambda) = \sum_{\boldsymbol x} \sum_i g_i(\lambda) \hat O^i_{\boldsymbol x},
\end{equation}
where $\boldsymbol x = \paren{x_1, x_2, \ldots}$ runs over all lattice points, the operators $\hat O^i$ have compact supports (of radii $r_i$), $\hat O^i_{\boldsymbol x}$ denotes the operator $\hat O^i$ centered at $\boldsymbol x$, the coupling constants $g_i(\lambda)$ depend on $\lambda$, and $g_i(\lambda)$ decay exponentially with $r_i$. Now, we construct a new Hamiltonian $\hat H'$ that modulates spatially at a scale $L$ much larger than $\xi$. This can be achieved by letting $\lambda$ vary with one of the coordinates, say $x_1$; for instance, we can set
\begin{equation}
\lambda = \frac{x_1}{L/2}
\end{equation}
for $x_1 \in \brackets{0,L/2}$. In the neighborhood of $x_1 = 0$ and $L/2$, the Hamiltonian $\hat H'$ will coincide with $\hat H(0)$ and $\hat H(1)$, respectively. This defines $\hat H'$ only in the strip $x_1 \in \brackets{0,L/2}$, but since $\hat H(1)$ is related to $\hat H(0)$ by $x_2 \mapsto -x_2$, we can place the reversed strip on $x_1 \in \brackets{L/2,L}$ and glue the two strips together. Iterating this process {\it ad infinitum} to create a superlattice, we will arrive at a Hamiltonian $\hat H'$ that is explicitly invariant under the glide transformation $\paren{x_1, x_2, \ldots} \mapsto \paren{x_1+L/2, -x_2, \ldots}$; see the lower panel of Figure\,\ref{fig:lastresult}. Due to the separation of scale $L \gg \xi$, we expect $\hat H'$ to be gapped. To see that $\hat H'$ represents the SPT phase $[c]$, we note that in the neighborhood of any $x_1 \in [0,L/2]$ (resp.\,$[L/2,L]$), there is some $\lambda$ for which $\hat H'$ is locally indistinguishable from $\hat H(\lambda)$ (resp.\,its orientation-reversed version), which represents $[c]$.

An explicit formula for $\hat H'$ can be given. Let $\hat M$ be the operator that implements the orientation-reversal $x_2 \mapsto -x_2$. Then we can write
\begin{equation}
\hat H' = \sum_{x_1} \hat H'_{x_1},
\end{equation}
where
\begin{equation}
\hat H'_{x_1} = \sum_{x_2, x_3, \ldots} \sum_i g_i\paren{\frac{x_1-nL}{L/2}} \hat O^i_{\boldsymbol x}
\end{equation}
for $x_1 \in \brackets{nL, (n+1/2)L}$, and
\begin{equation}
\hat H'_{x_1} = \sum_{x_2, x_3, \ldots} \sum_i g_i\paren{\frac{x_1-(n+1/2)L}{L/2}} \hat M \hat O^i_{\boldsymbol x} \hat M^{-1}
\end{equation}
for $x_1 \in \brackets{(n+1/2)L,(n+1)L}$. Here $n$ takes values in the integers.

\subsection{Complete classifications in Wigner-Dyson classes A and AII\label{subsec:applications}}

In this section, we will demonstrate that the predictions of Proposition \ref{prp:SES_classifications} are consistent with existing literature on the classification of free-fermion phases and their robustness to interactions. More importantly, we will use Proposition \ref{prp:SES_classifications} to deduce the putative complete classifications of fermionic SPT phases with glide from proposed complete classifications of fermionic SPT phases without glide. The latter is an abelian group extension problem, where knowing the first and third terms $A$ and $C$ of a short exact sequence of abelian groups,
\begin{equation}
0 \fromto A \fromto \mbox{?} \fromto C \fromto 0,
\end{equation}
one has to determine the second. For definiteness, we will first focus on $d=3$ and $G=U(1)$ (charge conservation only, Wigner-Dyson class A). Then, we will re-examine the symmetry class of the hourglass-fermion phase, where $d=3$ and $G$ is generated by $U(1)$ and $\T$ where $\T$ squares to fermion parity (charge conservation and time reversal, Wigner-Dyson class AII).

\subsubsection{Wigner-Dyson class A\label{subsubsec:classA}}

Let us set $d=3$ and $G=U(1)$, which corresponds to Wigner-Dyson class A. 2D free-fermion phases in this symmetry class are classified by the first Chern number ($C_1{\in} \ZZZ$), which is defined over the Brillouin torus \cite{TKNN} but can be generalized to the interacting or disordered case by considering the torus of twisted boundary conditions instead \cite{Niu_twistedBC}. Being robust to interactions and disorder and admitting no square root, a phase with odd Chern number represents a nontrivial element in the first term of sequence (\ref{SES}). This phase may be layered in an alternating fashion to form a 3D phase respecting an additional glide symmetry $\ZZZ$.

The non-interacting, clean limit\footnote{By a clean limit, we mean a system that has discrete translational symmetry in three independent directions.} of the resultant 3D phase was independently studied in Refs.\ \cite{unpinned} and \cite{Shiozaki2015}. The surface states have a characteristic connectivity over the surface Brillouin torus illustrated in \fig{fig:z2surfacestates}; these surface states have been described as carrying a M\"obius twist \cite{Shiozaki2015}, so we shall refer to this phase as the M\"obius-twist phase. It was concluded in both references that the non-interacting classification (class A \emph{with glide}, 3D) is $\ZZZ_2$, and a topological invariant was proposed ($\kappa\in \{0,1\}$) to distinguish the two phases.

\begin{figure}[t]
\centering
\includegraphics[width=8.2 cm]{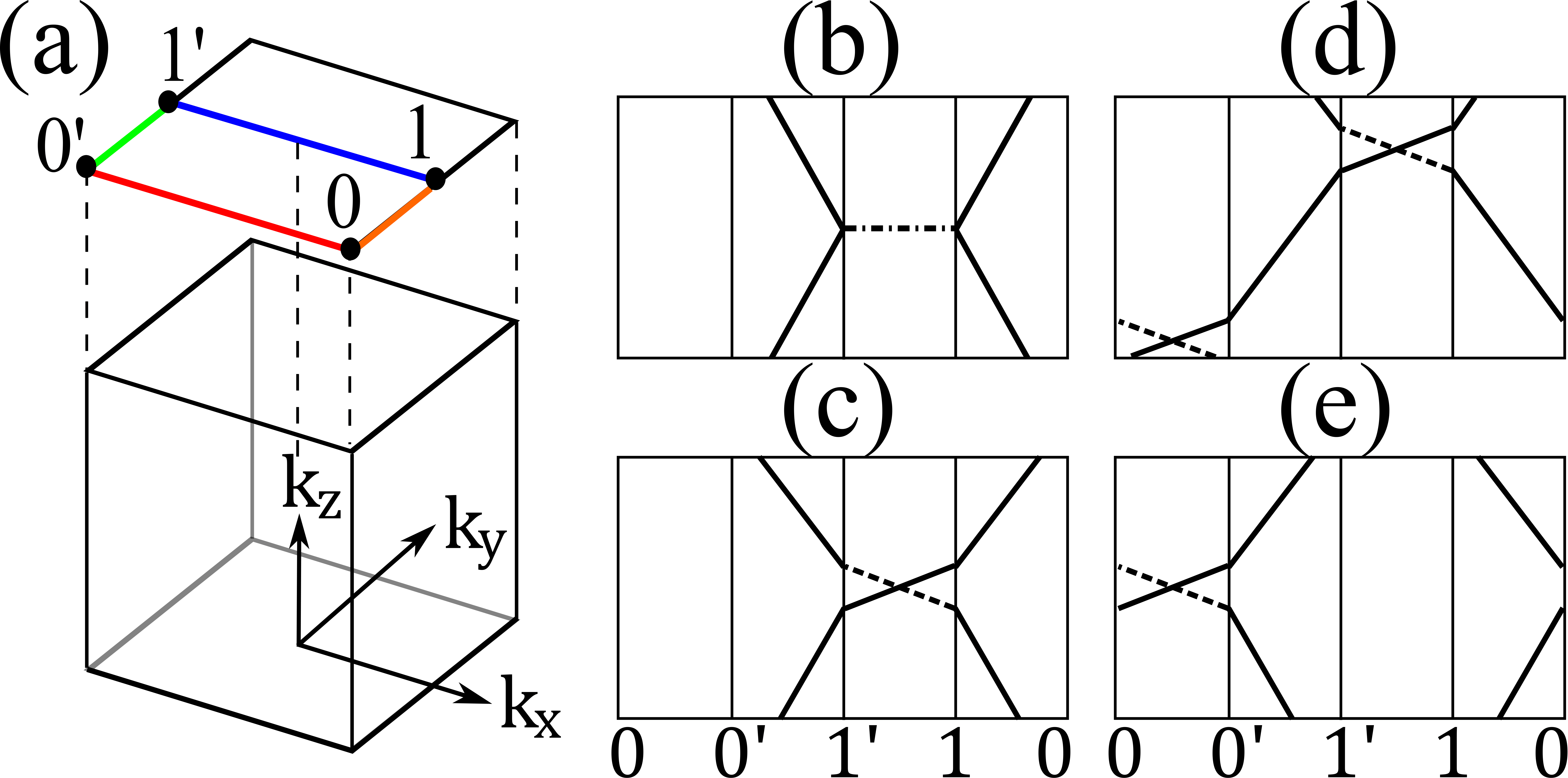}
\caption[Possible surface states of a glide-symmetric crystal in Wigner-Dyson class A.]{ (a) Bottom: Brillouin 3-torus for a glide-symmetric crystal; top:  Brillouin 2-torus corresponding to the glide-symmetric surface. (b-e) Surface states with a Mobius twist; a surface band in the even (odd) representation of glide  is indicated by a solid (dashed) line. (b-e) are representatives of the same phase, i.e., they are connected by symmetric deformations of the Hamiltonian that preserve the bulk gap. In (b), the solid-dashed line indicates a doubly-degenerate band  originating from the alternating-layer construction; this degeneracy may be split by generic perturbations, as illustrated in (c).\label{fig:z2surfacestates}}
\end{figure}

That the M\"obius-twist phase ($\kappa=1$) can be obtained from the alternating-layer construction as above is especially evident in the non-interacting limit, where we can utilize the connectivity of surface states as an argument, in conjunction with the bulk-boundary correspondence \cite{Cohomological}. The following may be viewed as the class-A analog of the argument presented in  \s{subsubsec:robustness_hourglass_fermion_phase} for class AII.
Let us begin with the $C_1{=}1$ phase with a single edge chiral mode; the mirror image of this Chern phase has $C_1{=}{-}1$ and a single edge chiral mode with opposite velocity -- an ``anti-chiral" mode for short. When layered together in the $x$-direction with vanishing interlayer coupling, we obtain a superposition of chiral and anti-chiral modes (in the shape of an X) that do not disperse with $k_x$, as illustrated in \fig{fig:z2surfacestates}(b). Note in particular the two-fold energy degeneracy along the glide-invariant line $1'1$, which originates from the intersection of chiral and anti-chiral modes. If we perturb the system with a glide-symmetric interlayer coupling, the degenerate two-band subspace is bound to split into a connected graph (in the shape of a M\"obius twist) over the glide-invariant line as in \fig{fig:z2surfacestates}(c), owing to the monodromy \cite{connectivityMichelZak} of the representation of glide. The topology of the graph over $1'100'$ then confirms that the system is characterized by $\kappa{=}1$.

One implication of our short exact sequence that goes beyond the aforementioned non-interacting works is that the 3D M\"obius-twist phase ($\kappa=1$) is robust to interactions. More precisely, it cannot be connected to the trivial phase ($\kappa=0$) by turning on interactions that preserve the many-body gap and $\ZZZ \times{ U(1)}$ symmetry. Indeed, we can utilize the same corollary as quoted in \s{subsubsec:robustness_hourglass_fermion_phase} and recognize that an insulator with odd Chern number admits no square root. Equivalently, we can view this as a direct consequence of the injectivity of the map $\alpha$ in Proposition \ref{prp:SES_classifications}.

An independent argument for the robustness of the M\"obius-twist phase under $\ZZZ \times U(1)$ may be obtained from the quantized magnetoelectric \cite{essin2009} bulk response. In this case, the quantization results from the glide symmetry, which maps the axion angle from $\theta \mapsto -\theta$.\footnote{One way to rationalize this is to apply the pseudo-scalar transformation behavior of $\bE\cdot \bB$.} Since $\theta$ is defined modulo $2\pi$ \cite{Wilczek_axion}, it is fixed to $0$ or $\pi$, for $\kappa=0$ or $1$, respectively. Another independent argument for the robustness is that the glide-symmetric surface of the M\"obius-twist phase allows for an anomalous topological order of the T-Pfaffian type, which cannot exist in pure 2D glide-symmetric systems \cite{Lu_sgSPT}.

When glide is forgotten, the non-interacting 3D classification in class A is trivial (i.e. there is only one phase). Hence the  M\"obius-twist phase is weak with respect to glide. We may also argue for its weakness by noting that, without glide, there is no obstruction to coupling and trivializing adjacent layers with opposite $C_1$. As a $\ZZZ \times U(1)$-protected SPT phase, the M\"obius-twist phase has order 2 because two copies of itself have $\kappa = 1+1 \equiv 0$. Thus the M\"obius-twist phase falls into category (ii) of Corollary \ref{cor:quad-chotomy}.

Let us now include interaction-enabled fermionic SPT phases\footnote{By an interaction-enabled fermionic SPT phase, we mean a phase that does not have a free-fermion representative.\label{footnote:interaction_enabled}} and demonstrate how the relation between classifications, as encapsulated in Proposition \ref{prp:SES_classifications}, can help us pin down the complete classification of 3D $\ZZZ \times U(1)$-protected fermionic SPT phases. It is believed, without glide, that the complete classifications of 2 and 3D $U(1)$-protected fermionic SPT phases are 
\begin{eqnarray}
\SPT^2_f\paren{  U(1) } &\isomorphic& \ZZZ{\oplus}\ZZZ, \\
\SPT^3_f\paren{  U(1) } &\isomorphic& 0,
\end{eqnarray}
respectively, where the first $\ZZZ$ is generated by the $C_1{=}1$ phase and the second $\ZZZ$ by the neutral $E_8$ phase \cite{Kitaev_honeycomb, 2dChiralBosonicSPT, 2dChiralBosonicSPT_erratum, Kitaev_KITP}. Inserting these into the short exact sequence of Proposition \ref{prp:SES_classifications}, we obtain an abelian group extension problem:
\e{0 \rightarrow \ZZZ_2 \oplus \ZZZ_2 \rightarrow\; ? \rightarrow 0 \rightarrow 0.\label{groupextension_A}
}
It is an elementary property of group extension that the group extension of the trivial group by any other group is unique. More generally, if $A=0$ or $C=0$ in a short exact sequence $0 \fromto A \fromto B \fromto C \fromto 0$, then $B\isomorphic C$ or $B\isomorphic A$, respectively. Either way, we conclude that there is a unique solution to Eq.\,(\ref{groupextension_A}), and the complete classification of 3D $\ZZZ \times U(1)$-protected fermionic SPT phases is
\e{\SPT^3_f\paren{\ZZZ {\times}  U(1) }\isomorphic \ZZZ_2 \oplus \ZZZ_2,\label{sptclassA}}
which is consistent with Corollaries \ref{cor:classification_SPT_weak_wrt_glide} and \ref{cor:direct_sum_decomposition}. This result goes beyond the previous work Ref.\,\cite{Lu_sgSPT} in that Ref.\,\cite{Lu_sgSPT} only classified SPT phases that are weak with respect to glide. Our result indicates that, in this case, the ``weak classification" is complete. In the next subsection, we will investigate a case where the weak classification is not complete. We will see that the complete classification can still be determined through our short exact sequence.

\subsubsection{Wigner-Dyson class AII\label{subsubsec:sanity_check}}

Let us set $d=3$ and $G$ to be generated by $U(1)$ and $\T$ where $\T$ squares to fermion parity, which corresponds to Wigner-Dyson class AII. As a group, $G$ is the unique non-split $U(1)$-extension of $\ZZZ_2$ for the non-trivial action of $\ZZZ_2$ on $U(1)$.

As mentioned in Sec.\,\ref{subsec:hourglass_fermions}, the free-fermion classification in this symmetry class is $\ZZZ_2$ without glide and $\ZZZ_4$ with glide. The hourglass-fermion phase has a $\ZZZ_4$ index $\chi = 2$ and 3D $\ZZZ_2$ index $\nu_0 = 0$, so it represents a $\ZZZ \times G$-protected SPT phase that is weak with respect to glide. It is still a nontrivial SPT phase, though, by the discussion in Sec.\,\ref{subsec:hourglass_fermions}. We commented in Sec.\,\ref{subsubsec:SES_classifications} that all such SPT phases must have order two, which is indeed the case because two copies of the hourglass-fermion phase will have a $\ZZZ_4$ index $\chi = 2 + 2 \equiv 0 \mod 4$. On the other hand, both square roots of the hourglass-fermion phase have a 3D $\ZZZ_2$ index $\nu_0 = 1$, so while they represent nontrivial $\ZZZ \times G$-protected SPT phases, they are not weak with respect to glide. As $\ZZZ \times G$-protected SPT phases they do not have order 2 because the hourglass-fermion phase is nontrivial. They have order 4 because four copies of either square root has a $\ZZZ_4$ index $\chi = 4 \times 1$ or $4 \times 3 \equiv 0 \mod 4$. We see that the hourglass-fermion phase and its square roots fall into categories (ii) and (iv) of Corollary \ref{cor:quad-chotomy}, respectively.

Let us now include interaction-enabled fermionic SPT phases (see Footnote\,\ref{footnote:interaction_enabled}) and demonstrate how, even though the classification without glide is nontrivial in both 2 and 3D, we can still deduce the complete 3D classification with glide using Proposition \ref{prp:SES_classifications}. It has been proposed that the complete classifications of 2D \cite{2dChiralBosonicSPT_erratum} and 3D \cite{WangChong_3DSPTAII} $G$-protected fermionic SPT phases, for the $G$ specified at the beginning of this subsection, are
\begin{eqnarray}
\SPT^2_f\paren{G} &\isomorphic& \ZZZ_2, \label{2DAII}\\
\SPT^3_f\paren{G} &\isomorphic& \ZZZ_2{\oplus}\ZZZ_2{\oplus}\ZZZ_2,
\end{eqnarray}
respectively, where the $\ZZZ_2$ in 2D is generated by the QSH phase, and the three $\ZZZ_2$'s in 3D are generated by a band insulator and two bosonic SPT phases, respectively. Inserting these into the short exact sequence in Proposition \ref{prp:SES_classifications}, we are led to the abelian group extension problem
\e{0 \rightarrow \ZZZ_2 \rightarrow\; ? \rightarrow \ZZZ_2\oplus\ZZZ_2\oplus\ZZZ_2 \rightarrow 0.\label{groupextension_AII}
}
The solution to this problem is not unique, as is evident from $H^2_{\rm sym}\paren{\ZZZ_2 \oplus \ZZZ_2 \oplus \ZZZ_2; \ZZZ_2} \isomorphic \ZZZ_2 \oplus \ZZZ_2 \oplus \ZZZ_2$. However, we know that the hourglass-fermion phase and its square roots are robust to interactions. We claim that, with this additional piece of information, a unique solution can be found.

Indeed, Corollary \ref{cor:direct_sum_decomposition} says that the unknown term must be a direct sum of $\ZZZ_4$'s and/or $\ZZZ_2$'s. We now show that there is exactly one $\ZZZ_4$ and two $\ZZZ_2$'s. By the remarks in Corollary \ref{cor:direct_sum_decomposition}, the only way for $\ZZZ\times G$-protected SPT phases of type (iv) of Corollary \ref{cor:quad-chotomy} to arise is for there to be a $\ZZZ_4$ summand. Since both square roots of the hourglass-fermion phase are of type (iv), there must be at least one $\ZZZ_4$. On the other hand, each $\ZZZ_4$ contains an SPT phase that is weak with respect to glide, which must arise from an independent non-trivial SPT phase in one lower dimensions through the alternating-layer construction.
Since the classification in one lower dimensions is given by a single $\ZZZ_2$, there can be at most one $\ZZZ_4$ in the second term of Eq.\,(\ref{groupextension_AII}).
As a result, there is exactly one $\ZZZ_4$. This $\ZZZ_4$ maps onto one of the three $\ZZZ_2$'s in the third term of Eq.\,(\ref{groupextension_AII}). To make the map surjective as required by exactness, we need two additional $\ZZZ_2$'s in the second term of Eq.\,(\ref{groupextension_AII}), whose nontrivial elements are of type (iii) of Corollary \ref{cor:quad-chotomy}.

In conclusion, the \emph{complete} classification of 3D $\ZZZ \times G$-protected fermionic SPT phases, for the $G$ specified at the beginning of this subsection, is
\e{\SPT^3_f\paren{\ZZZ {\times} G} \isomorphic \ZZZ_4\oplus \ZZZ_2 \oplus \ZZZ_2,\label{solution_AII}}
where without loss of generality we can identify the nontrivial elements of $\ZZZ_4$ with the hourglass-fermion phase and its square roots. This represents one key result of this work, which goes beyond the known classification of the \emph{subset} of SPT phases that are weak with respect to glide \cite{Lu_sgSPT}:
\begin{equation}
\wSPT^3_f\paren{\ZZZ {\times} G} \isomorphic \ZZZ_2.
\end{equation}
We may anyway verify that this weak classification, together with Eq.\,(\ref{2DAII}), is consistent with Corollary \ref{cor:classification_SPT_weak_wrt_glide}. We remark that while we used such physical terms as ``weak with respect to glide" in our argument above, we could have derived Eq.\,(\ref{solution_AII}) purely mathematically, by combining an explicit classification of abelian group extensions of $\ZZZ_2 \oplus \ZZZ_2 \oplus \ZZZ_2$ by $\ZZZ_2$ with the requirement that the extension contain an element of order 4.

\subsection{Generalizations\label{subsec:computations}}

\subsubsection{Pure translation versus glide reflection}
\label{sec:puretranslation}

A glide reflection is a translation followed by a reflection. In this section, we set out to answer two questions: (a) how has this additional reflection complicated the classification of SPT phases? The symmetry group generated by glide reflection contains a subgroup of pure translations. (b) What would happen if we relaxed glide symmetry to its translational subgroup? 

The first question, (a), can be answered by contrasting Proposition \ref{prp:SES_classifications} with the analogous result for pure translations \cite{Xiong}:
\begin{prp}
\label{prp:strong_weak}
Assume the generalized cohomology hypothesis. Let $\ZZZ$ be generated by a \emph{translation} and $G$ be arbitrary.\footnote{In Ref.\,\cite{Xiong}, $G$ was assumed to preserve spacetime orientation, but the proof of Proposition \ref{prp:SES_classifications} in App.\,C of work \cite{Xiong_Alexandradinata} can be easily adapted to show that this restriction was unnecessary.} There is a \emph{split} short exact sequence,
\begin{equation}
0 \fromto \SPT^{d-1}(G) \fromto \SPT^d\paren{\ZZZ\times G} \fromto  \SPT^d\paren{G} \fromto 0.\label{SES_translation}
\end{equation}
In particular, there is an isomorphism,
\begin{equation}
 \SPT^d\paren{\ZZZ\times G} \isomorphic \SPT^{d-1}(G)  \oplus \SPT^d\paren{G}.\label{strong_weak}
\end{equation}
\end{prp}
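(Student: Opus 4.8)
The plan is to transport the whole statement into generalized cohomology through the hypothesis and reduce it to a one-line computation. By the generalized cohomology hypothesis, $\SPT^d(\ZZZ \times G, \phi) \isomorphic h^{\phi + d}_{\ZZZ \times G}(\pt)$, where $\phi$ is the extension of the given homomorphism from $G$ to $\ZZZ \times G$. The decisive structural input is that a pure translation is orientation-preserving and acts unitarily, so both $\phi_{\rm spa}$ and $\phi_{\rm int}$---and hence $\phi$---vanish on the translation subgroup $\ZZZ$. This is exactly what fails for glide, where only \emph{twice} the generator is a translation, and it is the reason one gets a clean direct sum here rather than the factors of $2$ in Proposition \ref{prp:SES_classifications}. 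Because $\phi|_{\ZZZ}$ is trivial, $\ZZZ$ acts trivially on the interval $\tilde I$ appearing in the definition of the twisted theory, Eq.\,(\ref{twisted_equivariant_of_a_pair}).

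Next I would unwind the Borel construction. Set $\paren{Y, A} \coloneq \paren{EG \times_G \tilde I,\ EG \times_G \partial \tilde I}$. Using $E(\ZZZ \times G) \homeomorphic E\ZZZ \times EG$ and quotienting in stages---first by $\ZZZ$, which acts freely on $E\ZZZ$ and trivially on both $EG$ and $\tilde I$, and then by $G$---one obtains $E(\ZZZ \times G) \times_{\ZZZ \times G} \tilde I \homeomorphic \SS^1 \times Y$ and likewise $E(\ZZZ \times G) \times_{\ZZZ \times G} \partial \tilde I \homeomorphic \SS^1 \times A$, where $B\ZZZ = \SS^1$. Hence, writing $n = d+1$,
\[
\SPT^d(\ZZZ \times G) \isomorphic h^{\phi+d}_{\ZZZ \times G}(\pt) = h^{n}_{\ZZZ \times G}(\tilde I, \partial \tilde I) = h^{n}\paren{\SS^1 \times Y,\ \SS^1 \times A}.
\]

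The heart of the argument is then the splitting of the cohomology of a product with a circle. Since $h$ is represented by an $\Omega$-spectrum it satisfies the suspension isomorphism, and the homeomorphism $\paren{\SS^1 \times Y}/\paren{\SS^1 \times A} \homeomorphic \SS^1_+ \wedge \paren{Y/A}$ together with the standard splitting $\tilde h^{n}\paren{\SS^1_+ \wedge W} \isomorphic \tilde h^{n}(W) \oplus \tilde h^{n-1}(W)$ gives a natural isomorphism $h^{n}\paren{\SS^1 \times Y, \SS^1 \times A} \isomorphic h^{n}(Y, A) \oplus h^{n-1}(Y, A)$. Concretely, the fiber inclusion $\iota$ of $\{s_0\} \times Y$ and the projection $\pi$ onto $Y$ satisfy $\iota^* \pi^* = \identity$, so $\pi^*$ is a split injection onto the first summand and $\iota^*$ a split surjection onto it, while the complement is the suspension summand $\isomorphic h^{n-1}(Y,A)$. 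Unwinding the twists, the first summand is $h^{n}(Y,A) = h^{n}_G(\tilde I,\partial\tilde I) = h^{\phi+d}_G(\pt) \isomorphic \SPT^d(G)$ and the second is $h^{n-1}(Y,A) = h^{\phi + (d-1)}_G(\pt) \isomorphic \SPT^{d-1}(G)$.

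It remains to identify the maps and conclude. The map $\iota^*$ is induced by the inclusion $BG \oneone \SS^1 \times BG = B(\ZZZ \times G)$, i.e.\,by $G \oneone \ZZZ \times G$, so by functoriality it is precisely the translation-forgetting map $\SPT^d(\ZZZ \times G) \onto \SPT^d(G)$; its section $\pi^*$ is induced by the projection $\ZZZ \times G \onto G$. Naturality of the hypothesis ensures these are genuine homomorphisms of SPT groups, so the splitting yields the split short exact sequence (\ref{SES_translation}) with $\kernel \isomorphic \SPT^{d-1}(G)$, and hence the isomorphism (\ref{strong_weak}). I expect the main obstacle to be conceptual rather than computational: verifying carefully that the orientation-preserving character of the translation genuinely trivializes the twist on $\ZZZ$ (the step with no analog in the glide case, and the origin of the direct sum instead of the $\Ext/\Tor$ factors of $2$), and checking that the circle splitting is natural in $G$ and $\phi$ so that the sequence splits \emph{canonically} and its surjection is correctly identified with translation-forgetting.
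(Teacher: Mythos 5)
Your proof is correct and follows essentially the same route as the paper's (which is deferred to App.\,C of Ref.\,\cite{Xiong_Alexandradinata} and Ref.\,\cite{Xiong}): invoke the generalized cohomology hypothesis, unwind the Borel construction to reduce the classification to the twisted equivariant cohomology of $B\ZZZ \homotopic \SS^1$ with trivial $\ZZZ$-twist (the point where translation differs from glide), and split off the suspension summand, identifying the surjection with translation-forgetting via functoriality. The glide case in Proposition \ref{prp:SES_classifications} is exactly the same computation with nontrivial monodromy on the circle, which is what produces the $\Ext^1(\ZZZ_2,-)$ and $\Tor_1(\ZZZ_2,-)$ factors there and their absence here, just as you observe.
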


\noindent In sequence (\ref{SES_translation}), the first map is given by a layer construction, which is the same as the alternating-layer construction but without the orientation-reversal that occurs every other layer. The second map is given by forgetting translational symmetry. Unlike sequence (\ref{SES}), which has factors of 2 in the first and third terms, the sequence for pure translation does not contain any factors of 2. Starting from $\ZZZ\times G$-protected SPT phases, forgetting the translational symmetry gives us all $G$-protected SPT phases, so all $G$-protected SPT phases are compatible with translational symmetry, even if they do not square to the trivial phase.
Starting from a nontrivial $G$-protected SPT phase in one lower dimensions, applying the layer construction will always give us a nontrivial $\ZZZ\times G$-protected SPT phase, even if the lower-dimensional phase admits a square root. Furthermore, sequence (\ref{SES_translation}) is split. This means its second term is completely determined by the first and third terms, according to Eq.\,(\ref{strong_weak}), and there is no abelian group extension problem to solve. The orientation-reversing nature of glide reflections is responsible for all the complications in sequence (\ref{SES}). Various examples of $\SPT^d(\ZZZ \times G)$, with $\ZZZ$ generated by glide or pure translation, are juxtaposed in Tables \ref{table:bSPT}. 

Regarding the second question, (b), denoting the glide symmetry by $\ZZZ$, we know that if a $\ZZZ\times G$-protected SPT phase becomes trivial when $\ZZZ$ is relaxed to its translational subgroup, then it must become trivial when $\ZZZ$ is forgotten altogether. It turns out that the converse is also true:

\begin{prp}
Assume the generalized cohomology hypothesis. Let $\ZZZ$ be generated by a glide reflection and $G$ be arbitrary. If a $\ZZZ\times G$-protected SPT phase becomes trivial under glide forgetting, then it must already become trivial when $\ZZZ$ is relaxed to its translational subgroup.
\label{prp:glide_to_translation}
\end{prp}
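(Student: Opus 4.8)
The plan is to prove the equality $\kernel\beta'=\kernel r$, where $r$ denotes the relaxation map $\SPT^d(\ZZZ\times G)\fromto\SPT^d(\ZZZ_{\rm tr}\times G)$ induced by the inclusion of the translational subgroup $\ZZZ_{\rm tr}$ (generated by the square of the glide, i.e.\ translation by a full period) into the glide group $\ZZZ$, and $\beta'$ is glide forgetting. First I would record that glide forgetting factors as $\beta'=s\circ r$, where $s\colon\SPT^d(\ZZZ_{\rm tr}\times G)\fromto\SPT^d(G)$ forgets the remaining translation. This gives $\kernel r\subseteq\kernel\beta'$ for free, which is the easy direction noted just before the statement. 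The content of the proposition is therefore the reverse inclusion $\kernel\beta'\subseteq\kernel r$.

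For this I would start from the identification $\kernel\beta'=\wSPT^d(\ZZZ\times G)=\image\alpha'$ supplied by the first corollary of Sec.\,\ref{subsubsec:corollaries} (a phase is weak with respect to glide if and only if it is an alternating-layer phase). Thus any $[b]\in\kernel\beta'$ can be written $[b]=\alpha'([a])$ for some $[a]\in\SPT^{d-1}(G)$; concretely, $[b]$ is represented by placing copies of a representative $a$ on the planes $x\in\ZZZ$ and copies of its mirror image $\bar a$ on the planes $x\in\ZZZ+\tfrac12$. The problem then reduces to trivializing this particular alternating-layer system while preserving $\ZZZ_{\rm tr}\times G$, not merely $G$.

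The key observation --- and the heart of the argument --- is that the very deformation already used to establish $\image\alpha'\subseteq\kernel\beta'$ (statement (i) of Sec.\,\ref{subsubsec:alternating_layer_construction_triviality_glide_forgetting}, illustrated in Figure \ref{fig:image_alpha_subset_kernel_beta}) is in fact translation-equivariant. That deformation couples each plane $x=n$ to the plane $x=n+\tfrac12$ and deforms each resulting $(a,\bar a)$ pair into a trivial product state, invoking only $G$-symmetry. I would then verify that the family of pairs $\{(n,n+\tfrac12)\}_{n\in\ZZZ}$ is invariant under translation by one full period: $g^2$ permutes the pairs by $(n,n+\tfrac12)\mapsto(n+1,n+\tfrac32)$ and carries $a$-planes to $a$-planes and $\bar a$-planes to $\bar a$-planes. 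Performing the same local deformation inside every pair therefore produces a one-parameter family of Hamiltonians that commutes with $\ZZZ_{\rm tr}$ throughout, hence preserves $\ZZZ_{\rm tr}\times G$ and the gap. This shows $[b]=\alpha'([a])\in\kernel r$, giving $\kernel\beta'\subseteq\kernel r$ and thus $\kernel\beta'=\kernel r$.

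I expect the main obstacle to be bookkeeping rather than conceptual: one must check that all the stable-equivalence moves hidden in the word ``trivialize'' --- enlarging on-site Hilbert spaces, introducing ancillas, and blocking neighbouring planes into single effective sites (cf.\ Footnotes\,\ref{footnote:diagonal_pairs} and \ref{footnote:enlarged_Hilbert_space}) --- can be carried out commuting with translation by a full period, and not just with $G$. Because the pattern is strictly periodic under $\ZZZ_{\rm tr}$ with one $(a,\bar a)$ pair per unit cell, these moves can be chosen uniformly across unit cells, so no genuine obstruction arises; still, this is the step demanding the most care. As a mathematical cross-check independent of the deformation picture, one can instead invoke Proposition \ref{prp:strong_weak} to split $\SPT^d(\ZZZ_{\rm tr}\times G)\isomorphic\SPT^{d-1}(G)\oplus\SPT^d(G)$: the $\SPT^d(G)$-component of $r([b])$ equals $s(r([b]))=\beta'([b])=0$, while the $\SPT^{d-1}(G)$-valued layer content of the relaxed alternating-layer phase is $[a]+[\bar a]$, which vanishes precisely because $[\bar a]=-[a]$ --- the same fact that the pairing deformation encodes. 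Both components vanishing forces $r([b])=0$, confirming the result.
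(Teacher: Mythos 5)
Your proof is correct and takes essentially the same approach as the paper's own argument: the paper likewise chains the weak-phase corollary $\image \alpha = \kernel \beta$, the observation that the pair-coupling deformation of Figure \ref{fig:image_alpha_subset_kernel_beta} can be performed so as to preserve translation by a full period (giving $\image \alpha \subseteq \kernel \gamma$), and the trivial inclusion $\kernel \gamma \subseteq \kernel \beta$ to conclude $\image \alpha = \kernel \gamma = \kernel \beta$. Your closing cross-check via Proposition \ref{prp:strong_weak} is a harmless addition (though note that $[\bar a] = -[a]$ is only guaranteed by the paper for on-site $G$), and the paper additionally points to a purely mathematical proof in App.\,E of work \cite{Xiong_Alexandradinata}.
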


\noindent This can be either proved mathematically from the generalized cohomology hypothesis as in App.\,E of work \cite{Xiong_Alexandradinata}, or argued physically as we proceed to do.  Let us denote the translational subgroup of the glide symmetry $\ZZZ$ by $2\ZZZ$ and introduce the map,
\begin{equation}
\gamma: \SPT^d\paren{\ZZZ \times G} \fromto \SPT^d\paren{2\ZZZ \times G},\label{gamma}
\end{equation}
given by relaxing glide to its translational subgroup. By the same argument as in Figure \ref{fig:image_alpha_subset_kernel_beta}, a system obtained through the alternating-layer construction can always be made trivial while preserving the translational symmetry. This means $\image \alpha \subset \kernel \gamma$.  On the other hand, the fact that a $\ZZZ\times G$-protected SPT phase that becomes trivial when $\ZZZ$ is relaxed to $2\ZZZ$ must become trivial when $\ZZZ$ is forgotten altogether shows that $\kernel \gamma \subset \kernel \beta$. Since we know that a $\ZZZ\times G$-protected SPT phase is weak with respect to glide if and only if it can be obtained through the alternating-layer construction, that is, $\image \alpha= \kernel \beta$, we must have $\image \alpha = \kernel \gamma = \kernel \beta$, whence the desired result follows.

\subsubsection{Bosonic phases with glide\label{subsec:bSPT_with_glide}}

In \s{subsec:applications}, we exemplified how one can utilize Proposition \ref{prp:SES_classifications} to deduce the classification of $\paren{\ZZZ\times G}$-protected fermionic SPT phases (with $\ZZZ$ generated by glide) from proposed classifications of $G$-protected fermionic SPT phases in the literature. The problem of identifying the correct classification was reduced an abelian group extension problem, which required very little technical work in comparison to deriving the classification from scratch. In this section, we apply the same principle to bosonic SPT phases for a variety of symmetries.

The input, $\SPT^d_b(G,\phi)$, of our computations will be given by a generalized cohomology theory $h$ that, in low dimensions, reads
\begin{eqnarray}
h^{\phi + 0}_G\paren{\pt} &=& H^{\phi + 2}_G\paren{\pt; \ZZZ} \label{bSPT_0}, \\
h^{\phi + 1}_G\paren{\pt} &=& H^{\phi + 3}_G\paren{\pt; \ZZZ} \label{bSPT_1}, \\
h^{\phi + 2}_G\paren{\pt} &=& H^{\phi + 4}_G\paren{\pt; \ZZZ} \oplus H^{\phi + 0}_G\paren{\pt; \ZZZ}, \label{bSPT_2} \\
h^{\phi + 3}_G\paren{\pt} &=& H^{\phi + 5}_G\paren{\pt; \ZZZ} \oplus H^{\phi + 1}_G\paren{\pt; \ZZZ}, \label{bSPT_3}
\end{eqnarray}
where $H^{\phi + \bullet}_G\paren{-; \ZZZ}$ denotes the $\phi$-twisted $G$-equivariant ordinary cohomology with coefficient $\ZZZ$. These expressions can be derived using a scheme due to Kitaev from a presumed classification of SPT phases \emph{without} symmetry. More specifically, we assume that bosonic $G$-protected SPT phases for trivial $G$ are classified by
\begin{eqnarray}
\SPT^{0,1,2,3}_b\paren{0} \isomorphic 0, ~0, ~\ZZZ, ~0,\label{bSPT_input}
\end{eqnarray}
in 0, 1, 2, and 3 dimensions, respectively, where the $\ZZZ$ in 2 dimensions is generated by the $E_8$ phase \cite{Kitaev_honeycomb, 2dChiralBosonicSPT, 2dChiralBosonicSPT_erratum, Kitaev_KITP}; this is consistent with the proposal reviewed in Ref.\,\cite{Wen_review_2016}, which goes up to 6 dimensions. As pointed out by Kitaev \cite{Kitaev_Stony_Brook_2011_SRE_1, Kitaev_Stony_Brook_2013_SRE, Kitaev_IPAM}, from the classification without symmetry one can reconstruct a not necessarily unique generalized cohomology theory $h$ which in turn will give one the classification for \emph{arbitrary} symmetries. In the case of Eq.\,(\ref{bSPT_input}), the reconstruction turns out to be unique in low dimensions, giving Eqs.\,(\ref{bSPT_0})-(\ref{bSPT_3}) \cite{Xiong}.

The output of our computations will be $\wSPT^d_b(\ZZZ \times G, \phi)$ for $d\leq 4$ and $\SPT^d_b(\ZZZ \times G, \phi)$ for $d\leq 3$, where $\ZZZ$ is generated by glide. These will be computed from $\SPT^d_b(G,\phi)$ using the correspondence (\ref{GSPT}) and the short exact sequence (\ref{SES}), respectively. We have summarized the results in Table \ref{table:bSPT}. As we can see, in most cases the short exact sequence (\ref{SES}) determines the classification of $d$-dimensional $\paren{\ZZZ\times G}$-protected bosonic SPT phases completely. The results for $\wSPT^3_b(\ZZZ \times G,\phi)$ are in agreement with Ref.\,\cite{Lu_sgSPT}.

{\footnotesize
\begin{longtable}[c]{cccccc}
\caption[Classification of bosonic SPT phases with glide reflection or translational symmetry.]{Classification of bosonic SPT phases with glide reflection or translational symmetry. $\SPT^d_b\paren{G}$ is computed from the proposal (\ref{bSPT_0})-(\ref{bSPT_3}), whence the next three columns are deduced using Eqs.\,(\ref{GSPT}), (\ref{SES}), and (\ref{strong_weak}), respectively. We have suppressed the $\phi$ in $\SPT^d_b(G, \phi)$, etc., for brevity. Abelian group extensions are in general not unique, accounting for the non-uniqueness of some entries. To distinguish glide reflection and translation, both of which generate the group $\ZZZ$, we explicitly substitute the words ``glide" and ``transl." for $\ZZZ$ in $\ZZZ \times G$. $\SPT^d_b(\text{glide} \times G)$ and $\SPT^d_b(\text{transl.} \times G)$ for $d = 4$ are left blank because they require $\SPT^4_b\paren{G}$ as an input, which we did not provide. The superscript $T$ in $\ZZZ_2^T$ indicates time reversal. In the last column, we give physical models corresponding to the generators of underlined summands, where ``$E_8$" stands for the $E_8$ model \cite{Kitaev_honeycomb, 2dChiralBosonicSPT, 2dChiralBosonicSPT_erratum, Kitaev_KITP}, ``BIQH" for bosonic integer quantum Hall \cite{3dBTScVishwanathSenthil, BIQH}, ``3D $E_8$" for the 3D $E_8$ model \cite{3dBTScVishwanathSenthil, 3dBTScWangSenthil, 3dBTScBurnell}, and ``Haldane" for the Haldane chain \cite{AKLT, PhysRevLett.50.1153, Haldane_NLSM, Affleck_Haldane, Haldane_gap}.\label{table:bSPT}} \\
\endfirsthead
\caption[]{(Continued).} \\
\endhead
\endfoot
\endlastfoot
\multicolumn{6}{l}{$(G,\phi)=0$:} \\
\nobreakhline\hline
$d$ & $\SPT^d_b(G)$ & $\wSPT^d_b(\text{glide} \times G)$ & $\SPT^d_b(\text{glide} \times G)$ & $\SPT^d_b(\text{transl.} \times G)$ & Comments \\
\hline
$0$ & $0$                        & $0$          & $0$         & $0$      & \\
$1$ & $0$                        & $0$          & $0$         & $0$      & \\
$2$ & $\underline{\ZZZ}$ & $0$          & $0$         & $\ZZZ$ & $E_8$ \\
$3$ & $0$                        & $\ZZZ_2$ & $\ZZZ_2$ & $\ZZZ$ & \\
$4$ & ~                           & $0$          & ~            & ~         & \\
\nobreakhline\hline \\[-1.5ex]
\multicolumn{5}{l}{$(G,\phi)=U(1)$:} & \\
\nobreakhline\hline
$d$ & $\SPT^d_b(G)$ & $\wSPT^d_b(\text{glide} \times G)$ & $\SPT^d_b(\text{glide} \times G)$ & $\SPT^d_b(\text{transl.} \times G)$ & Comments \\
\nobreakhline
$0$ & $\ZZZ$                                                      & $0$                              & $0$                             & $\ZZZ$                  & \\
$1$ & $0$                                                           & $\ZZZ_2$                     & $\ZZZ_2$                     & $\ZZZ$                  & \\
$2$ & $\underline{\ZZZ} \oplus \underline{\ZZZ}$ & $0$                              & $0$                             & $\ZZZ\oplus\ZZZ$  & BIQH, $E_8$ \\
$3$ & $0$                                                           & $\ZZZ_2\oplus \ZZZ_2$ & $\ZZZ_2\oplus \ZZZ_2$ & $\ZZZ\oplus \ZZZ$ & \\
$4$ & ~                                                              & $0$                              & ~                                & ~                          & \\
\nobreakhline\hline \\[-1.5ex]
\multicolumn{5}{l}{$(G,\phi)=\ZZZ_2^T$:} & \\
\nobreakhline\hline
$d$ & $\SPT^d_b(G)$ & $\wSPT^d_b(\text{glide} \times G)$ & $\SPT^d_b(\text{glide} \times G)$ & $\SPT^d_b(\text{transl.} \times G)$ & Comments \\
\nobreakhline
$0$ & $0$                                                 & $0$                               & $0$                             & $0$                              & \\
$1$ & $\ZZZ_2$                                        & $0$                               & $\ZZZ_2$                     & $\ZZZ_2$                     & \\
$2$ & $0$                                                 & $\ZZZ_2$                      & $\ZZZ_2$                     & $\ZZZ_2$                     & \\
$3$ & $\ZZZ_2 \oplus \underline{\ZZZ_2}$ & $0$                               & $\ZZZ_2\oplus \ZZZ_2$ & $\ZZZ_2\oplus \ZZZ_2$ & 3D $E_8$ \\
$4$ & ~                                                    & $\ZZZ_2 \oplus \ZZZ_2$ & ~                                 & ~                                & \\
\nobreakhline\hline \\[-1.5ex]
\multicolumn{5}{l}{$(G,\phi)=\ZZZ_{N<\infty}$:} & \\
\nobreakhline\hline
$d$ & $\SPT^d_b(G)$ & $\wSPT^d_b(\text{glide} \times G)$ & $\SPT^d_b(\text{glide} \times G)$ & $\SPT^d_b(\text{transl.} \times G)$ & Comments \\
\nobreakhline
$0$ & $\ZZZ_N$                                   & $0$                                             & $\ZZZ_{\gcd(N,2)}$                      & $\ZZZ_N$                   & \\
$1$ & $0$                                            & $\ZZZ_{\gcd(N,2)}$                     & $\ZZZ_{\gcd(N,2)}$                      & $\ZZZ_N$                   & \\
$2$ & $\ZZZ_N\oplus \underline{\ZZZ}$ & $0$                                             & $\ZZZ_{\gcd(N,2)}\oplus \ZZZ_2$ & $\ZZZ_N\oplus \ZZZ$   & $E_8$\\
$3$ & $0$                                            & $\ZZZ_{\gcd(N,2)}\oplus \ZZZ_2$ & $\ZZZ_{\gcd(N,2)} \oplus \ZZZ_2$ & $\ZZZ_N \oplus \ZZZ$ & \\
$4$ & ~                                               & $0$                                             & ~                                                 & ~                               & \\
\nobreakhline\hline \\[-1.5ex]
\multicolumn{5}{l}{$(G,\phi)=\ZZZ_2 \times \ZZZ_2$:} & \\
\nobreakhline\hline
$d$ & $\SPT^d_b(G)$ & $\wSPT^d_b(\text{glide} \times G)$ & $\SPT^d_b(\text{glide} \times G)$ & $\SPT^d_b(\text{transl.} \times G)$ & Comments \\
\nobreakhline
$0$ & $\ZZZ_2^2$                                   & $0$                                   & $\ZZZ_2^2$                                                                                              & $\ZZZ_2^2$ & \\
$1$ & $\underline{\ZZZ_2}$                      & $\ZZZ_2^2$                      & $\ZZZ_2 \oplus \ZZZ_4$ or $\ZZZ_2^3$                                                      & $\ZZZ_2^2\oplus\ZZZ_2$ & Haldane \\
$2$ & $\ZZZ_2^3\oplus \underline{\ZZZ}$ & $\ZZZ_2$                          & $\ZZZ_2^2 \oplus \ZZZ_4$ or $\ZZZ_2^4$                                                  & $\ZZZ_2\oplus\ZZZ_2^3\oplus \ZZZ$ & $E_8$ \\
$3$ & $\ZZZ_2^2$                                   & $\ZZZ_2^3 \oplus \ZZZ_2$ & $\ZZZ_2^2 \oplus \ZZZ_4^2$ or $\ZZZ_2^4 \oplus \ZZZ_4$ or $\ZZZ_2^6$ & $\ZZZ_2^3\oplus\ZZZ\oplus \ZZZ_2^2$ & \\
$4$ & ~                                                   & $\ZZZ_2^2$                      & ~                                                                                                              & ~ & \\
\nobreakhline\hline \\[-1.5ex]
\multicolumn{5}{l}{$(G,\phi)=SO(3)$:} & \\
\nobreakhline\hline
$d$ & $\SPT^d_b(G)$ & $\wSPT^d_b(\text{glide} \times G)$ & $\SPT^d_b(\text{glide} \times G)$ & $\SPT^d_b(\text{transl.} \times G)$ & Comments \\
\nobreakhline
$0$ & $0$                                     & $0$                              & $0$                              & $0$                                           & \\
$1$ & $\underline{\ZZZ_2}$           & $0$                              & $\ZZZ_2$                     & $\ZZZ_2$                                   & Haldane \\
$2$ & $\ZZZ\oplus \underline\ZZZ$ & $\ZZZ_2$                     & $\ZZZ_2$                     & $\ZZZ_2\oplus \ZZZ \oplus \ZZZ$ & $E_8$ \\
$3$ & $0$                                     & $\ZZZ_2\oplus \ZZZ_2$ & $\ZZZ_2\oplus \ZZZ_2$ & $\ZZZ \oplus \ZZZ$                     & \\
$4$ & ~                                        & $0$                              & ~                                 & ~                                              & \\
\nobreakhline\hline
\end{longtable}
}

\subsubsection{Spatiotemporal glide symmetry}\label{subsec:temporalglide}

In this work we have focused on spatial glide symmetry, but with the right definitions we expect the generalized cohomology hypothesis (hence also Proposition \ref{prp:SES_classifications}) to also work for generalized, spatiotemporal glide symmetries, as long as they commute with the symmetry $G$.
An example of spatiotemporal symmetries would be a translation followed by a time reversal, which has been considered by the authors of Ref.\,\cite{Teo_AF_TRS} under the name ``antiferromagnetic time-reversal symmetry" (AFTRS). 2D and 3D topological superconductors in Atland-Zirnbauer class D are classified by $\ZZZ$ and $0$, respectively, where the $\ZZZ$ in 2D is generated by spinless $p+ip$ superconductors. By putting a spinless $p+ip$ superconductor on all planes of constant $x\in \ZZZ$ and its \emph{time-reversed} version (time reversal squares to the identity in this case due to spinlessness) on all $x\in \ZZZ + 1/2$ planes without coupling, one creates a 3D system that respects the fermionc-parity $\ZZZ_2^f$ and AFTRS. Since spinless $p+ip$ superconductors are robust to interactions and admit no square root, our Proposition \ref{prp:SES_classifications} implies that this 3D system must represent a nontrivial $\text{AFTRS}\times\ZZZ_2^f$-protected fermionic SPT phase. Indeed, this was argued in Ref.\,\cite{Teo_AF_TRS} to be the case through the construction of surface topological orders and subsequent confinement of the classical extrinsic defects among the anyons. Since it is believed that {spinless} $p+ip$ superconductors generate the \emph{complete} classification of 2D fermionic SPT phases {with only fermion-parity symmetry}, this gives a putative $\ZZZ_2$ classification of 3D fermionic {$\text{AFTRS}\times\ZZZ_2^f$-protected} SPT phases that are weak with respect to glide. The proposal that 3D fermionic SPT phases {with only fermion-parity symmetry} have a trivial classification \cite{Kapustin_Fermion} would further imply that this $\ZZZ_2$ actually classifies \emph{all} 3D {$\text{AFTRS}\times\ZZZ_2^f$-protected} fermionic SPT phases.

\section{Classification and construction of 3D bosonic crystalline SPT phases beyond group cohomology}
\label{sec:3D_beyond_group_cohomology}

As the interacting generalization of topological insulators and superconductors \cite{kane2005A, kane2005B, Qi_Hughes_Zhang, Schnyder, Kitaev_TI, Hasan_Kane, Qi_Zhang, Chiu_Teo_Schnyder_Ryu}, symmetry protected topological (SPT) phases \cite{SPT_origin} have garnered considerable interest in the past decade \cite{pollmann2010, Wen_1d, Cirac, Wen_sgSPT_1d, Wen_2d, Wen_Boson, Wen_Fermion, levin2012, 2dChiralBosonicSPT, Wen_Boson, Kapustin_Boson, Kapustin_Fermion, Kapustin_equivariant, Freed_SRE_iTQFT, Freed_ReflectionPositivity, Kitaev_Stony_Brook_2011_SRE_1, Kitaev_Stony_Brook_2011_SRE_2, Kitaev_Stony_Brook_2013_SRE, Kitaev_IPAM, Husain, 2dChiralBosonicSPT, 2dChiralBosonicSPT_erratum, 3dBTScVishwanathSenthil, 3dBTScWangSenthil, 3dBTScBurnell, WangChong_3DSPTAII, 3dFTScWangSenthil_2, 3dFTScWangSenthil_2_erratum, 2dFermionGExtension, Lan_Kong_Wen_1, Lan_Kong_Wen_2, SOinfty, Else_edge, Jiang_sgSPT, ThorngrenElse, Wang_Levin_invariants, Wang_intrinsic_fermionic, Huang_dimensional_reduction, Lu_sgSPT}. The early studies of SPT phases focused on phases with \emph{internal} symmetries (\emph{i.e.}, symmetries that do not change the position of local degrees of freedom, such as Ising symmetry, $U(1)$ symmetry, and time reversal symmetry). Now it is slowly being recognized \cite{Kitaev_Stony_Brook_2011_SRE_1,Kitaev_Stony_Brook_2013_SRE, Kitaev_IPAM, Xiong, Gaiotto_Johnson-Freyd} that the classification of internal SPT phases naturally satisfies certain axioms which happen to define a well-known structure in mathematics called generalized cohomology \cite{Hatcher,DavisKirk,Adams1,Adams2}. In particular, different existing proposals for the classification of internal SPT phases are simply different examples of generalized cohomology theories.

Ref.\,\cite{Xiong} distilled the above observations regarding the general structure of the classification of SPT phases into a ``generalized cohomology hypothesis.'' It maintained that (a) there exists a generalized cohomology theory $h$ that correctly classifies internal SPT phases in all dimensions for all symmetry groups, and that (b) even though we may not know exactly what $h$ is, meaningful physical results can still be derived from the fact that $h$ is a generalized cohomology theory alone. Indeed, it can be shown, on the basis of the generalized cohomology hypothesis, that three-dimensional bosonic SPT phases with internal symmetry $G$ are classified by $H^{\phi + 5}_G \paren{\pt; \ZZZ}\oplus H^{\phi + 1}_G \paren{\pt; \ZZZ}$ \cite{Xiong, Gaiotto_Johnson-Freyd},\footnote{The direct sum of two abelian groups is the same as their direct product, but the direct sum notation $\oplus$ is more common for abelian groups in the mathematical literature.} where $H^{\phi + n}_G \paren{\pt; \ZZZ}$ denotes the $n$th twisted group cohomology group of $G$ with $\ZZZ$ coefficients, and $\phi$ emphasizes that
$g\in G$ acts non-trivially on coefficients by sending them to their inverses if $g$ reverses spacetime orientation and acts trivial otherwise. The first summand, $H^{\phi + 5}_G \paren{\pt; \ZZZ}$,\footnote{For $n=0,1,2,\cdots$ and a compact group $G$, $H^{\phi + n + 1}_G \paren{\pt; \ZZZ}$ is isomorphic to the ``Borel group cohmology'' $H_{\text{Borel}, \phi}^{n}(G;U(1))$ considered in Ref.~\cite{Wen_Boson}, which is simply $H^{\phi + n}_G \paren{\pt; U(1)}$ if $G$ is finite.} corresponds to the ``group cohomology proposal'' for the classification of SPT phases \cite{Wen_Boson}. The second summand, $H^{\phi + 1}_G \paren{\pt; \ZZZ}$, corresponds to phases beyond the group cohomology proposal, and are precisely the phases constructed in Ref.\,\cite{decorated_domain_walls} using decorated domain walls. Specifically, in Ref.\,\cite{decorated_domain_walls}, the domain walls were decorated with multiples of the $E_8$ state, which is a 2D bosonic state with quantized thermal Hall coefficient \cite{Kitaev_honeycomb,2dChiralBosonicSPT,2dChiralBosonicSPT_erratum,Kitaev_KITP}.

However, physical systems tend to crystallize. What is the classification of SPT phases if $G$ is a \emph{space-group} symmetry rather than internal symmetry? In the fermionic case, one can incorporate crystalline symmetries by imposing point-group actions on the Brillouin zone before activating interactions \cite{TCI_Fu}. This is obviously not applicable to bosonic systems due to the lack of a Brillouin zone. As a get-around, Refs. \cite{Hermele_torsor, Huang_dimensional_reduction} proposed to build bosonic crystalline SPT (cSPT) phases by focusing on high-symmetry points in the real space rather than momentum space. Concretely, on every high-symmetry line, plane, etc.\,of a $d$-dimensional space with space-group action by $G$, one can put an SPT phase of the appropriate dimensions with an internal symmetry equal to the stabilizer subgroup  of (any point on) the line, plane, etc. In particular, it was shown, for every element of $H^{\phi + 5}_G \paren{\pt; \ZZZ}$, that there is a 3D bosonic crystalline SPT phase with space group symmetry $G$ that one can construct. Curiously, the same mathematical object, $H^{\phi + 5}_G \paren{\pt; \ZZZ}$, is also the group cohomology proposal for the classification of 3D bosonic SPT phases with \emph{internal} symmetry $G$. A heuristic insight into this apparent correspondence between crystalline and internal SPT phases was provided in Ref.\,\cite{ThorngrenElse}, which drew an analogy between internal gauge fields and a certain notation of crystalline gauge fields. The correspondence was referred to therein as the ``crystalline equivalence principle.''

Just like for internal symmetries, the group cohomology proposal $H^{\phi + 5}_G \paren{\pt; \ZZZ}$ does not give the complete classification for crystalline symmetries either. In fact, in the block-state construction of Ref.\,\cite{Huang_dimensional_reduction}, $E_{8}$ state was excluded from being used as a building block for simplicity. Appealing to the crystalline equivalence principle, one might guess that the complete classification of 3D bosonic crystalline SPT phases with space group symmetry $G$ would be $H^{\phi + 5}_G \paren{\pt; \ZZZ}\oplus H^{\phi + 1}_G \paren{\pt; \ZZZ}$, since that is the classification when $G$ is internal. The recent work \cite{Shiozaki2018} gives us added confidence in this conjecture. In that work, the authors justified the extension of generalized cohomology theory to crystalline symmetries by systematically interpreting terms of a spectral sequence of a generalized cohomology theory as building blocks of crystalline phases. Related discussions along this direction can also be found in Ref.~\cite{Else2018, Song2018}. (To be precise, Ref.~\cite{Shiozaki2018} focused on generalized \emph{homology} theories, but it is highly plausible that that is equivalent to a generalized cohomology formulation via a Poincar\'e duality.)

In this work, we will conduct a thorough investigation into 3D bosonic cSPT phases protected by any space group symmetry $G$, dubbed $G$-SPT phases for short, and establish that their classification is indeed given by 
\begin{equation}
H^{\phi + 5}_G \paren{\pt; \ZZZ}\oplus H^{\phi + 1}_G \paren{\pt; \ZZZ}. \label{eq:classification}
\end{equation}
We will see that distinct embedding copies of the $E_8$ state in the Euclidean space $\mathbb E^3$ produce $G$-SPT phases with different $H^{\phi + 1}_G \paren{\pt; \ZZZ}$ labels. To obtain the classification of $G$-SPT phases and to understand its physical meaning, we will invoke three techniques: dimensional reduction, surface topological order, and explicit cellular construction. The combination of these techniques will establish that
\begin{enumerate}[(a)]
\item every 3D bosonic $G$-SPT phase can be mapped, via a homomorphism, to an element of $H^{\phi + 1}_G \paren{\pt; \ZZZ}$,
\item every element of $H^{\phi + 1}_G \paren{\pt; \ZZZ}$ can be mapped, via a homomorphism, to a 3D bosonic $G$-SPT phase,
\item the first map is a left inverse of the second, and
\item a 3D bosonic $G$-SPT phase comes from $H^{\phi + 5}_G \paren{\pt; \ZZZ}$ if and only if it maps to the trivial element of $H^{\phi + 1}_G \paren{\pt; \ZZZ}$.
\end{enumerate}
Therefore, with minor caveats such as the correctness of $H^{\phi + 5}_G \paren{\pt; \ZZZ}$ in classifying non-$E_8$-based phases and certain assumptions about the correlation length of short-range entangled (SRE) states, we will be providing an essentially rigorous proof that 3D bosonic cSPT phases are classified by $H^{\phi + 5}_G \paren{\pt; \ZZZ}\oplus H^{\phi + 1}_G \paren{\pt; \ZZZ}$. In addition, we will show that the value of $H^{\phi + 1}_G \paren{\pt; \ZZZ}$ can be easily read off from the international symbol for $G$, following this formula:
\begin{eqnarray}
H^{\phi + 1}_G \paren{\pt; \ZZZ}=\begin{cases}
\ZZZ^{k}, & \mbox{\ensuremath{G} preserves orientation},\\
\ZZZ^{k}\times\ZZZ_{2}, & \mbox{otherwise},
\end{cases}\label{formula!}
\end{eqnarray}
where $k=0$ if there is more than one symmetry direction listed in the international symbol, $k=3$ if the international symbol has one symmetry direction listed and it is $1$ or $\overline{1}$, and $k=1$ if the international symbol has one symmetry direction listed and it is not $1$ or $\overline{1}$.

Naturally, we expect that $H^{\phi + 5}_G \paren{\pt; \ZZZ}\oplus H^{\phi + 1}_G \paren{\pt; \ZZZ}$ also works for more general crystalline symmetries where $H^{\phi + 1}_G \paren{\pt; \ZZZ}$ can be easily computed in a similar way. In particular, for magnetic space groups $G$ (which include space groups as a subclass called type I), we still have $H^{\phi + 1}_G \paren{\pt; \ZZZ}=\ZZZ^{k}\times \ZZZ_{2}^{\ell}$ with $k\in\mathbb{Z}$ and $\ell\in\{0,1\}$ only depending on the associated magnetic point group. For magnetic group $G$ of type II or type IV, $H^{\phi + 1}_G \paren{\pt; \ZZZ}$ is simply $\mathbb{Z}_{2}$, while $k$ and $\ell$ can be read off from the associated magnetic point group type. See App.\,A of work \cite{SongXiongHuang} for detail.

The rest of Sec.\,\ref{sec:3D_beyond_group_cohomology} is organized as follows. In Sec.\,\ref{subsec:prediction}, we will explain how $H^{\phi + 5}_G \paren{\pt; \ZZZ}\oplus H^{\phi + 1}_G \paren{\pt; \ZZZ}$ arises from the generalized cohomology hypothesis. In Sec.~\ref{subsec:examples}, we will present examples of cSPT phases described by $H^{\phi + 1}_G \paren{\pt; \ZZZ}$ for select space groups. In Sec.~\ref{subsec:reduction_and_construction}, we will verify the general classification using physical arguments.

We will list the classifications of 3D bosonic cSPT phases for all 230 space groups in App.\,\ref{app:list_of_classifications}.

\subsection{General classification\label{subsec:prediction}}

To pave the way for us to generally construct and completely classify 3D $E_{8}$-based cSPT phases, let us make an prediction for what the classification of these phases might be using the generalized cohomology hypothesis \cite{Xiong, Xiong_Alexandradinata}. The generalized cohomology hypothesis was based on Kitaev's proposal \cite{Kitaev_Stony_Brook_2011_SRE_1,Kitaev_Stony_Brook_2013_SRE,Kitaev_IPAM} that the classification of SPT phases must carry the structure of generalized cohomology theories \cite{Hatcher,DavisKirk,Adams1,Adams2}. This proposal was further developed in Refs.\cite{Xiong, Xiong_Alexandradinata, Gaiotto_Johnson-Freyd}. The key idea here is that the classification of SPT phases can be encoded by a sequence $F_{\bullet}=\{F_{d}\}$ of topological spaces, 
\begin{equation}
F_{0},F_{1},F_{2},F_{3},F_{4},\ldots
\end{equation}
where $F_{d}$ is the space made up of all $d$-dimensional short-range entangled (SRE) states. It can be argued \cite{Kitaev_Stony_Brook_2011_SRE_1, Kitaev_Stony_Brook_2013_SRE, Kitaev_IPAM, Xiong, Gaiotto_Johnson-Freyd} that the spaces $F_{d}$ are related to each other: the $d$-th space is homotopy equivalent to the loop space \cite{Hatcher} of the $(d+1)$st space, 
\begin{equation}
F_{d}\homotopic\Omega F_{d+1}.\label{loop_relation}
\end{equation}
Physically, this says that there is a correspondence between $d$-dimensional SRE states and one-parameter families of $(d+1)$-dimensional SRE states.

To state how the sequence $F_{\bullet}$ determines the classification of SPT phases, let us introduce a homomorphism, 
\begin{equation}
\phi:G\fromto\braces{\pm1},\label{phi_hom}
\end{equation}
that tracks which elements of the symmetry group $G$ preserve the orientation of spacetime (mapped to $+1$) and which elements do not (mapped to $-1$). $G$-SPT phases (\emph{i.e.}, topological phase protected by symmetry $G$) in $d$ dimensions are classified by the Abelian group of $G$-SPT orders
\begin{equation}
\SPT^d\paren{G, \phi},
\end{equation}
whose addition operation is defined by stacking. It conjectured that the group structure of $\SPT^d\paren{G, \phi}$ can be obtained by computing the mathematical object
\begin{equation}
h^{\phi + d}_G\paren{\pt}\coloneq\brackets{EG,\Omega F_{d+1}}_{G}.\label{hypothesis_expression}
\end{equation}
Here, $EG$ is the total space of the universal principal $G$-bundle \cite{AdemMilgram}, and $\brackets{EG,\Omega F_{d+1}}_{G}$ denotes the set of deformation classes of $G$-equivariant maps from the space $EG$ to the space $\Omega F_{d+1}$. Explicitly, the generalized cohomology hypothesis states that we have an isomorphism
\begin{equation}
\SPT^d\paren{G, \phi} \isomorphic h^{\phi + d}_G\paren{\pt}. \label{iso_2}
\end{equation}

To compute (\ref{hypothesis_expression}), we note by definition that
the 0th homotopy group of $F_{d}$, 
\begin{equation}
\pi_{0}(F_{d}),
\end{equation}
(\emph{i.e.}, the set of connected components of $F_{d}$) classifies $d$-dimensional invertible topological orders (\emph{i.e.}, SPT phases without symmetry). In 0, 1, 2, and 3 dimensions, the classification of invertible topological orders is believed to be \cite{Kitaev_Stony_Brook_2011_SRE_1,Kitaev_Stony_Brook_2013_SRE,Kitaev_IPAM, Xiong, Gaiotto_Johnson-Freyd}
\begin{equation}
\pi_{0}(F_{0})=0,~\pi_{0}(F_{1})=0,~\pi_{0}(F_{2})=\ZZZ,~\pi_{0}(F_{3})=0,\label{classification_iTO}
\end{equation}
respectively, where the $\ZZZ$ in 2 dimensions is generated by the $E_{8}$ phase \cite{Kitaev_honeycomb,2dChiralBosonicSPT,2dChiralBosonicSPT_erratum,Kitaev_KITP}. Next, we note that 0-dimensional SRE states are nothing but rays in Hilbert spaces. These rays form the infinite-dimensional complex projective space \cite{Hatcher}, so 
\begin{equation}
F_{0}=\CCC P^{\infty}.\label{F0}
\end{equation}
Finally, we note, as a consequence of Eq.\,(\ref{loop_relation}), that the $(k+1)$st homotopy group of $F_{d+1}$ is the same as the $k$-th homotopy group of $F_{d}$ for all $k$ and $d$:
\begin{equation}
\pi_{k}(F_{d})\isomorphic\pi_{k+1}(F_{d+1}).
\end{equation}
This allows us to determine all homotopy groups of $F_{1},F_{2},F_{3}$ from the classification of invertible topological orders (\ref{classification_iTO}) and the known space of 0D SRE states (\ref{F0}). The results are shown in Table \ref{table:homotopy_groups}.


It turns out the homotopy groups in Table \ref{table:homotopy_groups} completely determine the space $F_{1}$, $F_{2}$, $F_{3}$ themselves \cite{Xiong}: 
\begin{eqnarray}
F_{1} & = & K(\ZZZ,3),\label{F1}\\
F_{2} & = & K(\ZZZ,4)\times\ZZZ,\label{F2}\\
F_{3} & = & K(\ZZZ,5)\times\mathbf{S}^{1},\label{F3}
\end{eqnarray}
where $K(\ZZZ,n)$ is the $n$th Eilenberg-MacLane space of $\ZZZ$ {[}defined by the property $\pi_{k}\paren{K(\ZZZ,n)}=\ZZZ$ for $k=n$ and $0$ otherwise{]} \cite{Hatcher}. Plugging Eqs.\,(\ref{F0})(\ref{F1})(\ref{F2})(\ref{F3}) into Eqs.\,(\ref{hypothesis_expression})(\ref{iso_2}), we arrive at the prediction
\begin{eqnarray}
\SPT^0\paren{G, \phi} & \isomorphic & H^{\phi + 2}_G \paren{\pt; \ZZZ},\\
\SPT^1\paren{G, \phi} & \isomorphic & H^{\phi + 3}_G \paren{\pt; \ZZZ},\\
\SPT^2\paren{G, \phi} & \isomorphic & H^{\phi + 4}_G \paren{\pt; \ZZZ}\oplus H^{\phi + 0}_G \paren{\pt; \ZZZ}, \label{2D_prediction}\\ 
\SPT^3\paren{G, \phi} & \isomorphic & H^{\phi + 5}_G \paren{\pt; \ZZZ}\oplus H^{\phi + 1}_G \paren{\pt; \ZZZ},\label{3D_prediction}
\end{eqnarray}
where $H^{\phi + n}_G \paren{\pt; \ZZZ}$ denotes the $n$th twisted group cohomology group of $G$ with coefficient $\ZZZ$ and twist $\phi$ \cite{AdemMilgram}. For finite or compact groups, we have $H^{\phi + 5}_G \paren{\pt; \ZZZ}\isomorphic H_{{\rm Borel}, \phi}^{4}(G;U(1))$; we identify this as the contribution from the group cohomology proposal \cite{Wen_Boson} to the 3D classification (\ref{3D_prediction}). The existence of $H^{\phi + 1}_G \paren{\pt; \ZZZ}$ in Eq.\,(\ref{3D_prediction}), on the other hand, can be traced back to the fact that $\pi_{0}(F_{2})=\ZZZ$ in Eq.\,(\ref{classification_iTO}); we identify it as the contribution of $E_{8}$-based phases \cite{Kitaev_honeycomb, 2dChiralBosonicSPT,2dChiralBosonicSPT_erratum, Kitaev_KITP} to the 3D classification. Therefore, we predict that 3D bosonic cSPT phases built from $E_{8}$ states are classified by
\begin{equation}
H^{\phi + 1}_G \paren{\pt; \ZZZ}
\end{equation}
(up to some non-$E_8$-based phases), where $G$ is the space group and $\phi:G\fromto\braces{\pm1}$ keeps track of which elements of $G$ preserve/reverse the orientation [see Eq.\,(\ref{phi_hom})].

$H^{\phi + 1}_G \paren{\pt; \ZZZ}$ can be intuitively thought of as the set of ``representations of $G$ in the integers.'' Explicitly, an element of $H^{\phi + 1}_G \paren{\pt; \ZZZ}$ is represented by a map (called a group 1-cocycle) 
\begin{equation}
\nu^{1}:G\fromto\ZZZ\label{cochain}
\end{equation}
satisfying the cocycle condition
\begin{equation}
\nu^{1}(g_{1}g_{2})=\nu^{1}(g_{1})+\phi(g_{1})\nu^{1}(g_{2})\label{cocycle_condition}
\end{equation}
for all $g_{1},g_{2}\in G$. Suppose $\phi$ was trivial for the moment (mapping all elements to $+1$). Then we would have $\nu^{1}(g_{1}g_{2})=\nu^{1}(g_{1})+\nu^{1}(g_{2})$, which is precisely the axiom $\rho(g_{1}g_{2})=\rho(g_{1})\rho(g_{2})$ for a representation $\rho$ of $G$, written additively as opposed to multiplicatively. Now if we allowed $\rho$ to be antilinear, then we would have the modified condition $\rho(g_{1}g_{2})=\rho(g_{1})\overline{\rho(g_{2})}$ (overline denoting complex conjugation) for all $g_{1}$'s that are represented antilinearly. The analogue of this for $\nu^{1}$ is precisely (\ref{cocycle_condition}). The cohomology group $H^{\phi + 1}_G \paren{\pt; \ZZZ}$ itself is defined to be the quotient of group 1-cocycles {[}as in Eqs.\,(\ref{cochain})(\ref{cocycle_condition}){]} by what is called group 1-coboundaries. In App.\,A of work \cite{SongXiongHuang}, we make the definition of group 1-coboundary explicit and show how one can easily read off $H^{\phi + 1}_G \paren{\pt; \ZZZ}$ from the international symbol of $G$. The list of classifications of 3D bosonic cSPT phases for all 230 space groups in App.\,A of work \cite{SongXiongHuang} is reproduced in App.\,\ref{app:list_of_classifications} of this thesis.

\subsection{Examples: \texorpdfstring{$P1$}{P1}, \texorpdfstring{$Pm$}{Pm}, \texorpdfstring{$Pmm2$}{Pmm2}, \texorpdfstring{$Pmmm$}{Pmmm}, \texorpdfstring{$P\bar 1$}{P bar 1}, and \texorpdfstring{$Pc$}{Pc}\label{subsec:examples}}

\subsubsection{Space group No.\,1 or $P1$}
\label{subsubsec:P1}

The space group $P1$ contains only translation symmetries. In proper coordinates, $P1$ is generated by 
\begin{align}
t_{x} & :\left(x,y,z\right)\mapsto\left(x+1,y,z\right),\label{eq:tx}\\
t_{y} & :\left(x,y,z\right)\mapsto\left(x,y+1,z\right),\label{eq:ty}\\
t_{z} & :\left(x,y,z\right)\mapsto\left(x,y,z+1\right).\label{eq:tz}
\end{align}
As an abstract group, it is isomorphic to $\mathbb{Z}\times\mathbb{Z}\times\mathbb{Z}\equiv\mathbb{Z}^{3}$.
Thus, we have 
\begin{align}
H^{\phi + 5}_{P1} \paren{\pt; \mathbb{Z}} & =0,\\
H^{\phi + 1}_{P1} \paren{\pt; \ZZZ} & =\mathbb{Z}^{3}.
\end{align}
The latter can be identified with layered $E_{8}$ states.

For instance, we can put a copy of $E_{8}$ state on each of the planes $y=\cdots,-2,-1,0,1,2,\cdots$ as in Figure \ref{fig:layerE8_y}. Since each $E_{8}$ state can be realized with translation symmetries $t_{x}$ and $t_{z}$ respected, the layered $E_{8}$ states can be made to respect all three translations and thus realize a 3D cSPT phase with $P1$ symmetry.

\begin{figure}
\centering
\includegraphics[width=0.5\columnwidth]{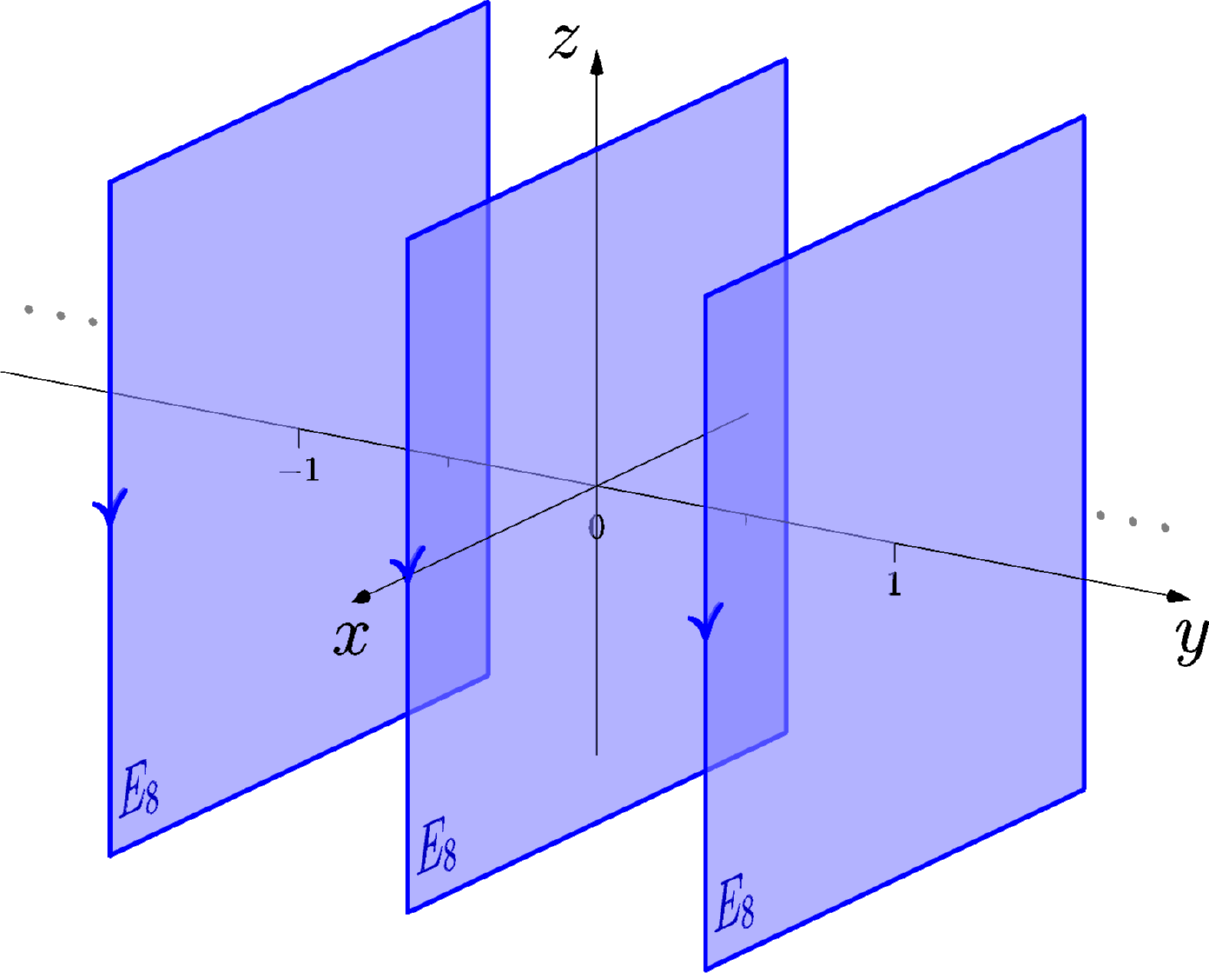}
\caption[Layered $E_8$ state.]{Layered $E_{8}$ state in the $t_{y}$ direction.}
\label{fig:layerE8_y} 
\end{figure}

To characterize this cSPT phase, we take a periodic boundary condition in $t_{y}$ direction requiring $t_{y}^{L_{y}}=1$. Such a procedure is called a compactification in the $t_{y}$ direction, which is well-defined for any integer $L_{y}\gg1$ in general for a gapped system with a translation symmetry $t_{y}$. The resulting model has a finite thickness in the $t_{y}$ direction and thus can be viewed as a 2D system extending in the $t_{x}$, $t_{z}$ directions. Further, neglecting the translation symmetries $t_{x}$ and $t_{z}$, we take an open boundary condition of the 2D system. Then its edge supports $8L_{y}$ co-propagating chiral boson modes (chiral central charge $c_{-}^{y}=8L_{y}$). The resulting quantized thermal Hall effect is proportional to $L_{y}$ and shows the nontriviality of the layered $E_{8}$ states as a cSPT phase with the $P1$ symmetry.

Actually, even without translation symmetries, we cannot trivialize such a system into a tensor product state with local unitary gates with a universal finite depth homogeneously in space. However, such a state is trivial in a weaker sense: if the system has correlation length less than $\xi>0$, then any ball region of size much larger than $\xi$ can be trivialized with correlation length kept smaller than $\xi$. We call such a state \emph{weakly trivial}.  In this work, the notion of cSPT phase includes all gapped quantum phases without emergent nontrivial quasiparticles and particularly these weakly trivial states.

A bosonic SPT system has $c_{-}^{y}=\gamma_{y}L_{y}$ with $\gamma_{y}$ a multiple of $8$ (\emph{i.e.}, $\gamma_{y}\in8\mathbb{Z}$) in general. To see $\gamma_{y}\in8\mathbb{Z}$, we notice that the net number of chiral boson modes along the interface between compactified systems of thicknesses $L_{y}$ and $L_{y}+1$ is $\gamma_{y}$. Then the absence of anyons in both sides implies $\gamma_{y}\in8\mathbb{Z}$ \cite{Kitaev_honeycomb}. It is obvious that $\gamma_{y}$'s are added during a stacking operation.

Clearly, this cSPT phase of layered $E_{8}$ states is invertible; its inverse is made of layered $\overline{E_{8}}$ states, where $\overline{E_{8}}$ denotes the chiral twin of $E_{8}$. The edge modes of $\overline{E_{8}}$ propagate in the direction opposite to those of $E_{8}$ and hence we have $\gamma_{y}=-8$ for the cSPT phase of the layered $\overline{E_{8}}$ states. Via the stacking operation, all possible $\gamma_{y}\in8\mathbb{Z}$ is generated by the cSPT phase of layered $E_{8}$ states and its inverse.

Analogously, we can define $\gamma_{x}$ (resp. $\gamma_{z}$) by compactifying a system in the $t_{x}$ (resp. $t_{z}$) direction. Thus, the cSPT phases with $P1$ symmetry are classified by $\frac{1}{8}\left(\gamma_{x},\gamma_{y},\gamma_{z}\right)\in\mathbb{Z}^{3}=H^{\phi + 1}_{P1} \paren{\pt; \ZZZ}$. Via the stacking operation, they can be generated by the three cSPT phases made of layered $E_{8}$ states in the $t_{x}$, $t_{y}$ and $t_{y}$ directions respectively and their inverses.

\subsubsection{Space group No.\,6 or $Pm$\label{subsubsec:Pm}}

The space group $Pm$ is generated by $t_{x}$, $t_{y}$, $t_{y}$ and a reflection 
\begin{equation}
m_{y}:\left(x,y,z\right)\mapsto\left(x,-y,z\right).
\end{equation}
Thus, the mirror planes are $y=\cdots,-1,-\frac{1}{2},0,\frac{1}{2},1,\cdots$ (\emph{i.e.}, integer $y$ planes and half-integer $y$ planes). For $Pm$, the group structure of $G$ and its subgroup $G_{0}$ of orientation preserving symmetries are
\begin{align}
G & =\mathbb{Z}\times\mathbb{Z}\times\left(\mathbb{Z}\rtimes\mathbb{Z}_{2}^{\phi}\right),\\
G_{0} & =\mathbb{Z}\times\mathbb{Z}\times\mathbb{Z},
\end{align}
where $\mathbb{Z}_{2}^{\phi}$ is the group generated by $m_{y}$ with the superscript $\phi$ emphasizing $\phi\left(m_{y}\right)=-1$.

Theorem~\ref{thm:H1_formula} in App.\,\ref{app:list_of_classifications} computes the second part of $H^{\phi + 1}_G \paren{\pt; \ZZZ}$ shown in Eq.~(\ref{3D_prediction}); the result is
\begin{gather}
H^{\phi + 1}_G \paren{\pt; \ZZZ}=\mathbb{Z}\times\mathbb{Z}_{2}.\label{eq:Pm_h1}
\end{gather}
The $\mathbb{Z}$ factor specifies $\frac{1}{8}\gamma_{y}$ of the cSPT phases compatible with $Pm$ symmetry; the reflection symmetry $m_{y}$ requires that $\gamma_{x}=\gamma_{z}=0$. Explicitly, putting an $E_{8}$ state on each integer $y$ plane produces a phase with $\gamma_{y}=1$.

\begin{figure}
\centering
\includegraphics[width=0.5\columnwidth]{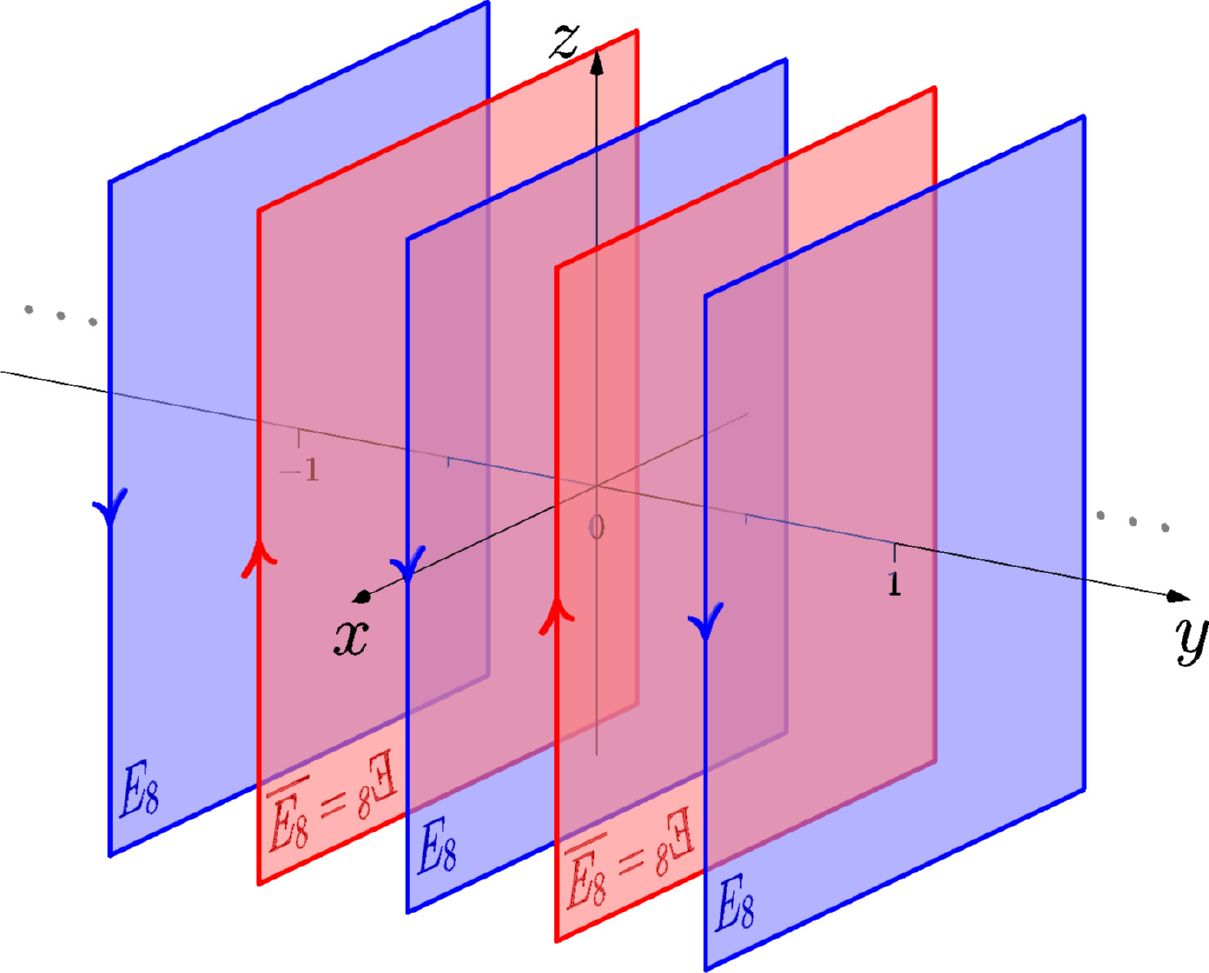}
\caption[Alternately layered $E_8$ state.]{Alternately layered $E_{8}$ state in the $t_{y}$ direction.}
\label{fig:layerE8_y2} 
\end{figure}

The factor $\mathbb{Z}_{2}$ in Eq.~(\ref{eq:Pm_h1}) is generated by the cSPT phase built by putting an $E_{8}$ state at each integer $y$ plane and its inverse $\overline{E_{8}}$ at each half-integer $y$ plane, as shown in Figure \ref{fig:layerE8_y2}. This construction was proposed and the resulting phases were studied in Ref.~\cite{Hermele_torsor}. In particular, the order of this cSPT phase is $2$, meaning that stacking two copies of such a phase produces a trivial phase. In fact, we will soon see that this generating phase can be realized by a model with higher symmetry like $Pmmm$ and is protected nontrivial by any single orientation-reversing symmetry.

In addition, the other part of $h_{\phi}^{3}\left(G,\phi\right)$ is 
\begin{eqnarray}
H^{\phi + 5}_{G} \paren{\pt; \mathbb{Z}} &=& H^{\phi + 5}_{\mathbb{Z}\rtimes\mathbb{Z}_{2}^{\phi}} \paren{\pt; \mathbb{Z}} \oplus \left[H^{\phi + 4}_{\mathbb{Z}\rtimes\mathbb{Z}_{2}^{\phi}} \paren{\pt; \mathbb{Z}} \right]^{2} \oplus H^{\phi + 3}_{\mathbb{Z}\rtimes\mathbb{Z}_{2}^{\phi}} \paren{\pt; \mathbb{Z}} \nonumber\\
&=& \mathbb{Z}_{2}^{2}\oplus0\oplus\mathbb{Z}_{2}^{2}.\label{eq:Pm_h5}
\end{eqnarray}
The three summands correspond to the phases built from group cohomology SPT phases in 2, 1, and 0 spatial dimensions respectively in Ref.~\cite{Huang_dimensional_reduction}.

For the reader's convenience, let us review the construction briefly here. The four cSPT phases corresponding to first summand $\mathbb{Z}_{2}^{2}$ in Eq.~(\ref{eq:Pm_h5}) can be built by putting 2D SPT states with Ising symmetry on mirror planes. We notice that each reflection acts as an Ising symmetry (\emph{i.e.}, a unitary internal symmetry of order 2) on its mirror plane. Moreover, the space group $Pm$ contains two families of inequivalent mirrors (\emph{i.e.}, integer $y$ planes and half-integer $y$ planes). For each family, there are two choices of 2D SPT phases with Ising symmetry, classified by $H^{4}_{\ZZZ_2} \paren{\pt; \ZZZ}\cong H^{3}_{\ZZZ_2} \paren{\pt; U(1)}\cong\mathbb{Z}_{2}$. Thus, we realize four cSPT phases with group structure $\mathbb{Z}_{2}^{2}$. Further, noticing the translation symmetries within each mirror, we can put, to every unit cell of the mirror, a 0D state carrying eigenvalue $\pm1$ of the corresponding reflection; 0D states with Ising symmetry are classified by their symmetry charges and are formally labeled by $H^{2}_{\ZZZ_2} \paren{\pt; \ZZZ}\cong H^{1}_{\ZZZ_2} \paren{\pt; U(1)}=\mathbb{Z}_{2}$. They produce another $\mathbb{Z}_{2}$ factor of cSPT phases associated with each inequivalent family of mirrors. Thus, we get another four cSPT phases labeled by the last summand $\mathbb{Z}_{2}^{2}$ in Eq.~(\ref{eq:Pm_h5}).

\begin{figure}
\centering
\includegraphics[width=0.5\columnwidth]{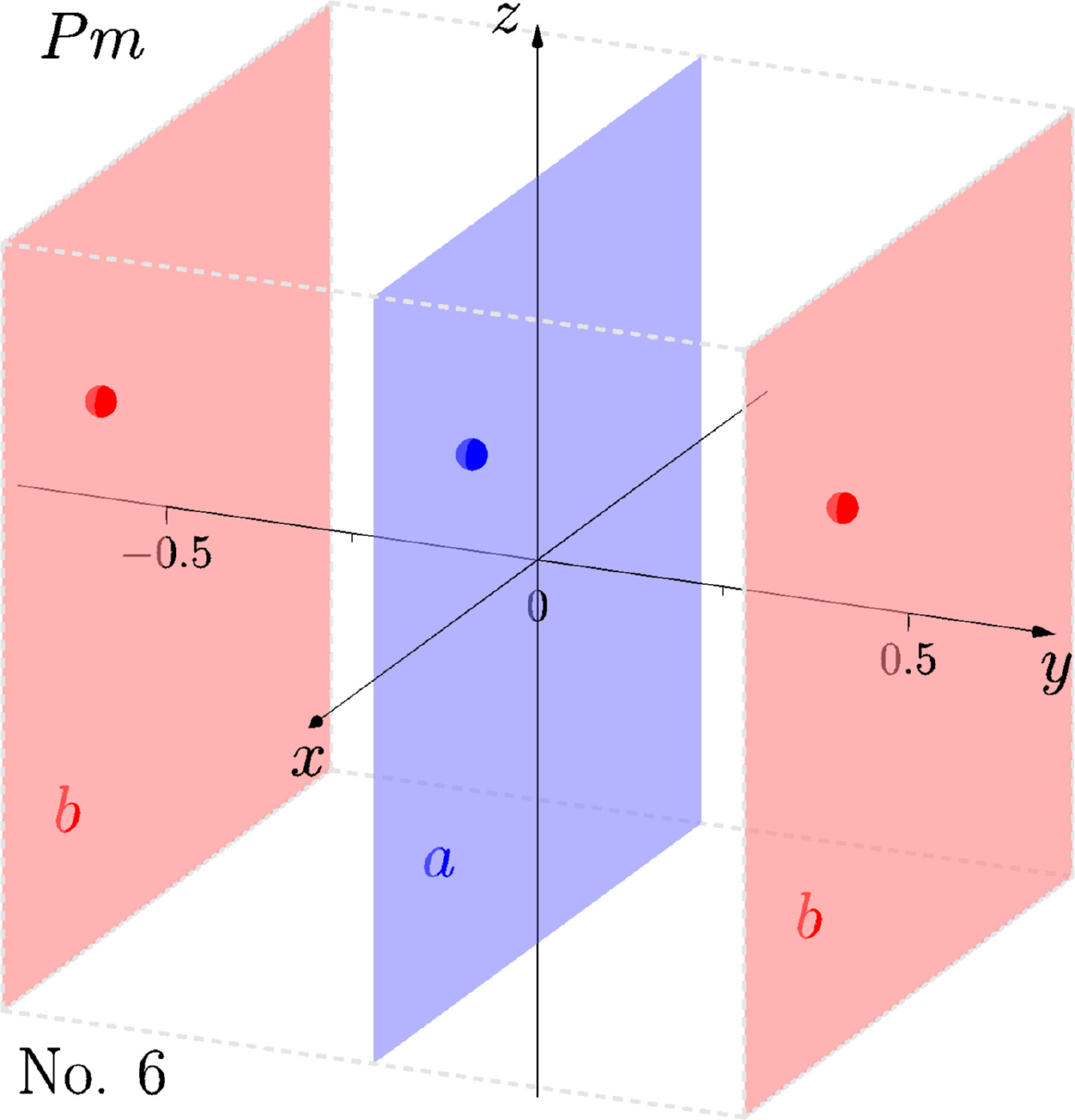}
\caption[The two inequivalent families of mirror planes of space group $Pm$.]{For space group No.~6 (\emph{i.e.}, $G=Pm$), there are two inequivalent family of mirrors (blue and red), whose Wyckoff position labels are $a$ and $b$ respectively in Ref.~\cite{ITA2006}. The cSPT phases corresponding to $H^{\phi + 5}_G \paren{\pt; \ZZZ}$ are built by assigning to the mirrors 2D SPT phases protected by Ising symmetry and $\mathbb{Z}_{2}$ point charges (blue and red dots). }
\label{fig:Pm_SPT} 
\end{figure}

To summarize, this construction of $H^{\phi + 5}_G \paren{\pt; \ZZZ}$ phases for the space group $Pm$ can be presented by Figure \ref{fig:Pm_SPT}. It can be generalized to all the other space groups: to each Wyckoff position, we assign group cohomology SPT phases protected by its site symmetry. Further technical details can be found in Ref.~\cite{Huang_dimensional_reduction}.

\subsubsection{Space group No.\,25 or $Pmm2$}
\label{subsubsec:Pmm2}

In the following, we will focus on developing a universal construction of $H^{\phi + 1}_G \paren{\pt; \ZZZ}$ phases. To get motivated, let us look at more examples.

The space group $G=Pmm2$ is generated by $t_{x}$, $t_{y}$, $t_{y}$, and two reflections 
\begin{align}
m_{x}: & \left(x,y,z\right)\mapsto\left(-x,y,z\right),\\
m_{y}: & \left(x,y,z\right)\mapsto\left(x,-y,z\right).
\end{align}
This time, Theorem~\ref{thm:H1_formula} in App.\,\ref{app:list_of_classifications}
tells us that 
\begin{gather}
H^{\phi + 1}_G \paren{\pt; \ZZZ}=\mathbb{Z}_{2}.\label{eq:Pmm2_h1}
\end{gather}
There is no $\mathbb{Z}$ factor any more as expected, because $m_{x}$ and $m_{y}$ together require $\gamma_{x}=\gamma_{y}=\gamma_{z}=0$.

The cSPT phase with $Pmm2$ symmetry generating $H^{\phi + 1}_G \paren{\pt; \ZZZ}$ is actually compatible with a higher symmetry $Pmmm$. Hence let us combine the study on this phase with the discussion of the space group $Pmmm$ below.

\subsubsection{Space group No.\,47 or $Pmmm$\label{subsubsec:cSPT_Pmmm}}

The space group $Pmmm$ is generated by $t_{x}$, $t_{y}$, $t_{y}$, and three reflections 
\begin{align}
m_{x}: & \left(x,y,z\right)\mapsto\left(-x,y,z\right),\label{eq:mx}\\
m_{y}: & \left(x,y,z\right)\mapsto\left(x,-y,z\right),\label{eq:my}\\
m_{z}: & \left(x,y,z\right)\mapsto\left(x,y,-z\right).\label{eq:mz}
\end{align}
For $G=Pmmm$, Theorem~\ref{thm:H1_formula} in App.\,\ref{app:list_of_classifications} gives
\begin{gather}
H^{\phi + 1}_G \paren{\pt; \ZZZ}=\mathbb{Z}_{2}.\label{eq:Pmmm_h1}
\end{gather}
There is no $\mathbb{Z}$ factor as in the case of $Pmm2$ above.

The cSPT phase generating $H^{\phi + 1}_G \paren{\pt; \ZZZ}=\mathbb{Z}_{2}$ can be constructed as in Figure \ref{fig:sg47_E8}. Step 1: we partition the 3D space into cuboids of size $\frac{1}{2}\times\frac{1}{2}\times\frac{1}{2}$. Each such cuboid works as a \emph{fundamental domain}, also know as an \emph{asymmetric unit} in crystallography \cite{ITA2006}; it is a smallest simply connected closed part of space from which, by application of all symmetry operations of the space group, the whole of space is filled. Every orientation-reversing symmetry relates half of these cuboids (blue) to the other half (red). Step 2: we attach an $\mathfrak{e_{f}}\mathfrak{\mathfrak{m}_{f}}$ topological state to the surface of each cuboid from inside with all symmetries in $G$ respected. An $\mathfrak{e_{f}m_{f}}$ topological state hosts three anyon species, denoted $\mathfrak{e_{f}}$, $\mathfrak{m_{f}}$, and $\boldsymbol{\varepsilon}$, all with fermionic self-statistics. Such a topological order can be realized by starting with a $\nu=4$ integer quantum Hall state and then coupling the fermion parity to a $\mathbb{Z}_{2}$ gauge field in its deconfined phase \cite{Kitaev_honeycomb}. This topological phase exhibits net chiral edge modes under an open boundary condition. Step 3: there are two copies of $\mathfrak{e_{f}m_{f}}$ states at the interface of neighbor cuboids and we condense $\left(\mathfrak{e_{f}},\mathfrak{e_{f}}\right)$ and $\left(\mathfrak{m_{f}},\mathfrak{m_{f}}\right)$ simultaneously without breaking any symmetries in $G$, where $\left(\mathfrak{e_{f}},\mathfrak{e_{f}}\right)$ (resp. $\left(\mathfrak{m_{f}},\mathfrak{m_{f}}\right)$) denotes the anyon formed by pairing $\mathfrak{e_{f}}$ (resp. $\mathfrak{m_{f}}$) from each copy. After condensation, all the other anyons are confined, resulting in the desired cSPT phase, denoted $\mathcal{E}$.

\begin{figure}
\centering
\includegraphics[width=0.5\columnwidth]{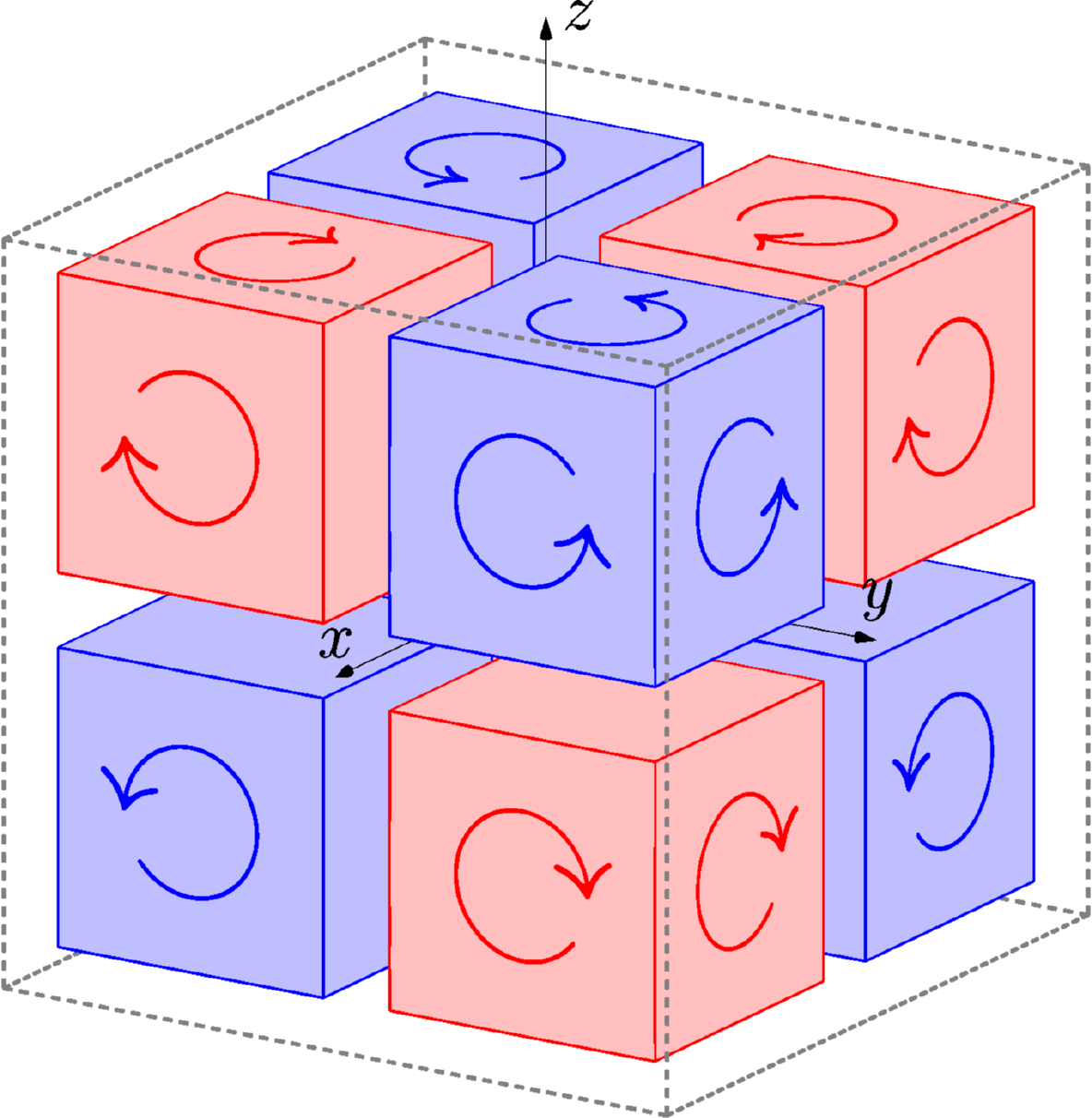}
\caption[Construction of a beyond-cohomology SPT phase with $Pmmm$ symmetry using the $\mathfrak{e_{f}m_{f}}$ state.]{A unit cell for the space group $Pmmm$ is partitioned into eight cuboids (red and blue), each of which works as a fundamental domain. A chiral $\mathfrak{e_{f}m_{f}}$ state is put on the surface of each fundamental domain; its chirality is indicated by the arrowed arcs such that all symmetries are respected. On each interface between two fundamental domains, there are two copies of $\mathfrak{e_{f}m_{f}}$ states. Let $\left(\mathfrak{e_{f}},\mathfrak{e_{f}}\right)$ (resp. $\left(\mathfrak{m_{f}},\mathfrak{m_{f}}\right)$) denote the anyon formed by pairing $\mathfrak{e_{f}}$ (resp. $\mathfrak{m_{f}}$) from each copy. Then the anyons $\left(\mathfrak{e_{f}},\mathfrak{e_{f}}\right)$ and $\left(\mathfrak{m_{f}},\mathfrak{m_{f}}\right)$ can be condensed, leading to a model that generates (via the stacking operation) cSPT phases corresponding to $H^{\phi + 1}_G \paren{\pt; \ZZZ}$ for $G=Pmmm$.}
\label{fig:sg47_E8}
\end{figure}

To see that $\mathcal{E}$ generates $H^{\phi + 1}_G \paren{\pt; \ZZZ}=\mathbb{Z}_{2}$, we need to check that $\mathcal{E}$ is nontrivial and that $2\mathcal{E}$ (\emph{i.e.}, two copies of $\mathcal{E}$ stacking together) is trivial. First, we notice that any orientation-reversing symmetry (\emph{i.e.}, a reflection, a glide reflection, an inversion, or a rotoinversion) $g\in G$ is enough to protect $\mathcal{E}$ nontrivial. Let us consider an open boundary condition of the model, keeping only the fundamental domains enclosed by the surface shown in ~\ref{fig:sg47_E8_STO}. Then the construction in Figure \ref{fig:sg47_E8} leaves a surface of the $\mathfrak{e_{f}m_{f}}$ topological order respecting $g$. If the bulk cSPT phase is trivial, then the surface topological order can be disentangled in a symmetric way from the bulk by local unitary gates with a finite depth. However, this strictly 2D system of the $\mathfrak{e_{f}m_{f}}$ topological order is chiral, wherein the orientation-reversing symmetry $g$ has to be violated. This contradiction shows that the bulk is nontrivial by the protection of $g$.

\begin{figure}
\noindent\begin{minipage}[t]{0.5\columnwidth}%
\centering
\includegraphics[width=0.8\columnwidth]{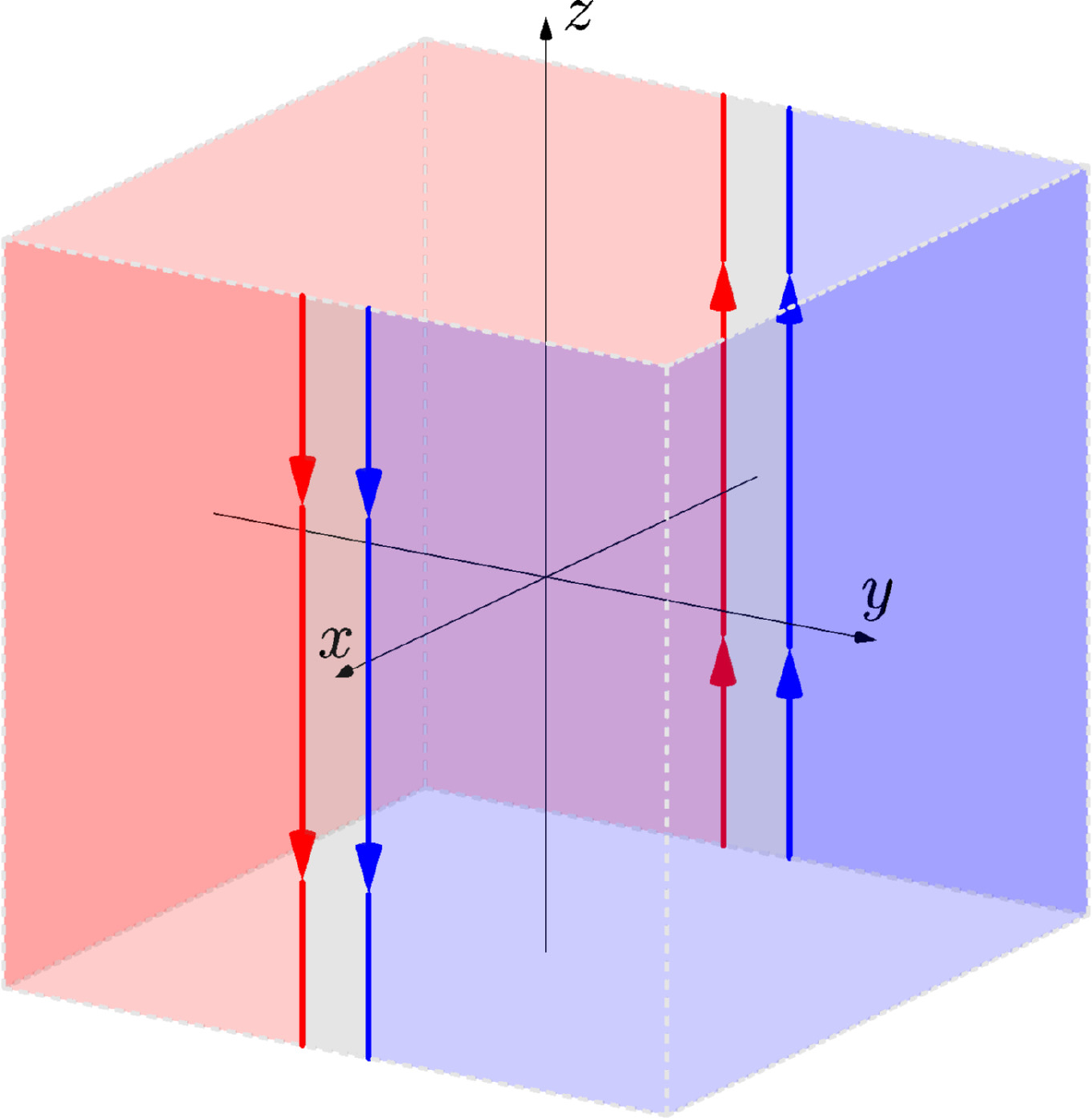}

(a)%
\end{minipage}
\begin{minipage}[t]{0.5\columnwidth}%
\centering
\includegraphics[width=0.8\columnwidth]{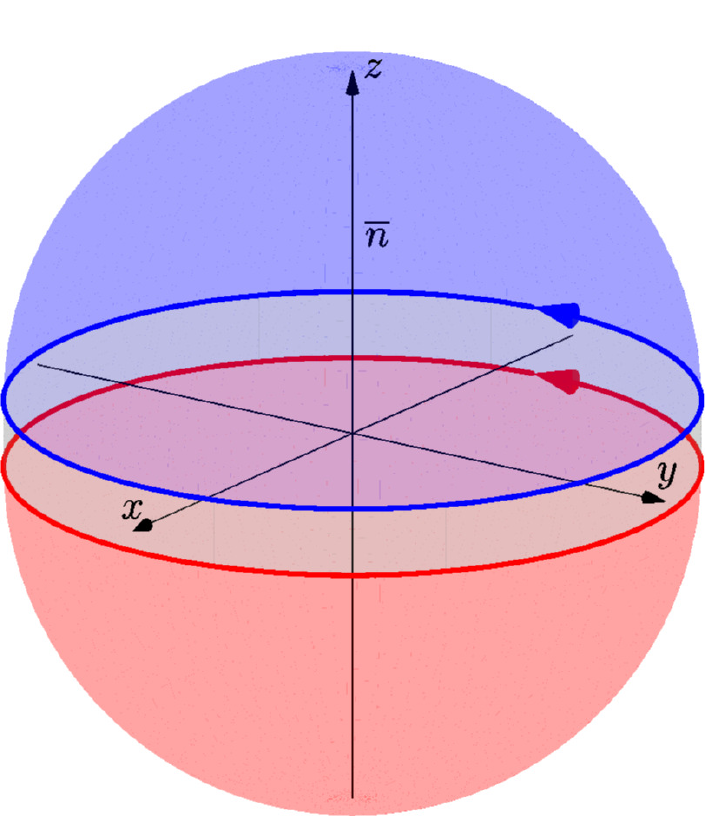}

(b)%
\end{minipage}
\caption[Incompatibility of the $\mathfrak{e_{f}m_{f}}$ topological order with orientation-reversing symmetry in strictly 2D systems.]{Two 2D systems (red and blue) of the $\mathfrak{e_{f}m_{f}}$ topological order cannot be glued together into a \emph{gapped} state respecting any orientation-reversing symmetry; their chiral edge modes must propagate in the same direction because of the symmetry. (a) illustrates cases with an rotoinversion axis $\overline{n}$ (including the spacial cases with an inversion center or a mirror plane). (b) illustrates cases with a glide reflection plane such as $c:\left(x,y,z\right)\protect\mapsto\left(x,-y,z+1/2\right)$. }
\label{fig:sg47_E8_STO}
\end{figure}

To better understand the incompatibility of a strictly 2D system of the $\mathfrak{e_{f}m_{f}}$ topological order with any orientation-reversing symmetry $g$, we view the 2D system as a gluing result of two regions related by $g$ as in Figure \ref{fig:sg47_E8_STO}. The $\mathfrak{e_{f}m_{f}}$ topological order implies net chiral edge modes for each region. Further, $g$ requires that edge modes from the two region propagate in the same direction at their 1D interface. Thus, a gapped gluing is impossible, which shows the non-existence of the $\mathfrak{e_{f}m_{f}}$ topological order compatible with $g$ in a strictly 2D system. Therefore, the surface $\mathfrak{e_{f}m_{f}}$ topological order respecting $g$ proves the non-triviality of the bulk cSPT phase $\mathcal{E}$.

On the other hand,  $2\mathcal{E}$ is equivalent to the model obtained by attaching an $E_{8}$ state to the surface of each fundamental domain from inside in a symmetric way. To check this, let us trace back the construction of $2\mathcal{E}$: we start with attaching two copies of $\mathfrak{e_{f}m_{f}}$ states on the surface of each fundamental domain from inside. Then let us focus a single rectangle interface between two cuboids. There are four copies of $\mathfrak{e_{f}m_{f}}$ states along it, labeled by $i=1,2$ for one side and $i=3,4$ for the other side. The symmetries relate $i=1\leftrightarrow i=4$ and $i=2\leftrightarrow i=3$ separately. If we further condense $(\mathfrak{e}_{\mathfrak{f}}^{1},\mathfrak{e}_{\mathfrak{f}}^{4})$, $(\mathfrak{m}_{\mathfrak{f}}^{1},\mathfrak{m}_{\mathfrak{f}}^{4})$, $(\mathfrak{e}_{\mathfrak{f}}^{2},\mathfrak{e}_{\mathfrak{f}}^{3})$, and $(\mathfrak{m}_{\mathfrak{f}}^{2},\mathfrak{m}_{\mathfrak{f}}^{3})$, then we get the local state of $2\mathcal{E}$ near the rectangle. Here $\mathfrak{e}_{\mathfrak{f}}^{i}$ (resp. $\mathfrak{m}_{\mathfrak{f}}^{i}$) denotes the $\mathfrak{e_{f}}$ (resp. $\mathfrak{m_{f}}$) particle from the $i^{th}$ copy of $\mathfrak{e_{f}m_{f}}$ state and $(\mathfrak{e}_{\mathfrak{f}}^{i},\mathfrak{e}_{\mathfrak{f}}^{j})$ (resp. $(\mathfrak{m}_{\mathfrak{f}}^{i},\mathfrak{m}_{\mathfrak{f}}^{j})$) is the anyon obtained by pairing $\mathfrak{e}_{\mathfrak{f}}^{i}$, $\mathfrak{e}_{\mathfrak{f}}^{j}$ (resp. $\mathfrak{m}_{\mathfrak{f}}^{i}$, $\mathfrak{m}_{\mathfrak{f}}^{j}$). However, we may alternately condense $(\mathfrak{e}_{\mathfrak{f}}^{1},\mathfrak{e}_{\mathfrak{f}}^{2})$, $(\mathfrak{m}_{\mathfrak{f}}^{1},\mathfrak{m}_{\mathfrak{f}}^{2})$, $(\mathfrak{e}_{\mathfrak{f}}^{3},\mathfrak{e}_{\mathfrak{f}}^{4})$, and $(\mathfrak{m}_{\mathfrak{f}}^{3},\mathfrak{m}_{\mathfrak{f}}^{4})$. The resulting local state is two copies of $E_{8}$ states connecting the same environment in a symmetric gapped way. Thus, the two local states produced by different condensation procedures have the same edge modes with the same symmetry behavior and hence are equivalent. Therefore, $2\mathcal{E}$ is equivalent to the model constructed by attaching an $E_{8}$ state to the surface of each fundamental domain from inside. Since the later can be obtained by blowing an $E_{8}$ state bubble inside each fundamental domain, it (and hence $2\mathcal{E}$) is clearly trivial.

Thus, we have shown that $\mathcal{E}$ is nontrivial and that $2\mathcal{E}$ is trivial. Therefore, the cSPT phases generated by $\mathcal{E}$ have the $\mathbb{Z}_{2}$ group structure, which holds for any non-orientation-preserving subgroup of $Pmmm$ such as $Pm$ and $Pmm2$.

In particular, for $Pm$, the model constructed in Figure \ref{fig:layerE8_y2} actually presents the same cSPT phase as $\mathcal{E}$ constructed in Figure \ref{fig:sg47_E8}. To see this, we could blow an $E_{8}$ state bubble inside cuboids centered at $\frac{1}{4}\left(1,\pm1,-1\right)+\mathbb{Z}^{3}$ and $\frac{1}{4}\left(-1,\pm1,1\right)+\mathbb{Z}^{3}$, with chirality opposite to those indicated by the arrowed arcs shown on the corresponding cuboids in Figure \ref{fig:sg47_E8}. This relates $\mathcal{E}$ to alternately layered $E_{8}$ states; however, each reflection does not acts trivially on the resulting $E_{8}$ layer on its mirror as the model in Figure \ref{fig:layerE8_y2}. To further show their equivalence, we look at a single $E_{8}$ layer at $y=0$ for instance. Since it may be obtained by condensing $(\mathfrak{e_{f}},\mathfrak{e_{f}})$ and $(\mathfrak{m_{f}},\mathfrak{m_{f}}$) in a pair of $\mathfrak{e_{f}m_{f}}$ topological states attached to the mirror, it hence can connect the $\mathfrak{e_{f}m_{f}}$ surface shown in Figure \ref{fig:sg47_E8_STO} in a gapped way with the reflection $m_{y}$ respected. On the other hand, we know that an $E_{8}$ state put at $y=0$ with trivial $m_{y}$ action can also connect to this surface in gapped $m_{y}$-symmetric way \cite{Hermele_torsor} and is thus equivalent to the corresponding $E_{8}$ layer just mentioned with nontrivial $m_{y}$ action. As a result, the models constructed in Figures \ref{fig:layerE8_y2} and \ref{fig:sg47_E8} realize the same cSPT phase with $Pm$ symmetry.

\subsubsection{Space group No.\,2 or $P\overline{1}$}

Let us explain the role of inversion symmetry by the example of space group $P\overline{1}$, which is generated by $t_{x}$, $t_{y}$, $t_{y}$, and an inversion
\begin{equation}
\overline{1}:\left(x,y,z\right)\mapsto\left(-x,-y,-z\right).
\end{equation}
For $G=P\overline{1}$, Theorem~\ref{thm:H1_formula} in App.\,\ref{app:list_of_classifications} tells us that 
\begin{equation}
H^{\phi + 1}_G \paren{\pt; \ZZZ}=\mathbb{Z}^{3}\times \mathbb{Z}_{2}.\label{eq:sg2_E8}
\end{equation}
The factor $\mathbb{Z}^{3}$ specifies $\frac{1}{8}\left(\gamma_{x},\gamma_{y},\gamma_{z}\right)$ as in the case of $P1$. For instance, the phase labeled by $\frac{1}{8}\left(\gamma_{x},\gamma_{y},\gamma_{z}\right)=\left(0,1,0\right)$ can be constructed by putting a copy of $E8$ state on each of the planes $y=\cdots,-2,-1,0,1,2,\cdots$ with the symmetries $t_{x},t_{y},t_{z}$ and $\overline{1}$ respected.

On the other hand, we can generate the factor $\mathbb{Z}_{2}$ in Eq.~(\ref{eq:sg2_E8}) by the phase constructed in Figure \ref{fig:sg47_E8}. In particular, we have shown that this phase is nontrivial under the protection any orientation-reversing symmetry, like the inversion symmetry here, in Sec.~\ref{subsubsec:cSPT_Pmmm}.

\subsubsection{Space group No.\,7 or $Pc$}

Finally, we explain the role of glide reflection symmetry by the example of space group $Pc$, which is generated by $t_{x}$, $t_{y}$, $t_{y}$, and a glide reflection
\begin{equation}
c:\left(x,y,z\right)\mapsto\left(x,-y,z+1/2\right).
\end{equation}
For $G=Pc$, Theorem~\ref{thm:H1_formula} in App.\,\ref{app:list_of_classifications} tells us that
\begin{equation}
H^{\phi + 1}_G \paren{\pt; \ZZZ}=\mathbb{Z}\times\mathbb{Z}_{2}.
\end{equation}
The factor $\mathbb{Z}$ specifies $\frac{1}{8}\left(0,\gamma_{y},0\right)$; the glide reflection symmetry $c$ requires $\gamma_{x}=\gamma_{z}=0$.

To construct the cSPT phase that generates the summand $\mathbb{Z}_{2}$, we consider the space group $G'$ generated by translations $t_{x}$, $t_{y}$, $t_{z}^{\prime}:\left(x,y,z\right)\mapsto\left(x,y,z+\frac{1}{2}\right)$ together with reflections $m_{x}$, $m_{y}$, $m_{z}$ in Eqs.~(\ref{eq:mx}-\ref{eq:mz}). Obviously, $G\subset G'$ and $G'$ is a space group of type $Pmmm$. Then the model constructed as in Figure \ref{fig:sg47_E8} with respect to $G'$ generates this  $\mathbb{Z}_{2}$ factor of cSPT phases protected by space group $G$. Particularly, the glide reflection $c\in G$ is enough to protect the corresponding cSPT phase nontrivial by the argument in Sec.~\ref{subsubsec:cSPT_Pmmm}.

\subsection{Verification of general classification by physical arguments\label{subsec:reduction_and_construction}}

In the above examples, we have presented cSPT phases by lower dimensional short-range entangled (SRE) states. Such a representation for a generic cSPT phase can be obtained by a dimensional reduction procedure; it is quite useful for constructing, analyzing, and classifying cSPT phases \cite{Hermele_torsor, Huang_dimensional_reduction}. Below, let us review the idea of dimension reduction and illustrate how to build cSPT phases with lower dimensional SRE states in general. More emphasis will be put on the construction of cSPT phases involving $E_{8}$ states, which have not been systematically studied for all space groups in the literature.

Given a space group $G$, we first partition the 3D euclidean space $\mathbb{E}^{3}$ into fundamental domains accordingly. A \emph{fundamental domain}, also know as an \emph{asymmetric unit} in crystallography \cite{ITA2006}, is a smallest simply connected closed part of space from which, by application of all symmetry operations of the space group, the whole of space is filled. Formally, the partition is written as
\begin{equation}
\mathbb{E}^{3}=\bigcup_{g\in G}g\mathcal{F},
\end{equation}
where $\mathcal{F}$ is a fundamental domain and $g\mathcal{F}$ its image under the action of $g\in G$. If $g$ is not the identity of space group $G$, then by definition $\mathcal{F}$ and $g\mathcal{F}$ only intersect in their surfaces at most. In general, $\mathcal{F}$ can be chosen to be a convex polyhedron: the Dirichlet-Voronoi cell of a point $\mathsf{P}\in\mathbb{E}^{3}$ to its $G$-orbit, with $\mathsf{P}$ chosen to have a trivial stabilizer subgroup (this is always possible by discreteness of space groups) \cite{fundamental_domain}.

\begin{figure}
\centering
\includegraphics[width=0.5\columnwidth]{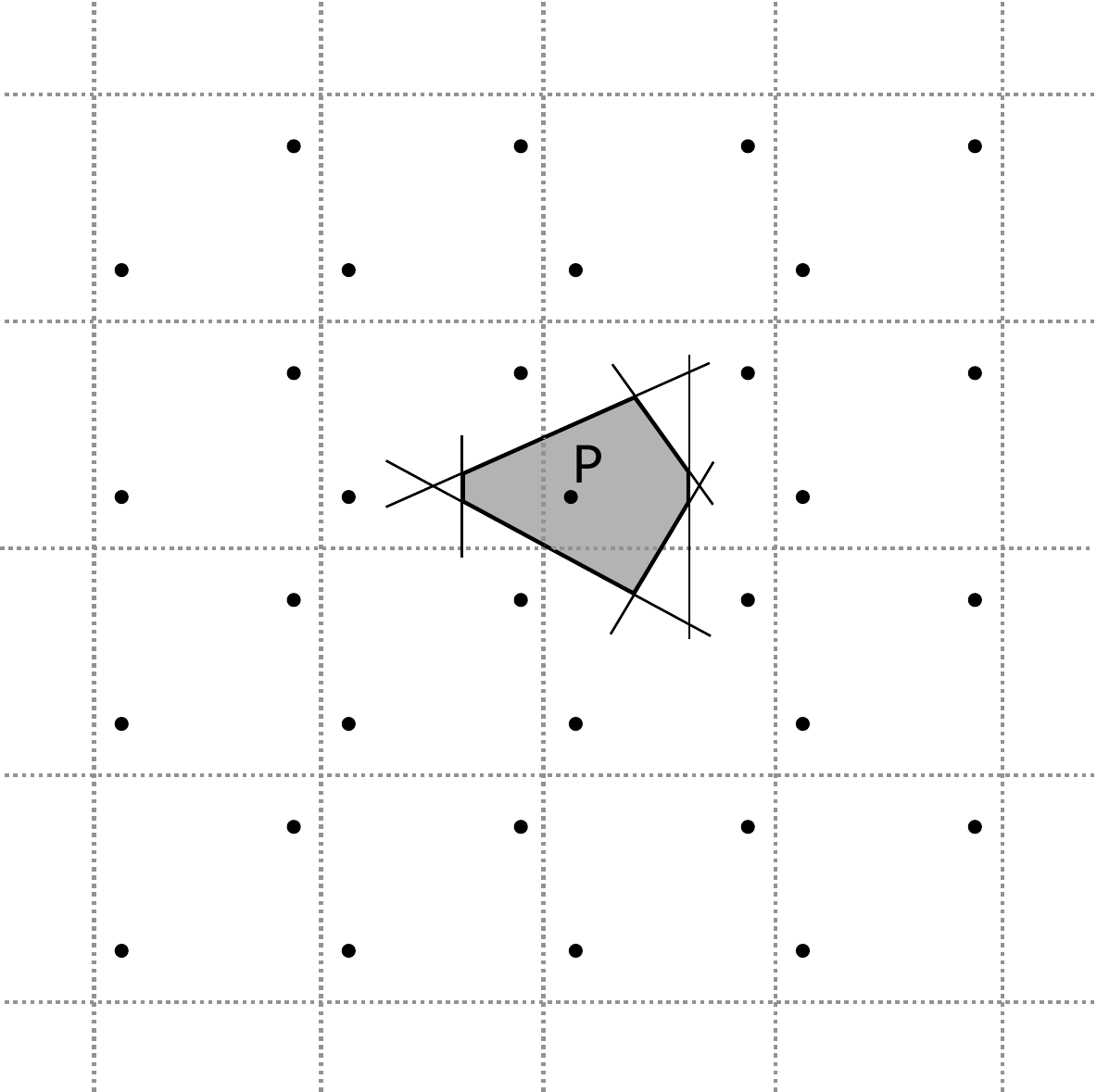}
\caption[The construction of Dirichlet-Voronoi cells.]{The Dirichlet-Voronoi cell (dark region) of a point $\mathsf{P}\in\mathbb{E}^{2}$ to its $G$-orbit (black dots), where $G$ is generated by translations and an in-plane two-fold rotation.}
\label{fig:p2_DV.pdf}
\end{figure}

The above definition and general construction of fundamental domain works for space groups in any dimensions. Let us take a lower dimensional case for a simple illustration: Figure \ref{fig:p2_DV.pdf} shows a fundamental domain given by the Dirichlet-Voronoi cell construction for wallpaper group No.~2, which is generated by translations and a two-fold rotation. Clearly, the choice of fundamental domain is often not unique as in this case; a regular choice of fundamental domain for each wallpaper group and 3D space group is available in the International Tables for Crystallography \cite{ITA2006}.

To use terminology from simplicial homology \cite{Elements_AT}, we further partition the fundamental domain $\mathcal{F}$ (resp. $g\mathcal{F}$) into tetrahedrons $\left\{ \varsigma_{\alpha}\right\} _{\alpha=1,2,\cdots,N}$ (resp. $\left\{ g\varsigma_{\alpha}\right\} _{\alpha=1,2,\cdots,N}$) such that $X=\mathbb{E}^{3}$ becomes a \emph{$G$-simplicial} complex. In a $G$-simplicial complex, each of its simplices is either completely fixed or mapped onto another simplex by $g$, $\forall g\in G$. Clearly, all internal points of every simplex share the same site symmetry.\footnote{The \emph{site symmetry} of a point $p$ is the group of symmetry operations under which $p$ is not moved, \emph{i.e.}, $G_{p}\coloneqq\left\{ g\in G|gp=p\right\} $.} To encode the simplex structure, let $\Delta^{n}\left(X\right)$ be the set of $n$-simplices (\emph{i.e.}, vertices for $n=0$, edges for $n=1$, triangles for $n=2$, and tetrahedrons for $n=3$) and $X_{n}$ the $n$-skeleton of $X$ (\emph{i.e.}, the subspace made of all $k$-simplices of $X$ for $k\leq n$).

\subsubsection{Dimensional reduction of cSPT phases}
\label{susbusbsec:dimensional_reduction}

Topological phases of matter should admit a topological quantum field theory (TQFT) description, whose correlation functions do not depend on the metric of spacetime. Thus, it is natural to conjecture the following two basic properties of topological phases (including cSPT phases). (1) Each phase can be presented by a state $\left|\psi\right\rangle $ with arbitrary short correlation length. (2) In each phase, any two states $\left|\psi_{0}\right\rangle $ and $\left|\psi_{1}\right\rangle $ with correlation length shorter than $r>0$ can be connected a path of states $\left|\psi_{\tau}\right\rangle $ (parameterized by $\tau\in\left[0,1\right]$) whose correlation length is shorter than $r$ for all $\tau$. The conjecture is satisfied by all topological states investigated in this work, allowing the dimensional reduction procedure described below. Its rigorous proof in a reasonable setting, however, remains an interesting question and goes beyond the scope of this work. If the conjecture holds in general for cSPT phases, our classification in this work will be complete. Otherwise, we would miss the cSPT phases where the two properties fail.

Given any cSPT phase for space group $G$, let us now describe the dimensional reduction procedure explaining why it can be built by lower dimensional states in general. To start, as conjectured, we can present this cSPT phase by a state $\left|\Psi\right\rangle $ with correlation length $\xi$ much smaller than the linear size of the fundamental domain $\mathcal{F}$. The short-range correlation nature implies that $\left|\Psi\right\rangle $ is the ground state of some gapped Hamiltonian $H$ whose interaction range is $\xi$ as well. The local part of $H$ inside $\mathcal{F}$ thus describes a 3D SRE state. It is believed that all 3D SRE states are trivial or weakly trivial (e.g. layered $E_{8}$ states). Thus, we can continuously change $H$ into a trivial Hamiltonian inside $\mathcal{F}$ (except within a thin region near the boundary of $\mathcal{F}$) keeping the correlation length of its ground state smaller than $\xi$ all the time. Removing the trivial degrees of freedom, we are left with a system on the $2$-skeleton $X_{2}$ of $\mathbb{E}^{3}$.

Still, the reduced system host no nontrivial excitations. Thus, there is an SRE state on each 2-simplex (\emph{i.e.}, triangle) $\tau_{\alpha}$, indexed by $\alpha$, of $X_{2}$. In particular, it could be $q_{2}\left(\alpha\right)$ copies of $E_{8}$ states (without specifying symmetry), where $q_{2}\left(\alpha\right)\in\mathbb{Z}$ with sign specifying the chirality. These data may be written collectively as a formal sum $q_{2}=\sum_{\alpha}q_{2}\left(\alpha\right)\tau_{\alpha}$, which may contain infinitely many terms as $X=\mathbb{E}^{3}$ is noncompact. On each edge $\ell$, the chiral modes from all triangles connecting to $\ell$ have to cancel in order for the system to be gapped. In terms of the simplicial boundary map $\partial$, we thus have that $\partial q_{2}\coloneqq\sum q_{2}\left(\alpha\right)\partial\tau_{\alpha}$ equals 0. In general, let $C_{k}\left(X\right)$ the set of such formal sums of $k$-simplices of $X$. Naturally, $C_{k}\left(X\right)$ has an Abelian group structure; $C_{-1}\left(X\right)$ is taken to be the trivial group. For any integer $k\geq0$, the boundary map $\partial_{k}$ (or simply $\partial$)$:C_{k}\left(X\right)\rightarrow C_{k-1}\left(X\right)$ is a group homomorphism and let $B_{k}\left(X\right)$ (resp. $Z_{k-1}\left(X\right)$) denote its kernel (resp. image). Thus, the $E_{8}$ state configuration on $X_{2}$ is encoded by $q_{2}\in B_{2}\left(X\right)$.

It clear that $\partial_{k+1}\circ\partial_{k}=0$ and hence $Z_{k}\left(X\right)\subseteq B_{k}\left(X\right)$. As $C_{k}\left(X\right)$ contains \emph{infinite} sums of simplices, it is not a standard group of $k$-chains. Instead, it can be viewed as $\left(3-k\right)$-cochains on the \emph{dual polyhedral decomposition} (also called \emph{dual block decomposition} \cite{Elements_AT}) of $X$. Thus, $B_{k}\left(X\right)/Z_{k}\left(X\right)$ should be understood as $\left(3-k\right)^{th}$ cohomology $H^{3-k}\left(X\right)$ rather than $k^{th}$ homology of $X$. Since $X=\mathbb{E}^{3}$ is contractible, its cohomology groups are the same as a point: $H^{n}\left(X\right)$ is $\mathbb{Z}$ for $n=0$ and trivial for $n>0$. Thus, $B_{2}\left(X\right)=Z_{2}\left(X\right)$ and hence any gapped $E_{8}$ state configuration $q_{2}\in B_{2}\left(X\right)$ can be expressed as $q_{2}=\partial q_{3}$ for some $q_{3}\in C_{3}\left(X\right)$. Also, it is clear that $B_{3}\left(X\right)$ is generated by the sum of all $3$-simplices with the right-handed orientation, which is simply denoted by $X$ as well.

As $q_{2}=\partial q_{3}$ is symmetric under $G$, we have $\partial\left(gq_{3}\right)=q\partial q_{3}=\partial q_{3}$ and hence 
\begin{equation}
gq_{3}=q_{3}+\nu^{1}\left(g\right)X,\label{eq:mu1}
\end{equation}
for some $\nu^{1}\left(g\right)\in\mathbb{Z}$. Clearly, $\nu^{1}\left(e\right)=0$ for the identity element $e\in G$. The consistent condition $(gh)q_{3}=g(hq_{3})$ requires that 
\begin{equation}
\nu^{1}(gh)=\phi(g)\nu^{1}(h)+\nu^{1}(g),
\end{equation}
\emph{i.e.}, the cocycle condition for $Z_{\phi}^{1}(G;\mathbb{Z})$. Definitions of group cocycles $Z_{\phi}^{1}(G;\mathbb{Z})$ as well as coboundaries $B_{\phi}^{1}(G;\mathbb{Z})$ and cohomologies $H^{\phi + 1}_G \paren{\pt; \ZZZ}$ are given in App.\,A of work \cite{SongXiongHuang}. Thus, $\nu^{1}$ is a normalized 1-cocycle. Moreover, we notice that $q_{3}$ is not uniquely determined by the $E_{8}$ state configuration $q_{2}$; solutions to $\partial q_{3}=q_{2}$ may differ by a multiple of $X$. According to Eq.~(\ref{eq:mu1}), the choice change $q_{3}\rightarrow q_{3}+\nu^{0}X$ leads to $\nu^{1}\left(g\right)\rightarrow\nu^{1}\left(g\right)+\left(d\nu^{0}\right)\left(g\right)$, where $\nu^{0}\in\mathbb{Z}$ and $\left(d\nu^{0}\right)\left(g\right)\coloneqq\phi\left(g\right)\nu^{0}-\nu^{0}$. Thus, $\nu^{1}$ is only specified up to a $1$-coboundary.  Therefore, any $G$-symmetric model (with correlation length much shorter than simplex size) on $X_{2}$ defines a cohomology group element $[\nu^{1}]\in H^{\phi + 1}_G \paren{\pt; \ZZZ}$.

By Lemma 1 in App.\,A of work \cite{SongXiongHuang}, each $[\nu^{1}]\in H^{\phi + 1}_G \paren{\pt; \ZZZ}$ can be parameterized by $\nu^{1}\left(t_{v_{1}}\right),\nu^{1}\left(t_{v_{2}}\right),\nu^{1}\left(t_{v_{3}}\right)\in\mathbb{Z}$ (together with $\nu^{1}\left(r\right)\pmod2\in\mathbb{Z}_{2}$ if $G$ is non-orientation-preserving), where $t_{v_{1}}$, $t_{v_{2}}$, $t_{v_{3}}$ are three elementary translations that generate the translation subgroup and $r$ is an orientation-reversing symmetry. Let us explain their physical meaning by examples. For the model in Figure \ref{fig:layerE8_y}, we could pick 
\begin{equation}
q_{3}=-\sum_{i,j,k\in\ZZZ}j\left(t_{x}^{i}t_{y}^{j}t_{z}^{k}\mathcal{F}\right)\label{eq:q3_translations}
\end{equation}
with fundamental domain $\mathcal{F}=\left[0,1\right]\times\left[0,1\right]\times\left[0,1\right]$. To use the terminology of simplicial homology, $\mathcal{F}$ can be partitioned into tetrahedrons and be viewed as a formal sum of them with the right-handed orientation. Moreover, $t_{x}^{i}t_{y}^{j}t_{z}^{k}\mathcal{F}$ denote the translation result of $\mathcal{F}$. It is straightforward to check that $\partial q_{3}$ equals the sum of 2-simplices (oriented toward the positive $y$ direction according to the right-hand rule) on integer $y$ planes; thus, $\partial q_{3}$ corresponds to the model in Figure \ref{fig:layerE8_y}. It is also clear that $t_{v}q_{3}-q_{3}=v^{y} X$ for a translation by $v=\left(v^{x},v^{y},v^{z}\right)$; hence $\nu^{1}\left(t_{x}\right)=0$, $\nu^{1}\left(t_{y}\right)=1$, and $\nu^{1}\left(t_{z}\right)=0$. Thus, we get a physical interpretation of $\nu^{1}\left(t_{v}\right)$: if the model is compactified in the $v$ direction such that $t_{v}^{L}=1$, then it is equivalent to $L\nu^{1}\left(t_{v}\right)$ copies of $E_{8}$ states as a 2D system. Clearly, models with different $\nu^{1}\left(t_{v}\right)$ on any translation symmetry $t_{v}$ must present distinct cSPT phases.

As another example, for $G=Pmmm$, $H^{\phi + 1}_G \paren{\pt; \ZZZ}=\mathbb{Z}_{2}$ with element $[\nu^{1}]$ parameterized by $\nu^{1}\left(r\right)\pmod2$ on any orientation-reversing element $r$ of $G$; here $\nu^{1}\left(t_{v}\right)$ on any translation $t_{v}\in G$ is required to be zero by symmetry. The $E_{8}$ state configuration of the model in Figure \ref{fig:sg47_E8} can be encoded by $q_{3}=\sum_{g\in G_{0}}g\mathcal{F}$ (\emph{i.e.}, the sum of cuboids colored blue), where $\mathcal{F}=[0,\frac{1}{2}]\times[0,\frac{1}{2}]\times[0,\frac{1}{2}${]} is a fundamental domain\footnote{To use the terminology of simplicial homology, $\mathcal{F}$ can be partitioned into tetrahedrons and be viewed as a formal sum of them with the right-handed orientation.} and $G_{0}$ is the orientation-preserving subgroup of $G$. For any orientation-reversing symmetry $r\in G$ (e.g. $m_{x}$, $m_{y}$, and $m_{z}$ in Eqs.~(\ref{eq:mx}-\ref{eq:mz})), it is clear that $rq_{3}=q_{3}-\gamma_{X}$ and hence $\nu^{1}\left(r\right)=-1$.

For any space group $G$, let $\SPT^3\paren{G, \phi}$ be the set of $G$-SPT phases, which forms a group under the stacking operation. we will see in Sec.~\ref{subsubsec:H1_invariance} that the dimensional reduction procedure actually gives a well-define group homorphism $\mathfrak{D}: \SPT^3\paren{G, \phi}\rightarrow H^{\phi + 1}_G \paren{\pt; \ZZZ}$. In particular, $[\nu^1]\in H^{\phi + 1}_G \paren{\pt; \ZZZ}$ is a well-defined invariant, independent of the dimensional reduction details, for each generic $G$-SPT phase. Conversely, we will show in Sec.~\ref{subsubsec:Construction} that a $G$-symmetric SRE state, denoted $\left|[\nu^{1}]\right\rangle $,  can always be constructed to present a $G$-SPT phase corresponding to each $[\nu^1]\in H^{\phi + 1}_G \paren{\pt; \ZZZ}$, resulting in a group homomorphism $\mathfrak{C}:H^{\phi + 1}_G \paren{\pt; \ZZZ} \rightarrow \SPT^3\paren{G, \phi} $.

Then a generic $G$-symmetric SRE state $\left|\Psi\right\rangle $ is equivalent (as a $G$-SPT phase) to $\left|[\nu^{1}]\right\rangle \otimes\left|\Psi_{2}\right\rangle $ with $[\nu^{1}]\in H^{\phi + 1}_G \paren{\pt; \ZZZ}$  specified by $\left|\Psi\right\rangle $ via the dimensional reduction and $\left|\Psi_{2}\right\rangle \coloneqq \left|-[\nu^{1}]\right\rangle \otimes\left|\Psi\right\rangle $ obtained by stacking $\left|-[\nu^{1}]\right\rangle$ (an inverse of $\left|[\nu^{1}]\right\rangle$) with $\left|\Psi\right\rangle $. By dimensional reduction, we may represent $\left|\Psi_{2}\right\rangle$ by a $G$-symmetry SRE state (with arbitrarily short correlation length) on $X_{2}$, whose $E_8$ state configuration specifies $0\in H^{\phi + 1}_G \paren{\pt; \ZZZ}$. Actually, $\left|\Psi_{2}\right\rangle$ can be represented without using $E_{8}$ states: since $E_{8}$ state configuration of $\left|\Psi_{2}\right\rangle $ is given by $q_{3}$ satisfying $gq_{3}=q_{3}$, we can reduce $q_{2}=\partial q_{3}$ to $q_{2}=0$ by blowing $-q_{3}\left(\alpha\right)$ copies of $E_{8}$ state bubbles in a process respecting $G$ symmetry. Explicitly, as in the example of $Pm$ in Sec.~\ref{subsubsec:Pm}, such phases equipped with the stacking operation form the summand $H^{\phi + 5}_G \paren{\pt; \ZZZ}$ in Eq.~(\ref{3D_prediction}) and can be built with lower dimensional group cohomology phases \cite{ThorngrenElse,Huang_dimensional_reduction}. Let us briefly describe how to decompose $\left|\Psi_{2}\right\rangle $ (with $q_{2}=0$) into these lower dimensional components.

Without $E_{8}$ states, $\left|\Psi_{2}\right\rangle $ (represented on $X_{2}$) can only have nontrivial 2D phases on 2-simplices in mirror planes, where each point has a $\mathbb{Z}_{2}$ site symmetry effectively working as an Ising symmetry protecting 2D phases classified by $H^{3}_{\ZZZ_2} \paren{\pt; U(1)}=\mathbb{Z}_{2}$. As the system is symmetric and gapped, the 2D states associated with all 2-simplices in the same mirror plane have to be either trivial or nontrivial simultaneously. Thus, each inequivalent mirror plane contributes a $\mathbb{Z}_{2}$ factor to cSPT phase classification. Conversely, a reference model of 2D nontrivial phase protected by the $\mathbb{Z}_{2}$ site symmetry on a mirror plane can be constructed by sewing 2-simplices together with all symmetries respected. Adding such a model to each mirror plane with nontrivial state in $\left|\Psi_{2}\right\rangle $ resulting in a state $\left|\Psi_{1}\right\rangle $ which may be nontrivial only along the 1-skeleton $X_{1}$. Only 1D states lie in the intersection of inequivalent mirror planes can be nontrivial projected by the site symmetry $C_{nv}$ for $n=2,4,6$. A reference nontrivial model, which generates the 1D phases protected by the $C_{nv}$ site symmetry and classified by $H^{2}_{C_{nv}} \paren{\pt; U(1)}=\mathbb{Z}_{2}$ for $n$ even, can be constructed by connecting states on 1-simplices in a gapped and symmetric way. Adding such reference states to the $C_{nv}$ axes where $\left|\Psi_{1}\right\rangle $ is nontrivial, we get a state $\left|\Psi_{0}\right\rangle $ which may be nontrivial only on the 0-skeleton $X_{0}$. Explicitly, $\left|\Psi_{0}\right\rangle $ is a tensor product of trivial degrees of freedom and some isolated 0D states centered at the 0-simplices (i.e., vertices) of $X$ carrying nontrivial one-dimensional representation of its site symmetry (\emph{i.e.}, site symmetry charges). However, some different site symmetry charge configurations may be changed into each other by charge splitting and fusion; they are not in 1-1 correspondence with cSPT phases, whose classification and characterization are studied in Ref.~\cite{Huang_dimensional_reduction}.

Putting all the above ingredient together, we get that a generic cSPT state $\left|\Psi\right\rangle $ can be reduced to the stacking of $\left|[\nu^{1}]\right\rangle $, 2D nontrivial states protected by $\mathbb{Z}_{2}$ site symmetry on some mirror planes, 1D nontrivial states protected by $C_{nv}$ site symmetry on some $C_{nv}$ axes with $n$ even, and 0D site symmetry charges. As we have mentioned, the cSPT phases built without using $E_{8}$ states have a group structure $H^{\phi + 5}_G \paren{\pt; \ZZZ}$ (\emph{i.e.}, the first summand in Eq.~(\ref{3D_prediction})) under stacking operation \cite{Huang_dimensional_reduction,ThorngrenElse}. Next, we will focus on the models with ground states $\left|[\nu^{1}]\right\rangle $ for $[\nu^{1}]\in H^{\phi + 1}_G \paren{\pt; \ZZZ}$.

\subsubsection{$H^{\phi + 1}_G \paren{\pt; \ZZZ}$ as a cSPT phase invariant\label{subsubsec:H1_invariance}}

The cSPT phase generating $H^{\phi + 1}_G \paren{\pt; \ZZZ}=\mathbb{Z}_{2}$ can be constructed as in Figure \ref{fig:sg47_E8}. Step 1: we partition the 3D space into cuboids of size $\frac{1}{2}\times\frac{1}{2}\times\frac{1}{2}$. Each such cuboid works as a \emph{fundamental domain}, also know as an \emph{asymmetric unit} in crystallography \cite{ITA2006}; it is a smallest simply connected closed part of space from which, by application of all symmetry operations of the space group, the whole of space is filled. Every orientation-reversing symmetry relates half of these cuboids (blue) to the other half (red). Step 2: we attach an $\mathfrak{e_{f}}\mathfrak{\mathfrak{m}_{f}}$ topological state to the surface of each cuboid from inside with all symmetries in $G$ respected. An $\mathfrak{e_{f}m_{f}}$ topological state hosts three anyon species, denoted $\mathfrak{e_{f}}$, $\mathfrak{m_{f}}$, and $\boldsymbol{\varepsilon}$, all with fermionic self-statistics. Such a topological order can be realized by starting with a $\nu=4$ integer quantum Hall state and then coupling the fermion parity to a $\mathbb{Z}_{2}$ gauge field in its deconfined phase \cite{Kitaev_honeycomb}. This topological phase exhibits net chiral edge modes under an open boundary condition. Step 3: there are two copies of $\mathfrak{e_{f}m_{f}}$ states at the interface of neighbor cuboids and we condense $\left(\mathfrak{e_{f}},\mathfrak{e_{f}}\right)$ and $\left(\mathfrak{m_{f}},\mathfrak{m_{f}}\right)$ simultaneously without breaking any symmetries in $G$, where $\left(\mathfrak{e_{f}},\mathfrak{e_{f}}\right)$ (resp. $\left(\mathfrak{m_{f}},\mathfrak{m_{f}}\right)$) denotes the anyon formed by pairing $\mathfrak{e_{f}}$ (resp. $\mathfrak{m_{f}}$) from each copy. After condensation, all the other anyons are confined, resulting in the desired cSPT phase, denoted $\mathcal{E}$.

To see that $\mathcal{E}$ generates $H^{\phi + 1}_G \paren{\pt; \ZZZ}=\mathbb{Z}_{2}$, we need to check that $\mathcal{E}$ is nontrivial and that $2\mathcal{E}$ (\emph{i.e.}, two copies of $\mathcal{E}$ stacking together) is trivial. First, we notice that any orientation-reversing symmetry (\emph{i.e.}, a reflection, a glide reflection, an inversion, or a rotoinversion) $g\in G$ is enough to protect $\mathcal{E}$ nontrivial. Let us consider an open boundary condition of the model, keeping only the fundamental domains enclosed by the surface shown in ~\ref{fig:sg47_E8_STO}. Then the construction in Figure \ref{fig:sg47_E8} leaves a surface of the $\mathfrak{e_{f}m_{f}}$ topological order respecting $g$. If the bulk cSPT phase is trivial, then the surface topological order can be disentangled in a symmetric way from the bulk by local unitary gates with a finite depth. However, this strictly 2D system of the $\mathfrak{e_{f}m_{f}}$ topological order is chiral, wherein the orientation-reversing symmetry $g$ has to be violated. This contradiction shows that the bulk is nontrivial by the protectionof $g$.

To better understand the incompatibility of a strictly 2D system of the $\mathfrak{e_{f}m_{f}}$ topological order with any orientation-reversing symmetry $g$, we view the 2D system as a gluing result of two regions related by $g$ as in Figure \ref{fig:sg47_E8_STO}. The $\mathfrak{e_{f}m_{f}}$ topological order implies net chiral edge modes for each region. Further, $g$ requires that edge modes from the two region propagate in the same direction at their 1D interface. Thus, a gapped gluing is impossible, which shows the non-existence of the $\mathfrak{e_{f}m_{f}}$ topological order compatible with $g$ in a strictly 2D system. Therefore, the surface $\mathfrak{e_{f}m_{f}}$ topological order respecting $g$ proves the non-triviality of the bulk cSPT phase $\mathcal{E}$.

On the other hand,  $2\mathcal{E}$ is equivalent to the model obtained by attaching an $E_{8}$ state to the surface of each fundamental domain from inside in a symmetric way. To check this, let us trace back the construction of $2\mathcal{E}$: we start with attaching two copies of $\mathfrak{e_{f}m_{f}}$ states on the surface of each fundamental domain from inside. Then let us focus a single rectangle interface between two cuboids. There are four copies of $\mathfrak{e_{f}m_{f}}$ states along it, labeled by $i=1,2$ for one side and $i=3,4$ for the other side. The symmetries relate $i=1\leftrightarrow i=4$ and $i=2\leftrightarrow i=3$ separately. If we further condense $(\mathfrak{e}_{\mathfrak{f}}^{1},\mathfrak{e}_{\mathfrak{f}}^{4})$, $(\mathfrak{m}_{\mathfrak{f}}^{1},\mathfrak{m}_{\mathfrak{f}}^{4})$, $(\mathfrak{e}_{\mathfrak{f}}^{2},\mathfrak{e}_{\mathfrak{f}}^{3})$, and $(\mathfrak{m}_{\mathfrak{f}}^{2},\mathfrak{m}_{\mathfrak{f}}^{3})$, then we get the local state of $2\mathcal{E}$ near the rectangle. Here $\mathfrak{e}_{\mathfrak{f}}^{i}$ (resp. $\mathfrak{m}_{\mathfrak{f}}^{i}$) denotes the $\mathfrak{e_{f}}$ (resp. $\mathfrak{m_{f}}$) particle from the $i^{th}$ copy of $\mathfrak{e_{f}m_{f}}$ state and $(\mathfrak{e}_{\mathfrak{f}}^{i},\mathfrak{e}_{\mathfrak{f}}^{j})$ (resp. $(\mathfrak{m}_{\mathfrak{f}}^{i},\mathfrak{m}_{\mathfrak{f}}^{j})$) is the anyon obtained by pairing $\mathfrak{e}_{\mathfrak{f}}^{i}$, $\mathfrak{e}_{\mathfrak{f}}^{j}$ (resp. $\mathfrak{m}_{\mathfrak{f}}^{i}$, $\mathfrak{m}_{\mathfrak{f}}^{j}$). However, we may alternately condense $(\mathfrak{e}_{\mathfrak{f}}^{1},\mathfrak{e}_{\mathfrak{f}}^{2})$, $(\mathfrak{m}_{\mathfrak{f}}^{1},\mathfrak{m}_{\mathfrak{f}}^{2})$, $(\mathfrak{e}_{\mathfrak{f}}^{3},\mathfrak{e}_{\mathfrak{f}}^{4})$, and $(\mathfrak{m}_{\mathfrak{f}}^{3},\mathfrak{m}_{\mathfrak{f}}^{4})$. The resulting local state is two copies of $E_{8}$ states connecting the same environment in a symmetric gapped way. Thus, the two local states produced by different condensation procedures have the same edge modes with the same symmetry behavior and hence are equivalent. Therefore, $2\mathcal{E}$ is equivalent to the model constructed by attaching an $E_{8}$ state to the surface of each fundamental domain from inside. Since the later can be obtained by blowing an $E_{8}$ state bubble inside each fundamental domain, it (and hence $2\mathcal{E}$) is clearly trivial.

Thus, we have shown that $\mathcal{E}$ is nontrivial and that $2\mathcal{E}$ is trivial. Therefore, the cSPT phases generated by $\mathcal{E}$ have the $\mathbb{Z}_{2}$ group structure, which holds for any non-orientation-preserving subgroup of $Pmmm$ such as $Pm$ and $Pmm2$.

In particular, for $Pm$, the model constructed in Figure \ref{fig:layerE8_y2} actually presents the same cSPT phase as $\mathcal{E}$ constructed in Figure \ref{fig:sg47_E8}. To see this, we could blow an $E_{8}$ state bubble inside cuboids centered at $\frac{1}{4}\left(1,\pm1,-1\right)+\mathbb{Z}^{3}$ and $\frac{1}{4}\left(-1,\pm1,1\right)+\mathbb{Z}^{3}$, with chirality opposite to those indicated by the arrowed arcs shown on the corresponding cuboids in Figure \ref{fig:sg47_E8}. This relates $\mathcal{E}$ to alternately layered $E_{8}$ states; however, each reflection does not acts trivially on the resulting $E_{8}$ layer on its mirror as the model in Figure \ref{fig:layerE8_y2}. To further show their equivalence, we look at a single $E_{8}$ layer at $y=0$ for instance. Since it may be obtained by condensing $(\mathfrak{e_{f}},\mathfrak{e_{f}})$ and $(\mathfrak{m_{f}},\mathfrak{m_{f}}$) in a pair of $\mathfrak{e_{f}m_{f}}$ topological states attached to the mirror, it hence can connect the $\mathfrak{e_{f}m_{f}}$ surface shown in Figure \ref{fig:sg47_E8_STO} in a gapped way with the reflection $m_{y}$ respected. On the other hand, we know that an $E_{8}$ state put at $y=0$ with trivial $m_{y}$ action can also connect to this surface in gapped $m_{y}$-symmetric way \cite{Hermele_torsor} and is thus equivalent to the corresponding $E_{8}$ layer just mentioned with nontrivial $m_{y}$ action. As a result, the models constructed in Figures \ref{fig:layerE8_y2} and \ref{fig:sg47_E8} realize the same cSPT phase with $Pm$ symmetry.

Let $\SPT^3\paren{G, \phi}$ be the set of cSPT phases with space group $G$ symmetry in $d=3$ spatial dimension. The stacking operation equips $\SPT^3\paren{G, \phi}$ with an Abelian group structure. Below, let us prove that the dimensional reduction procedure defines a group homomorphism
\begin{equation}
\mathfrak{D}:\SPT^3\paren{G, \phi}\rightarrow H^{\phi + 1}_G \paren{\pt; \ZZZ}.\label{eq:D}
\end{equation}
To check the well-definedness of $\mathfrak{D}$, we notice the following two facts. 

\begin{prp}
Let $\mathcal{M}$ be a model on $X_{2}$ with $\nu^{1}\left(t_{v_{1}}\right)$ specified by its $E_{8}$ configuration, where $t_{v_{1}}$ is the translation symmetry along a vector $v_{1}\in\mathbb{R}^{3}$. Compactifying $\mathcal{M}$ such that $t_{v_{1}}^{L}=1$ results in a 2D system with the invertible topologicial order of $L\nu^{1}\left(t_{v_{1}}\right)$ copies of $E_{8}$ states.
\end{prp}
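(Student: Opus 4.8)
The plan is to reduce the statement to the computation of a single integer-valued invariant of the compactified two-dimensional system, namely its chiral central charge $c_-$ divided by $8$, which counts the net number of $E_8$ layers (recall $\pi_0(F_2) = \mathbb{Z}$ is generated by $E_8$, so a 2D bosonic invertible order with no anyons is fixed by this number). Since the model $\mathcal{M}$ on $X_2$ is short-range entangled, its compactification on $\mathbb{E}^2 \times S^1$, with the $S^1$ of length $L$ periods in the $v_1$ direction and viewed as a 2D system of finite thickness, is again short-range entangled and hence represents a multiple of $E_8$; it remains only to count it. First I would express this count as a flux: take a generic oriented loop $\ell$ winding once around the compactified $v_1$-circle, and claim that the number of $E_8$ layers equals the signed intersection number $\langle \ell, q_2 \rangle$ of $\ell$ with the $E_8$-configuration 2-chain $q_2$.

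Granting this, the computation is immediate from the structure already set up in the excerpt. The 3-chain $q_3$ with $\partial q_3 = q_2$ is nothing but an integer labelling of the 3-cells (tetrahedra) such that the label jumps by the $E_8$ number $q_2(\alpha)$ across each face $\tau_\alpha$; consequently the signed intersection number of the directed path $\ell$ with $\partial q_3$ equals the total variation of this labelling along $\ell$, i.e.\ the label of the terminal cell minus that of the initial cell. Lifting $\ell$ to the universal cover $\mathbb{E}^3$, winding once around the $S^1$ corresponds to composing with $t_{v_1}^L$, so iterating the defining relation $t_{v_1} q_3 = q_3 + \nu^1(t_{v_1})\, X$ gives $t_{v_1}^L q_3 = q_3 + L\,\nu^1(t_{v_1})\, X$; hence the total variation of the labelling is exactly $L\,\nu^1(t_{v_1})$. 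Therefore $\langle \ell, q_2\rangle = L\,\nu^1(t_{v_1})$ and the compactified system is $L\,\nu^1(t_{v_1})$ copies of $E_8$.

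The main obstacle is justifying the flux identification $c_-/8 = \langle \ell, q_2\rangle$, which is the one genuinely physical input. The idea I would make precise is that $c_-/8$ is additive under compression along $v_1$, so it equals the integral of the $E_8$ sheet density across the thickness, which a $v_1$-directed line measures with the correct signs, weighting each sheet by the component of its normal along $v_1$. Sheets whose normal is orthogonal to $v_1$ (those containing the $v_1$ direction) contribute $0$, consistent with the fact that such a sheet, wrapped on the finite $S^1_{v_1}$, is a Kaluza--Klein reduction of a 2D $E_8$ on a circle and hence a trivial 1D system sitting as a line in the transverse plane. One clean way to implement this rigorously is by additivity over fundamental slabs: the compactified system is built from $L$ translated copies of a single period-slab glued along interfaces, the chiral central charge is insensitive to the gluing, and one slab carries $E_8$ number equal to the change of the $q_3$-labelling across one period, which is $\nu^1(t_{v_1})$. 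I would also remark that $\nu^1(t_{v_1})$ is unambiguous for a \emph{fixed} configuration $q_2$: changing $q_3 \to q_3 + \nu^0 X$ shifts $\nu^1(t_{v_1})$ by $\phi(t_{v_1})\nu^0 - \nu^0 = 0$ because $t_{v_1}$ is orientation-preserving, so the statement is well posed.
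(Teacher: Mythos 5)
Your proof is correct, but it takes a genuinely different route from the paper's. The paper proceeds by \emph{deformation to a reference model}: it forgets all symmetry except the translation subgroup $H$, constructs an explicit layered model $\mathcal{M}'$ (with $\nu^{1}(t_{v_j})$ copies of $E_8$ per period on the coordinate planes) whose $q_3'$ obeys the same equivariance relation $t_{v_j}q_3' = q_3' + \nu^1(t_{v_j})X$, observes that $q_3 - q_3'$ is then $H$-invariant so that $\mathcal{M}$ can be connected to $\mathcal{M}'$ through a path of $H$-symmetric SRE states by symmetric bubble growth, and finally compactifies the reference model where the answer is manifest. You instead compute a \emph{homological invariant} of the given configuration directly: the signed intersection number $\langle \ell, q_2\rangle$ of a loop generating the compactification circle with the $E_8$ 2-cycle, evaluated by the telescoping total variation of the $q_3$ labelling, $t_{v_1}^L q_3 = q_3 + L\nu^1(t_{v_1})X$. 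The two arguments hinge on the same algebraic relation but use it differently, and your route buys something the paper only gets in its subsequent Remark: since $\langle \ell, q_2\rangle$ is unchanged when $q_2$ shifts by a boundary (bubble moves), the invariance of the answer under deformations and under changes of triangulation is manifest, with no need to construct $\mathcal{M}'$ or a symmetric path. What it costs is the flux identification $c_-/8 = \langle \ell, q_2\rangle$, which you correctly isolate as the one physical input; your slab-additivity sketch is the weakest link, since the ``$E_8$ number of a single slab'' is not by itself well defined (sheets puncture the slab faces, so a slab is not a standalone gapped 2D system). A cleaner way to close it is bulk--boundary counting on the compactified geometry: place the system on $D \times \SS^1$ with $D$ a large transverse disk; the $E_8$ sheets meet the boundary torus $\partial D \times \SS^1$ in closed chiral curves, the net chirality along $\partial D$ is $8$ times the signed sum of their windings around $\partial D$, and that sum is precisely the intersection number of $q_2$ with a vertical circle $\{\pt\}\times \SS^1$, i.e.\ $\langle \ell, q_2\rangle$. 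With that substitution your argument is at the same level of rigor as the paper's (whose own symmetric-deformation step is likewise asserted on physical grounds), and your closing remark that the coboundary ambiguity $\phi(t_{v_1})\nu^0 - \nu^0$ vanishes on orientation-preserving translations is exactly the well-definedness point the paper makes via its Lemma 1.
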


\begin{proof}
Let us only keep the subgroup $H\subseteq G$ of symmetries generated by three translations $t_{v_{1}}$, $t_{v_{2}}$, and $t_{v_{3}}$ along linearly independent vectors $v_{1}$, $v_{2}$, and $v_{3}$ respectively. For convenience, we now use the coordinate system such that $t_{v_{1}}$, $t_{v_{2}}$, $t_{v_{3}}$ work as $t_{x}$, $t_{y}$, $t_{z}$ in Eqs.~(\ref{eq:tx}-\ref{eq:tz}). Let $\mathcal{F}'=\left[0,1\right]\times\left[0,1\right]\times\left[0,1\right]$. Then $\mathcal{F}'$ is both a fundamental domain and a unit cell for $H$. The original triangulation $X$ partitions $\mathcal{F}'$ into convex polyhedrons, which can be further triangulated resulting in a finer $H$-simplex complex $X'$. Clearly, $q_{3}$ can be viewed as an element of $C_{3}\left(X'\right)$ as well. Moreover, putting $\nu^{1}\left(t_{v_{1}}\right)$ (resp. $\nu^{1}\left(t_{v_{2}}\right)$, $\nu^{1}\left(t_{v_{3}}\right)$) copies of $\overline{E_{8}}$ states on each of integer $x$ (resp. $y$, $z$) planes produces a $H$-symmetric model $\mathcal{M}'$ with $E_{8}$ configuration encoded by $q_{3}^{\prime}$ satisfying $t_{v_{j}}q_{3}^{\prime}=q_{3}^{\prime}+\nu^{1}\left(t_{v_{j}}\right)X,\forall j=1,2,3$. Then $q_{3}-q_{3}^{\prime}$ is invariant under the action of $H$. Thus, the original model $\mathcal{M}$ can be continuously changed into $\mathcal{M}'$ through a path of $H$-symmetric SRE states. In particular, it reduces to a 2D system with the invertible topologicial order of $L\nu^{1}\left(t_{v_{1}}\right)$ copies of $E_{8}$ states, when the system is compactified such that $t_{v_{1}}^{L}=1$.
\end{proof}

\begin{rmk}
This compactification procedure provides an alternate interpretation of $\nu^{1}\left(t_{v_{1}}\right)$, which is now clearly independent of the dimensional reduction details and invariant under a continuous change of cSPT states. Thus, $\nu^{1}\left(t_{v_{1}}\right)$ is well-defined for a cSPT phase.
\end{rmk}

\begin{prp}
Suppose that $G$ contains an orientation-reversing symmetry $r$. Given two $G$-symmetric models $\mathcal{M}_{a}$ and $\mathcal{M}_{b}$ on $G$-simplicial complex structures $X_{a}$ and $X_{b}$ of $\mathbb{E}^{3}$ respectively, let $\nu_{a}^{1}\left(r\right)\pmod2$ and $\nu_{b}^{1}\left(r\right)\pmod2$ be specified by their $E_{8}$ configurations. If $\mathcal{M}_{a}$ and $\mathcal{M}_{b}$ are in the same $G$-SPT phase, then $\nu_{a}^{1}\left(r\right)=\nu_{b}^{1}\left(r\right)\pmod2$.
\end{prp}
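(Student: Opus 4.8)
The plan is to show that the residue $\nu^{1}(r)\bmod 2$ is unaffected by the three operations that can relate the data extracted from $\mathcal{M}_{a}$ and $\mathcal{M}_{b}$: (1) a different choice of the primitive $q_{3}$ with $\partial q_{3}=q_{2}$; (2) a change of the $G$-simplicial structure; and (3) a continuous $G$-symmetric deformation of the state. Since $\mathcal{M}_{a}$ and $\mathcal{M}_{b}$ represent the same phase, they are connected---after subdividing $X_{a}$ and $X_{b}$ to a common $G$-simplicial refinement and, if stable equivalence requires it, stacking with trivial product states---by a path of $G$-symmetric SRE states of uniformly bounded correlation length; here I invoke the correlation-length conjecture of Sec.~\ref{susbusbsec:dimensional_reduction}. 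Because $E_{8}$ configurations add under stacking, $\nu^{1}$ is additive, and trivial product states contribute an even $\nu^{1}(r)$ (shown below), so the stacking is harmless modulo $2$. Invariance under (1)--(3) therefore yields $\nu^{1}_{a}(r)\equiv\nu^{1}_{b}(r)\pmod 2$.

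Operations (1) and (2) are immediate. For (1), replacing $q_{3}$ by $q_{3}+\nu^{0}X$ with $\nu^{0}\in\ZZZ$ shifts $\nu^{1}$ by the coboundary $\paren{\phi(g)-1}\nu^{0}$, as recorded below Eq.~(\ref{eq:mu1}); on an orientation-reversing $r$ this equals $-2\nu^{0}$, an even integer. For (2), if $X'$ refines $X$ then subdividing each simplex carries $q_{3}\mapsto q_{3}'$ and $X\mapsto X'$ while preserving both $\partial q_{3}=q_{2}$ and the defining relation $g q_{3}=q_{3}+\nu^{1}(g)X$, so $\nu^{1}$ is literally unchanged. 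This lets me assume throughout that the two models live on a common complex.

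The heart of the argument is (3). Because the correlation length stays below the simplex size along the path, the dimensional reduction can be performed continuously, and the only way $q_{2}=\partial q_{3}$ can change is by symmetric nucleation or annihilation of $E_{8}/\overline{E_{8}}$ bubbles. A bubble blown on a generic region $R$ (trivial stabilizer) must, by symmetry, be blown on its entire orbit, changing $q_{3}$ by the orbit sum $c=\sum_{g\in G}g[R]$, where $[R]$ is $R$ with ambient orientation; since $hc=\sum_{g}(hg)[R]=c$ for all $h\in G$, this $c$ is genuinely $G$-invariant and $h\paren{q_{3}+c}=\paren{q_{3}+c}+\nu^{1}(h)X$, leaving $\nu^{1}$ unchanged. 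The crucial point is that a bubble cannot be deposited symmetrically on the fixed locus of $r$: if $rR=R$ then $r[R]=-[R]$, so the $g=e$ and $g=r$ terms of the orbit sum cancel. Physically this is exactly the obstruction of Figure~\ref{fig:sg47_E8_STO}---a chiral $\mathfrak{e_{f}m_{f}}$ (or $E_{8}$) layer cannot respect the orientation-reversing $r$ on its invariant plane or point. Hence every admissible change of $q_{3}$ is by such $G$-invariant chains, and $\nu^{1}(r)$ is preserved exactly along the path. Evaluating the invariant on any endpoint with no $E_{8}$ states gives $q_{2}=0$, so $q_{3}=mX$ with $m\in\ZZZ$ because $B_{3}(X)=\ZZZ\cdot X$; then $r q_{3}=m\,\phi(r)X=q_{3}-2mX$ and $\nu^{1}(r)=-2m\equiv 0\pmod 2$, completing the chain of equalities.

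The hard part will be justifying the dichotomy in step (3) rigorously: that a general bounded-correlation-length $G$-symmetric deformation really does decompose into generic bubble moves (which fix $\nu^{1}$) together with the coboundary ambiguity (which shifts it by an even amount), with \emph{no} move able to change $\nu^{1}(r)$ by an odd integer. This rests on the correlation-length conjecture, needed to make the reduction uniform in the path parameter, and on the anomaly argument of Sec.~\ref{subsubsec:cSPT_Pmmm}, needed to forbid a symmetric unpaired $E_{8}$ on the $r$-fixed locus. The residual technical difficulty is the orientation bookkeeping that shows bubbles straddling that locus always organize into the self-cancelling pairs above rather than into a single symmetric layer; once this is in hand, the statement follows by combining the endpoint computation with invariance under (1)--(3).
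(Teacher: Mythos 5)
Your preliminary reductions are fine: the coboundary ambiguity shifts $\nu^{1}(r)$ by $(\phi(r)-1)\nu^{0}=-2\nu^{0}$, refinement of the triangulation leaves $\nu^{1}$ literally unchanged, and a state with $q_{2}=0$ has $q_{3}=mX$, hence $\nu^{1}(r)=-2m$. The genuine gap is exactly where you flag it, and it is not a ``residual technical difficulty''---it is the entire content of the proposition. You assert that along a path of bounded-correlation-length $G$-symmetric SRE states, the extracted $E_{8}$ configuration can only change by orbit sums of generic bubbles, i.e.\ by $q_{3}\mapsto q_{3}+c$ with $gc=c$ for all $g\in G$. But the dimensional reduction attaches no canonical $q_{2}$ to a state (the reduction involves choices at every step), and a deformation need not proceed by nucleating or annihilating bubbles: layers can slide, merge, reconnect, or the reduction scheme itself can change along the path. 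Your observation that an $r$-invariant region carries $r[R]=-[R]$, so a chiral bubble cannot sit symmetrically on the fixed locus, only rules out one particular class of moves; it does not show that an arbitrary symmetric deformation decomposes into such moves. Without that decomposition the ``chain of equalities'' never closes.

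The paper's proof (Sec.\,\ref{subsubsec:H1_invariance}) avoids classifying deformations altogether, which is why it succeeds: it only ever needs to \emph{construct} specific symmetric deformations (the easy direction), never to enumerate all of them. Arguing by contradiction, suppose $\nu_{a}^{1}(r)$ is odd and $\nu_{b}^{1}(r)$ is even. Evenness lets one choose $q_{3}^{b}$ with $rq_{3}^{b}=q_{3}^{b}$, and attaching $-q_{3}^{b}(\varsigma)$ copies of $E_{8}$ bubbles to each 3-simplex produces an explicit $r$-symmetric inverse $\overline{\mathcal{M}_{b}}$. Since the phases agree, $\mathcal{M}_{a}\oplus\overline{\mathcal{M}_{b}}$ is trivial, so it admits an $r$-symmetric gapped surface, while its $E_{8}$ data obeys $rq_{3}=q_{3}-X$. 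A further symmetric bubble move (legitimate because $r(q_{3}'-q_{3})=q_{3}'-q_{3}$) concentrates the configuration onto the $r$-invariant plane $\Pi$, yielding a purported gapped $r$-symmetric state that glues two $r$-related half-surfaces to a single $E_{8}$ layer on $\Pi$. Counting chiral boson modes as in Figure \ref{fig:sg47_E8_STO}: the half-surfaces contribute $8n+8n=16n$ co-propagating modes, which the $E_{8}$ layer's $8$ modes can never cancel, so the state cannot be gapped---contradiction. Your fixed-locus cancellation is the germ of this same mod-2 obstruction, but to turn your outline into a proof you would have to import precisely this anomaly argument at the step you left open.
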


\begin{proof}
The triangulation of $X_{a}$ partitions each tetrahedron of $X_{b}$ into convex polyhedrons, which can be further triangulated resulting a simplicial complex $X$ finer than both $X_{a}$ and $X_{b}$. Naturally, both $\mathcal{M}_{a}$ and $\mathcal{M}_{b}$ can be viewed as models on $X$ for the convenience of making a comparison; the values of  $\left[\nu_{a}^{1}\right]$ and $\left[\nu_{b}^{1}\right]$ are clearly invariant when computed on a finer triangulation.
	
To make a proof by contradiction, suppose $\nu_{a}^{1}\left(r\right)\neq\nu_{b}^{1}\left(r\right)\pmod2$. Without loss of generality, we may assume $\nu_{a}^{1}\left(r\right)\pmod2=1$ and $\nu_{b}^{1}\left(r\right)\pmod2=0$. To compare the difference between $\mathcal{M}_{a}$ and $\mathcal{M}_{b}$, let us construct an inverse of $\mathcal{M}_{b}$ with symmetry $r$ respected. Since $\nu_{b}^{1}\left(r\right)\pmod2=0$, the $E_{8}$ configuration of $\mathcal{M}_{b}$ can be described by $\partial q_{3}^{b}$ with $q_{3}^{b}\in C_{3}\left(X\right)$ satisfying $rq_{3}^{b}=q_{3}^{b}$. Let $\overline{\mathcal{M}_{b}^{\prime}}$ be a model obtained by attaching a bubble of $-q_{3}^{b}\left(\varsigma\right)$ copies of $E_{8}$ states to $\partial\varsigma,\forall\varsigma\in\Delta_{3}\left(X\right)$ from its inside. Then adding $\overline{\mathcal{M}_{b}^{\prime}}$ to $\mathcal{M}_{b}$ cancels its $E_{8}$ configuration; the resulting state may only have group cohomology SPT phases left on simplices and hence has an inerse, denoted $\overline{\mathcal{M}_{b}^{\prime\prime}}$. Let $\overline{\mathcal{M}_{b}}=\overline{\mathcal{M}_{b}^{\prime}}\oplus\overline{\mathcal{M}_{b}^{\prime\prime}}$ (\emph{i.e.}, the stacking of $\overline{\mathcal{M}_{b}^{\prime}}$ and $\overline{\mathcal{M}_{b}^{\prime\prime}}$). Then $\overline{\mathcal{M}_{b}}$ clearly is an inverse of $\mathcal{M}_{b}$ with symmetry $r$ respected. Since $\mathcal{M}_{a}$ and $\mathcal{M}_{b}$ are in the same $G$-SPT phase, $\overline{\mathcal{M}_{b}}$ is also an inverse of $\mathcal{M}_{a}$ with symmetry $r$ respected. Thus, $\mathcal{M}_{a}\oplus\overline{\mathcal{M}_{b}}$ admits a $r$-symmetric surface.  Moreover, since $\nu_{a}^{1}\left(r\right)\pmod2=1$, the $E_{8}$ configuration of $\mathcal{M}_{a}\oplus\overline{\mathcal{M}_{b}}$ can be described by $\partial q_{3}$ for some $q_{3}\in C_{3}\left(X\right)$ satisfying $rq_{3}=q_{3}-X$.

To proceed, we pick an $r$-symmetric region $Y$ with surface shown in Figure \ref{fig:sg47_E8_STO}. For convenience, $Y$ is chosen to be a subcomplex of $X$ (\emph{i.e.}, union of simplices in $X$). In addition, $r$ can be an inversion, a rotoinversion, a reflection, and a glide reflection in three dimensions. Let $\Pi$ be a plane passing the inversion/rotoinversion center, the mirror plane, and the glide reflection plane respectively. Clearly, $r\Pi=\Pi$. Adding $\Pi$ to the triangulation of $Y$, some tetrahedrons are divided in convex polyhedrons, which can be further triangulated resulting in a finer triangulation of $Y$. Below, we treat $Y$ as a simplicial complex specified by the new triangulation. By restriction, $q_{3}$ can be viewed as an element of $C_{3}\left(Y\right)$ satisfying $rq_{3}=q_{3}-Y$. Let $\mathcal{M}$ be a model on $Y$ with an $r$-symmetric SRE surface and a bulk identical to $\mathcal{M}_{a}\oplus\overline{\mathcal{M}_{b}}$. Thus, $\partial q_{3}$ describes the bulk $E_{8}$ configuration of $\mathcal{M}$.

On the other hand, putting an $r$-symmetric $E_{8}$ state on $\Pi$ gives a configuration described by $\partial q_{3}^{\prime}$ with $q_{3}^{\prime}\left(\varsigma\right)$ equal to $1$ for all 3-simplex $\varsigma$ on side of $\Pi$ and $0$ on the other side; accordingly, $rq_{3}^{\prime}=q_{3}^{\prime}-Y$ and hence $r\left(q_{3}^{\prime}-q_{3}\right)=q_{3}^{\prime}-q_{3}$. Thus, adding a bubble of $q_{3}^{\prime}\left(\varsigma\right)-q_{3}\left(\varsigma\right)$ copies of $E_{8}$ states to $\partial\varsigma,\forall\varsigma\in\Delta_{3}\left(Y\right)$ from inside gives an $r$-symmetric SRE model, denoted $\mathcal{M}'$. By construction, the bulk $E_{8}$ configuration of $\mathcal{M}'$ get concentrated on $\Pi$; more precisely, $\mathcal{M}'$ can be viewed as a gluing result of the two half surfaces (red and blue) in Figure \ref{fig:sg47_E8_STO} and an $E_{8}$ state on $\Pi$. As $\partial Y$ hosts no anyons, each half surface has $8n$ co-propagating chiral boson modes along its boundary, where $n\in\mathbb{Z}$. Due to the symmetry $r$, they add up to $16n$ co-propagating chiral boson modes, which cannot be canceled by boundary modes of the $E_{8}$ state on $\Pi$. This implies that $\mathcal{M}'$ cannot be gapped, which contradicts that $E_{8}$ is SRE and hence disproves our initial assumption. Therefore, $\nu_{a}^{1}\left(r\right)=\nu_{b}^{1}\left(r\right)\pmod2$ for any two models $\mathcal{M}_{a}$ and $\mathcal{M}_{b}$ in the same $G$-symmetric cSPT phase.
\end{proof}

By Lemma 1 in App.\,A of work \cite{SongXiongHuang}, $\left[\nu^{1}\right]\in H^{\phi + 1}_G \paren{\pt; \ZZZ}$ is specified by $\nu^{1}\left(t_{v_{1}}\right)$, $\nu^{1}\left(t_{v_{2}}\right)$, $\nu^{1}\left(t_{v_{3}}\right)$ together with $\nu^{1}\left(r\right)\pmod2\in\mathbb{Z}_{2}$ if $G$ is non-orientation-preserving, where $t_{v_{1}}$, $t_{v_{2}}$, $t_{v_{3}}$ are three linearly independent translation symmetries and $r$ is an orientation-reversing symmetry. Combining the above two facts, we get that any two models (probably on different simplicial complex structures of $\mathbb{E}^{3}$) in the same $G$-symmetric cSPT phase must determine a unique $\left[\nu^{1}\right]\in H^{\phi + 1}_G \paren{\pt; \ZZZ}$. Thus, $\mathfrak{D}$ in Eq.~(\ref{eq:D}) is well-defined, independent of the details of the dimensional reduction. Moreover, it clearly respects the group structure.

\subsubsection{Construction of $H^{\phi + 1}_G \paren{\pt; \ZZZ}$ cSPT phases\label{subsubsec:Construction}}

To show that the group structure of $\SPT^3\paren{G, \phi}$ (\emph{i.e.}, $G$-SPT phases in $d$ spatial dimensions) is $H^{\phi + 5}_G \paren{\pt; \ZZZ}\oplus H^{\phi + 1}_G \paren{\pt; \ZZZ}$, we analyze three group homomorphisms $\mathfrak{I}$, $\mathfrak{C}$, and $\mathfrak{D}$, which can be organized as
\begin{equation} 
\begin{tikzcd}
H^{\phi + 5}_G \paren{\pt; \ZZZ}   \arrow[r, "\mathfrak{I}", hook] & 
\SPT^3\paren{G, \phi}	 \arrow[r, "\mathfrak{D}", twoheadrightarrow] &
H^{\phi + 1}_G \paren{\pt; \ZZZ}. \arrow[l, "\mathfrak{C}", bend right, swap]
\end{tikzcd}
\label{eq:extension}
\end{equation}
A hooked (resp. two-head) arrow is used to indicate that $\mathfrak{I}$ is injective (resp. $\mathfrak{D}$ is surjective). The map $\mathfrak{I}$ is an inclusion identifying $H^{\phi + 5}_G \paren{\pt; \ZZZ}$ as a subgroup of $\SPT^3\paren{G, \phi}$; Refs.~\cite{Huang_dimensional_reduction,ThorngrenElse} have shown that $H^{\phi + 5}_G \paren{\pt; \ZZZ}$ classifies the $G$-symmetric cSPT phases built with lower dimensional group cohomology phases protected site symmetry. We have defined $\mathfrak{D}$ via dimensional reduction. Clearly, $\mathfrak{D}$ maps all $G$-SPT phases labeled by $H^{\phi + 5}_G \paren{\pt; \ZZZ}$ to $0\in H^{\phi + 1}_G \paren{\pt; \ZZZ}$. Conversely, the $E_{8}$ configuration of any model on $X_{2}$ with $\left[\nu^{1}\right]=0$ can be described by $\partial q_{3}$ with $q_{3}\in C_{3}\left(X\right)$ satisfying $gq_{3}=q_{3},\forall g\in G$. Thus, it is possible to grow a bubble of $-q_{3}\left(\varsigma\right)$ of $E_{8}$ states inside $\varsigma$ to its boundary for all $\varsigma\in\Delta_{3}\left(X\right)$ in a $G$-symmetric way, canceling all $E_{8}$ states on 2-simplicies. Thus, any phase with $\left[\nu^{1}\right]=0$ can be represented by a model built with group cohomology phases only. Formally, the image of $\mathfrak{I}$ equals the kernel of $\mathfrak{D}$. Below, we will define the group homomophism $\mathfrak{C}$ by constructing a $G$-symmetric SRE state representing a $G$-SPT phase with each $\left[\nu^{1}\right]\in H^{\phi + 1}_G \paren{\pt; \ZZZ}$ and show that $\mathfrak{D}\circ\mathfrak{C}$ equals the identity map on $H^{\phi + 1}_G \paren{\pt; \ZZZ}$, which implies the surjectivity of $\mathfrak{D}$ and further
\begin{equation}
\SPT^3\paren{G, \phi}\cong H^{\phi + 5}_G \paren{\pt; \ZZZ}\oplus H^{\phi + 1}_G \paren{\pt; \ZZZ}
\end{equation}
by the splitting lemma in homological algebra. 

Suppose that $t_{v_{1}}$, $t_{v_{2}}$, and $t_{v_{3}}$ generate the translation subgroup of $G$. Let $G_{0}$ be the orientation-preserving subgroup of $G$, \emph{i.e.}, $G_{0}\coloneqq\left\{ g\in G\;|\;\phi\left(g\right)=1\right\} $. By Lemma 1 in App.\,A of work \cite{SongXiongHuang}, each $[\nu^{1}]\in H^{\phi + 1}_G \paren{\pt; \ZZZ}$ can be parameterized by
\begin{equation}
\nu^{1}\left(t_{v_{1}}\right),\nu^{1}\left(t_{v_{2}}\right),\nu^{1}\left(t_{v_{3}}\right)\in\mathbb{Z}
\end{equation}
together with $\nu^{1}\left(r\right)\pmod2\in\mathbb{Z}_{2}$ if $G$ is non-orientation-preserving (\emph{i.e.}, $G_{0}\neq G$), where $r$ is an orientation-reversing symmetry. Below, we construct models for generators of $H^{\phi + 1}_G \paren{\pt; \ZZZ}$ and then the model corresponding to a generic $[\nu^{1}]$ will be obtained by stacking.

For a non-orientation-preserving space group $G$, let $\nu_{r}^{1}$ be a 1-cocycle satisfying $\nu_{r}^{1}\left(g\right)=0,\forall g\in G_{0}$ and $\nu_{r}^{1}\left(r\right)\pmod2=1$; the corresponding $[\nu_{r}^{1}]\in H^{\phi + 1}_G \paren{\pt; \ZZZ}$ clearly has order 2. Let us first construct a $G$-symmetric SRE state with $[\nu_{r}^{1}]\in H^{\phi + 1}_G \paren{\pt; \ZZZ}$, which is done by recasting the illustrative construction in Figure \ref{fig:sg47_E8} in a general setting. We attach an bubble of $\mathfrak{e_{f}}\mathfrak{m_{f}}$ topological state to $\partial\mathcal{F}$ from inside and duplicate it at $\partial\left(g\mathcal{F}\right)$ by all symmetries $g\in G$. Then there are two copies of $\mathfrak{e_{f}}\mathfrak{m_{f}}$ topological states on each 2-simplex, which is always an interface between two fundamental domains $g_{1}\mathcal{F}$ and $g_{2}\mathcal{F}$ for some $g_{1},g_{2}\in G$. Let $\left(\mathfrak{e_{f}},\mathfrak{e_{f}}\right)$ (resp. $\left(\mathfrak{m_{f}},\mathfrak{m_{f}}\right)$) denote the anyon formed by pairing $\mathfrak{e_{f}}$ (resp. $\mathfrak{m_{f}}$) from each copy. Condensing $\left(\mathfrak{e_{f}},\mathfrak{e_{f}}\right)$ and $\left(\mathfrak{m_{f}},\mathfrak{m_{f}}\right)$ on all 2-simplices results in a $G$-symmetric SRE state, denoted $\left|\Psi_{r}\right\rangle $. Its $E_{8}$ configuration can be described by $\partial q_{3}$ with $q_{3}=\sum_{g\in G_{0}}g\mathcal{F}$; there is a (resp. no) $E_{8}$ state on the interface between $g_{1}\mathcal{F}$ and $g_{2}\mathcal{F}$ with $\phi\left(g_{1}\right)\neq\phi\left(g_{2}\right)$ (resp. $\phi\left(g_{1}\right)=\phi\left(g_{2}\right)$). Clearly, $rq_{3}=q_{3}-X$ and hence $\nu^{1}\left(r\right)\pmod2=1$. In addition, with only $G_{0}$ respected, the $E_{8}$ configuration can be canceled by growing a bubble of an $\overline{E_{8}}$ state inside each $g\mathcal{F}$ for $g\in G_{0}$. Thus, as a $G_{0}$-symmetric cSPT phase, $\left|\Psi_{r}\right\rangle $ corresponds to $0\in H^{\phi + 1}_{G_0} \paren{\pt; \ZZZ}$; in particular, $\nu^{1}\left(t_{v}\right)=0$ for any translation $t_{v}$. Therefore, $\left|\Psi_{r}\right\rangle $ is a $G$-symmetric SRE state mapped to $[\nu_{r}^{1}]\in H^{\phi + 1}_G \paren{\pt; \ZZZ}$ by $\mathfrak{D}$. Moreover, the order of $\left|\Psi_{r}\right\rangle $ is also 2 in $\SPT^3\paren{G, \phi}$ by the same argument as the one in Sec.~\ref{subsubsec:cSPT_Pmmm} showing that stacking two copies of the models in Figure \ref{fig:sg47_E8} gives a trivial cSPT phase.

Below, the construction of $\mathfrak{C}$ will be completed case by case based on the number of symmetry directions in the international (Hermann-Mauguin) symbol of $G$.

\paragraph{$G$ with more than one symmetry direction}

For $G$ with more than one symmetry direction, if $G$ is orientation-preserving, then $H^{\phi + 1}_G \paren{\pt; \ZZZ}=0$ and hence the definition of $\mathfrak{C}$ is obvious: $\mathfrak{C}$ maps $0\in H^{\phi + 1}_G \paren{\pt; \ZZZ}$ to the trivial phase in $G\text{-SPT}$. Clearly, $\mathfrak{D}\circ\mathfrak{C}$ is the identity.

On the other hand, if $G$ does not preserve the orientation of $\mathbb{E}^{3}$, then $H^{\phi + 1}_G \paren{\pt; \ZZZ}=\mathbb{Z}_{2}$ with $[\nu_{r}^{1}]$ the only nontrivial element. Since the order of $\left|\Psi_{r}\right\rangle $ is 2 in $\SPT^d\paren{G, \phi}$, the group homomorphism $\mathfrak{C}$ can be specified by mapping $[\nu_{r}^{1}]$ to the cSPT phase presented by $\left|\Psi_{r}\right\rangle $. By construction, $\mathfrak{D}\circ\mathfrak{C}$ equals the identity on $H^{\phi + 1}_G \paren{\pt; \ZZZ}$.

\paragraph{$G=P1$ and $P\overline{1}$}

For $G=P1$ and $P\overline{1}$, we pick a coordinate system such that $t_{v_{1}}$, $t_{v_{2}}$, $t_{v_{3}}$ work as $t_{x}$, $t_{y}$, $t_{z}$ in Eqs.~(\ref{eq:tx}-\ref{eq:tz}) (and such that the origin is an inversion center\footnote{Every orientation-reversing symmetry of $P\overline{1}$ is an inversion.} of $r$ for $P\overline{1}$). Let $[\nu_{i}^{1}]$ be an element of $H^{\phi + 1}_G \paren{\pt; \ZZZ}$ presented by a 1-cocycle satisfying $\nu_{i}^{1}\left(t_{v_{j}}\right)=\delta_{ij}$ and $\nu_{i}^{1}\left(r\right)\pmod2=1$ for $i=1,2,3$. As an example, a $G$-symmetric SRE state $\left|\Psi_{2}\right\rangle $ for $[\nu_{2}^{1}]$ can be constructed by putting an $E_{8}$ state $\left|E_{8}^{y}\right\rangle $ on the plane $y=0$ with the symmetries $t_{v_{1}},t_{v_{3}}$ (as well as the inversion $r:\left(x,y,z\right)\mapsto-\left(x,y,z\right)$ for $P\overline{1}$) respected and its translation image $t_{v_{2}}^{n}\left|E_{8}^{y}\right\rangle $ on planes $y=n$ for $n\in\mathbb{Z}$. In particular, since there is a single $E_{8}$ layer passing the inversion center of $r$ in the case $G=P\overline{1}$, the $E_{8}$ configuration of $\left|\Psi_{2}\right\rangle $ determines $\nu_{2}^{1}\left(r\right)\pmod2=1$. A $G$-symmetric SRE state $\left|\overline{\Psi}_{2}\right\rangle $, inverse to $\left|\Psi_{2}\right\rangle $ in $\SPT^3\paren{G, \phi}$, can be obtained by replacing each $E_{8}$ layer by its inverse. Clearly, $\left|\overline{\Psi}_{2}\right\rangle $ is mapped to $-[\nu_{y}^{1}]$ by $\mathfrak{D}$. Analogously, we construct a $G$-symmetric SRE state $\left|\Psi_{i}\right\rangle $ for $[\nu_{i}^{1}]$ and its inverse $\left|\overline{\Psi}_{i}\right\rangle $ for $i=1,3$ as well.

Noticing that each $[\nu^{1}]\in H^{\phi + 1}_G \paren{\pt; \ZZZ}$ can uniquely be expressed as $\sum_{i=1}^{3}n_{i}[\nu_{i}^{1}]$ for $G=P1$ and $n_{r}[\nu_{r}^{1}]+\sum_{i=1}^{3}n_{i}[\nu_{i}^{1}]$ for $G=P\overline{1}$ with $n_{r}=0,1$ and $n_{1},n_{2},n_{3}\in\mathbb{Z}$, we can define $\mathfrak{C}\left([\nu^{1}]\right)$ as the phase presented by the $G$-symmetric SRE state $\left|\Psi_{r}\right\rangle ^{\otimes n_{r}}\otimes\left|\Psi_{1}\right\rangle ^{\otimes n_{1}}\otimes\left|\Psi_{2}\right\rangle ^{\otimes n_{2}}\otimes\left|\Psi_{3}\right\rangle ^{\otimes n_{3}}$, where $\left|\Psi_{r}\right\rangle ^{\otimes n_{r}}$ is needed only for $G=P\overline{1}$ and $\left|\Psi\right\rangle ^{\otimes n}$ denotes the stacking of $\left|n\right|$ copies of $\left|\Psi\right\rangle $ (resp. its inverse $\left|\overline{\Psi}\right\rangle $) for $n\geq0$ (resp. $n<0$). In particular, $\left|\Psi\right\rangle ^{\otimes0}$ denotes any trivial state. By construction, $\mathfrak{C}$ is a group homomorphism and $\mathfrak{D}\circ\mathfrak{C}$ is the identity on $H^{\phi + 1}_G \paren{\pt; \ZZZ}$.

\paragraph{G with exactly one symmetry diretion other than $1$ or $\overline{1}$
	\label{subsubsec:Construction1}}

For $G$ with exactly one symmetry direction other than $1$ or $\overline{1}$ (e.g. $I\overline{4}$, $P2$, $P2_{1}/c$, $Pm$), we pick a Cartesian coordinate system whose $z$ axis lies along the symmetry direction. With each point represented by a column vector of its coordinates $u=\left(x,y,z\right)^{\mathsf{T}}\in\mathbb{R}^{3}$, each $g\in G$ is represented as $u\mapsto\phi\left(g\right)R_{z}\left(\varphi\right)u+w$ with $w=\left(w^{x},w^{y},w^{z}\right)^{\mathsf{T}}\in\mathbb{R}^{3}$ and an orthogonal matrix
\begin{equation}
R_{z}\left(\varphi\right)\coloneqq\left(\begin{array}{ccc}
\cos\varphi & -\sin\varphi & 0\\
\sin\varphi & \cos\varphi & 0\\
0 & 0 & 1
\end{array}\right)
\end{equation}
describing a rotation about the $z$ axis. Clearly, $w^{z}$ is independent of the origin position for $g\in G_{0}$, where $G_{0}$ is the orientation-preserving subgroup of $G$. Let $\mathbb{A}$ be the collection of $w^{z}$ of all orientation-preserving symmetries. Then $\mathbb{A}=\kappa\mathbb{Z}$ for some positive real number $\kappa$. Pick $h\in G_{0}$ with $w^{z}=\kappa$.

If $G_{0}\neq G$, then $r\in G-G_{0}$ may be an inversion, a rotoinversion, a reflection, or a glide reflection. Clearly, there is a plane $\Pi$ perpendicular to the $z$ direction satisfying $r\Pi=\Pi$. If $G_{0}=G$, we just pick $\Pi$ to be any plane perpendicular to the $z$ direction. For convenience, we may choose the coordinate origin on $\Pi$ such that $w^{z}=0$ for $r$. In such a coordinate system,  $w^{z}\in\kappa\mathbb{Z}$ for all $g\in G$ even if $\phi\left(g\right)=-1$. Moreover, $G_{\Pi}\coloneqq\left\{ g\in G|g\Pi=\Pi\right\} $ contains symmetries with $w^{z}=0$. It is a wallpaper group for $\Pi$, which preserves the orientation of $\Pi$.

Putting a $G_{\Pi}$-symmetric $E_{8}$ layer $\left|E_{8}^{z}\right\rangle $ on $\Pi$ and its duplicate $h^{n}\left|E_{8}^{z}\right\rangle $ on $h^{n}\Pi$, we get an $h$-symmetric SRE state $\left|\Psi_{z}\right\rangle \coloneqq\cdots\otimes h^{-1}\left|E_{8}^{z}\right\rangle \otimes\left|E_{8}^{z}\right\rangle \otimes h\left|E_{8}^{z}\right\rangle \otimes\cdots$. For all $f\in G_{\Pi}$, $fh^{n}\left|E_{8}^{z}\right\rangle =h^{\phi\left(f\right)n}f_{n}\left|E_{8}^{z}\right\rangle =h^{\phi\left(f\right)n}\left|E_{8}^{z}\right\rangle $, where $f_{n}=h^{-\phi\left(f\right)n}gh^{n}$ is an element of $G_{\Pi}$ and hence leaves $\left|E_{8}^{z}\right\rangle $ invariant. Thus, $\left|\Psi_{z}\right\rangle $ is $G_{\Pi}$-symmetric and hence $G$-symmetric, as each $g=G$ can be expressed as $h^{m}f$ with $m\in\mathbb{Z}$ and $f\in G_{\Pi}$. A $G$-symmetric SRE state $\left|\overline{\Psi}_{z}\right\rangle $, inverse to $\left|\Psi_{z}\right\rangle $, can be obtained by replacing each $E_{8}$ by its inverse.

Let $[\nu_{z}^{1}]\in H^{\phi + 1}_G \paren{\pt; \ZZZ}$ (presented by a 1-cocyle $\nu_{z}^{1}$) be the image of $\left|\Psi_{z}\right\rangle $ under $\mathfrak{D}$. By construction, $\nu_{z}^{1}\left(g\right)=w^{z}/\kappa$ on each orientation-preserving symmetry $g:u\mapsto R_{z}\left(\varphi\right)u+\left(w^{x},w^{y},w^{z}\right)$. When $G_{0}\neq G$, a single $E_{8}$ layer on $\Pi$ implies the nonexistence of $r$-symmetric $q_{3}$ with $\partial q_{3}$ describing the $E_{8}$ configuration of $\left|E_{8}^{z}\right\rangle $ and hence $\nu_{z}^{1}\left(r\right)\pmod2=1$. Clearly, $\left|\overline{\Psi}_{z}\right\rangle $ is mapped to $-[\nu_{z}^{1}]$ by $\mathfrak{D}$.

Further, we notice that every $[\nu^{1}]\in H^{\phi + 1}_G \paren{\pt; \ZZZ}$ can be uniquely expressed as $n_{h}[\nu_{z}^{1}]$ (resp. $n_{r}[\nu_{r}^{1}]+n_{h}[\nu_{z}^{1}]$) when $G_{0}=G$ (resp. $G_{0}\neq G$); comparing values on $r$ and $h$ gives $n_{h}=\nu^{1}\left(h\right)\in\mathbb{Z}$ and $n_{r}=\nu^{1}\left(r\right)+\nu^{1}\left(h\right)\pmod2\in\left\{ 0,1\right\} $. Thus, a map $\mathfrak{C}$ can be defined by $n_{h}[\nu_{z}^{1}]\mapsto\left|\Psi_{z}\right\rangle ^{\otimes n_{h}}$ (resp. $n_{r}[\nu_{r}^{1}]+n_{h}[\nu_{z}^{1}]\mapsto\left|\Psi_{r}\right\rangle ^{\otimes n_{r}}\otimes\left|\Psi_{z}\right\rangle ^{\otimes n_{h}}$), where $\left|\Psi\right\rangle ^{\otimes n}$ denotes the stacking of $\left|n\right|$ copies of $\left|\Psi\right\rangle $ (resp. its inverse $\left|\overline{\Psi}\right\rangle $) for $n\geq0$ (resp. $n<0$). In particular, $\left|\Psi\right\rangle ^{\otimes0}$ denotes any trivial state. Since $\left|\Psi_{r}\right\rangle $ has order 2 in $\SPT^3\paren{G, \phi}$, $\mathfrak{C}$ is a group homomorphism. By construction, $\mathfrak{D}\circ\mathfrak{C}$ is clearly the identity on $H^{\phi + 1}_G \paren{\pt; \ZZZ}$.

\chapter{More advanced techniques}
\label{chap:advanced}

In this chapter, we will present some more advanced applications of the minimalist framework that we developed in Chapter \ref{chap:minimalist}:
\begin{enumerate}
\item In Sec.\,\ref{sec:Mayer-Vietoris}, we will derive a general relation between 
the classification of SPT phases with reflection symmetry and the classification of SPT phases without reflection symmetry. It will be shown, for any $G$, that
\begin{equation}
\SPT^d\paren{G \times \paren{\ZZZ \rtimes \ZZZ_2^R}, \phi}, ~~ \SPT^d\paren{G \times \ZZZ_2^R, \phi}, ~~ \SPT^d\paren{G, \phi},
\end{equation}
fit into a long exact sequence, where $\ZZZ$ is generated by a translation (mapped to $1$ under $\phi$) and $\ZZZ_2^R$ is generated by a reflection (mapped to $-1$ under $\phi$). This follows from the Mayer-Vietoris sequence of any generalized cohomology theory.

\item In Sec.\,\ref{sec:Atiyah-Hirzebruch}, we will show that the structure of SPT phases with crystalline symmetries can in general be understood in terms of an Atiyah-Hirzebruch spectral sequence. We will show that different terms on the $E^1$-page of the spectral sequence correspond to SPT phases of different dimensions living on cells of different dimensions with effective internal symmetries. We will propose that there exist previously unconsidered higher-order gluing conditions and equivalence relations, which correspond to higher differentials in the spectral sequence.
\end{enumerate}

Before delving into these applications, we will discuss the construction of classifying spaces and the classification of SPT phases for the special case of space-group symmetries in Sec.\,\ref{sec:space_groups}. We will see that it is possible in this case to write the classification as a generalized cohomology theory equivariant with respect to the point group rather than the space group. This will be useful for Secs.\,\ref{sec:Mayer-Vietoris} and \ref{sec:Atiyah-Hirzebruch}.

The results in Sec.\,\ref{sec:space_groups} are based on private communications with Kiyonori Gomi. The results of Sec.\,\ref{sec:Atiyah-Hirzebruch} are based on Secs.\,III-III A and IV-IV E of my work \cite{Shiozaki2018} with Ken Shiozaki and Kiyonori Gomi. The results in Sec.\,\ref{sec:Mayer-Vietoris} appeared in a minimal form in Sec.\,III B of 
work \cite{Shiozaki2018}, and, if were to be published independently from it, would be joint among the author of this dissertation, Ken Shiozaki, and Kiyonori Gomi. Work \cite{Shiozaki2018} used a homological rather than cohomological formulation for the classification of SPT phases. In the homological formulation, orientation-reversing symmetries do not make the generalized homology twisted. We suspect that the two formulations are equivalent by some sort of Poincar\'e duality. In Sec.\,\ref{sec:Mayer-Vietoris}, we will use the cohomological formulation developed in Chapter \ref{chap:minimalist} of this thesis. In Sec.\,\ref{sec:Atiyah-Hirzebruch}, we will use the homological formulation following work \cite{Shiozaki2018}.

\section{Classifying spaces of space groups}
\label{sec:space_groups}

Consider a $d$-dimensional space group $G$ with translational group $\Pi$ and point group $P$. They fit into a short exact sequence,
\begin{equation}
1 \fromto \Pi \fromto G \xfromto{\pi} P \fromto 1.
\end{equation}
Recall that $G$ is a subgroup of $\RRR^d \rtimes O(d)$, $\Pi$ is the subgroup of $G$ of pure translations, and $P$ is the image of $G$ under the projection $\pi: \RRR^d \rtimes O(d) \fromto O(d)$. We can write the elements of $\RRR^d \rtimes O(d)$ and hence elements of $G$ as $(v, p)$, with $v\in \RRR^d$ and $p\in P$. According to Ref.\,\cite{Gomi_comm}, the classifying space of $G$ can be constructed as follows. Choose any lift
\begin{equation}
a: P \fromto \RRR^d \label{lift}
\end{equation}
such that $\paren{ a(p), p } \in G$; this does not have to be a homomorphism. We know that the classifying space $B \Pi$ of $\Pi$ is the $d$-torus $T^d = \RRR^d / \ZZZ^d$. We define an action of $P$ on $B \Pi$ by
\begin{eqnarray}
p.[v] = [a(p) + pv],
\end{eqnarray}
for $v \in \RRR^d$. We have the usual $P$-action on the contractible universal cover $EP$ of the classifying space $BP$ of $P$. Then the classifying space $BG$ of $G$ can be constructed as
\begin{equation}
BG = B\Pi \times_{P} EP, \label{BS_1}
\end{equation}
where $B\Pi \times_P EP$ denotes the quotient space $\paren{B \Pi \times EP} / P$. It is not hard to show that this is the same space as
\begin{equation}
BG = E\Pi \times_G EP, \label{BS_2}
\end{equation}
where $G$ acts on $E \Pi = \RRR^d$ according to the space-group action and on $EP$ by first projecting $G$ to $P$. This fits into a fiber bundle,
\begin{equation}
\begin{tikzcd}
G \arrow{r} & E\Pi \times EP \arrow{d} \\
& E\Pi \times_G EP
\end{tikzcd}, \label{ES}
\end{equation}
which is indeed universal because $EP \times E\Pi$ is contractible. The above construction works for both symmorphic and nonsymmorphic space groups. For symmorphic space groups, $a$ in Eq.\,(\ref{lift}) can be chosen to be trivial, i.e.\,$a(p) = 0$ for all $p\in P$. For nonsymmorphic space groups, $a$ has to be nontrivial.

Incidentally, there is another fiber bundle,
\begin{equation}
\begin{tikzcd}
B\Pi \arrow{r} & B\Pi \times_{P} EP \arrow{d} \\
& BP
\end{tikzcd},
\end{equation}
which separates the point group from the translational group. It is not a universal fiber bundle.

The classification of SPT phases with space-group symmetries has a special structure too. By the generalized cohomology hypothesis, the classification of SPT phases with space-group symmetry $G$ can be written
\begin{equation}
\SPT^d\paren{G, \phi} \isomorphic h^{\phi + d}_G\paren{\pt}, \label{BS_form}
\end{equation}
where
\begin{equation}
\phi: G \fromto \braces{\pm 1}
\end{equation}
is the homomorphism in Eq.\,(\ref{phi_homomorphism}). For simplicity, we consider the case where the internal representation of translations is unitary, i.e.\,$\phi_{\rm int}(v) = 1$ for all $v \in \Pi$. Since translations do not reverse the orientation, i.e.\,$\phi_{\rm spa}(v) = 1$, we then have $\phi(v) = \phi_{\rm int}(v) \phi_{\rm spa}(v) = 1$. By Εq.\,(\ref{h_phi_n_G_pt}), we have
\begin{equation}
\SPT^d\paren{G, \phi} \isomorphic h^{d + 1}\paren{ EG \times_G \tilde I, EG \times_G \partial \tilde I }.
\end{equation}
From fiber bundle (\ref{ES}), we see that a model for $EG$ is $E\Pi \times EP$. Thus 
\begin{equation}
\SPT^d\paren{G, \phi} \isomorphic h^{d + 1}\paren{ \paren{E\Pi \times EP} \times_G \tilde I, \paren{E\Pi \times EP} \times_G \partial \tilde I }.
\end{equation}
Since translations act on $\tilde I$ trivially, by the comment above Eq.\,(\ref{BS_2}), we can rewrite the above as
\begin{eqnarray}
\SPT^d\paren{G, \phi} &\isomorphic& h^{d + 1}\paren{ \paren{B\Pi \times EP} \times_P \tilde I, \paren{B\Pi \times EP} \times_P \partial \tilde I } \nonumber \\
&\isomorphic& h^{d + 1}\paren{ \paren{B\Pi \times \tilde I} \times_P EP, \paren{B\Pi \times \partial \tilde I} \times_P EP },
\end{eqnarray}
By definitions (\ref{equivariant_of_a_pair})(\ref{twisted_equivariant_of_a_pair}), this is exactly
\begin{equation}
\SPT^d\paren{G, \phi} \isomorphic h^{\phi + d}_P \paren{B\Pi}. \label{c+n_form}
\end{equation}
Thus the classification of SPT phases with space-group symmetry $G$ can not only be written as an equivariant theory with respect to $G$ as in Eq.\,(\ref{BS_form}), but also as an equivariant theory with respect to $P$ as in Eq.\,(\ref{c+n_form}).

Next, suppose that $\phi$ is trivial, i.e.\,each $p \in P$ either preserves orientation and has a unitary internal representation, or reverses orientation and has an antiunitary internal representation. In this case,
\begin{eqnarray}
\SPT^d\paren{G} &\isomorphic& h^{d}_G\paren{\pt}, \\
&\isomorphic& h^{d}_P \paren{B\Pi}. \label{c+n_form_trivial_phi}
\end{eqnarray}
By Eq.\,(\ref{equivariant_of_pt_drop_Y}) and Eq.\,(\ref{BS_1}), we can also write explicitly
\begin{equation}
\SPT^d\paren{G} \isomorphic h^d\paren{B\Pi \times_{P} EP}. \label{fiber_bundle_form}
\end{equation}

It can be shown that, for any $G$ and $\phi$, $h^{\phi + \bullet}_G$ and $h^\bullet_G$ satisfy the axioms of generalized cohomology theories just like $h^\bullet$ does \cite{Gomi_comm}. Another noteworthy point is that, because $\RRR^d$ with the action of a space group $G$ is equivariantly contractible, Eq.\,(\ref{BS_form}) can also be written
\begin{equation}
\SPT^d\paren{G, \phi} \isomorphic h^{\phi + d}_G\paren{\RRR^d}. \label{BS_form_Rd}
\end{equation}
In particular, if $\phi$ is trivial, then
\begin{equation}
\SPT^d\paren{G} \isomorphic h^{d}_G\paren{\RRR^d}. \label{BS_form_Rd_trivial_phi}
\end{equation}

\section{Mayer-Vietoris sequence for SPT phases with reflection symmetry}
\label{sec:Mayer-Vietoris}

Consider the space group generated by a translation $x \mapsto x+1$ and a reflection $x \mapsto -x$. As a group, this is $\ZZZ \rtimes \ZZZ_2$. To indicate that the generator of $\ZZZ_2$ is a reflection, we will write $\ZZZ \rtimes \ZZZ_2^R$. We will consider this symmetry not just for 1D systems, but for higher-dimensional systems as well, where the generators of $\ZZZ \rtimes \ZZZ_2^R$ are represented by
\begin{eqnarray}
\paren{x_1, x_2, \ldots, x_d} &\mapsto& \paren{x_1 + 1, x_2, \ldots, x_d}, \\
\paren{x_1, x_2, \ldots, x_d} &\mapsto& \paren{-x_1, x_2, \ldots, x_d}.
\end{eqnarray}
We assume that the internal representation is unitary for both generators. According to Sec.\,\ref{sec:space_groups}, the classification of SPT phases with $\ZZZ \rtimes \ZZZ_2^R$ symmetry can be written
\begin{equation}
\SPT^d\paren{\ZZZ \rtimes \ZZZ_2^R, \phi} \isomorphic h^{\phi + d}_{\ZZZ_2^R}\paren{B \ZZZ} \isomorphic h^d_{\ZZZ_2^R}\paren{\SS^1},
\end{equation}
where we recall the circle $\SS^1$ is a model for $B \ZZZ$. In fact, it is not difficult to show, given any other symmetry group $G$, that
\begin{equation}
\SPT^d\paren{G \times \paren{\ZZZ \rtimes \ZZZ_2^R}, \phi} \isomorphic h^{\phi + d}_{G \times \ZZZ_2^R}\paren{\SS^1}. \label{MV_GZZ2}
\end{equation}
This is true even if $\phi$ is nontrivial on $G$, as long as $G$ commutes with $\ZZZ \rtimes \ZZZ_2^R$. Similarly, we have
\begin{eqnarray}
\SPT^d\paren{G, \phi} &\isomorphic& h^{\phi + d}_{G}\paren{\pt}, \label{MV_G} \\
\SPT^d\paren{G \times \ZZZ_2^R, \phi} &\isomorphic& h^{\phi + d}_{G \times \ZZZ_2^R}\paren{\pt}. \label{MV_GZ2}
\end{eqnarray}
In what follows, we will derive a general relation among the classifications (\ref{MV_GZZ2})(\ref{MV_G})(\ref{MV_GZ2}) of SPT phases with $G \times \paren{\ZZZ \rtimes \ZZZ_2^R}$, $G$, and $G \times \ZZZ_2^R$ symmetries.

\subsection{Mayer-Vietoris sequence for \texorpdfstring{$\ZZZ \rtimes \ZZZ_2^R$}{Z rtimes Z2R} symmetry}
\label{subsec:MVLES}

According to Sec.\,\ref{sec:space_groups}, $h^{\phi + \bullet}_{G \times \ZZZ_2^R}$ satisfies the axioms of generalized cohomology theories. Any generalized cohomology theory $h^\bullet$ admits a Mayer-Vietoris long exact sequence. Namely, given a CW complex $X$ and two subcomplexes $U$ and $V$ such that $U \cup V = X$, there is a long exact sequence
\begin{equation}
\cdots \fromto h^{n-1}\paren{U \cap V} \fromto h^n\paren{X} \fromto h^n\paren{U} \oplus h^n\paren{V} \fromto h^n\paren{U \cap V} \fromto \cdots \label{MV}
\end{equation}

We now apply Eq.\,(\ref{MV}) to $h^{\phi + \bullet}_{G \times \ZZZ_2^R}\paren{\SS^1}$. First, we note that, if we view $\SS^1$ as the unit circle on the complex plane, then the generator of $\ZZZ_2^R$ will act on it by complex conjugation. We consider the decomposition $\SS^1 = U \cup V$ of $\SS^1$ shown in Figure \ref{fig:decompose_circle}. This yields the following long exact sequence:
\begin{equation}
\cdots \fromto h^{\phi + n-1}_{G \times \ZZZ_2^R}\paren{\pt \sqcup \pt} \fromto h^{\phi + n}_{G \times \ZZZ_2^R}\paren{\SS^1} \fromto h^{\phi + n}_{G \times \ZZZ_2^R}\paren{\pt} \oplus h^{\phi + n}_{G \times \ZZZ_2^R} \paren{\pt} \fromto h^{\phi + n}_{G \times \ZZZ_2^R}\paren{\pt \sqcup \pt} \fromto \cdots \label{second_to_last_MV_LES_h}
\end{equation}
In Eq.\,(\ref{second_to_last_MV_LES_h}), we have used the fact that $U$ and $V$ are both equivariantly contractible and that $U \cap V$ deformation retracts equivariantly to the two-point space $\pt \sqcup \pt$. Furthermore, it can be shown, because $\ZZZ_2^R$ permutes the two points in $\pt \sqcup \pt$, that we have
\begin{equation}
h^{\phi + n}_{G \times \ZZZ_2^R}\paren{\pt \sqcup \pt} \isomorphic h^{\phi + n}_{G} \paren{\pt}.
\end{equation}
Thus we can rewrite the long exact sequence (\ref{second_to_last_MV_LES_h}) as
\begin{equation}
\cdots \fromto h^{\phi + n-1}_{G}\paren{\pt} \fromto h^{\phi + n}_{G \times \ZZZ_2^R}\paren{\SS^1} \fromto h^{\phi + n}_{G \times \ZZZ_2^R}\paren{\pt} \oplus h^{\phi + n}_{G \times \ZZZ_2^R} \paren{\pt} \fromto h^{\phi + n}_{G}\paren{\pt} \fromto \cdots \label{last_MV_LES_h}
\end{equation}

\begin{figure}
\centering
\includegraphics[width=3in]{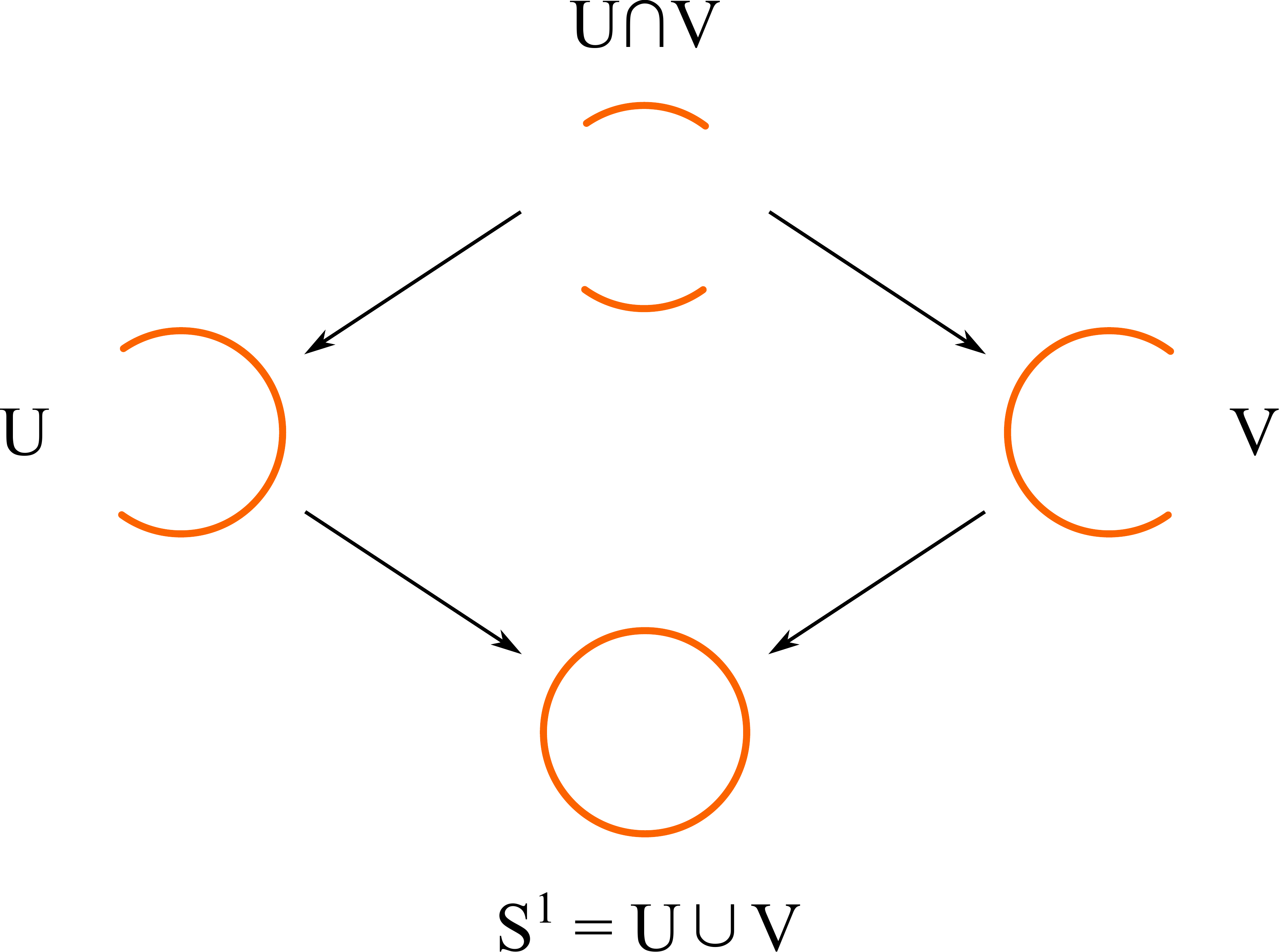}
\caption[A decomposition of the circle $\SS^1 = U \cup V$.]{A decomposition of the circle $\SS^1 = U \cup V$.}
\label{fig:decompose_circle}
\end{figure}

From Eqs.\,(\ref{MV_GZZ2})(\ref{MV_G})(\ref{MV_GZ2}), we see that the terms in the long exact sequence (\ref{last_MV_LES_h}) are precisely the classifications of SPT phases with symmetries $G$, $G \times \paren{\ZZZ \rtimes \ZZZ_2^R}$, $G \times \ZZZ_2^R$, $G$ in one higher dimension, and so forth. Thus we obtain the long exact sequence
\begin{eqnarray}
\cdots ~~ \xfromto{\gamma_1\oplus\gamma_2} ~~ \SPT^{d-1}\paren{G, \phi} &\xfromto{~~\alpha~~}& \SPT^d\paren{G \times\paren{\ZZZ \rtimes \ZZZ_2^R}, \phi} \nonumber\\
&\xfromto{\beta_1\times\beta_2}& \SPT^d\paren{G \times \ZZZ_2^R, \phi} \oplus \SPT^d\paren{G \times \ZZZ_2^R, \phi} \nonumber\\
&\xfromto{\gamma_1\oplus\gamma_2}& \SPT^d\paren{G, \phi} ~~ \xfromto{~~\alpha~~} ~~ \cdots, \label{LES_physical}
\end{eqnarray}
which ties the classifications of SPT phases with $G$, $G \times \paren{\ZZZ \rtimes \ZZZ_2^R}$, and $G \times \ZZZ_2^R$ symmetries together.

\subsection{Physical meaning of Mayer-Vietoris sequence}

Let us interpret the long exact sequence (\ref{LES_physical}) physically. We will interpret the homomorphisms $\alpha$, $\beta_1$, $\beta_2$, $\gamma_1$, and $\gamma_2$ in Sec.\,\ref{sec:physical_hom_MV}, and present the physical intuition behind the exactness of sequence (\ref{LES_physical}) in Sec.\,\ref{sec:physical_exact_MV}.

\subsubsection{Physical meaning of homomorphisms}
\label{sec:physical_hom_MV}

We propose the following physical interpretation of the homomorphisms $\alpha$, $\beta_1$, $\beta_2$, $\gamma_1$, and $\gamma_2$ in the long exact sequence (\ref{LES_physical}).

\paragraph{The connecting map $\alpha$:} The connecting map $\alpha$ will be the double-layer construction shown in Figure \ref{fig:double_layer_construction}, which turns a $(d-1)$-dimensional, $G$-symmetric SPT phase $[a]$ into a $d$-dimensional, $G \times \paren{\ZZZ \rtimes \ZZZ_2^R}$-symmetric SPT phase $[b]$. This construction consists of putting copies of $a$ on all $x_1 \in \mu + \ZZZ$ hyperplanes and its mirror image $\tilde a$ under $x_1 \mapsto -x_1$ on all $x_1 \in -\mu + \ZZZ$ hyperplanes, where $\mu$ is an arbitrary offset. It is obvious that $[b]$ does not depend on the choice of $\mu$ or representative $a$ and that the construction gives a homomorphism.

\begin{figure}
\centering
\includegraphics[width=3in]{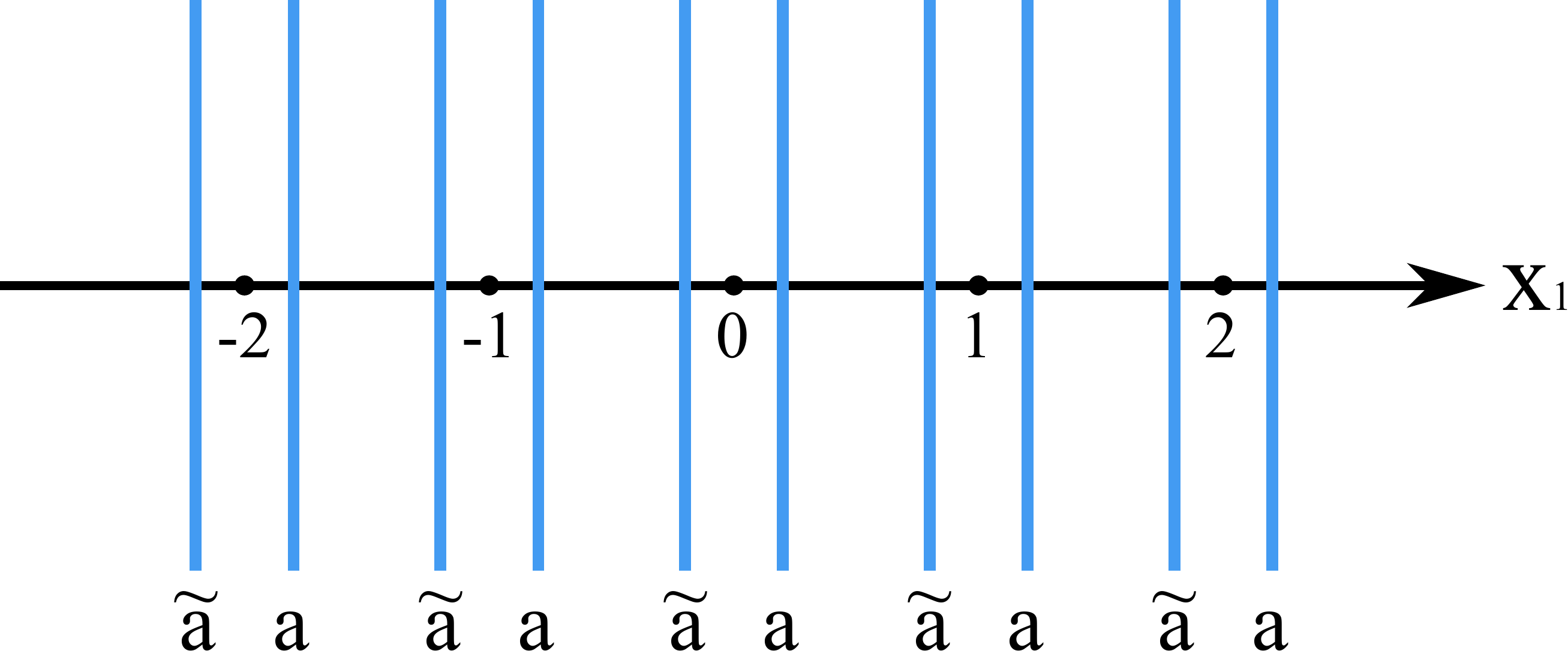}
\caption[The double-layer construction.]{The double-layer construction, which turns a $(d-1)$-dimensional, $G$-symmetric SPT phase $[a]$ into a $d$-dimensional, $G \times \paren{\ZZZ \rtimes \ZZZ_2^R}$-symmetric SPT phase $[b]$.}
\label{fig:double_layer_construction}
\end{figure}

\paragraph{The induced maps $\beta_1$ and $\beta_2$:} The maps $\beta_1$ and $\beta_2$ will be translation-forgetting maps with respect to two inequivalent reflection centers (see Figure \ref{fig:reflection_centers}). For $\beta_1$, we forget all symmetries in $G \times \paren{\ZZZ \rtimes \ZZZ_2^R}$ except $G$ and $x_1 \mapsto -x_1$. For $\beta_2$, we forget all symmetries in $G \times \paren{\ZZZ \rtimes \ZZZ_2^R}$ except $G$ and $x_1 \mapsto -x_1 + 1$. These are homomorphisms by definition.

\begin{figure}
\centering
\includegraphics[width=3in]{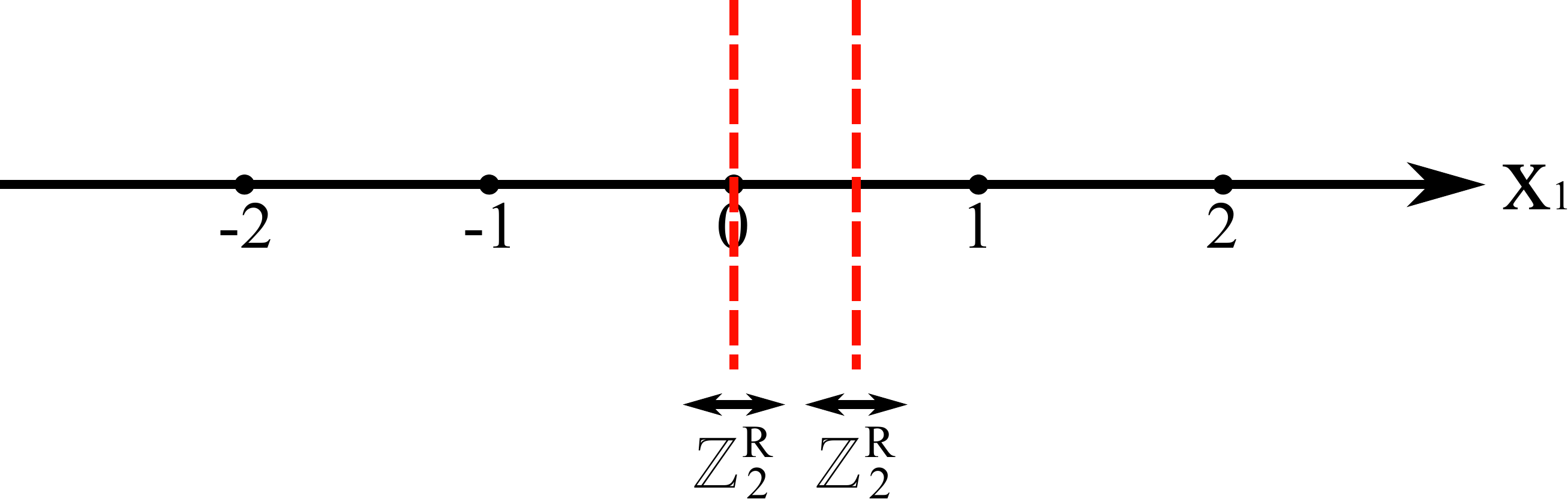}
\caption[Two inequivalent reflection centers of a $G \times \paren{\ZZZ \rtimes \ZZZ_2^R}$-symmetric system.]{Two inequivalent reflection centers, $x_1 = 0$ and $x_1 = \frac12$, of a $G \times \paren{\ZZZ \rtimes \ZZZ_2^R}$-symmetric system.}
\label{fig:reflection_centers}
\end{figure}

\paragraph{The induced maps $\gamma_1$ and $\gamma_2$:} The map $\gamma_1$ will be the reflection-forgetting map. That is, one forgets $x_1 \mapsto -x_1$ and retains only $G$. The map $\gamma_2$ will be the reflection-forgetting map followed by the inversion map. That is, one first forgets $x_1 \mapsto -x_1+1$ and then maps the resulting $G$-symmetric SPT phase to its inverse. It follows that the direct sum $\gamma_1 \oplus \gamma_2$ will map a pair $\paren{[b_1], [b_2]}$ to the difference between the two phases as $G$-symmetric SPT phases. This is the $G$-symmetric SPT phase represented by the system $b_1 + \bar b_2$, where $\bar b_2$ denotes a system that represents the inverse of $[b_2]$ as a $G$-symmetric SPT phase. Such a $\bar b_2$ can be chosen to be the orientation-reversed version of $b_2$ if $G$ is purely internal.

\subsubsection{Physical intuition behind exactness}
\label{sec:physical_exact_MV}

We will now present the physical intuition behind the exactness of sequence (\ref{LES_physical}).

\paragraph{Intuition behind $\image \alpha \subset \kernel \paren{\beta_1 \times \beta_2}$:}

This inclusion amounts to the statement that $(\beta_1\times\beta_2) \circ \alpha = 0$. To understand this, let us take any $(d-1)$-dimensional, $G$-symmetric system $a$ and apply the double-layer construction to it. Let us denote the resulting $d$-dimensional, $G \times \paren{\ZZZ \rtimes \ZZZ_2^R}$-symmetric system by $b$. If we forget the translation symmetry with respect to either reflection center, then $b$ will represent the trivial phase as we can push all copies of $a$ and $\tilde a$ to $\pm \infty$ while preserving the reflection symmetry. It follows that $\beta_1 \circ \alpha$ and $\beta_2 \circ \alpha$ are both zero.

\paragraph{Intuition behind $\kernel \paren{\beta_1 \times \beta_2} \subset \image \alpha$:}

To understand this inclusion, let us take any $G \times \paren{\ZZZ \rtimes \ZZZ_2^R}$-symmetric system $b$ and suppose it represents the trivial $G \times \ZZZ_2^R$-symmetric SPT phase with respect to both reflection centers. This means there is a finite-depth quantum circuit $U_A$ (respectively $U_B$) respecting $G$ and $x_1 \mapsto -x_1$ (resp.\,$x_1 \mapsto -x_1 + 1$) that would continuously deform $b$ to a trivial state. Now, we will apply shifted copies of $U_A$ to the subregions of $b$ defined by
\begin{eqnarray}
n-\mu < x_1 < n+\mu,
\end{eqnarray}
for $n\in \ZZZ$, and shifted copies of $U_B$ to the subregions of $b$ defined by
\begin{equation}
n+\mu < x_1 < n+1-\mu,
\end{equation}
for $n\in \ZZZ$, as in Figure \ref{fig:quantum_circuit}. The overall circuit respects $G \times \paren{\ZZZ \rtimes \ZZZ_2^R}$, and it will trivialize $b$ everywhere except for slabs situated near $x_1 = n \pm \mu$. The slabs at $x_1 = n+\mu$ for different integers $n$ are identical, which we will denote by $a$. The slabs at $x_1 = n-\mu$ will then be copies of the mirror image $\tilde a$ of $a$ under $x_1 \mapsto -x_1$. Note that both $a$ and $\tilde a$ are $G$-symmetric. Thus, we have $G \times \paren{\ZZZ \rtimes \ZZZ_2^R}$-symmetrically deformed $b$ into the system obtained by applying the double-layer construction to $a$. Since this works for any $b$, it follows that $\kernel \paren{\beta_1 \times \beta_2} \subset \image \alpha$.

\begin{figure}
\centering
\includegraphics[width=3.5in]{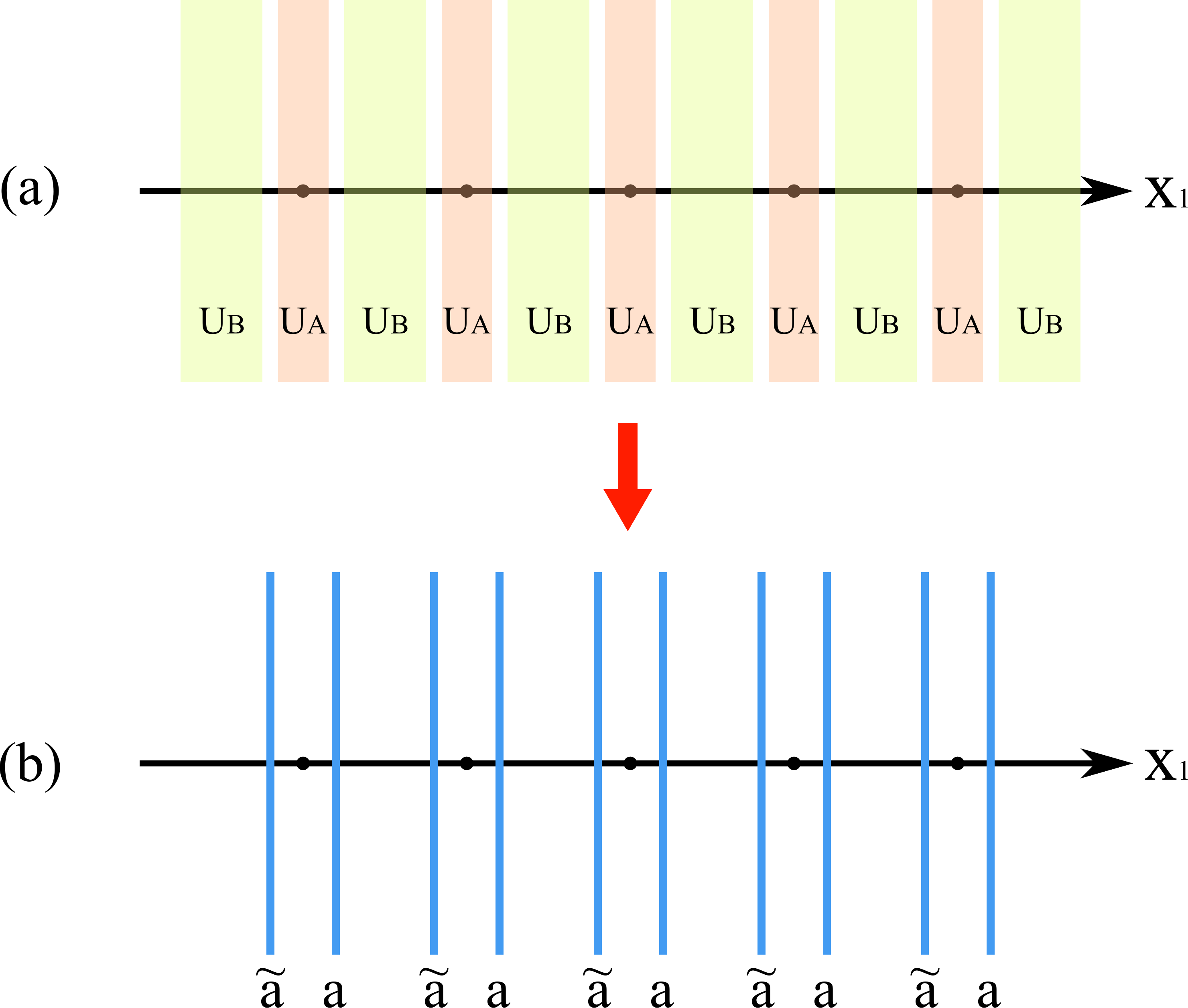}
\caption[Application of copies a finite-depth quantum circuit $U_A$ and copies of a finite-depth quantum circuit $U_B$ to subregions of a $G \times \paren{\ZZZ \rtimes \ZZZ_2^R}$-symmetric system.]{A $G \times \paren{\ZZZ \rtimes \ZZZ_2^R}$-symmetric application of copies a finite-depth quantum circuit $U_A$, which respects $G$ and $x_1 \mapsto -x_1$, and copies of a finite-depth quantum circuit $U_B$, which respects $G$ and $x_1 \mapsto -x_1 + 1$, to subregions of a $G \times \paren{\ZZZ \rtimes \ZZZ_2^R}$-symmetric system.}
\label{fig:quantum_circuit}
\end{figure}

Similar ideas have been used in Refs.\,\cite{Hermele_torsor, Huang_dimensional_reduction, Lu_sgSPT, Xiong_Alexandradinata}.

\paragraph{Intuition behind $\image\paren{\beta_1 \times \beta_2} \subset \kernel \paren{\gamma_1 \oplus \gamma_2}$:}

To understand this inclusion, let us take any $G \times\paren{\ZZZ \rtimes \ZZZ_2^R}$-symmetric system $b$. The statement is equivalent to $\gamma_1\paren{ \beta_1 ([b])} = - \gamma_2\paren{\beta_2([b])} = 0$, but this is obvious because both $\gamma_1 \circ \beta_1$ and $- \gamma_2 \circ \beta_2$ correspond to forgetting all symmetries but $G$.

\paragraph{Intuition behind $\kernel \paren{\gamma_1 \oplus \gamma_2} \subset \image\paren{\beta_1 \times \beta_2}$:}

To understand this inclusion, let us take any pair $\paren{[c_1], [c_2]}$ of $d$-dimensional $G \times \ZZZ_2^R$-symmetric SPT phases, with reflection centers $x_1 = 0$ and $x_1 = \frac12$, respectively. We assume that
\begin{equation}
\gamma_1([c_1]) + \gamma_2([c_2]) = 0. \label{gamma_beta_assumption}
\end{equation}
We will show that there exists a $d$-dimensional, $G \times \paren{\ZZZ \rtimes \ZZZ_2^R}$-symmetric system $b$ such that
\begin{equation}
[c_1] = \beta_1([b]), ~~ [c_2] = \beta_2([b]). \label{gamma_beta_proposition}
\end{equation}

Assumption (\ref{gamma_beta_assumption}) implies that $c_1$ and $c_2$ represent the same $G$-symmetric SPT phase upon forgetting reflection with respect to their respective reflection centers. To prove statement (\ref{gamma_beta_proposition}), the idea is to consider the difference between $c_1$ and $c_2$. To that end, we shall take it for granted that $c_2$ can be chosen to respect translation as well as the reflection about $x_1 = \frac12$. Thus $c_2$ is reflection-symmetric with respect to $x_1 = 0$ as well as $x_1 = \frac12$. If we view $c_2$ as representing a $G \times \ZZZ_2^R$-symmetric SPT phase with reflection center $x_1 = 0$, then $[c_1]$ and $[c_2]$ must differ by a $G \times \ZZZ_2^R$-symmetric SPT phase $[c_3]$ with reflection center $x_1 = 0$ such that $[c_3]$ becomes trivial upon reflection forgetting. Thus, without loss of generality, we can assume
\begin{equation}
c_1 = c_2 + c_3,
\end{equation}
where $c_3$ is some $G \times \ZZZ_2^R$-symmetric system with reflection center $x_1 = 0$ that represents the trivial $G$-symmetric SPT phase upon reflection forgetting.

Because $c_3$ represents the trivial $G$-symmetric SPT phase upon reflection forgetting, there must exist a finite-depth quantum circuit $U_3$ respecting $G$ that would continuously deform deform $c_3$ to a trivial state. By applying $U_3$ to the subregion of $c_3$ defined by
\begin{equation}
x_1 > 0
\end{equation}
and the image $\overline{U_3}$ of $U_3$ under $x_1 \mapsto -x_1$ to the subregion of $c_3$ defined by 
\begin{equation}
x_1 < 0
\end{equation}
[Figure \ref{fig:c_3}(a)], we can reduce $c_3$, in a $G \times \ZZZ_2^R$-symmetric manner, to a $(d-1)$-dimensional, $G \times \ZZZ_2^R$-symmetric system $a$ that lives on $x_1 = 0$ [Figure \ref{fig:c_3}(b)] \cite{Hermele_torsor}. By further bringing in copies of $a$ from $\pm \infty$ again in a $G \times \ZZZ_2^R$-symmetric manner, we will obtain a $d$-dimensional, $G \times \paren{\ZZZ \times \ZZZ_2^R}$-symmetric system that has one copy of $a$ on each $x_1 \in \ZZZ$ plane [Figure \ref{fig:c_3}(c)]. Thus, without loss of generality, we can assume $c_3$ consists of copies of $a$ on all $x_1 \in \ZZZ$ planes, for some $(d-1)$-dimensional $(d-1)$-dimensional system $a$ that is symmetric under $G$ and $x_1 \mapsto - x_1$.

\begin{figure}
\centering
\includegraphics[width=3.5in]{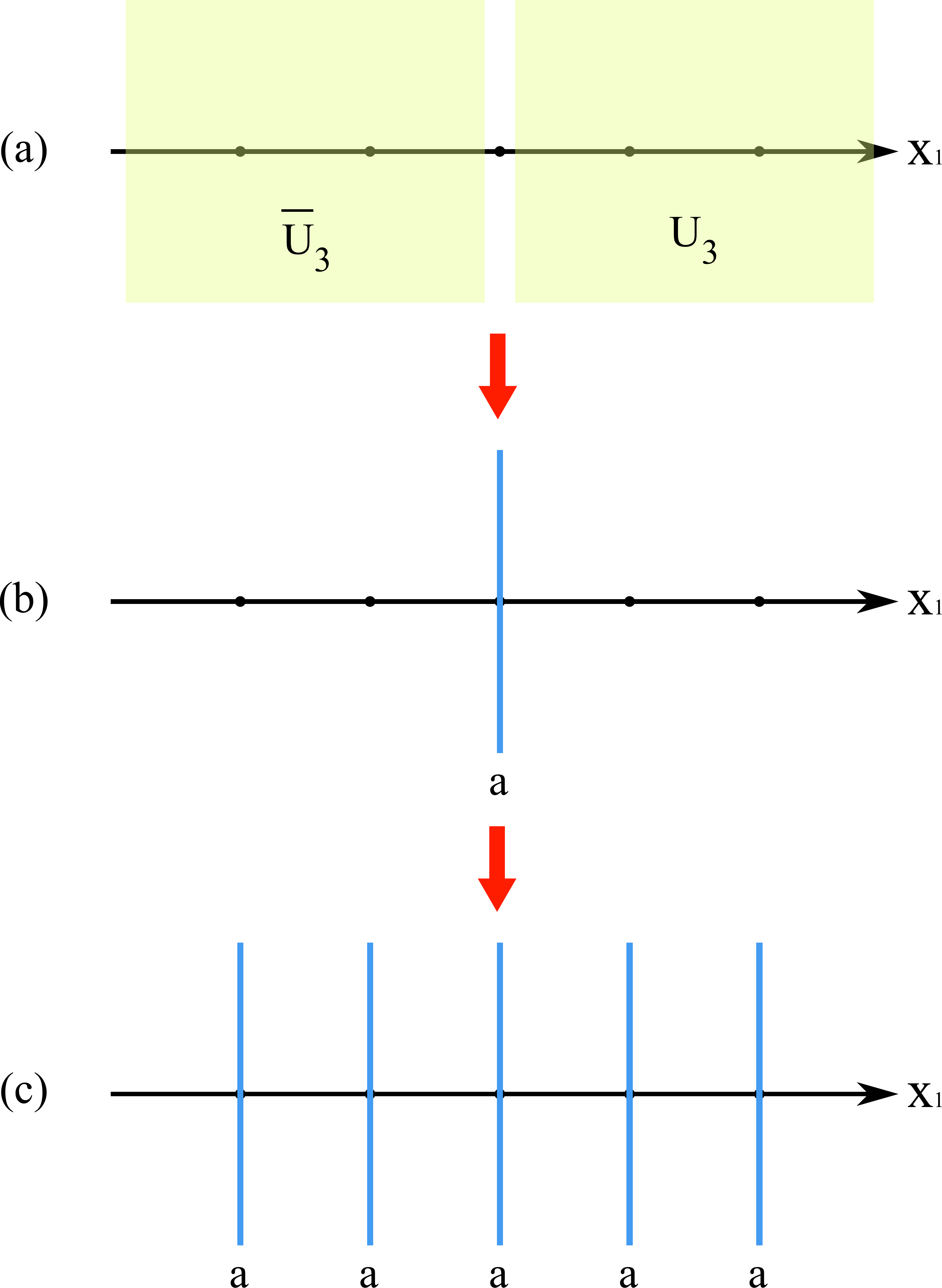}
\caption[Reduction of $c_3$ to the mirror plane and subsequent introduction of additional layers of $a$.]{Reduction of $c_3$ to the mirror plane and subsequent introduction of additional layers of $a$.}
\label{fig:c_3}
\end{figure}

Now, to complete the argument, we set
\begin{equation}
b = c_1.
\end{equation}
By construction, both $c_2$ and $c_3$ and hence $b = c_1 = c_2 + c_3$ have the full $G \times \paren{\ZZZ \rtimes \ZZZ_2^R}$ symmetry. Upon forgetting translation with respect to the reflection center $x_1 = 0$, $b$ will represent the same $G \times \ZZZ_2^R$-symmetric SPT phase as $c_1$ does because $b = c_1$. This gives $\beta_1\paren{[b]} = [c_1]$. On the other hand, upon forgetting translation with respect to the reflection center $x_1 = \frac12$, $b$ will represent the same $G \times \ZZZ_2^R$-symmetric SPT phase as $c_2$ does because the copies of $a$ in $c_3$ can be pushed to $\pm \infty$ in a $G \times \ZZZ_2^R$-symmetric fashion. This gives $\beta_2\paren{[b]} = [c_2]$.

\paragraph{Intuition behind $\image\paren{\gamma_1\oplus\gamma_2} \subset \kernel \alpha$:}

This inclusion amounts to the statements $\alpha \circ \gamma_1 = 0$ and $\alpha \circ \gamma_2 = 0$. We will focus on $\alpha \circ \gamma_1 = 0$. The argument for $\alpha \circ \gamma_2 = 0$ would be a straightforward generalization.

Take any $(d-1)$-dimensional system $c$ with coordinates $x_2, x_3, \ldots$ that is symmetric under $G$ and $x_2 \mapsto -x_2$. We have $[c]\in \SPT^{d-1}\paren{G \times\ZZZ_2^R, \phi}$. Applying the double-layer construction to $c$ along the $x_1$-axis, we obtain a $d$-dimensional system $b$ that is symmetric under $G$, $x_1 \mapsto x_1 + 1$, and $x_1 \mapsto -x_1$. We have $[b]\in \SPT^d\paren{G \times \paren{\ZZZ\rtimes \ZZZ_2^R}, \phi}$. We need to show that $b$ can be trivialized while respecting $G$, $x_1 \mapsto x_1 + 1$, and $x_1 \mapsto -x_1$.

We note that $b$ consists of copies of $c$ at $x_1 \in \mu + \ZZZ$ and copies of the image $\tilde c$ of $c$ under $x_1 \mapsto -x_1$ at $x_1 \in - \mu + \ZZZ$. We claim that a copy of $c$ at $x_1 = \mu$ can be continuously joined with a copy of $\tilde c$ at $x_1 = - \mu$ through wormholes while respecting $G$ and $x_1 \mapsto -x_1$. To see this, first envision a copy of $c$ on the $x_2 = -\epsilon$ plane that is rotated from a copy of $c$ on the $x_1 = 0$ plane, where $\epsilon>0$ is infinitesimal [Figure \ref{fig:wormhole}(a)]. Thanks to the reflection symmetry of $c$, the copy of $c$ on the $x_2 = -\epsilon$ plane is symmetric under $x_1 \mapsto -x_1$. Now, let us gradually bend the $|x_1| > \mu$ portions of this copy of $c$ downward in a fashion that respects $x_1 \mapsto -x_1$ [Figure \ref{fig:wormhole}(b)], until the sides are perpendicular to the $x_1$-axis. Now, we have an $\paren{x_1 \mapsto -x_1}$-symmetric state consisting of a copy of $c$ on the $x_1 = \mu$, $x_2<-\epsilon$ half plane and a copy of $\tilde c$ on the $x_1 = -\mu$, $x_2<-\epsilon$ half plane joined by a copy of $c$ in the $\bars{x_1} < \mu$, $x_2 = -\epsilon$ region [Figure \ref{fig:wormhole}(c)]. The reflection symmetry of $c$ is crucial for the existence of such an $\paren{x_1 \mapsto -x_1}$-symmetric state. Now, by the same token, there also exists an $\paren{x_1 \mapsto -x_1}$-symmetric state consisting of a copy of $c$ on the $x_1 = \mu$, $x_2> \epsilon$ half plane and a copy of $\tilde c$ on the $x_1 = -\mu$, $x_2>\epsilon$ half plane joined by a copy of $c$ in the $\bars{x_1} < \mu$, $x_2 = \epsilon$ region [Figure \ref{fig:wormhole}(d)]. Combining the two configurations, we obtain an $\paren{x_1 \mapsto -x_1}$-symmetric wormhole state connecting a copy of $c$ on the $x_1 = \mu$ plane to a copy of $\tilde c$ on the $x_1 = -\mu$ plane [Figure \ref{fig:wormhole}(e)].

\begin{figure}
\centering
\includegraphics[width=4in]{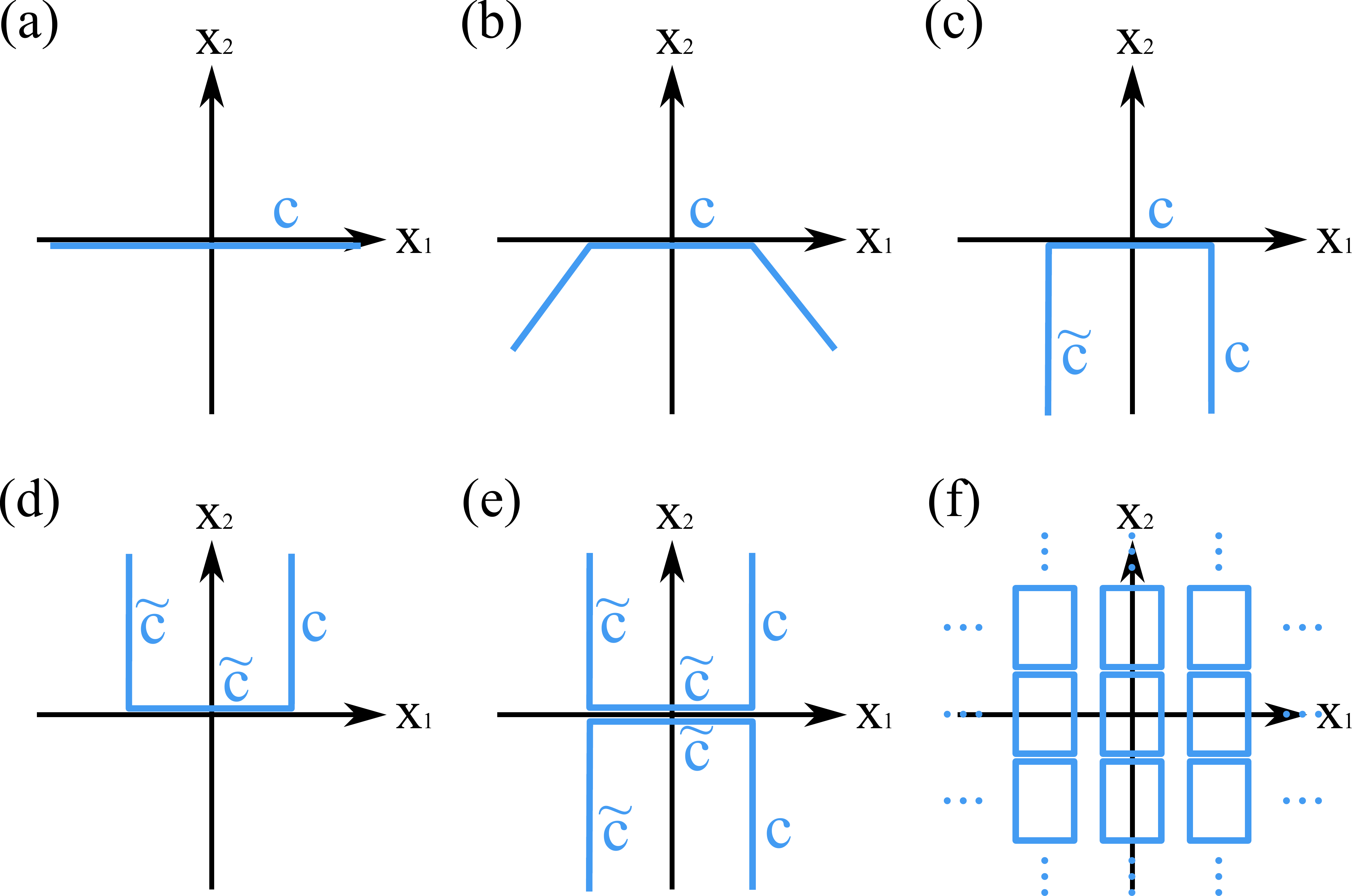}
\caption[The wormhole construction.]{The wormhole construction.}
\label{fig:wormhole}
\end{figure}

We can apply the wormhole construction not only between a copy of $c$ on the $x_1 = \mu$ plane and a copy of $\tilde c$ on the $x_1 = -\mu$ plane, but also between a copy of $c$ on the $x_1 = n + \mu$ plane and a copy of $\tilde c$ on the $x_1 = n-\mu$ plane for any nonzero integer $n$. Likewise, we can create wormholes at nonzero values of $x_2$ as well. It follows that we can create an array of wormholes to reduce $b$ to a collection of $c$ living on cylinders (circles if $d=2$, cylinders if $d=3$, hypercylinders if $d>3$) [Figure \ref{fig:wormhole}(f)]. We shall take it for granted that such wormholes can be continuously created without closing the many-body gap \cite{McGreevy_sSourcery}, and that a copy of $c$ on a cylinder can be shrunk to nothing while respecting $G$, $x_1 \mapsto -x_1$, and the energy gap. It follows that $b$ can be continuously deformed to a trivial system while respecting $G$, $x_1 \mapsto x_1 + 1$, $x_1 \mapsto - x_1$, and the energy gap. This completes the argument.

\paragraph{Intuition behind $\kernel \alpha \subset \image\paren{\gamma_1\oplus\gamma_2}$:}

We have broken down the exactness of sequence (\ref{LES_physical}) into six mathematical statements. We have given the physical intuition behind all but one of them. We can regard the existence of the physical arguments as circumstantial evidence for the correctness of the interpretation in Sec.\,\ref{sec:physical_hom_MV}. In fact, these arguments relate the interpretations of $\alpha$, $\beta_1$, $\beta_2$, $\gamma_1$, and $\gamma_2$ to one another. For instance, owing the arguments for $\image \alpha = \kernel \paren{\beta_1 \times \beta_2}$, the interpretation of $\beta_1$ and $\beta_2$ as translation-forgetting maps implies that $\alpha$ must be---or at least have the same image as---the double-layer construction. Conversely, the interpretation of $\alpha$ as the double-layer construction implies that $\beta_1 \times \beta_2$ must be---or have the same kernel as---the translation-forgetting maps. Similarly, the interpretation of $\beta_1\times\beta_2$ as being (resp.\,as having the same image as) translation forgetting is equivalent to the interpretation of $\gamma_1\oplus\gamma_2$ as having the same kernel as (resp.\,as being) reflection forgetting.

Thus, even though we do not have a simple physical argument for $\kernel \alpha \subset \image\paren{\gamma_1\oplus\gamma_2}$, we are confident that our physical interpretations of $\alpha$, $\beta_1$, $\beta_2$, $\gamma_1$, and $\gamma_2$ were correct. The long exact sequence (\ref{LES_physical}) then makes $\kernel \alpha \subset \image\paren{\gamma_1\oplus\gamma_2}$ a nontrivial prediction of the Mayer-Vietoris sequence. It states that if a $d$-dimensional, $G \times \paren{\ZZZ\rtimes \ZZZ_2^R}$-symmetric system obtained from the double-layer construction represents the trivial phase, then the $(d-1)$-dimensional, $G$-symmetric SPT phase used in the construction must have a $G \times \ZZZ_2^R$-symmetric representative.

\subsection{Example: bosonic SPT phases}
\label{subsec:MV_bSPT}

To demonstrate the applicability of the long exact sequence (\ref{LES_physical}), let us consider the example of bosonic SPT phases where $G$ is set to the trivial group.

\subsubsection{Fitting known classifications into the long exact sequence}
\label{subsubsec:fitting_known_classifications}

For $G = 0$ and bosonic SPT phases, we have the long exact sequence
\begin{equation}
\begin{tikzcd}[row sep=0.1in]
& & ~~~~~~~~~~~~~~~~~~~~~~~~~~~~~\cdots~ \arrow{r}{\gamma_1\oplus\gamma_2} & \SPT^{-1}_b\paren{0} \\
~ \arrow{r}{\alpha} & \SPT^0_b\paren{\ZZZ \rtimes \ZZZ_2^R} \arrow{r}{\beta_1\times\beta_2} & \SPT^0_b\paren{\ZZZ_2^R} \oplus \SPT^0_b\paren{\ZZZ_2^R} \arrow{r}{\gamma_1\oplus\gamma_2} & \SPT^0_b\paren{0} \\
~ \arrow{r}{\alpha} & \SPT^1_b\paren{\ZZZ \rtimes \ZZZ_2^R} \arrow{r}{\beta_1\times\beta_2} & \SPT^1_b\paren{\ZZZ_2^R} \oplus \SPT^1_b\paren{\ZZZ_2^R} \arrow{r}{\gamma_1\oplus\gamma_2} & \SPT^1_b\paren{0} \\
~ \arrow{r}{\alpha} & \SPT^2_b\paren{\ZZZ \rtimes \ZZZ_2^R} \arrow{r}{\beta_1\times\beta_2} & \SPT^2_b\paren{\ZZZ_2^R} \oplus \SPT^2_b\paren{\ZZZ_2^R} \arrow{r}{\gamma_1\oplus\gamma_2} & \SPT^2_b\paren{0} \\
~ \arrow{r}{\alpha} & \SPT^3_b\paren{\ZZZ \rtimes \ZZZ_2^R} \arrow{r}{\beta_1\times\beta_2} & \SPT^3_b\paren{\ZZZ_2^R} \oplus \SPT^3_b\paren{\ZZZ_2^R} \arrow{r}{\gamma_1\oplus\gamma_2} & \SPT^3_b\paren{0} \\
~ \arrow{r}{\alpha} & \cdots,\hspace{0.77in}
\end{tikzcd}\label{bSPT1}
\end{equation}
where we have suppressed $\phi$ in the notation. In dimensions $d < 1$, we interpret $\ZZZ_2^R$ as a time-reversal symmetry. In dimensions $d < 0$, we interpret $d$-dimensional $G$-symmetric SPT phases as the path-connected components of the space of $(-d)$-parameter families of $0$-dimensional $G$-symmetric short-range entangled states.

For $d \leq 0$, the classification of $d$-dimensional $\paren{G,\phi}$-symmetric SPT phases is given by $H^{\phi + d + 2}\paren{G; \ZZZ}$. In particular, we have
\begin{eqnarray}
\SPT^{-1}_b\paren{0} &=& 0, \\
\SPT^{0}_b\paren{0} &=& 0.
\end{eqnarray}
In what follows, we will focus on nonnegative dimensions.

\paragraph{0D:}

The classification for $\ZZZ_2^R$ can be found in Ref.\,\cite{Wen_Boson} to be trivial.

\paragraph{1D:}

The classification for $\ZZZ \rtimes \ZZZ_2^R$ can be found in Sec.\,III A of Ref.\,\cite{Wen_sgSPT_1d} to be $\ZZZ_2 \oplus \ZZZ_2$ by specializing to $G = 0$, which is also consistent with the counting in its Table I. The classification for $\ZZZ_2^R$ can be found in Ref.\,\cite{Hermele_torsor} to be $\ZZZ_2$. The classification in the absence of symmetry is trivial \cite{Wen_1d, Cirac}.

\paragraph{2D:}

The classification for $\ZZZ_2^R$ is trivial by Ref.\,\cite{Huang_dimensional_reduction}. The classification in the absence of symmetry is $\ZZZ$, which is generated by the $E_8$ phase \cite{Kitaev_honeycomb, 2dChiralBosonicSPT, 2dChiralBosonicSPT_erratum, Kitaev_KITP}.

\paragraph{3D:}

The classification for $\ZZZ_2^R$ was given as $\ZZZ_2$ in Ref.\,\cite{Wen_Boson}, but this misses the 3D $E_8$ phase \cite{Hermele_torsor}, which gives another factor of $\ZZZ_2$. Thus the complete classification is $\ZZZ_2^2$. The classification in the absence of symmetry is believed to be trivial.

The classification for $\ZZZ \rtimes \ZZZ_2^R$ can be argued to be $\ZZZ \oplus \ZZZ_2^3$ using the idea of Ref.\,\cite{Huang_dimensional_reduction}, as follows. Ref.\,\cite{Huang_dimensional_reduction} took a dimensional reduction approach in which 3D crystalline SPT phases were constructed using lower-dimensional ``block states," but the authors excluded block states of the $E_8$ phase for simplicity. For space group $Pm$, which has translation symmetries along $x_1$, $x_2$, and $x_3$ and the reflection symmetry $(x_1,x_2,x_3)\mapsto (-x_1,x_2,x_3)$, the classification was given as ${\mathcal C}_0 \oplus {\mathcal C}_2 = \ZZZ_2^2 \oplus \ZZZ_2^2$ in Ref.\,\cite{Huang_dimensional_reduction}. The ${\mathcal C}_0 = \ZZZ_2^2$ piece comes from 0D block states: if $a_0$ represents the nontrivial 0D SPT phase with unitary internal $\ZZZ_2$ symmetry, then the first (resp.\,second) generator of $\mathcal C_0$ is represented by the system with copies of $a_0$ at $(x_1, x_2, x_3)\in \ZZZ \times \ZZZ \times \ZZZ$ [resp.\,$\paren{\frac12+\ZZZ}\times\ZZZ\times\ZZZ$]. These phases are layered phases along the $x_2$- and $x_3$-axes and ought to be modded out in our classification. The ${\mathcal C}_2 = \ZZZ_2^2$ piece comes from 2D block states: if $a_2$ represents the nontrivial 2D conventional SPT phase with unitary onsite $\ZZZ_2$ symmetry, then the first (resp.\,second) generator of ${\mathcal C}_2$ is represented by the system consisting of copies of $a_2$ at $x_1\in \ZZZ$ (resp.\,$\frac12+\ZZZ$) planes. The inclusion of $E_8$ as a building block introduces another $\ZZZ \oplus \ZZZ_2$ in the classification: the $\ZZZ$ is generated by the system consisting of copies of $E_8$ at the $x_1 \in \ZZZ$ planes, whereas $\ZZZ_2$ is generated by the system consisting of copies of $E_8$ at the $x_1 \in \ZZZ$ planes and copies of $\overline{E_8}$ at the $x_1 \in \frac12+\ZZZ$ planes \cite{SongXiongHuang}.

\paragraph{Full sequence:}

Putting everything together, we arrive at the long exact sequence
\begin{equation}
\begin{tikzcd}[row sep=0.1in]
0 \arrow{r}             & \SPT^0_b\paren{\ZZZ \rtimes \ZZZ_2^R} \arrow{r}{\beta_1\times\beta_2} & 0 \oplus 0 \arrow{r}{\gamma_1\oplus\gamma_2} & 0 \\
~ \arrow{r}{\alpha} & \ZZZ_2^2 \arrow{r}{\beta_1\times\beta_2} & \ZZZ_2\oplus\ZZZ_2 \arrow{r}{\gamma_1\oplus\gamma_2} & 0 \\
~ \arrow{r}{\alpha} & \SPT^2_b\paren{\ZZZ \rtimes \ZZZ_2^R} \arrow{r}{\beta_1\times\beta_2} & 0 \oplus 0 \arrow{r}{\gamma_1\oplus\gamma_2} & \ZZZ \\
~ \arrow{r}{\alpha} & \ZZZ\oplus\ZZZ_2^3 \arrow{r}{\beta_1\times\beta_2} & \ZZZ_2^2 \oplus \ZZZ_2^2 \arrow{r}{\gamma_1\oplus\gamma_2} & 0.
\end{tikzcd}\label{bSPT2}
\end{equation}

\subsubsection{Analyzing the long exact sequence of classifications}
\label{subsubsec:analyzing_LES}

Let us analyze the long exact sequence (\ref{bSPT2}).

\paragraph{0D:}

By exactness, $\SPT^0_b\paren{\ZZZ \rtimes \ZZZ_2^R}$ must be trivial. Since there is no translation symmetry in 0D, one way to interpret this result would be to say there is no nontrivial Floquet time-reversal SPT phase in 0D. The triviality of $\SPT^0_b\paren{\ZZZ \rtimes \ZZZ_2^R}$ amounts to the nonexistence of a nontrivial ``equivalence class" of 1-dimensional representations of $\ZZZ \rtimes \ZZZ_2^R$. To verify this, we suppose the generator of $\ZZZ$ and $\ZZZ_2^R$ are represented by $u$ and $wK$, respectively, where $u, w\in U(1)$ and $K$ is the complex conjugation. The relation between translation and reflection, $(x_1\mapsto -x_1) \circ (x_1\mapsto x_1+1) \circ (x_1\mapsto -x_1) = (x_1\mapsto x_1+1)^{-1}$, is automatically satisfied:
\begin{equation}
(wK)(u)(wK) = u^{-1}.
\end{equation}
Conjugating by $\sqrt w$, we get an equivalent representation with $w = 1$. Insofar as we regard representations with different $u$ as representing the same equivalence class---this is not unreasonable since $u$ is a continuous parameter---we will have only one SPT phase with $\ZZZ \rtimes \ZZZ_2^R$ symmetry in 0D.

\paragraph{1D:}

Exactness implies we have an isomorphism
\begin{equation}
\beta_1\times\beta_2: \SPT^1_b\paren{\ZZZ \rtimes \ZZZ_2^R} \xfromto{\isomorphic} \SPT^1_b\paren{\ZZZ_2^R} \oplus \SPT^1_b\paren{\ZZZ_2^R},
\end{equation}
where $\SPT^1_b\paren{\ZZZ \rtimes \ZZZ_2^R} \isomorphic \ZZZ_2^2$ and $\SPT^1_b\paren{\ZZZ_2^R} \isomorphic \ZZZ_2$. According to Ref.\,\cite{Wen_sgSPT_1d}, the four phases in $\SPT^1_b\paren{\ZZZ \rtimes \ZZZ_2^R}$ can be constructed as in Figure \ref{fig:parity_chain} and distinguished by whether or not the blue and/or white entangled pairs gain a minus sign upon exchanging the two constituent spins. The black pair, on the other hand, gains no sign upon spin exchange. Following Ref.\,\cite{Wen_sgSPT_1d}, we introduce two indices, $\nu_1, \nu_2 \in \braces{\pm 1}$, to represent the parity of blue and white entangled pairs under spin exchange, respectively. As for $\SPT^1_b\paren{\ZZZ_2^R} \isomorphic \ZZZ_2$, according to Ref.\,\cite{Hermele_torsor}, the two phases can be represented by quasi 0-dimensional states at the reflection center as in Figure \ref{fig:parity_quasi_0D}, and distinguished by the parity of such a state under reflection.

\begin{figure}
\centering
\includegraphics[width=5in]{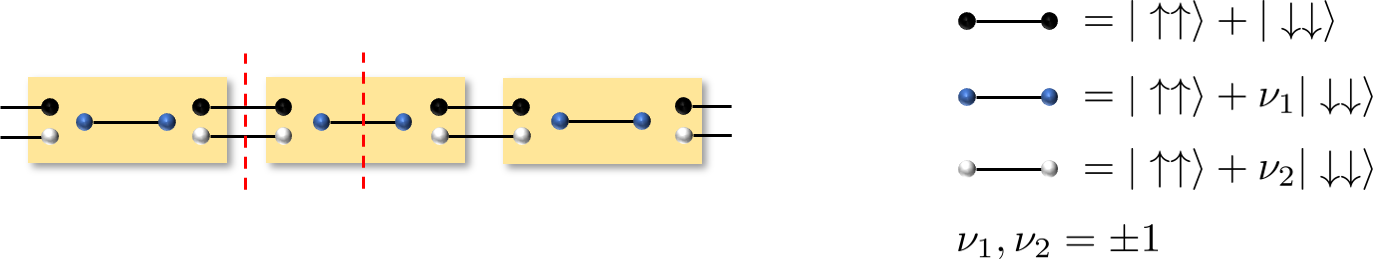}
\caption[Representative states of 1D bosonic $\ZZZ \rtimes \ZZZ_2^R$-symmetric SPT phases.]{Representative states of 1D bosonic $\ZZZ \rtimes \ZZZ_2^R$-symmetric SPT phases. Each box represents a lattice site, which contains six spins. Connected spins form entangled pairs.}
\label{fig:parity_chain}
\end{figure}

\begin{figure}
\centering
\includegraphics[width=5in]{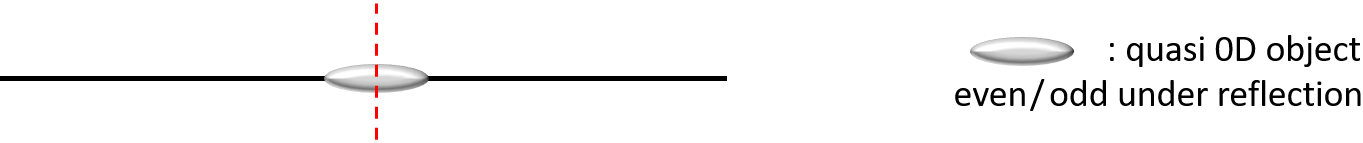}
\caption[Representative states of 1D bosonic $\ZZZ_2^R$-symmetric SPT phases.]{Representative states of 1D bosonic $\ZZZ_2^R$-symmetric SPT phases.}
\label{fig:parity_quasi_0D}
\end{figure}

It is then easy to see, under translation forgetting with respect to the reflection center of a blue entangled pair, that a 1D bosonic $\ZZZ \rtimes \ZZZ_2^R$-symmetric SPT phase becomes a 1D bosonic $\ZZZ_2^R$-symmetric SPT phase that is nontrivial iff $\nu_1 = -1$. Likewise, under translation forgetting with respect to the reflection center of a white entangled pair, a 1D bosonic $\ZZZ \rtimes \ZZZ_2^R$-symmetric SPT phase becomes a 1D bosonic $\ZZZ_2^R$-symmetric SPT phase that is nontrivial iff $\nu_2 = -1$.  Thus translation forgetting with respect to the two reflection centers indeed gives an isomorphism. This supports our interpretation of $\beta_1 \times \beta_2$ as translation forgetting.

\paragraph{2D:}

By exactness, we must have $\SPT^2_b\paren{\ZZZ \rtimes \ZZZ_2^R} = 0$. We can make sense of this using the dimensional reduction approach \cite{Hermele_torsor, Huang_dimensional_reduction, Lu_sgSPT, Xiong_Alexandradinata}. First, any 2D bosonic $\ZZZ \rtimes \ZZZ_2^R$-symmetric SPT phase must become trivial upon forgetting all symmetries; this follows from the fact that 2D invertible topological orders are generated by the $E_8$ state and that the $E_8$ state is not compatible with reflection. Consequently, given any 2D bosonic $\ZZZ \rtimes \ZZZ_2^R$-symmetric state, there must exists a finite-depth quantum circuit that would deform it to a trivial state. Applying the quantum circuit to regions away from the mirror line, we can reduce the system to a 1D system on the mirror line with internal unitary $\ZZZ_2$ symmetry. However, $\ZZZ_2$-symmetric SPT phases have a trivial classification in 1D, so this 1D system can be further trivialized.

\paragraph{3D:}

We have the short exact sequence
\begin{equation}
0 \fromto \SPT^2_b\paren{0} \xfromto{\alpha} \SPT^3_b\paren{\ZZZ \rtimes \ZZZ_2^R} \xfromto{\beta_1\times\beta_2} \SPT^3_b\paren{\ZZZ_2^R} \oplus \SPT^3_b\paren{\ZZZ_2^R} \fromto 0,
\end{equation}
which reads
\begin{equation}
0 \fromto \ZZZ \xfromto{\alpha} \ZZZ \oplus \ZZZ_2^3 \xfromto{\beta_1\times\beta_2} \ZZZ_2^2 \oplus\ZZZ_2^2 \fromto 0. \label{final_seq}
\end{equation}

The generator of the first term (from left, excluding the initial 0) of sequence (\ref{final_seq}) is represented by the $E_8$ state. Under $\alpha$, this generator maps to the doubly layered $E_8$ state shown in Figure \ref{fig:E8_map}(a). This state has a chiral central charge of $c_- = 16$ per unit length in the $x_1$ direction, and reflection exchanges the layers at $x_1 = \pm (n + \mu)$. The generator of $\ZZZ$ in the second term of sequence (\ref{final_seq}) is the layered $E_8$ state in Figure \ref{fig:E8_map}(b) \cite{SongXiongHuang}, which consists of copies of the $E_8$ state with internal unitary $\ZZZ_2$ symmetry on $x_1\in \ZZZ$ planes. This state has $c_- = 8$ per unit length. The remaining generators of the second term of sequence (\ref{final_seq}) have $c_- = 0$ per unit length. Comparing chiral central charge, we thus see that the doubly layered $E_8$ state must correspond to an element of the form $(2, n_0, n_1, n_2) \in \ZZZ \oplus \ZZZ_2^3$. That is, under the double-layer construction, we have
\begin{eqnarray}
\alpha: \ZZZ &\fromto& \ZZZ \oplus \ZZZ_2^3, \nonumber\\
1 &\mapsto& (2, n_0, n_1, n_2). \label{double_layer_SES}
\end{eqnarray}

\begin{figure}
\centering
\includegraphics[width=6in]{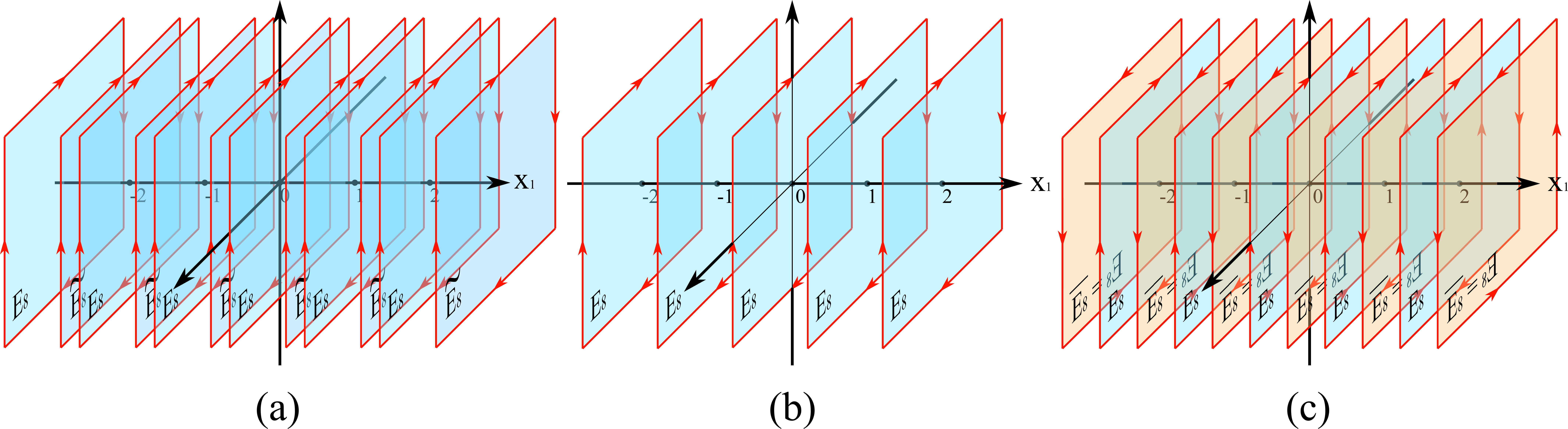}
\caption[The doubly layered $E_8$ state, layered $E_8$ state, and alternately layered $E_8$ state.]{(a) The doubly layered $E_8$ state. (b) The layered $E_8$ state. (c) The alternately layered $E_8$ state. Arrows indicate edge chirality.}
\label{fig:E8_map}
\end{figure}

On the other hand, under translation forgetting, we have
\begin{eqnarray}
\beta_1 \times \beta_2: \ZZZ \oplus \ZZZ_2^3 &\fromto& \ZZZ_2^2 \oplus \ZZZ_2^2, \nonumber \\
(1,0,0,0) &\mapsto& (0,1,0,0), \nonumber\\
(0,1,0,0) &\mapsto& (0,1,0,-1), \nonumber\\
(0,0,1,0) &\mapsto& (1,0,0,0), \nonumber\\
(0,0,0,1) &\mapsto& (0,0,1,0). \label{translation_forgetting_SES}
\end{eqnarray}
To see this, we simply need to recall the physical realizations of various generators in the classifications involved. The first $\ZZZ_2$ in the left-hand side of Eq.\,(\ref{translation_forgetting_SES}) is generated by the alternately layered $E_8$ phase in Figure \ref{fig:E8_map}(c) \cite{SongXiongHuang}. The other two $\ZZZ_2$'s in the left-hand side of Eq.\,(\ref{translation_forgetting_SES}) are generated by putting copies of the nontrivial element $[a] \in \SPT^2_b\paren{\ZZZ_2} \isomorphic \ZZZ_2$ on all $x_1 \in \ZZZ$ planes or all $x_1 \in \frac12 + \ZZZ$ planes, respectively \cite{Huang_dimensional_reduction}. The first and second $\ZZZ_2$'s in the right-hand side of Eq.\,(\ref{translation_forgetting_SES}) are generated by putting $[a]$ or $E_8$ on the $x_1 = 0$ plane, respectively. The third and fourth $\ZZZ_2$'s in the right-hand side of Eq.\,(\ref{translation_forgetting_SES}) are the same but with $x_1 = 0$ replaced by $x_1 = \frac12$.

By the physical arguments in Sec.\,\ref{sec:physical_exact_MV}, the composition of maps (\ref{double_layer_SES}) and (\ref{translation_forgetting_SES}) must be trivial. This forces
\begin{equation}
n_1 = n_2 = n_3 = 0.
\end{equation}

Now, we see that Eqs.\,(\ref{double_layer_SES}) and (\ref{translation_forgetting_SES}) indeed make the sequence $0 \fromto \ZZZ \fromto \ZZZ \oplus \ZZZ_2^3 \fromto \ZZZ_2^2 \oplus\ZZZ_2^2 \fromto 0$ exact. This supports our interpretation of $\alpha$ and $\beta_1\times\beta_2$ as the double-layer construction and translation forgetting, respectively.

\section{Atiyah-Hirzebruch decomposition of general crystalline SPT phases}
\label{sec:Atiyah-Hirzebruch}

In this section we will consider crystalline symmetries $G$ and the classification of crystalline SPT phases in general. We will use a homological formulation that we suspect is equivalent in some sense to the cohomological formulation of Chapter \ref{chap:minimalist} by some sort of Poincar\'e duality. Furthermore, we will allow for boundaries in a physical system by considering \emph{relative} generalized homology. Throughout the section, we assume $G$ is purely spatial. However, our discussions extend immediately to direct products $G' \times G$ for any $G'$ that acts internally.

\subsection{Relative SPT phenomena}
\label{sec:genehomo}

We will speak of \emph{SPT phenomena of degree $n$} for integer $n$, which we loosely define as follows. SPT phenomena of degree $0$ are short-range entangled states. SPT phenomena of degree $-1$ are anomalies of short-range entangled states. SPT phenomena of degree $-2$ are boundaries of anomalies of short-range entangled states, i.e.\,anomalies of anomalies. Inductively, we define SPT phenomena of degree $n$ for $n < 0$ as anomalies of SPT phenomena of degree $n + 1$. As for positive degrees, we define SPT phenomena of degree $1$ to be loops of short-range entangled states. We define SPT phenomena of degree $2$ to be loops of loops of short-range entangled states, i.e.\, families of short-range entangled states parameterized by the 2-sphere $\SS^2$. In general, we define SPT phenomena of degree $n$ for $n > 0$ to be families of short-range entangled states parameterized by $\SS^n$. We will not make it precise what the terms ``anomalies" or ``anomalies of anomalies" mean, but only demand that they satisfy certain constraints described below.\footnote{There may be a clear meaning to anomalies themselves, but we will need to know the topology of the space of anomalies, and for one thing, there is the question of whether a deformation means a deformation of anomaly itself or a deformation of the parent short-range entangled state.}

Now, let $G$ be a discrete group, $X$ be a $G$-CW complex, and $Y$ be a $G$-CW subcomplex. Let us consider the classification of deformation classes of short-range entangled states on $X$ that respect $G$ that can have anomalies on $Y$. We will denote this by
\begin{equation}
\SPT_0\paren{X, Y}, \label{SPT_0(X,Y)}
\end{equation}
and refer to it as the classification of \emph{degree-$0$ SPT phenomena on $X$ relative $Y$}. For example, if $X$ is a manifold with boundary and $Y$ is its boundary, then we can put any SPT phase on the entire $X$. By contrast, if $X$ is a 2-sphere and $Y$ is its equator, then we can put any SPT phase on one hemisphere of $X$ and an unrelated SPT phase on the other hemisphere. Note that, unlike in the notation $\SPT^d\paren{G, \phi}$ where symmetry and dimensionality are explicit, in Eq.\,(\ref{SPT_0(X,Y)}) they are implicit in $X$. We posit that $\SPT_0\paren{X, Y}$ can be given by the relative equivariant generalized homology group
\begin{equation}
\SPT_0(X, Y) \isomorphic h^G_0\paren{X, Y}, \label{homology_0}
\end{equation}
where $h^G_\bullet$ is the \emph{equivariant generalized homology theory} \cite{Whi,KonoTamaki} defined by the $\Omega$-spectrum of Chapter \ref{chap:minimalist}.

We can generalize this to other degrees. We will denote by
\begin{equation}
\SPT_n\paren{X, Y} \label{SPT_n(X,Y)}
\end{equation}
the classification of deformation classes of SPT phenomena of degree $n$ on $X$ that respect $G$ that can have SPT phenomena of degree $n-1$ on $Y$, or more succinctly, the classification of \emph{degree-$n$ SPT phenomena on $X$ relative $Y$}. We posit that this, in turn, can be given by the $n$th relative equivariant generalized homology group
\begin{equation}
\SPT_n(X, Y) \isomorphic h^G_n\paren{X, Y}. \label{homology_n}
\end{equation}

As mentioned earlier, we did not make precise the meaning of anomalies, anomalies of anomalies, etc., and thus the meaning of the classification of degree-$n$ SPT phenomena on $X$ relative $Y$ is accordingly vague. However, in order for Eqs.\,(\ref{homology_0})(\ref{homology_n}) to make sense, the meaning of these terms must satisfy the following constraint. Consider the case of trivial $G$, and let $X$ be the $d$-dimensional disk $D^d$ and $Y$ be boundary $\partial D^d$. If we write $\tilde h_\bullet$ and $\tilde h^\bullet$ for the \emph{reduced} version of $h_\bullet$ and $h^\bullet$, respectively, then we have \cite{Adams2}
\begin{equation}
h_n\paren{D^d, \partial D^d} = \tilde h_n\paren{\SS^d} \isomorphic \tilde h^{d-n}\paren{\SS^0} = h^{d-n}\paren{\pt} \isomorphic \SPT^{d-n}\paren{0},
\end{equation}
the classification of $(d-n)$-dimensional SPT phases without symmetry. In particular,
\begin{equation}
h_0\paren{D^d, \partial D^d} \isomorphic h^d\paren{\pt} \isomorphic \SPT^d\paren{0},
\end{equation}
the classification of $d$-dimensional SPT phases without symmetry. Therefore, the meaning of degree-$n$ SPT phenomena and the meaning of degree-$n$ SPT phenomena on $X$ relative $Y$ must be such that, for any integer $d$ and $n$, the classification of $(d-n)$-dimensional SPT phases without symmetry is exactly the same as the classification of degree-$n$ SPT phenomena on $D^d$ relative $\partial D^d$. For a few special cases,
\begin{itemize}
\item $n = 0$: the classification of $d$-dimensional SPT phases without symmetry must be the same as the classification of SPT phases on $D^d$ that can have anomalies on $\partial D^d$. This makes sense as these seem to correspond the same phase on an infinite system in one case and a finite system with boundary in the other.

\item $n = -1$: the classification of $d$-dimensional SPT phases without symmetry must be the same as the classification of anomalies on $D^{d-1}$ that can have anomalies of anomalies on $\partial D^{d-1}$. One can envision a $d$-dimensional SPT phase living on the $d$-dimensional ball $D^d$, and think of $D^{d-1}$ as the northern hemisphere of the ball and $\partial D^{d-1}$ as the equator. The idea is that the information of the SPT phase on $D^d$ ought to be encoded in the anomaly on its surface $\SS^{d-1}$, and knowing one hemisphere of $\SS^{d-1}$ ought to be as good as knowing the entire $\SS^{d-1}$.

\item $n = -2$: the classification of $d$-dimensional SPT phases without symmetry must be the same as the classification of anomalies of anomalies on $D^{d-2}$ that can have anomalies of anomalies of anomalies on $\partial D^{d-2}$. Similarly to the previous case, the idea is that knowing one half of the equator of $\SS^{d-1}$ ought to be as good as knowing the entire equator, which ought to be as good as knowing one hemisphere of $\SS^{d-1}$, which ought to be as good as knowing the entire sphere $\SS^{d-1}$, which ought to be as good as knowing an SPT phase on $D^d$ itself.

\item $n = 1$: the classification of $d$-dimensional SPT phases without symmetry must be the same as the classification of loops of short-range entangled states on $D^{d+1}$ that can pump SPT phases to $\partial D^{d+1}$. This makes sense according to the interpretation of $\Omega$-spectrum in Sec.\,\ref{subsubsec:pumping}.

\item $n = 2$: the classification of $d$-dimensional SPT phases without symmetry must be the same as the classification of loops of loops of short-range entangled states on $D^{d+2}$ that can pump loops of short-range entangled states to $\partial D^{d+2}$. This similarly makes sense according to Sec.\,\ref{subsubsec:pumping}.
\end{itemize}

In the more general case of nontrivial $G$ or $\paren{X, Y} \neq \paren{D^d, \partial D^d}$, we suspect, by some more general form of Poincar\'e duality, that it might be possible to write
\begin{equation}
h^{\phi + d}_G \paren{\pt} \label{co_formulation}
\end{equation}
as
\begin{equation}
h^G_0\paren{X, Y} \label{ho_formulation}
\end{equation}
for some $G$-CW complex $X$ and $G$-CW subcomplex $Y$. Note that, unlike Eq.\,(\ref{co_formulation}), which is twisted if $G$ does not preserve the spatial orientation, Eq.\,(\ref{ho_formulation}) is not twisted in any case.

\subsection{Bulk-boundary correspondence}
\label{subsec:bulk-boundary-correspondence}

Every generalized equivariant homology theory admits a long exact sequence of a pair:
\begin{equation}
\cdots \fromto h^G_{n+1}\paren{X, Y} \xfromto{\partial} h^G_n\paren{Y} \xfromto{\alpha} h^G_n\paren{X} \xfromto{\beta} h^G_n\paren{X, Y} \xfromto{\partial} h^G_{n-1}\paren{Y} \fromto \cdots
\end{equation}
Here, $\alpha$ and $\beta$ are induced by the inclusions $Y \fromto X$ and $(X, \emptyset) \fromto (X, Y)$, respectively. In the spirit of the functoriality of Chapter \ref{chap:minimalist}, we shall interpret these as physically being the embedding maps. To wit, we interpret $\alpha$ as embedding an SPT phenomena on $Y$ into $X$, and $\beta$ as reinterpreting an SPT phenomena on $X$ as an SPT phenomena on $X$ relative $Y$ (with the trivial anomaly on $Y$). Equivalently, for $\beta$, we can cut open the SPT phenomena on $X$ along $Y$ and define the resulting state representing as the element of $h^G_n\paren{X, Y}$ that $\beta$ maps to.

As for $\partial: h^G_n\paren{X, Y} \fromto h^G_{n-1}\paren{Y}$, we interpret it as the anomaly extraction map. Namely, if $[a]$ is a degree-$n$ SPT phenomena on $X$ relative $Y$, then $\partial\paren{[a]}$ will be the boundary anomaly of $[a]$. This is a form of the bulk-boundary correspondence.\footnote{In order for the interpretation of $h^G_n\paren{X} \coloneq h^G_n\paren{X, \emptyset}$ [resp.\,$h^G_n\paren{Y} \coloneq h^G_n\paren{Y, \emptyset}$] as degree-$n$ SPT phenomena on $X$ (resp.\,$Y$) relative $\emptyset$ to be meaningful, $X$ (resp.\,$Y$) must not have a ``boundary" itself. For example, $X$ could be a manifold without boundary and $Y$ could be a submanifold of $X$ without boundary. If either $X$ or $Y$ has a ``boundary," then one must clarify what $h^G_n\paren{X}$ or $h^G_n\paren{Y}$ represents physically.}

Every generalized equivariant homology theory also admits a long exact sequence of a triple $\paren{X, Y, Z}$:
\begin{equation}
\cdots \fromto h^G_{n+1}\paren{X, Y} \xfromto{\partial} h^G_n\paren{Y, Z} \xfromto{\alpha} h^G_n\paren{X, Z} \xfromto{\beta} h^G_n\paren{X, Y} \xfromto{\partial} h^G_{n-1}\paren{Y, Z} \fromto \cdots
\end{equation}
This can be interpreted in an analogous way.\footnote{In order for the physical interpretation of all terms to be meaningful, one must either ensure $Z$ contains the ``boundaries" of $X$ and $Y$ or else clarify the meaning of the terms.}

\subsection{Atiyah-Hirzebruch spectral sequence}
\label{sec:ahss}

The Atiyah-Hirzebruch spectral sequence is a tool for computing generalized homology and cohomology groups. Here we are interested in the relative equivariant generalized homology group $h^G_n(X,Y)$. The general structure of the Atiyah-Hirzebruch spectral sequence is as follows. We have abelian groups $E^r_{p, q}$ indexed by a positive integer $r$ (called \emph{page}), along with integers $p$ and $q$, which provide a double grading on $E^r$. There are doubly graded homomorphisms called \emph{differentials}:
\begin{equation}
d^r_{p, q}: E^r_{p, q} \fromto E^r_{p - r, q + r - 1}.
\end{equation}
In general, either the $E^1$-page or the $E^2$-page is given, and one defines higher pages inductively by
\begin{equation}
E^{r+1}_{p, q} \coloneq \kernel \paren{d^r_{p, q}} / \image \paren{d^r_{p + r, q - r + 1}}.
\end{equation}
That is, $E^{r+1}$ is the homology of $E^r$ with respect to the differential $d^r$. If there exists some non-negative integer $d$ such that $E^r_{p, q} = 0$ for all $p < 0$ or $p > d$, then the spectral sequence stabilizes; that is, $E^r = E^{r+1}$ for large enough $r$. Writing $E^\infty \coloneq \lim_{r\fromto \infty} E^r$, we then have a filtration
\begin{equation}
0 \eqcolon F_{-1} h^G_n(X,Y) \subset F_0 h^G_n(X,Y) \subset F_1 h^G_n(X,Y) \subset \cdots \subset F_d h^G_n(X,Y) \coloneq h^G_n(X,Y)
\end{equation}
such that
\begin{equation}
F_{p} h^G_n(X,Y) / F_{p - 1} h^G_n(X,Y) \isomorphic E^\infty_{p, n - p}.
\end{equation}
Equivalently, we have abelian group extensions
\begin{equation}
0 \fromto F_{p - 1} h^G_n(X,Y) \fromto F_{p} h^G_n(X,Y) \fromto E^\infty_{p, n - p} \fromto 0.
\end{equation}
By solving these abelian group extensions, one can determine $h^G_n(X,Y)$ itself.

In what follows we will present an Atiyah-Hirzebruch spectral sequence for $h^G_n(X,Y)$ and interpret it in terms of crystalline SPT phenomena. We will see that the first differentials underlie the constructions of Refs.\,\cite{Hermele_torsor, Huang_dimensional_reduction, TB18}. However, there are also higher differentials, and these have not been previously studied. The structure of a filtration of SPT phases in the presence of crystalline symmetries and its relation to higher-order SPT phases (see Sec.~IV F 2 of our work \cite{Shiozaki2018}) have been pointed out previously in Ref.\,\cite{TB18}.

\subsubsection{\texorpdfstring{$E^1$}{E1}-page}
\label{sec:e1}

Let us assume $X$ is finite-dimensional and let $d$ be its dimension. We denote by $I_p$ the set of inequivalent $p$-cells of $X$ that are not in $Y$. If $D_j^p$ is a $p$-cell, we denote by $G_{D_j^p}$ its little group, i.e.\,the group of elements of $G$ that fix any and hence all points in the interior of $D^p_j$. We shall consider the skeletons of $X$,
\begin{align}
X_0 \subset X_1 \subset \cdots \subset X_d = X, \label{skeleton}
\end{align}
where $X_p$ is the subcomplex made of all cells of dimension $\leq p$.

The Atiyah-Hirzebruch spectral sequence for $h^G_n(X,Y)$ associated to the filtration (\ref{skeleton}) has the following $E^1$-page:
\begin{align}
E^1_{p, q} 
= h_{p + q}^G(X_p \cup Y, X_{p-1} \cup Y).
\end{align}
Using the cellular structure of $X$ and $Y$, we can rewrite this as
\begin{equation}\begin{split}
E^1_{p, q} 
= \bigoplus_{j\in I_p} h^{G_{D^p_j}}_{p + q}(D^p_j,\p D^p_j)
= \bigoplus_{j \in I_p} h^{- q}_{G_{D^p_j}}(\pt).
\end{split}\end{equation}
The summand
\begin{equation}
h^{- q}_{G_{D^p_j}}(\pt) \isomorphic \SPT^{-q} \paren{ G_{D^p_j} }
\end{equation}
can be interpreted as ``$(-q)$-dimensional SPT phases on the $p$-cell $D^p_j$ with unitary internal symmetry $G_{D^p_j}$." The meaning of this when $p + q \neq 0$ as well as when $p + q = 0$ will be explained in subsequent sections. The $E^1$-page describes the local data of crystalline SPT phenomena. The correspondence between $E^1_{p, q}$ and SPT phases is summarized in Table \ref{table:E1}.

{\small
\begin{spacing}{\dnormalspacing}
\begin{longtable}[c]{c|ccccc}
\caption[Interpretation of $E^1$-page of Atiyah-Hirzebruch spectral sequence.]{The interpretation of the $E^1$-page of the Atiyah-Hirzebruch spectral sequence.\label{table:E1}} \\
\endfirsthead
\caption[]{(Continued).} \\
\endhead
\endfoot
$\vdots$ & $\vdots$                                & $\vdots$                                & ~           & $\vdots$ \\
$q=2$    & (-2)D SPT on $0$-cells & (-2)D SPT on $1$-cells & $\cdots$ & (-2)D SPT on $d$-cells \\
$q=1$    & (-1)D SPT on $0$-cells & (-1)D SPT on $1$-cells & $\cdots$ & (-1)D SPT on $d$-cells \\
$q=0$    & 0D SPT on $0$-cells     & 0D SPT on $1$-cells    & $\cdots$ & 0D SPT on $d$-cells \\
$q=-1$   & 1D SPT on $0$-cells     & 1D SPT on $1$-cells    & $\cdots$ & 1D SPT on $d$-cells \\
$q=-2$   & 2D SPT on $0$-cells     & 2D SPT on $1$-cells    & $\cdots$ & 2D SPT on $d$-cells \\
$\vdots$ & $\vdots$                                & $\vdots$                                & ~           & $\vdots$ \\
\nobreakhline 
$E^1_{p,q}$ & $p=0$                            & $p=1$                                   & $\cdots$  & $p=d$
\end{longtable}
\end{spacing}
}

\subsubsection{First differentials}
\label{sec:d1}

The first differentials
\begin{equation}
d^1_{p, q}: E^1_{p, q} \fromto E^1_{p - 1, q}
\end{equation}
map
\begin{equation}
d^1_{p, q}: \mbox{``$(-q)$D SPT phases on $p$-cells"} \fromto \mbox{``$(-q)$D SPT phases on $(p-1)$-cells"}.
\end{equation}
They relate the $E^1$- and $E^2$-pages via
\begin{align}
E^2_{p, q}:= \kernel \paren{d^1_{p, q}} / \image \paren{d^1_{p + 1, q}}.
\end{align}
This is the first step towards converging to $E^\infty$. Let us examine the physical meaning of $d^1_{p, q}$ in low degrees.

\paragraph{$d^1_{1, 0}$: ``0D SPT phases on 1-cells" $\fromto$ ``0D SPT phases on 0-cells"}

This is the process of creating a $0$D SPT phase and its conjugate in the middle of a $1$-cell and pumping them to the two ends of the $1$-cell. This leads to the equivalence relation $E^1_{0, 0} / \image \paren{d^1_{1, 0}}$ among $0$D SPT phases on $0$-cells.

\paragraph{$d^1_{2, -1}$: ``1D SPT phases on 2-cells" $\fromto$ ``1D SPT phases on 1-cells"}

This is the process of creating a $1$D SPT phase that lives on a small circle in the interior of a $2$-cell and expanding it to the edge. This leads to the equivalence relation $E^1_{1, -1} / \image \paren{d^1_{1, 0}}$ among $1$D SPT phases on $1$-cells.

\paragraph{$d^1_{1, -1}$: ``1D SPT phases on 1-cells" $\fromto$ ``1D SPT phases on 0-cells"}

This is the map that extracts the edge anomaly of a $1$D SPT phase that lives on a $1$-cell. This expresses the gluing condition $\kernel \paren{d^1_{1, -1}} \subset E^1_{1, -1}$ of $1$D SPT phases on $1$-cells.

\paragraph{$d^1_{2, -2}$: ``2D SPT phases on 2-cells" $\fromto$ ``2D SPT phases on 1-cells"}

This is the map that extracts the boundary anomaly of a $2$D SPT phase that lives on a $2$-cell. This expresses the gluing condition $\kernel \paren{d^1_{2, -2}} \subset E^1_{2, -2}$ of $2$D SPT phases on $2$-cells.

\paragraph{$d^1_{1, -2}$: ``2D SPT phases on 1-cells" $\fromto$ ``2D SPT phases on 0-cells"}

Given an SPT anomaly that lives on a $1$-cell, this is the map that extracts its boundary. In the classification of SPT anomalies on $X$, this differential expresses the gluing condition $\kernel \paren{d^1_{1, -2}} \subset E^1_{1, -2}$ of SPT anomalies on $1$-cells.

We thus see that the first differentials are the mathematical structure behind Refs.\,\cite{Hermele_torsor, Huang_dimensional_reduction, TB18}.

\subsubsection{Higher differentials}
\label{subsubsec:higher_differentials}

The higher differentials
\begin{equation}
d^r_{p, q}: E^r_{p, q} \fromto E^1_{p - r, q + r - 1}
\end{equation}
map
\begin{equation}
d^r_{p, q}: \mbox{``$(-q)$D SPT phases on $p$-cells"} \fromto \mbox{``$(- q - r + 1)$D SPT phases on $(p - r)$-cells"}.
\end{equation}
They relate the $E^r$- and $E^{r+1}$-pages via
\begin{align}
E^{r+1}_{p, q}:= \kernel \paren{d^r_{p, q}} / \image \paren{d^r_{p + r, q - r + 1}}.
\end{align}
Let us explore the physical meaning of these higher differentials.

\paragraph{$d^2_{2, -1}$: ``1D SPT phases on 2-cells" $\fromto$ ``0D SPT phases on 0-cells"}

This is the process of creating $1$D SPT phases on small circles in $2$-cells, expanding them to the boundaries of the $2$-cells, canceling them on the bounding $1$-cells, but possibly leaving behind some $0$D SPT phases on $0$-cells that border the $2$-cells. This leads to the equivalence relation $E^2_{0, 0} / \image\paren{d^2_{2, -1}}$ among $0$D SPT phases on $0$-cells.

Note that any element in the domain $E^2_{2, -1}$ of $d^2_{2, -1}$ must be representable by an element in the kernel $\kernel\paren{d^1_{2, -1}}$ of $d^1_{2, -1}$. Since $d^1_{2, -1}$ maps $1$D SPT phases on $2$-cells to $1$D SPT on $1$-cells, the subset $\kernel\paren{d^1_{2, -1}} \subset E^1_{2, -1}$ must represent those configurations of $1$D SPT phases on $2$-cells that satisfy the gluing condition along $1$-cells. Thus it is indeed possible to cancel elements of $E^2_{2, -1}$ on the $1$-cells as we proposed to do in the previous paragraph.

See Sec.~V B 2 of our work \cite{Shiozaki2018} for a nontrivial example of $d^2_{2,-1}$. 

\paragraph{$d^2_{2, -2}$: ``2D SPT phases on 2-cells" $\fromto$ ``1D SPT phases on 0-cells"}

Given a configuration of $2$D SPT phases on $2$-cells, this map extracts the possible SPT anomaly left behind on $0$-cells while gluing the them along the bounding $1$-cells. This gives the higher gluing condition $\kernel\paren{d^2_{2, -2}} \subset E^2_{2, -2}$.

Note that any element of $E^2_{2, -2}$ must be representable by an element of $\kernel \paren{d^1_{2, -2}} \subset E^1_{2, -2}$. Since $d^1_{2, -2}$ maps $2$D SPT phases on $2$-cells to $2$D SPT phases on $1$-cells, the subset $\kernel \paren{d^1_{2, -2}} \subset E^1_{2, -2}$ must consist of configurations that satisfy the gluing condition along $1$-cells. Thus it is indeed possible to glue elements of $E^2_{2, -2}$ along the $1$-cells as we proposed to do in the previous paragraph.

See Secs.~V B 2, V G and V H 1 of our work \cite{Shiozaki2018} for nontrivial examples of $d^2_{2,-2}$. 

\paragraph{$d^3_{3, -2}$: ``2D SPT phases on 3-cells" $\fromto$ ``0D SPT phases on 0-cells"}

This is the process of creating $2$D SPT phases on small spheres in $3$-cells, expanding them to the boundaries of the $3$-cells, canceling them on the the bounding $2$-cells and $1$-cells, but possibly leaving behind some $0$D SPT phases on $0$-cells that border the $3$-cells. This leads to the equivalence relation $E^3_{0, 0} / \image\paren{d^3_{3, -2}}$ among $0$D SPT phases on $0$-cells.

Note that any element of $E^3_{3, -2}$ must be representable by an element of $\kernel \paren{d^2_{3, -2}} \subset E^2_{3, -2}$ and an element of $\kernel \paren{d^1_{3, -2}} \subset E^1_{3, -2}$. Since $d^1_{3, -2}$ maps $2$D SPT phases on $3$-cells to $2$D SPT phases on $2$-cells, an element of $\kernel d^1_{3, -2} \subset E^1_{3, -2}$ must represent a configuration that satisfies the gluing condition along $2$-cells. Similarly, since $d^2_{3, -2}$ maps $2$D SPT phases on $3$-cells to $1$D SPT on $1$-cells, an element of $\kernel\paren{d^1_{2, -1}} \subset E^1_{2, -1}$ must represent a configuration that satisfies the gluing condition along $1$-cells. Thus it is indeed possible to cancel elements of $E^e_{3, -2}$ along the $2$-cells and $1$-cells as we proposed to do in the previous paragraph.

See Sec.~VI G of our work \cite{Shiozaki2018} for an example of nontrivial $d^2_{3,-2}$. 

\paragraph{$d^3_{3, -3}$: ``3D SPT phases on 3-cells" $\fromto$ ``1D SPT phases on 0-cells"}

Given a configuration of $3$D SPT phases on $3$-cells, this map extracts the possible SPT anomaly left behind on $0$-cells while gluing the them along $2$-cells and $1$-cells. This gives the higher gluing condition $\kernel\paren{d^3_{3, -3}} \subset E^3_{3, -3}$.

Note that any element of $E^3_{3, -3}$ must be representable by an element of $\kernel \paren{d^2_{3, -3}} \subset E^2_{3, -3}$ and an element of $\kernel \paren{d^1_{3, -3}} \subset E^1_{3, -3}$. Since $d^1_{3, -3}$ maps $3$D SPT phases on $3$-cells to SPT anomalies on $2$-cells, the subset $\kernel \paren{d^1_{2, -2}} \subset E^1_{2, -2}$ must consist of configurations that satisfy the gluing condition along $2$-cells. Similarly, since $d^2_{3, -3}$ maps $3$D SPT phases on $3$-cells to SPT anomalies $1$-cells, the subset $\kernel \paren{d^1_{3, -3}} \subset E^1_{3, -3}$ must consist of configurations that satisfy the gluing condition along $1$-cells. Thus it is indeed possible to glue elements of $E^3_{3, -3}$ along $2$-cells and $1$-cells as we proposed to do in the previous paragraph.

\subsubsection{\texorpdfstring{$E^\infty$-page}{E infty page}} 
\label{subsubsec:E_inf}

Since $X$ is $d$-dimensional, we have $E^r_{p, q} = 0$ for all $p < 0$ or $p > d$. Consequently,
\begin{equation}
d^r_{p, q} = 0 ~~ \forall r > d.
\end{equation}
Thus the spectral sequence stabilizes at the $(r+1)$st page:
\begin{equation}
E^{r+1}_{p, q} = E^{r+2}_{p, q} = \cdots \eqcolon E^\infty_{p, q}.
\end{equation}
Extending the analysis in Secs.\,\ref{sec:d1} and \ref{subsubsec:higher_differentials} to higher $r$ and general values of $p$ and $q$, we get that
\begin{eqnarray}
&& E^\infty_{p, q} \nonumber\\[20pt]
&\isomorphic& \braces{\parbox{4.5in}{configurations of $(-q)$-dimensional SPT phases on $p$-cells (i.e.\,cells in $I_p$) that satisfy all gluing conditions along lower-dimensional cells (i.e.\,cells in $I_0 \sqcup I_1 \sqcup \ldots \sqcup I_{p-1}$) modulo all equivalence relations due to higher-dimensional cells (i.e.\,cells in $I_{p+1} \sqcup I_{p+2} \sqcup \ldots \sqcup I_d$)}}.
\end{eqnarray}
Alternatively,
\begin{eqnarray}
&& E^\infty_{p, n - p} \nonumber\\[20pt]
&\isomorphic& \braces{\parbox{4.5in}{configurations of degree-$n$ SPT phenomena on $p$-cells (i.e.\,cells in $I_p$) that satisfy all gluing conditions along lower-dimensional cells (i.e.\,cells in $I_0 \sqcup I_1 \sqcup \ldots \sqcup I_{p-1}$) modulo all equivalence relations due to higher-dimensional cells (i.e.\,cells in $I_{p+1} \sqcup I_{p+2} \sqcup \ldots \sqcup I_d$)}}.
\end{eqnarray}

\subsubsection{Filtration and abelian group extension}
\label{sec:Group extension}

$h^G_n\paren{X, Y}$ classifies degree-$n$ SPT phenomena on $X$ relative $Y$. Restricting to the $p$-skeleton, we have that $h^G_n\paren{X_p, X_p \cap Y}$ classifies degree-$n$ SPT phenomena on $X_p$ relative $X_p \cap Y$. By definition, this is the classification of degree $n$-SPT phenomena on $X_p$ that are glued along cells in $I_0 \sqcup I_1 \sqcup \ldots \sqcup I_{p-1}$ modulo all equivalence relations due to cells in $I_1 \sqcup I_2 \sqcup \ldots \sqcup I_p$.

For $h^G_n\paren{X_p, X_p \cap Y}$, there is no equivalence relation due to cells of dimension $> p$. However, if we consider the homomorphism $h^G_n(X_p,X_p \cap Y) \to h^G_n(X,Y)$ induced by the inclusion map $\paren{X_p, X_p \cap Y} \fromto \paren{X, Y}$, then its image
\begin{align}
F_p h^G_n(X,Y):= \image \brackets{h^G_n(X_p,X_p \cap Y) \to h^G_n(X,Y)}
\end{align}
will have equivalence relations due to cells of dimension $> p$ incorporated. More explicitly,
\begin{eqnarray}
&& F_p h^G_n(X,Y) \nonumber\\[20pt]
&\isomorphic& \braces{\parbox{4.5in}{degree-$n$ SPT phenomena on $X_p$ that are glued along cells in $I_0 \sqcup I_1 \sqcup \ldots \sqcup I_{p-1}$ modulo all equivalence relations due to cells in $I_1 \sqcup I_2 \sqcup \ldots \sqcup I_d$}}. \label{F_p_meaning}
\end{eqnarray}
This gives a filtration
\begin{equation}
0 \eqcolon F_{-1} h^G_n(X,Y) \subset F_0 h^G_n(X,Y) \subset F_1 h^G_n(X,Y) \subset \cdots \subset F_d h^G_n(X,Y) \coloneq h^G_n(X, Y).
\end{equation}
Furthermore, Eq.\,(\ref{F_p_meaning}) implies that the quotient
\begin{eqnarray}
&& F_p h^G_n(X,Y) / F_{p - 1} h^G_n(X,Y) \nonumber\\[20pt]
&\isomorphic& \braces{\parbox{4.5in}{configurations of degree-$n$ SPT phenomena on cells in $I_p$ that satisfy all gluing conditions along cells in $I_0 \sqcup I_1 \sqcup \ldots \sqcup I_{p-1}$ modulo all equivalence relations due to cells in $I_p \sqcup I_{p+1} \sqcup \ldots \sqcup I_d$}}.
\end{eqnarray}
This explains the isomorphism
\begin{equation}
F_p h^G_n(X,Y) / F_{p-1} h^G_n(X,Y) \isomorphic E^\infty_{p, n - p}. \label{Fp/Fp-1=Einf}
\end{equation}

We can express Eq.\,(\ref{Fp/Fp-1=Einf}) as an abelian group extension,
\begin{align}
0 \fromto F_{p-1} h^G_n(X,Y) \fromto F_p h^G_n(X,Y) \fromto E^\infty_{p, n - p} \fromto 0.
\label{eq:ahss_extension}
\end{align}
The fact that an abelian group extension can be nontrivial reflects the fact that configurations of degree-$n$ SPT phenomena on $p$-cells classified by $E^\infty_{p, n - p}$ can be nontrivially extended by degree-$n$ SPT phenomena on $X_{p-1}$ classified by $F_{p-1} h^G_n\paren{X, Y}$. For example, consider $n = 0$ and suppose an SPT phenomenon $a$ on $X_p$ represents an element of $E^\infty_{p, -p}$ of order $k > 1$. This means a stack of $k$ copies of $a$ can be trivialized on the $p$-cells. However, in general it can remain nontrivial on $X_{p-1}$. See Sec.~V A 1 of our work \cite{Shiozaki2018} for an example where the abelian group extension is nontrivial.

\chapter{Conclusion}
\label{chap:conclusion}

In this thesis, we developed a new theoretical framework (Chapter \ref{chap:minimalist}) for the classification and construction of symmetry protected topological phases (Sec.\,\ref{sec:short-range entangled_SPT}), or SPT phases, which are the subset of topological phases of strongly interacting quantum many-body systems with symmetries that are short-range entangled. The framework was first proposed in a series of talks \cite{Kitaev_Stony_Brook_2011_SRE_1, Kitaev_Stony_Brook_2013_SRE, Kitaev_IPAM} by Kitaev; in my work \cite{Xiong}, my work \cite{Xiong_Alexandradinata} with A.\,Alexandradinata, my work \cite{Shiozaki2018} with Ken Shiozaki and Kiyonori Gomi, and my work \cite{SongXiongHuang} with Hao Song and Shengjie Huang, we developed it in full detail. The framework models the space of short-range entangled states by what are known as $\Omega$-spectra in the sense of algebraic topology (Sec.\,\ref{sec:short-range entangled_as_spectrum}), and the classification of SPT phases by the generalized cohomology theories associated with the $\Omega$-spectra (Sec.\,\ref{sec:from_spectrum_to_classification}). The defining relation, Eq.\,(\ref{Omega_spectrum}), of an $\Omega$-spectrum can be interpreted by considering either the process of pumping a lower-dimensional short-range entangled state to the boundary in a cyclic adiabatic evolution (Sec.\,\ref{subsubsec:pumping}) or equivalently a continuous spatial pattern of short-range entangled states (Sec.\,\ref{subsubsec:domain_wall}). We proposed a formulation for fermionic systems as well as bosonic systems by considering symmetries of the form $G_f \times G$ for some $G_f$ that contains the fermion parity symmetry (Sec.\,\ref{subsec:fermionic_SPT}). The central statement of the framework, which we referred to as the ``generalized cohomology hypothesis," posits that there exist natural isomorphisms between the classifications of SPT phases and the twisted equivariant generalized cohomology groups of some $\Omega$-spectrum (Sec.\,\ref{subsec:GCH} for bosonic SPT phases and Sec.\,\ref{subsec:fermionic_SPT} for fermionic SPT phases). The framework gives a unified view on various existing proposals for the classification of SPT phases, which correspond to different choices of the $\Omega$-spectrum (Sec.\,\ref{sec:existing_proposals}).

We demonstrated the power of the framework in a number of applications (Chapters \ref{chap:applications} and \ref{chap:advanced}). In our first application (Sec.\,\ref{sec:glide}), we derived from the generalized cohomology hypothesis a short exact sequence, Eq.\,(\ref{SES}), relating the classifications of SPT phases with and without glide-reflection symmetries. Using one implication of the short exact sequence, we argued that the recently discovered ``hourglass fermions" \cite{Ma_discoverhourglass, Hourglass, Cohomological} are robust to strong interactions (Sec.\,\ref{subsec:hourglass_fermions}). Furthermore, we derived the concrete Corollaries \ref{cor:classification_SPT_weak_wrt_glide}, \ref{cor:quad-chotomy}, and \ref{cor:direct_sum_decomposition} regarding the complete classification of SPT phases with glide symmetry (not just those SPT phases that become trivial upon glide forgetting). We broke down the short exact sequence into six physical statements and offered the physical intuition behind each of them (Sec.\,\ref{subsec:physical_picture}). Applying the short exact sequence to 3D fermionic SPT phases, we deduced that the complete classifications in the Wigner-Dyson classes A and AII with glide are $\ZZZ_2 \oplus \ZZZ_2$ and $\ZZZ_4\oplus \ZZZ_2 \oplus \ZZZ_2$, respectively (Sec.\,\ref{subsec:applications}). Applying the short exact sequence to bosonic SPT phases, we computed the complete classifications with glide in dimensions $\leq 3$ for a number of symmetry groups (\ref{subsec:bSPT_with_glide}). Finally, we compared the effect of glide symmetry with the effect of translation symmetry (Sec.\,\ref{sec:puretranslation}), and discussed the generalization to spatiotemporal glide reflections (Sec.\,\ref{subsec:temporalglide}). 

In our second application (Sec.\,\ref{sec:3D_beyond_group_cohomology}), we predicted from the generalized cohomology hypothesis that the classification of 3D bosonic SPT phases with space-group symmetry $G$ is $H^{\phi + 5}_G\paren{\pt; \ZZZ} \oplus H^{\phi + 1}_G\paren{\pt; \ZZZ}$ for all 230 space groups, where the twist $\phi: G \fromto \braces{\pm 1}$ of the cohomology groups is given by $\phi(g) = -1$ if $g$ reverses the spatial orientation and $1$ otherwise (Sec.\,\ref{subsec:prediction}). In particular, we predicted that the classification of 3D bosonic crystalline SPT phases beyond the group cohomology proposal is $H^{\phi + 1}_G\paren{\pt; \ZZZ}$. These beyond-cohomology phases are built from the $E_8$ state \cite{Kitaev_honeycomb, 2dChiralBosonicSPT, Kitaev_KITP}, which had been previously excluded for simplicity. The cohomology group $H^{\phi + 1}_G\paren{\pt; \ZZZ}$ can be easily read off from the international (Hermann-Mauguin) symbols of space groups (Theorem \ref{thm:H1_formula}):
\begin{eqnarray}
H^{\phi + 1}_G(\pt;\ZZZ) =
\begin{cases}
\ZZZ^{k}, & \mbox{space group preserves orientation},\\
\ZZZ^{k} \times \ZZZ_{2}, & \mbox{otherwise},
\end{cases}
\end{eqnarray}
where $k=0,1,3$ respectively if, in the international (Hermann-Mauguin) symbol, there is more than one symmetry direction listed, exactly one symmetry direction listed and it is not $1$ or $\bar{1}$, and exactly one symmetry direction listed and it is $1$ or $\bar{1}$. We worked out a few examples: space groups $P1$, $Pm$, $Pmm2$, $Pmmm$, $P\bar1$, and $Pc$ (Sec.\,\ref{subsec:examples}). We provided explicit models for the SPT phases in these cases using the layer construction (Figure \ref{fig:layerE8_y}), the alternating-layer construction (Figure \ref{fig:layerE8_y2}), and other more complicated cellular constructions (Figure \ref{fig:sg47_E8}). We argued for the nontriviality of the SPT phases that correspond to the generators of $\ZZZ^k$ by counting the chiral central charge per unit length (Sec.\,\ref{subsubsec:P1}). We argued for the nontriviality of the SPT phases that correspond to the generator of $\ZZZ_2$ by considering surface topological orders that cannot exist symmetrically in strictly two-dimensional systems (Figure \ref{fig:sg47_E8_STO}). Motivated by these examples, we went on and verified the classification $H^{\phi + 1}_G\paren{\pt; \ZZZ}$ in full generality, that is, for all 3D space groups (Sec.\,\ref{subsec:reduction_and_construction}). We used a combination of techniques, which included dimensional reduction (Sec.\,\ref{susbusbsec:dimensional_reduction}), surface anomaly (Sec.\,\ref{subsubsec:H1_invariance}), and explicit cellular construction (Sec.\,\ref{subsubsec:Construction}). We established a split short exact sequence, Eq.\,(\ref{eq:extension}), which implies that the classification of 3D bosonic SPT phases with space-group symmetry $G$ is indeed $H^{\phi + 5}_G\paren{\pt; \ZZZ} \oplus H^{\phi + 1}_G\paren{\pt; \ZZZ}$. Finally, we computed $H^{\phi + 1}_G(\pt;\ZZZ)$ for all 230 space groups (App.\,\ref{app:list_of_classifications}).

To facilitate the next two applications, we discussed the construction of the classifying space $BG$ of a general space group $G$ (Sec.\,\ref{sec:space_groups}). Such a $G$ could be either symmorphic or nonsymmorphic. However, we assumed the internal representation of $G$ to be unitary. We remarked that there are two models for $BG$, Eqs.\,(\ref{BS_1}) and (\ref{BS_2}), both of which took advantage of the relationship among space, translational, and point groups. These models allowed us to write the classification of SPT phases with space-group symmetries as a twisted generalized cohomology theory that is equivariant in the point group, as in Eq.\,(\ref{c+n_form}), instead of the space group. We also gave alternative expressions, Eqs.\,(\ref{c+n_form_trivial_phi}) and (\ref{fiber_bundle_form}), in the case that $\phi$ is trivial. Lastly, we remarked that one could replace $\pt$ by $\RRR^d$ in the twisted equivariant generalized cohomology theories, as in Eqs.\,(\ref{BS_form_Rd}) and (\ref{BS_form_Rd_trivial_phi}), thanks to the equivariant contractibility of $\RRR^d$.

Building upon the above discussion, we derived, in our third application (Sec.\,\ref{sec:Mayer-Vietoris}) of the framework, a relation between the classifications of SPT phases with and without reflection symmetries. We showed that there is a Mayer-Vietoris long exact sequence, Eq.\,(\ref{LES_physical}), that ties together the classifications of SPT phases with $G \times \paren{\ZZZ \rtimes \ZZZ_2^R}$, $G \times \ZZZ_2^R$, and $G$ symmetries, where $\ZZZ$ is generated by a translation, $\ZZZ_2^R$ is generated by a reflection, and $G$ is any symmetry group that commutes with the translation and reflection (Sec.\,\ref{subsec:MVLES}). We interpreted the homomorphisms in the Mayer-Vietoris sequence physically as the double-layer construction, translation-forgetting maps, and reflection-forgetting maps, respectively (Sec.\,\ref{sec:physical_hom_MV}). With respect to these interpretations, we explained the physical intuition behind the exactness of the sequence (Sec.\,\ref{sec:physical_exact_MV}). To demonstrate the applicability of the Mayer-Vietoris sequence, we considered the case of bosonic SPT phases where $G$ is set to the trivial group (Sec.\,\ref{subsec:MV_bSPT}). We fit known classifications into the Mayer-Vietoris sequence (Sec.\,\ref{subsubsec:fitting_known_classifications}), and verified the exactness of various segments of the resulting sequence by examining the physical realizations of various SPT phases (Sec.\,\ref{subsubsec:analyzing_LES}). In turn, this justified our interpretation of the homomorphisms in the Mayer-Vietoris sequence.

In our fourth and final application (Sec.\,\ref{sec:Atiyah-Hirzebruch}), we proposed a notion of relative crystalline SPT phenonema classified by relative equivariant generalized homology groups that we suspect is equivalent to the cohomological formulation of SPT phases by some sort of Poincar\'e duality (Sec.\,\ref{sec:genehomo}). We identified a bulk-boundary correspondence for crystalline SPT phenomena by considering the long exact sequence of a pair, which every equivariant generalized homology theory possesses (Sec.\,\ref{subsec:bulk-boundary-correspondence}). Furthermore, we derived an Atiyah-Hirzebruch spectral sequence for relative crystalline SPT phenomena using the skeletons $X_0 \subset X_1 \subset \cdots \subset X_d = X$ of the $G$-CW complex $X$ on which crystalline SPT phenomena reside (Sec.\,\ref{sec:ahss}). We proposed that the various terms on the $E^1$-page of the Atiyah-Hirzebruch spectral sequence correspond to the local data of crystalline SPT phenomena, namely, SPT phenomena of various degrees residing on cells of different dimensions with effective internal symmetries (Sec.\,\ref{sec:e1}). The first differentials of the Atiyah-Hirzebruch spectral sequence, which connect the $E^1$- and $E^2$-pages, describe first-order gluing conditions of and equivalence relations among the local data, and are the mathematical structure behind Refs.\,\cite{Hermele_torsor, Huang_dimensional_reduction, TB18} (Sec.\,\ref{sec:d1}). The higher differentials of the Atiyah-Hirzebruch spectral sequence, which connect the $E^r$- and $E^{r+1}$-pages for $r > 1$, correspond to higher-order gluing conditions and equivalence relations and have not been previously considered (Sec.\,\ref{subsubsec:higher_differentials}). The $E^\infty$-page of the Atiyah-Hirzebruch spectral sequence, which is the limit of $E^r$ as $r \fromto \infty$, corresponds to the classification of SPT phenomena on $p$-cells satisfying all gluing conditions modulo all equivalence relations for fixed values of $p$ (Sec.\,\ref{subsubsec:E_inf}). It gives a filtration of the classification of SPT phenomena on $X$ through a set of abelian group extensions (Sec.\,\ref{sec:Group extension}). We remarked that nontrivial abelian group extensions occur when stacking SPT phonemena on the $p$-skeleton $X_p$ gives an SPT phenomenon that can be trivialized on $p$-cells but remains nontrivial on $X_{p-1}$.

\setstretch{\dnormalspacing}

\begin{appendices}
    
\chapter{List of classifications of 3D bosonic crystalline SPT phases for all 230 space groups}
\label{app:list_of_classifications}

As proved in my work \cite{SongXiongHuang} with Hao Song and Shengjie Huang, $H^{\phi + 1}_{G} \paren{\pt; \ZZZ}$ can be determined from the international (Hermann-Mauguin) symbols of the space group $G$ as follows.

\begin{thm}
Let $G$ be a 3D space group and $\phi:G\fromto\braces{\pm1}$ be the homomorphism that sends orientation-preserving elements of $G$ to	$1$ and the rest to $-1$. Then
\begin{eqnarray}
H^{\phi + 1}_{G} \paren{\pt; \ZZZ} =
\begin{cases}
\ZZZ^{k}, & \mbox{space group preserves orientation},\\
\ZZZ^{k} \times \ZZZ_{2}, & \mbox{otherwise},
\end{cases}
\end{eqnarray}
where $k=0,1,3$ respectively if, in the international (Hermann-Mauguin) symbol, there is more than one symmetry direction listed, exactly one symmetry direction listed and it is not $1$ or	$\bar{1}$, and exactly one symmetry direction listed and it is $1$ or $\bar{1}$. \label{thm:H1_formula}
\end{thm}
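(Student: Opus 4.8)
The plan is to compute the twisted first group cohomology $H^{\phi+1}_G(\pt;\ZZZ) = H^1_\phi(G;\ZZZ)$ directly from the arithmetic structure of the space group, using the extension $1 \fromto \Pi \fromto G \fromto P \fromto 1$ where $\Pi \isomorphic \ZZZ^3$ is the translation lattice and $P = G/\Pi$ is the finite point group. First I would record that $\phi$ factors through $P$: since translations are orientation-preserving, $\phi$ is pulled back from $\phi_P = \det: P \fromto \braces{\pm 1}$. Hence the coefficient module $\ZZZ^\phi$ (the integers with $G$ acting through $\phi$) is trivial over $\Pi$ and carries the sign representation of $P$.

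I would then run the Lyndon--Hochschild--Serre spectral sequence $E_2^{s,t} = H^s(P; H^t(\Pi; \ZZZ^\phi)) \Rightarrow H^{s+t}_\phi(G;\ZZZ)$. Only two terms contribute in total degree $1$: $E_2^{1,0} = H^1(P;\ZZZ^\phi)$ and $E_2^{0,1} = (\Pi^\vee \otimes \ZZZ^\phi)^P$, where $\Pi^\vee = \Hom(\Pi,\ZZZ) \isomorphic H^1(\Pi;\ZZZ)$. For the first term, a short direct cocycle computation finishes it: restricting a twisted cocycle to $\ker\phi_P$, which is finite, forces it to vanish there, a single orientation-reversing generator then carries one integer parameter, and the coboundaries realize exactly its even multiples, so $H^1(P;\ZZZ^\phi) = 0$ when $\phi$ is trivial and $\ZZZ_2$ when $\phi$ is nontrivial. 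This is the origin of the $\ZZZ_2$ factor, present precisely when $G$ does not preserve orientation. The term $E_2^{0,1}$ is a subgroup of the free group $\Pi^\vee$, hence free of some rank $m$; the only outgoing differential $d_2: E_2^{0,1} \fromto E_2^{2,0} = H^2(P;\ZZZ^\phi)$ lands in a finite group, so $E_\infty^{0,1} = \ker d_2$ is a finite-index subgroup of $\ZZZ^m$ and still free of rank $m$. Since $E_\infty^{0,1}$ is free, the extension $0 \fromto E_\infty^{1,0} \fromto H^1_\phi(G;\ZZZ) \fromto E_\infty^{0,1} \fromto 0$ splits, yielding $H^1_\phi(G;\ZZZ) \isomorphic \ZZZ^m \oplus (\ZZZ_2 \text{ or } 0)$.

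It remains to identify $m$ with $k$. I would identify the rank $m$ of $(\Pi^\vee \otimes \ZZZ^\phi)^P$ with the multiplicity of the determinant representation $\det$ inside the defining three-dimensional representation $V = \RRR^3$ of $P \subset O(3)$; using $V^* \isomorphic V$ (orthogonality) and $V \otimes \det \isomorphic \Lambda^2 V$, this multiplicity equals $\dim_\RRR (\Lambda^2 V)^P$, the dimension of the space of $P$-invariant pseudovectors (axial vectors). A short computation across the crystal systems then completes the identification: for the triclinic groups $1$ and $\bar 1$ every pseudovector is invariant (note $\bar 1$ acts trivially on axial vectors since $\det(-I)(-I) = I$), giving $m = 3$; for the monoclinic groups $2$, $m$, $2/m$ only the pseudovector along the unique axis survives, giving $m = 1$; and for every orthorhombic and higher point group two independent symmetry elements annihilate all pseudovectors, giving $m = 0$. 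Comparing with the International Tables convention for listing symmetry directions in the Hermann--Mauguin symbol---triclinic lists a single entry that is $1$ or $\bar 1$, monoclinic lists a single entry that is neither, and all higher systems list more than one direction \cite{ITA2006}---shows $m = k$ in each case.

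The main obstacle I expect is this last step: giving a clean, uniform argument (rather than a point-group-by-point-group enumeration) that $\dim (\Lambda^2 V)^P$ is governed exactly by the crystal system in the manner encoded by the international symbol. The spectral-sequence bookkeeping and the splitting are routine once the twist is handled correctly; the genuine work is the crystallographic translation between $\dim (\Lambda^2 V)^P$ and the number of listed symmetry directions, which I would organize by reducing to the seven crystal systems and invoking the standard symmetry-direction conventions of \cite{ITA2006}.
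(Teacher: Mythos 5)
Your reduction is sound and, modulo machinery, matches the paper's. The paper (via Lemma~1 in App.\,A of work \cite{SongXiongHuang}, quoted several times in Sec.\,\ref{sec:3D_beyond_group_cohomology}) parameterizes a class $[\nu^1] \in H^{\phi+1}_G\paren{\pt;\ZZZ}$ by the integers $\nu^1(t_{v_1}),\nu^1(t_{v_2}),\nu^1(t_{v_3})$ on generating translations, together with $\nu^1(r)\bmod 2$ on an orientation-reversing element $r$ when $G$ is non-orientation-preserving; that is exactly the content of your split extension with kernel $H^1(P;\ZZZ^\phi)\in\braces{0,\ZZZ_2}$ and free quotient $(\Pi^\vee\otimes\ZZZ^\phi)^P$, obtained there by direct cocycle manipulation rather than the Lyndon--Hochschild--Serre spectral sequence. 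Your further identification of the free rank $m$ with the multiplicity of $\det$ in the vector representation of $P$, i.e.\ with the dimension of the space of $P$-invariant axial vectors, is also correct and is a clean, uniform way to set up the case analysis.

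The genuine gap is the final crystallographic enumeration, which as written contradicts both the theorem and the paper's Table~\ref{table:H1}. You assign $m=0$ to ``every orthorhombic and higher point group,'' but the uniaxial tetragonal, trigonal, and hexagonal point groups $4$, $\bar{4}$, $4/m$, $3$, $\bar{3}$, $6$, $\bar{6}$, $6/m$ have $m=1$: all of their symmetry elements share a single axis, and the axial vector along that axis is invariant (a horizontal mirror or a rotoinversion has $\det = -1$ and reverses the axis, hence fixes the pseudovector). Concretely, for $G=P4$ your argument outputs $H^{\phi+1}_{P4}\paren{\pt;\ZZZ}=0$, whereas the theorem and the table give $\ZZZ$; the same failure occurs for $P\bar{4}$, $P4/m$, $I4$, $P3$, $R3$, $P\bar{3}$, $P6$, $P\bar{6}$, $P6/m$, and so on. Your quoted International Tables convention is also incorrect: the Hermann--Mauguin symbols of these groups list exactly one symmetry direction, not more than one \cite{ITA2006}. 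The crystal system is simply not the right invariant here; the correct trichotomy---and the reason the theorem is phrased in terms of listed symmetry directions---is $m=3$ if $P\subseteq\braces{1,\bar{1}}$; $m=1$ if $P$ is uniaxial in the sense that every $R\in P$ acts on a fixed line by $Ra=\det(R)\,a$ (the groups of type $C_n$, $S_n$, $C_{nh}$, which are precisely those with exactly one listed direction other than $1$ or $\bar{1}$); and $m=0$ otherwise (more than one listed direction, including the $C_{nv}$ and $D_n$ families, where vertical mirrors or perpendicular two-fold axes annihilate all pseudovectors). With this corrected case analysis your spectral-sequence argument goes through; as written, the last step fails.
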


Combined with the computations for $H^{\phi + 5}_{G} \paren{\pt; \ZZZ}$ in Refs.\,\cite{Huang_dimensional_reduction, ThorngrenElse}, we can thus determine the classifications \cite{SongXiongHuang}
\begin{equation}
\SPT^{3}(G,\phi) \cong H^{\phi + 5}_{G} \paren{\pt; \ZZZ} \oplus H^{\phi + 1}_{G} \paren{\pt; \ZZZ}
\end{equation}
of 3D bosonic crystalline SPT phases for all 230 space groups. The results were summarized in Ref.\,\cite{SongXiongHuang} and are reproduced in Table\,\ref{table:H1} below.

\begin{longtable}[c]{ccccc}
\caption[List of classifications of 3D bosonic crystalline SPT phases for all 230 space groups]{The classifications of 3D bosonic crystalline SPT phases for all 230 space groups. The first and second columns are the numbers and international (Hermann-Mauguin) symbols of the space group $G$. The third column is the classification $H^{\phi + 5}_G \paren{\pt; \ZZZ}$ of non-$E_8$-based phases, as in the group cohomology proposal \cite{Huang_dimensional_reduction, ThorngrenElse}. The fourth column is the classification $H^{\phi + 1}_{G} \paren{\pt; \ZZZ}$ of $E_8$-based phases \cite{SongXiongHuang}. The fifth column is the classification $\SPT^3\paren{G, \phi} \cong H^{\phi + 5}_G \paren{\pt; \ZZZ}\oplus H^{\phi + 1}_{G} \paren{\pt; \ZZZ}$ of all 3D bosonic crystalline SPT phases for the given space group \cite{SongXiongHuang}. Here $\phi:G\fromto\braces{\pm1}$ is the homomorphism that sends orientation-preserving elements of $G$ to $1$ and the rest to $-1$, and defines the twist of the twisted cohomology groups.\label{table:H1}}\\
\hline
\hline
\multirow{2}{*}{No.} & \multirow{2}{*}{~Symbol~} & \multicolumn{3}{c}{Classification of 3D bosonic crystalline SPT phases} \\
\cline{3-5}
&     & ~~~~~$H^{\phi + 5}_G \paren{\pt; \ZZZ}$~~~~~ & ~~~~~$H^{\phi + 1}_G \paren{\pt; \ZZZ}$~~~~~ & ~~~~Complete~~~~ \\
\hline
\endfirsthead
\caption[]{(Continued).}\\
\hline
\hline
\multirow{2}{*}{No.} & \multirow{2}{*}{~Symbol~} & \multicolumn{3}{c}{Classification of 3D bosonic crystalline SPT phases} \\
\cline{3-5}
&     & ~~~~~$H^{\phi + 5}_G \paren{\pt; \ZZZ}$~~~~~ & ~~~~~$H^{\phi + 1}_G \paren{\pt; \ZZZ}$~~~~~ & ~~~~Complete~~~~ \\
\hline
\endhead
\hline
\hline
\endfoot
1     & $P1$  & $0$ & {$\mathbb{Z}^{3}$} & $\mathbb{Z}^{3}$ \\
2     & $P\overline{1}$ & $\mathbb{Z}_2^{8}$ & {$\mathbb{Z}_2\times\mathbb{Z}^{3}$} & $\mathbb{Z}_2^{9}\times\mathbb{Z}^{3}$ \\
3     & $P2$  & $\mathbb{Z}_2^{4}$ & {$\mathbb{Z}$} & $\mathbb{Z}_2^{4}\times\mathbb{Z}$ \\
4     & $P2_{1}$ & $0$ & {$\mathbb{Z}$} & $\mathbb{Z}$ \\
5     & $C2$  & $\mathbb{Z}_2^{2}$ & {$\mathbb{Z}$} & $\mathbb{Z}_2^{2}\times\mathbb{Z}$ \\
6     & $Pm$  & $\mathbb{Z}_2^{4}$ & {$\mathbb{Z}_2\times\mathbb{Z}$} & $\mathbb{Z}_2^{5}\times\mathbb{Z}$ \\
7     & $Pc$  & $0$ & {$\mathbb{Z}_2\times\mathbb{Z}$} & $\mathbb{Z}_2\times\mathbb{Z}$ \\
8     & $Cm$  & $\mathbb{Z}_2^2$ & {$\mathbb{Z}_2\times\mathbb{Z}$} & $\mathbb{Z}_2^3\times\mathbb{Z}$ \\
9     & $Cc$  & $0$ & {$\mathbb{Z}_2\times\mathbb{Z}$} & $\mathbb{Z}_2\times\mathbb{Z}$ \\
10    & $P2/m$ & $\mathbb{Z}_2^{18}$ & {$\mathbb{Z}_2\times\mathbb{Z}$} & $\mathbb{Z}_2^{19}\times\mathbb{Z}$ \\
11    & $P2_{1}/m$ & $\mathbb{Z}_2^6$ & {$\mathbb{Z}_2\times\mathbb{Z}$} & $\mathbb{Z}_2^{7}\times\mathbb{Z}$ \\
12    & $C2/m$ & $\mathbb{Z}_2^{11}$ & {$\mathbb{Z}_2\times\mathbb{Z}$} & $\mathbb{Z}_2^{12}\times\mathbb{Z}$ \\
13    & $P2/c$ & $\mathbb{Z}_2^{6}$ & {$\mathbb{Z}_2\times\mathbb{Z}$} & $\mathbb{Z}_2^{7}\times\mathbb{Z}$ \\
14    & $P2_{1}/c$ & $\mathbb{Z}_2^{4}$ & {$\mathbb{Z}_2\times\mathbb{Z}$} & $\mathbb{Z}_2^{5}\times\mathbb{Z}$ \\
15    & $C2/c$ & $\mathbb{Z}_2^{5}$ & {$\mathbb{Z}_2\times\mathbb{Z}$} & $\mathbb{Z}_2^{6}\times\mathbb{Z}$ \\
16    & $P222$ & $\mathbb{Z}_2^{16}$ & $0$ & $\mathbb{Z}_2^{16}$ \\
17    & $P222_{1}$ & $\mathbb{Z}_2^{4}$ & $0$ & $\mathbb{Z}_2^{4}$ \\
18    & $P2_{1}2_{1}2$ & $\mathbb{Z}_2^{2}$ & $0$ & $\mathbb{Z}_2^{2}$ \\
19    & $P2_{1}2_{1}2_{1}$ & $0$ & $0$ & $0$ \\
20    & $C222_{1}$ & $\mathbb{Z}_2^{2}$ & $0$ & $\mathbb{Z}_2^{2}$ \\
21    & $C222$ & $\mathbb{Z}_2^{9}$ & $0$ & $\mathbb{Z}_2^{9}$ \\
22    & $F222$ & $\mathbb{Z}_2^{8}$ & $0$ & $\mathbb{Z}_2^{8}$ \\
23    & $I222$ & $\mathbb{Z}_2^{8}$ & $0$ & $\mathbb{Z}_2^{8}$ \\
24    & $I2_{1}2_{1}2_{1}$ & $\mathbb{Z}_2^{3}$ & $0$ & $\mathbb{Z}_2^{3}$ \\
25    & $Pmm2$ & $\mathbb{Z}_2^{16}$ & $\mathbb{Z}_2$      & $\mathbb{Z}_2^{17}$ \\
26    & $Pmc2_{1}$ & $\mathbb{Z}_2^{4}$ & $\mathbb{Z}_2$      & $\mathbb{Z}_2^{5}$ \\
27    & $Pcc2$ & $\mathbb{Z}_2^{4}$ & $\mathbb{Z}_2$      & $\mathbb{Z}_2^{5}$ \\
28    & $Pma2$ & $\mathbb{Z}_2^{4}$ & $\mathbb{Z}_2$      & $\mathbb{Z}_2^{5}$ \\
29    & $Pca2_{1}$ & $0$ & $\mathbb{Z}_2$      & $\mathbb{Z}_2$ \\
30    & $Pnc2$ & $\mathbb{Z}_2^{2}$ & $\mathbb{Z}_2$      & $\mathbb{Z}_2^{3}$ \\		
31    & $Pmn2_{1}$ & $\mathbb{Z}_2^2$ & $\mathbb{Z}_2$      & $\mathbb{Z}_2^3$ \\
32    & $Pba2$ & $\mathbb{Z}_2^{2}$ & $\mathbb{Z}_2$      & $\mathbb{Z}_2^{3}$ \\
33    & $Pna2_{1}$ & $0$ & $\mathbb{Z}_2$      & $\mathbb{Z}_2$ \\
34    & $Pnn2$ & $\mathbb{Z}_2^{2}$ & $\mathbb{Z}_2$      & $\mathbb{Z}_2^{3}$ \\
35    & $Cmm2$ & $\mathbb{Z}_2^{9}$ & $\mathbb{Z}_2$      & $\mathbb{Z}_2^{10}$ \\
36    & $Cmc2_{1}$ & $\mathbb{Z}_2^2$ & $\mathbb{Z}_2$      & $\mathbb{Z}_2^3$ \\
37    & $Ccc2$ & $\mathbb{Z}_2^{3}$ & $\mathbb{Z}_2$      & $\mathbb{Z}_2^{4}$ \\
38    & $Amm2$ & $\mathbb{Z}_2^{9}$ & $\mathbb{Z}_2$      & $\mathbb{Z}_2^{10}$ \\
39    & $Aem2$ & $\mathbb{Z}_2^{4}$ & $\mathbb{Z}_2$      & $\mathbb{Z}_2^{5}$ \\
40    & $Ama2$ & $\mathbb{Z}_2^{3}$ & $\mathbb{Z}_2$      & $\mathbb{Z}_2^{4}$ \\
41    & $Aea2$ & $\mathbb{Z}_2$ & $\mathbb{Z}_2$      & $\mathbb{Z}_2^2$ \\
42    & $Fmm2$ & $\mathbb{Z}_2^{6}$ & $\mathbb{Z}_2$      & $\mathbb{Z}_2^{7}$ \\
43    & $Fdd2$ & $\mathbb{Z}_2$ & $\mathbb{Z}_2$      & $\mathbb{Z}_2^2$ \\
44    & $Imm2$ & $\mathbb{Z}_2^{8}$ & $\mathbb{Z}_2$      & $\mathbb{Z}_2^{9}$ \\
45    & $Iba2$ & $\mathbb{Z}_2^{2}$ &  $\mathbb{Z}_2$     & $\mathbb{Z}_2^{3}$ \\
46    & $Ima2$ & $\mathbb{Z}_2^{3}$ &  $\mathbb{Z}_2$     & $\mathbb{Z}_2^{4}$ \\
47    & $Pmmm$ & $\mathbb{Z}_2^{42}$ & $\mathbb{Z}_2$      & $\mathbb{Z}_2^{43}$ \\
48    & $Pnnn$ & $\mathbb{Z}_2^{10}$ & $\mathbb{Z}_2$      & $\mathbb{Z}_2^{11}$ \\
49    & $Pccm$ & $\mathbb{Z}_2^{17}$ & $\mathbb{Z}_2$      & $\mathbb{Z}_2^{18}$ \\
50    & $Pban$ & $\mathbb{Z}_2^{10}$ & $\mathbb{Z}_2$      & $\mathbb{Z}_2^{11}$ \\
51    & $Pmma$ & $\mathbb{Z}_2^{17}$ & $\mathbb{Z}_2$      & $\mathbb{Z}_2^{18}$ \\
52    & $Pnna$ & $\mathbb{Z}_2^{4}$ & $\mathbb{Z}_2$      & $\mathbb{Z}_2^{5}$ \\
53    & $Pmna$ & $\mathbb{Z}_2^{10}$ & $\mathbb{Z}_2$      & $\mathbb{Z}_2^{11}$ \\
54    & $Pcca$ & $\mathbb{Z}_2^{5}$ & $\mathbb{Z}_2$      & $\mathbb{Z}_2^{6}$ \\
55    & $Pbam$ & $\mathbb{Z}_2^{10}$ &  $\mathbb{Z}_2$     & $\mathbb{Z}_2^{11}$ \\
56    & $Pccn$ & $\mathbb{Z}_2^{4}$ &$\mathbb{Z}_2$       & $\mathbb{Z}_2^{5}$ \\
57    & $Pbcm$ & $\mathbb{Z}_2^{5}$ &  $\mathbb{Z}_2$     & $\mathbb{Z}_2^{6}$ \\
58    & $Pnnm$ & $\mathbb{Z}_2^{9}$ &  $\mathbb{Z}_2$     & $\mathbb{Z}_2^{10}$ \\
59    & $Pmmn$ & $\mathbb{Z}_2^{10}$ & $\mathbb{Z}_2$      & $\mathbb{Z}_2^{11}$ \\
60    & $Pbcn$ & $\mathbb{Z}_2^{3}$ &  $\mathbb{Z}_2$     & $\mathbb{Z}_2^{4}$ \\
61    & $Pbca$ & $\mathbb{Z}_2^{2}$ &   $\mathbb{Z}_2$    & $\mathbb{Z}_2^{3}$ \\
62    & $Pnma$ & $\mathbb{Z}_2^{4}$ &   $\mathbb{Z}_2$    & $\mathbb{Z}_2^{5}$ \\
63    & $Cmcm$ & $\mathbb{Z}_2^{10}$ &  $\mathbb{Z}_2$     & $\mathbb{Z}_2^{11}$ \\
64    & $Cmca$ & $\mathbb{Z}_2^{7}$ &   $\mathbb{Z}_2$    & $\mathbb{Z}_2^{8}$ \\
65    & $Cmmm$ & $\mathbb{Z}_2^{26}$ & $\mathbb{Z}_2$      & $\mathbb{Z}_2^{27}$ \\
66    & $Cccm$ & $\mathbb{Z}_2^{13}$ &  $\mathbb{Z}_2$     & $\mathbb{Z}_2^{14}$ \\
67    & $Cmme$ & $\mathbb{Z}_2^{17}$ &  $\mathbb{Z}_2$     & $\mathbb{Z}_2^{18}$ \\
68    & $Ccce$ & $\mathbb{Z}_2^{7}$ &   $\mathbb{Z}_2$    & $\mathbb{Z}_2^{8}$ \\
69    & $Fmmm$ & $\mathbb{Z}_2^{20}$ &  $\mathbb{Z}_2$     & $\mathbb{Z}_2^{21}$ \\
70    & $Fddd$ & $\mathbb{Z}_2^{6}$ &   $\mathbb{Z}_2$    & $\mathbb{Z}_2^{7}$ \\
71    & $Immm$ & $\mathbb{Z}_2^{22}$ & $\mathbb{Z}_2$      & $\mathbb{Z}_2^{23}$ \\
72    & $Ibam$ & $\mathbb{Z}_2^{10}$ &  $\mathbb{Z}_2$     & $\mathbb{Z}_2^{11}$ \\
73    & $Ibca$ & $\mathbb{Z}_2^{5}$ & $\mathbb{Z}_2$      & $\mathbb{Z}_2^{6}$ \\
74    & $Imma$ & $\mathbb{Z}_2^{13}$ & $\mathbb{Z}_2$      & $\mathbb{Z}_2^{14}$ \\
75    & $P4$  & $\mathbb{Z}_{4}^{2}\times\mathbb{Z}_2$ & {$\mathbb{Z}$} & $\mathbb{Z}_{4}^{2}\times\mathbb{Z}_2\times\mathbb{Z}$ \\
76    & $P4_{1}$ & $0$ & {$\mathbb{Z}$} & $\mathbb{Z}$ \\
77    & $P4_{2}$ & $\mathbb{Z}_2^{3}$ & {$\mathbb{Z}$} & $\mathbb{Z}_2^{3}\times{\mathbb{Z}}$ \\
78    & $P4_{3}$ & $0$ & {$\mathbb{Z}$} & $\mathbb{Z}$ \\
79    & $I4$  & $\mathbb{Z}_{4}\times\mathbb{Z}_2$ & {$\mathbb{Z}$} & $\mathbb{Z}_{4}\times\mathbb{Z}_2\times\mathbb{Z}$ \\
80    & $I4_{1}$ & $\mathbb{Z}_2$ & {$\mathbb{Z}$} & $\mathbb{Z}_2\times\mathbb{Z}$ \\
81    & $P\overline{4}$ & $\mathbb{Z}_{4}^{2}\times\mathbb{Z}_2^{3}$ & {$\mathbb{Z}_2\times\mathbb{Z}$} & $\mathbb{Z}_{4}^{2}\times\mathbb{Z}_2^{4}\times\mathbb{Z}$ \\
82    & $I\overline{4}$ & $\mathbb{Z}_{4}^{2}\times\mathbb{Z}_2^{2}$ & {$\mathbb{Z}_2\times\mathbb{Z}$} & $\mathbb{Z}_{4}^{2}\times\mathbb{Z}_2^{3}\times\mathbb{Z}$ \\
83    & $P4/m$ & $\mathbb{\mathbb{Z}}_{4}^{2}\times\mathbb{Z}_2^{12}$ & {$\mathbb{Z}_2\times\mathbb{Z}$} & $\mathbb{\mathbb{Z}}_{4}^{2}\times\mathbb{Z}_2^{13}\times\mathbb{Z}$ \\
84    & $P4_{2}/m$ & $\mathbb{Z}_2^{11}$ & {$\mathbb{Z}_2\times\mathbb{Z}$} & $\mathbb{Z}_2^{12}\times\mathbb{Z}$ \\
85    & $P4/n$ & $\mathbb{Z}_{4}^{2}\times\mathbb{Z}_2^{3}$ & {$\mathbb{Z}_2\times\mathbb{Z}$} & $\mathbb{Z}_{4}^{2}\times\mathbb{Z}_2^{4}\times\mathbb{Z}$ \\
86    & $P4_{2}/n$ & $\mathbb{Z}_{4}\times\mathbb{Z}_2^{4}$ & {$\mathbb{Z}_2\times\mathbb{Z}$} & $\mathbb{Z}_{4}\times\mathbb{Z}_2^{5}\times\mathbb{Z}$ \\
87    & $I4/m$ & $\mathbb{Z}_{4}\times\mathbb{Z}_2^{8}$ & {$\mathbb{Z}_2\times\mathbb{Z}$} & $\mathbb{Z}_{4}\times\mathbb{Z}_2^{9}\times\mathbb{Z}$ \\
88    & $I4_{1}/a$ & $\mathbb{Z}_{4}\times\mathbb{Z}_2^{3}$ & {$\mathbb{Z}_2\times\mathbb{Z}$} & $\mathbb{Z}_{4}\times\mathbb{Z}_2^{4}\times\mathbb{Z}$ \\
89    & $P422$ & $\mathbb{Z}_2^{12}$ & $0$ & $\mathbb{Z}_2^{12}$ \\
90    & $P42_{1}2$ & $\mathbb{Z}_{4}\times\mathbb{Z}_2^{4}$ & $0$ & $\mathbb{Z}_{4}\times\mathbb{Z}_2^{4}$ \\
91    & $P4_{1}22$ & $\mathbb{Z}_2^{3}$ & $0$ & $\mathbb{Z}_2^{3}$ \\
92    & $P4_{1}2_{1}2$ & $\mathbb{Z}_2$ & $0$ & $\mathbb{Z}_2$ \\
93    & $P4_{2}22$ & $\mathbb{Z}_2^{12}$ & $0$ & $\mathbb{Z}_2^{12}$ \\
94    & $P4_{2}2_{1}2$ & $\mathbb{Z}_2^{5}$ & $0$ & $\mathbb{Z}_2^{5}$ \\
95    & $P4_{3}22$ & $\mathbb{Z}_2^{3}$ & $0$ & $\mathbb{Z}_2^{3}$ \\
96    & $P4_{3}2_{1}2$ & $\mathbb{Z}_2$ & $0$ & $\mathbb{Z}_2$ \\
97    & $I422$ & $\mathbb{Z}_2^{8}$ & $0$ & $\mathbb{Z}_2^{8}$ \\
98    & $I4_{1}22$ & $\mathbb{Z}_2^{5}$ & $0$ & $\mathbb{Z}_2^{5}$ \\
99    & $P4mm$ & $\mathbb{Z}_2^{12}$ & $\mathbb{Z}_2$      & $\mathbb{Z}_2^{13}$ \\
100   & $P4bm$ & $\mathbb{Z}_{4}\times\mathbb{Z}_2^{4}$ & $\mathbb{Z}_2$      & $\mathbb{Z}_{4}\times\mathbb{Z}_2^{5}$ \\
101   & $P4_{2}cm$ & $\mathbb{Z}_2^{6}$ & $\mathbb{Z}_2$      & $\mathbb{Z}_2^{7}$ \\
102   & $P4_{2}nm$ & $\mathbb{Z}_2^{5}$ &   $\mathbb{Z}_2$    & $\mathbb{Z}_2^{6}$ \\
103   & $P4cc$ & $\mathbb{Z}_2^{3}$ &   $\mathbb{Z}_2$    & $\mathbb{Z}_2^{4}$ \\
104   & $P4nc$ & $\mathbb{Z}_{4}\times\mathbb{Z}_2$ &  $\mathbb{Z}_2$     & $\mathbb{Z}_{4}\times\mathbb{Z}_2^2$ \\
105   & $P4_{2}mc$ & $\mathbb{Z}_2^{9}$ &   $\mathbb{Z}_2$    & $\mathbb{Z}_2^{10}$ \\
106   & $P4_{2}bc$ & $\mathbb{Z}_2^{2}$ &  $\mathbb{Z}_2$     & $\mathbb{Z}_2^{3}$ \\
107   & $I4mm$ & $\mathbb{Z}_2^{7}$ &    $\mathbb{Z}_2$   & $\mathbb{Z}_2^{8}$ \\
108   & $I4cm$ & $\mathbb{Z}_2^{4}$ &   $\mathbb{Z}_2$    & $\mathbb{Z}_2^{5}$ \\
109   & $I4_{1}md$ & $\mathbb{Z}_2^{4}$ &  $\mathbb{Z}_2$     & $\mathbb{Z}_2^{5}$ \\
110   & $I4_{1}cd$ & $\mathbb{Z}_2$ &    $\mathbb{Z}_2$   & $\mathbb{Z}_2^2$ \\
111   & $P\overline{4}2m$ & $\mathbb{Z}_2^{13}$ &  $\mathbb{Z}_2$     & $\mathbb{Z}_2^{14}$ \\
112   & $P\overline{4}2c$ & $\mathbb{Z}_2^{10}$ &  $\mathbb{Z}_2$     & $\mathbb{Z}_2^{11}$ \\
113   & $P\overline{4}2_{1}m$ & $\mathbb{Z}_{4}\times\mathbb{Z}_2^{5}$ &  $\mathbb{Z}_2$     & $\mathbb{Z}_{4}\times\mathbb{Z}_2^{6}$ \\
114   & $P\overline{4}2_{1}c$ & $\mathbb{Z}_{4}\times\mathbb{Z}_2^{2}$ & $\mathbb{Z}_2$      & $\mathbb{Z}_{4}\times\mathbb{Z}_2^{3}$ \\
115   & $P\overline{4}m2$ & $\mathbb{Z}_2^{13}$ & $\mathbb{Z}_2$      & $\mathbb{Z}_2^{14}$ \\
116   & $P\overline{4}c2$ & $\mathbb{Z}_2^{7}$ &    $\mathbb{Z}_2$   & $\mathbb{Z}_2^{8}$ \\
117   & $P\overline{4}b2$ & $\mathbb{Z}_{4}\times\mathbb{Z}_2^{5}$ & $\mathbb{Z}_2$      & $\mathbb{Z}_{4}\times\mathbb{Z}_2^{6}$ \\
118   & $P\overline{4}n2$ & $\mathbb{Z}_{4}\times\mathbb{Z}_2^{5}$ & $\mathbb{Z}_2$      & $\mathbb{Z}_{4}\times\mathbb{Z}_2^{6}$ \\
119   & $I\overline{4}m2$ & $\mathbb{Z}_2^{9}$ & $\mathbb{Z}_2$      & $\mathbb{Z}_2^{10}$ \\
120   & $I\overline{4}c2$ & $\mathbb{Z}_2^{6}$ & $\mathbb{Z}_2$      & $\mathbb{Z}_2^{7}$ \\
121   & $I\overline{4}2m$ & $\mathbb{Z}_2^{8}$ & $\mathbb{Z}_2$      & $\mathbb{Z}_2^{9}$ \\
122   & $I\overline{4}2d$ & $\mathbb{Z}_{4}\times\mathbb{Z}_2^{2}$ & $\mathbb{Z}_2$      & $\mathbb{Z}_{4}\times\mathbb{Z}_2^{3}$ \\
123   & $P4/mmm$ & $\mathbb{Z}_2^{32}$ &  $\mathbb{Z}_2$     & $\mathbb{Z}_2^{33}$ \\
124   & $P4/mcc$ & $\mathbb{Z}_2^{13}$ & $\mathbb{Z}_2$      & $\mathbb{Z}_2^{14}$ \\
125   & $P4/nbm$ & $\mathbb{Z}_2^{13}$ &  $\mathbb{Z}_2$     & $\mathbb{Z}_2^{14}$ \\
126   & $P4/nnc$ & $\mathbb{Z}_2^{8}$ &  $\mathbb{Z}_2$     & $\mathbb{Z}_2^{9}$ \\
127   & $P4/mbm$ & $\mathbb{Z}_{4}\times\mathbb{Z}_2^{15}$ &   $\mathbb{Z}_2$    & $\mathbb{Z}_{4}\times\mathbb{Z}_2^{16}$ \\
128   & $P4/mnc$ & $\mathbb{Z}_{4}\times\mathbb{Z}_2^{8}$ &  $\mathbb{Z}_2$     & $\mathbb{Z}_{4}\times\mathbb{Z}_2^{9}$ \\
129   & $P4/nmm$ & $\mathbb{Z}_2^{13}$ &  $\mathbb{Z}_2$     & $\mathbb{Z}_2^{14}$ \\
130   & $P4/ncc$ & $\mathbb{Z}_2^{5}$ &   $\mathbb{Z}_2$    & $\mathbb{Z}_2^{6}$ \\
131   & $P4_{2}/mmc$ & $\mathbb{Z}_2^{24}$ &   $\mathbb{Z}_2$    & $\mathbb{Z}_2^{25}$ \\
132   & $P4_{2}/mcm$ & $\mathbb{Z}_2^{18}$ &   $\mathbb{Z}_2$    & $\mathbb{Z}_2^{19}$ \\
133   & $P4_{2}/nbc$ & $\mathbb{Z}_2^{8}$ &    $\mathbb{Z}_2$   & $\mathbb{Z}_2^{9}$ \\
134   & $P4_{2}/nnm$ & $\mathbb{Z}_2^{13}$ &  $\mathbb{Z}_2$     & $\mathbb{Z}_2^{14}$ \\
135   & $P4_{2}/mbc$ & $\mathbb{Z}_2^{8}$ &  $\mathbb{Z}_2$     & $\mathbb{Z}_2^{9}$ \\
136   & $P4_{2}/mnm$ & $\mathbb{Z}_2^{14}$ &  $\mathbb{Z}_2$     & $\mathbb{Z}_2^{15}$ \\
137   & $P4_{2}/nmc$ & $\mathbb{Z}_2^{8}$ &   $\mathbb{Z}_2$    & $\mathbb{Z}_2^{9}$ \\
138   & $P4_{2}/ncm$ & $\mathbb{Z}_2^{10}$ &  $\mathbb{Z}_2$     & $\mathbb{Z}_2^{11}$ \\
139   & $I4/mmm$ & $\mathbb{Z}_2^{20}$ &   $\mathbb{Z}_2$    & $\mathbb{Z}_2^{21}$ \\
140   & $I4/mcm$ & $\mathbb{Z}_2^{14}$ &  $\mathbb{Z}_2$     & $\mathbb{Z}_2^{15}$ \\
141   & $I4_{1}/amd$ & $\mathbb{Z}_2^{9}$ &  $\mathbb{Z}_2$     & $\mathbb{Z}_2^{10}$ \\
142   & $I4_{1}/acd$ & $\mathbb{Z}_2^{5}$ & $\mathbb{Z}_2$      & $\mathbb{Z}_2^{6}$ \\
143   & $P3$  & $\mathbb{Z}_{3}^{3}$ & {$\mathbb{Z}$} & $\mathbb{Z}_{3}^{3}\times\mathbb{Z}$ \\
144   & $P3_{1}$ & $0$ & {$\mathbb{Z}$} & $\mathbb{Z}$ \\
145   & $P3_{2}$ & $0$ & {$\mathbb{Z}$} & $\mathbb{Z}$ \\
146   & $R3$  & $\mathbb{Z}_{3}$ & {$\mathbb{Z}$} & $\mathbb{Z}_{3}\times\mathbb{Z}$ \\
147   & $P\overline{3}$ & $\mathbb{Z}_{3}^{2}\times\mathbb{Z}_2^{4}$ & {$\mathbb{Z}_2\times\mathbb{Z}$} & $\mathbb{Z}_{3}^{2}\times\mathbb{Z}_2^{5}\times\mathbb{Z}$ \\
148   & $R\overline{3}$ & $\mathbb{Z}_{3}\times\mathbb{Z}_2^{4}$ & {$\mathbb{Z}_2\times\mathbb{Z}$} & $\mathbb{Z}_{3}\times\mathbb{Z}_2^{5}\times\mathbb{Z}$ \\
149   & $P312$ & $\mathbb{Z}_2^{2}$ & $0$ & $\mathbb{Z}_2^{2}$ \\
150   & $P321$ & $\mathbb{Z}_{3}\times\mathbb{Z}_2^{2}$ & $0$ & $\mathbb{Z}_{3}\times\mathbb{Z}_2^{2}$ \\
151   & $P3_{1}12$ & $\mathbb{Z}_2^{2}$ & $0$ & $\mathbb{Z}_2^{2}$ \\
152   & $P3_{1}21$ & $\mathbb{Z}_2^{2}$ & $0$ & $\mathbb{Z}_2^{2}$ \\
153   & $P3_{2}12$ & $\mathbb{Z}_2^{2}$ & $0$ & $\mathbb{Z}_2^{2}$ \\
154   & $P3_{2}21$ & $\mathbb{Z}_2^{2}$ & $0$ & $\mathbb{Z}_2^{2}$ \\
155   & $R32$ & $\mathbb{Z}_2^{2}$ & $0$ & $\mathbb{Z}_2^{2}$ \\
156   & $P3m1$ & $\mathbb{Z}_2^2$ & $\mathbb{Z}_2$      & $\mathbb{Z}_2^3$ \\
157   & $P31m$ & $\mathbb{Z}_{3}\times\mathbb{Z}_2^2$ & $\mathbb{Z}_2$      & $\mathbb{Z}_{3}\times\mathbb{Z}_2^3$ \\
158   & $P3c1$ & $0$ & $\mathbb{Z}_2$      &  $\mathbb{Z}_2$\\
159   & $P31c$ & $\mathbb{Z}_{3}$ & $\mathbb{Z}_2$      & $\mathbb{Z}_{3}\times\mathbb{Z}_2$ \\
160   & $R3m$ & $\mathbb{Z}_2^2$ &  $\mathbb{Z}_2$     & $\mathbb{Z}_2^3$ \\
161   & $R3c$ & $0$ & $\mathbb{Z}_2$      & $\mathbb{Z}_2$ \\
162   & $P\overline{3}1m$ & $\mathbb{Z}_2^{9}$ &  $\mathbb{Z}_2$     & $\mathbb{Z}_2^{10}$ \\
163   & $P\overline{3}1c$ & $\mathbb{Z}_2^{3}$ & $\mathbb{Z}_2$      & $\mathbb{Z}_2^{4}$ \\
164   & $P\overline{3}m1$ & $\mathbb{Z}_2^{9}$ & $\mathbb{Z}_2$      & $\mathbb{Z}_2^{10}$ \\
165   & $P\overline{3}c1$ & $\mathbb{Z}_2^{3}$ & $\mathbb{Z}_2$      & $\mathbb{Z}_2^{4}$ \\
166   & $R\overline{3}m$ & $\mathbb{Z}_2^{9}$ &  $\mathbb{Z}_2$     & $\mathbb{Z}_2^{10}$ \\
167   & $R\overline{3}c$ & $\mathbb{Z}_2^{3}$ &$\mathbb{Z}_2$       & $\mathbb{Z}_2^{4}$ \\
168   & $P6$  & $\mathbb{Z}_{3}^{2}\times\mathbb{Z}_2^{2}$ & {$\mathbb{Z}$} & $\mathbb{Z}_{3}^{2}\times\mathbb{Z}_2^{2}\times\mathbb{Z}$ \\
169   & $P6_{1}$ & $0$ & {$\mathbb{Z}$} & $\mathbb{Z}$ \\
170   & $P6_{5}$ & $0$ & {$\mathbb{Z}$} & $\mathbb{Z}$ \\
171   & $P6_{2}$ & $\mathbb{Z}_2^{2}$ & {$\mathbb{Z}$} & $\mathbb{Z}_2^{2}\times\mathbb{Z}$ \\
172   & $P6_{4}$ & $\mathbb{Z}_2^{2}$ & {$\mathbb{Z}$} & $\mathbb{Z}_2^{2}\times\mathbb{Z}$ \\
173   & $P6_{3}$ & $\mathbb{Z}_{3}^{2}$ & {$\mathbb{Z}$} & $\mathbb{Z}_{3}^{2}\times\mathbb{Z}$ \\
174   & $P\overline{6}$ & $\mathbb{Z}_{3}^{3}\times\mathbb{Z}_2^{4}$ & {$\mathbb{Z}_2\times\mathbb{Z}$} & $\mathbb{Z}_{3}^{3}\times\mathbb{Z}_2^{5}\times\mathbb{Z}$ \\
175   & $P6/m$ & $\mathbb{Z}_{3}^{2}\times\mathbb{Z}_2^{10}$ & {$\mathbb{Z}_2\times\mathbb{Z}$} & $\mathbb{Z}_{3}^{2}\times\mathbb{Z}_2^{11}\times\mathbb{Z}$ \\
176   & $P6_{3}/m$ & $\mathbb{Z}_{3}^{2}\times\mathbb{Z}_2^{4}$ & {$\mathbb{Z}_2\times\mathbb{Z}$} & $\mathbb{Z}_{3}^{2}\times\mathbb{Z}_2^{5}\times\mathbb{Z}$ \\
177   & $P622$ & $\mathbb{Z}_2^{8}$ & $0$ & $\mathbb{Z}_2^{8}$ \\
178   & $P6_{1}22$ & $\mathbb{Z}_2^{2}$ & $0$ & $\mathbb{Z}_2^{2}$ \\
179   & $P6_{5}22$ & $\mathbb{Z}_2^{2}$ & $0$ & $\mathbb{Z}_2^{2}$ \\
180   & $P6_{2}22$ & $\mathbb{Z}_2^{8}$ & $0$ & $\mathbb{Z}_2^{8}$ \\
181   & $P6_{4}22$ & $\mathbb{Z}_2^{8}$ & $0$ & $\mathbb{Z}_2^{8}$ \\
182   & $P6_{3}22$ & $\mathbb{Z}_2^{2}$ & $0$ & $\mathbb{Z}_2^{2}$ \\
183   & $P6mm$ & $\mathbb{Z}_2^{8}$ & $\mathbb{Z}_2$      & $\mathbb{Z}_2^{9}$ \\
184   & $P6cc$ & $\mathbb{Z}_2^{2}$ & $\mathbb{Z}_2$      & $\mathbb{Z}_2^{3}$ \\
185   & $P6_{3}cm$ & $\mathbb{Z}_2^2$ & $\mathbb{Z}_2$      & $\mathbb{Z}_2^3$ \\
186   & $P6_{3}mc$ & $\mathbb{Z}_2^2$ & $\mathbb{Z}_2$      & $\mathbb{Z}_2^3$ \\
187   & $P\overline{6}m2$ & $\mathbb{Z}_2^{9}$ &  $\mathbb{Z}_2$     & $\mathbb{Z}_2^{10}$ \\
188   & $P\overline{6}c2$ & $\mathbb{Z}_2^{3}$ & $\mathbb{Z}_2$      & $\mathbb{Z}_2^{4}$ \\
189   & $P\overline{6}2m$ & $\mathbb{Z}_{3}\times\mathbb{Z}_2^{9}$ & $\mathbb{Z}_2$      & $\mathbb{Z}_{3}\times\mathbb{Z}_2^{10}$ \\
190   & $P\overline{6}2c$ & $\mathbb{Z}_{3}\times\mathbb{Z}_2^{3}$ & $\mathbb{Z}_2$      & $\mathbb{Z}_{3}\times\mathbb{Z}_2^{4}$ \\
191   & $P6/mmm$ & $\mathbb{Z}_2^{22}$ & $\mathbb{Z}_2$      & $\mathbb{Z}_2^{23}$ \\
192   & $P6/mcc$ & $\mathbb{Z}_2^{9}$ & $\mathbb{Z}_2$      & $\mathbb{Z}_2^{10}$ \\
193   & $P6_{3}/mcm$ & $\mathbb{Z}_2^{9}$ &  $\mathbb{Z}_2$     & $\mathbb{Z}_2^{10}$ \\
194   & $P6_{3}/mmc$ & $\mathbb{Z}_2^{9}$ &  $\mathbb{Z}_2$     & $\mathbb{Z}_2^{10}$ \\
195   & $P23$ & $\mathbb{Z}_{3}\times\mathbb{Z}_2^{4}$ & $0$ & $\mathbb{Z}_{3}\times\mathbb{Z}_2^{4}$ \\
196   & $F23$ & $\mathbb{Z}_{3}$ & $0$ & $\mathbb{Z}_{3}$ \\
197   & $I23$ & $\mathbb{Z}_{3}\times\mathbb{Z}_2^{2}$ & $0$ & $\mathbb{Z}_{3}\times\mathbb{Z}_2^{2}$ \\
198   & $P2_{1}3$ & $\mathbb{Z}_{3}$ & $0$ & $\mathbb{Z}_{3}$ \\
199   & $I2_{1}3$ & $\mathbb{Z}_{3}\times\mathbb{Z}_2$ & $0$ & $\mathbb{Z}_{3}\times\mathbb{Z}_2$ \\
200   & $Pm\overline{3}$ & $\mathbb{Z}_{3}\times\mathbb{Z}_2^{14}$ & $\mathbb{Z}_2$      & $\mathbb{Z}_{3}\times\mathbb{Z}_2^{15}$ \\
201   & $Pn\overline{3}$ & $\mathbb{Z}_{3}\times\mathbb{Z}_2^{4}$ &  $\mathbb{Z}_2$     & $\mathbb{Z}_{3}\times\mathbb{Z}_2^{5}$ \\
202   & $Fm\overline{3}$ & $\mathbb{Z}_{3}\times\mathbb{Z}_2^{6}$ & $\mathbb{Z}_2$      & $\mathbb{Z}_{3}\times\mathbb{Z}_2^{7}$ \\
203   & $Fd\overline{3}$ & $\mathbb{Z}_{3}\times\mathbb{Z}_2^{2}$ & $\mathbb{Z}_2$      & $\mathbb{Z}_{3}\times\mathbb{Z}_2^{3}$ \\
204   & $Im\overline{3}$ & $\mathbb{Z}_{3}\times\mathbb{Z}_2^8$ &  $\mathbb{Z}_2$     & $\mathbb{Z}_{3}\times\mathbb{Z}_2^9$ \\
205   & $Pa\overline{3}$ & $\mathbb{Z}_{3}\times\mathbb{Z}_2^{2}$ &  $\mathbb{Z}_2$     & $\mathbb{Z}_{3}\times\mathbb{Z}_2^{3}$ \\
206   & $Ia\overline{3}$ & $\mathbb{Z}_{3}\times\mathbb{Z}_2^{3}$ &  $\mathbb{Z}_2$     & $\mathbb{Z}_{3}\times\mathbb{Z}_2^{4}$ \\
207   & $P432$ & $\mathbb{Z}_2^{6}$ & $0$ & $\mathbb{Z}_2^{6}$ \\
208   & $P4_{2}32$ & $\mathbb{Z}_2^{6}$ & $0$ & $\mathbb{Z}_2^{6}$ \\
209   & $F432$ & $\mathbb{Z}_2^{4}$ & $0$ & $\mathbb{Z}_2^{4}$ \\
210   & $F4_{1}32$ & $\mathbb{Z}_2$ & $0$ & $\mathbb{Z}_2$ \\
211   & $I432$ & $\mathbb{Z}_2^{5}$ & $0$ & $\mathbb{Z}_2^{5}$ \\
212   & $P4_{3}32$ & $\mathbb{Z}_2$ & $0$ & $\mathbb{Z}_2$ \\
213   & $P4_{1}32$ & $\mathbb{Z}_2$ & $0$ & $\mathbb{Z}_2$ \\
214   & $I4_{1}32$ & $\mathbb{Z}_2^{4}$ & $0$ & $\mathbb{Z}_2^{4}$ \\
215   & $P\overline{4}3m$ & $\mathbb{Z}_2^{7}$ &  $\mathbb{Z}_2$     & $\mathbb{Z}_2^8$ \\
216   & $F\overline{4}3m$ & $\mathbb{Z}_2^5$ & $\mathbb{Z}_2$      & $\mathbb{Z}_2^6$ \\
217   & $I\overline{4}3m$ & $\mathbb{Z}_2^5$ & $\mathbb{Z}_2$      & $\mathbb{Z}_2^6$ \\
218   & $P\overline{4}3n$ & $\mathbb{Z}_2^{4}$ & $\mathbb{Z}_2$      & $\mathbb{Z}_2^{5}$ \\
219   & $F\overline{4}3c$ & $\mathbb{Z}_2^{2}$ &   $\mathbb{Z}_2$    & $\mathbb{Z}_2^{3}$ \\
220   & $I\overline{4}3d$ & $\mathbb{Z}_{4}\times\mathbb{Z}_2$ & $\mathbb{Z}_2$      & $\mathbb{Z}_{4}\times\mathbb{Z}_2^2$ \\
221   & $Pm\overline{3}m$ & $\mathbb{Z}_2^{18}$ & $\mathbb{Z}_2$      & $\mathbb{Z}_2^{19}$ \\
222   & $Pn\overline{3}n$ & $\mathbb{Z}_2^{5}$ &   $\mathbb{Z}_2$    & $\mathbb{Z}_2^{6}$ \\
223   & $Pm\overline{3}n$ & $\mathbb{Z}_2^{10}$ &   $\mathbb{Z}_2$    & $\mathbb{Z}_2^{11}$ \\
224   & $Pn\overline{3}m$ & $\mathbb{Z}_2^{10}$ & $\mathbb{Z}_2$      & $\mathbb{Z}_2^{11}$ \\
225   & $Fm\overline{3}m$ & $\mathbb{Z}_2^{13}$ &  $\mathbb{Z}_2$     & $\mathbb{Z}_2^{14}$ \\
226   & $Fm\overline{3}c$ & $\mathbb{Z}_2^{7}$ & $\mathbb{Z}_2$      & $\mathbb{Z}_2^8$ \\
227   & $Fd\overline{3}m$ & $\mathbb{Z}_2^7$ &   $\mathbb{Z}_2$    & $\mathbb{Z}_2^8$ \\
228   & $Fd\overline{3}c$ & $\mathbb{Z}_2^{3}$ &   $\mathbb{Z}_2$    & $\mathbb{Z}_2^{4}$ \\
229   & $Im\overline{3}m$ & $\mathbb{Z}_2^{13}$ &  $\mathbb{Z}_2$     & $\mathbb{Z}_2^{14}$ \\
230   & $Ia\overline{3}d$ & $\mathbb{Z}_2^{4}$ &    $\mathbb{Z}_2$   & $\mathbb{Z}_2^{5}$
\end{longtable}
\end{appendices}

\backmatter



\end{document}